\newtheorem{theorem}{Theorem}
\newtheorem{assumption}{Assumption}
\newtheorem{corollary}{Corollary}
\newtheorem{example}{Example}
\newtheorem{lemma}{Lemma}
\newtheorem{proposition}{Proposition}
\newtheorem{remark}{Remark}
\DeclareMathOperator*{\argmin}{argmin}
\DeclareMathOperator{\Tr}{Tr}
\DeclareMathOperator*{\plim}{plim}
\newcommand{\E}{\mathbb{E}}
\newcommand{\C}{\mathbb{C}}
\newcommand{\M}{\mathbf{M}}
\renewcommand{\P}{\mathbf{P}}
\newcommand{\R}{\mathbb{R}}
\newcommand{\Op}{O_P}
\begin{document}

  \title{\bf Nuclear Norm Regularized Estimation of Panel Regression Models\footnote{%
  We thank Michael Jansson, Co-editor, an Associate Editor, and two referees for helpful comments
and suggestions. We are also grateful for comments and suggestions from the participants of the 
  2016 IAAE Conference, 2017 UCLA-USC Conference, 2017 Cambridge INET Panel 
  Conference, 2018 International Conference of Econometrics in Chengdu, 2018 
  EMES, 2018 AMES, and seminars at Boston University, Columbia University, 
  Northwestern University, Oxford University, University of Bath, University of 
  Chicago, UC Riverside, University of Iowa, University of Surrey, Yale 
  University, University Carlos III in Madrid, and CEMFI. We also thank 
  Riccardo D'Adamo and Sarah Moon for improving the presentation.
  Weidner acknowledges support from the Economic and Social Research Council through the ESRC Centre for Microdata Methods and Practice grant RES-589-28-0001 and from the European Research Council grants ERC-2014-CoG-646917-ROMIA and
ERC-2018-CoG-819086-PANEDA.}
}
\date{February 2026}

\author{\setcounter{footnote}{2}
Hyungsik Roger Moon\footnote{
Department of Economics, University of Southern California,
Los Angeles, CA 90089-0253, U.S.A. 
Email: {\tt moonr@usc.edu}.}
\and Martin Weidner\footnote{
Department of Economics \& Nuffield College, University of Oxford, Manor Road,
Oxford OX1 3UQ, U.K., and Nuffield College, Email: \texttt{martin.weidner@economics.ox.ac.uk}.
}}

 \maketitle

\vspace{-1cm}
\begin{abstract}
  \noindent 
  In this paper we investigate panel regression models with interactive fixed effects. We propose two new estimation methods that are based on minimizing convex objective functions. The first method minimizes the sum of squared residuals with a nuclear (trace) norm regularization. The second  method minimizes the nuclear norm of the residuals. We establish the consistency of the two resulting estimators.
  Those estimators have a very important computational advantage compared to the existing least squares (LS) estimator, in that they are defined as minimizers of a convex objective function. 
  In addition, the nuclear norm penalization helps to resolve a potential identification problem for interactive fixed effect models, in particular when the regressors are low-rank and the number of the factors is unknown. 
  We also show how to construct estimators that are asymptotically equivalent to the least squares (LS) estimator in \cite{Bai2009} and \cite{MoonWeidner2017} by using our nuclear norm regularized or minimized estimators as initial values for a finite number of LS minimizing iteration steps. This iteration avoids any non-convex minimization, while the original LS estimation problem is generally non-convex, and can have multiple local minima. 
	\\	
  {\sc Keywords: Interactive Fixed Effects, Factor Models, Nuclear Norm Regularization, Convex Optimization, Iterative Estimation}

\end{abstract}

\maketitle

\section{Introduction}
\label{sec:Intro}

In this paper we consider a linear panel regression model of the form
\begin{align}
      Y_{it} &= \sum_{k=1}^K  \, \beta_{0,k} \, X_{k,it}  +
            \sum_{r=1}^{R_0} \, \lambda_{0,ir} \, f_{0,tr}  \, + \, E_{it} \, ,
      \label{ModelBasic}
\end{align}
where $i=1\ldots N$ and $t=1\ldots T$ label the cross-sectional units and the time periods, respectively,
$Y_{it}$ is an observed dependent variable, $X_{k,it}$ are observed regressors, 
$\beta_0=(\beta_{0,1},\ldots,\beta_{0,K})' $ are unknown regression coefficients,
$f_{0,tr}$ and $\lambda_{0,ir}$ are unobserved factors
and factor loadings, $E_{it}$ is an unobserved idiosyncratic error term,
$R_0$ denotes the number of factors, and $K$ denotes the number of regressors.
The factors and loadings  are  also called interactive fixed effects.
They parsimoniously represent heterogeneity in both dimensions of the panel, and they contain the conventional additive error components as a special case. We assume that $R_0 \ll \min ( N,T )$, and for our asymptotic results we will consider $R_0$ 
 and $K$ as fixed,
as $N,T \rightarrow \infty$. We can rewrite this model in matrix notation as
\begin{align}
   Y & = \beta_0 \cdot X + \Gamma_0 + E,
   \label{model:linear.matrix}
\end{align}
where $\beta_0 \cdot X :=  \sum_{k=1}^K X_k \beta_{0,k}$ and $\Gamma_0 := \lambda_0 f_0'$,
and $Y$, $X_k$, $\Gamma_0$ and $E$ are $N \times T$ matrices, while $\lambda_0$ and $f_0$
are $N \times R_0$ and $T \times R_0$ matrices, respectively.
The parameters $\beta_0$ and $\Gamma_0$ are treated as non-random throughout the whole paper, that is,
all stochastic statements are implicitly conditional on their realization.
Without loss of generality we assume $R_0 = {\rm rank}(\Gamma_0)$.

One widely used estimation technique for interactive fixed effect panel regressions is the least squares (LS) method,\footnote{
The LS estimator in this context is also sometimes called concentrated least squares estimator,
and was originally proposed by \cite{Kiefer1980}.
 } 
which treats $\lambda$ and $f$ as parameters to estimate (fixed effects).\footnote{%
	Other estimation methods of panel regressions with interactive fixed effects  include the quasi-difference approach (e.g., \citealt{HoltzEakin-Newey-Rosen1988}), generalized method of moments estimation
	(e.g. \citealt{AhnLeeSchmidt2001,AhnLeeSchmidt2013}),	
	the common correlated random effect method (e.g., \citealt{Pesaran2006}), the decision theoretic approach (e.g., \citealt{Chamberlain2009}), and Lasso type shrinkage methods on fixed effects (e.g., \citealt{ChengLiaoSchorfheide2016}, \citealt{LuSu2016}, \citealt{SuShiPhillips2016}).
} 
Let the Frobenius norm of an $N \times T$ matrix $A$ be $\|A\|_2 := \left( \sum_{i=1}^N \sum_{t=1}^T A_{it}^2 \right)^{1/2}$.
Then, the LS estimator for $\beta$ reads
\begin{align}
\widehat{\beta}_{{\rm LS},R} &:= 
   \argmin_{\beta \in \mathbb{R}^K}  
   L_R(\beta) ,
    &
     L_R(\beta)
     &:=    \min_{\left\{ \lambda \in \mathbb{R}^{N \times R}, \, f \in \mathbb{R}^{T \times R} \right\}} \,
      \frac{1}{2NT} \left\| Y -\beta \cdot X- \lambda f^{\prime} \right\|_2^2  ,
   \label{DefLSestimator}   
\end{align}
where $R$ is the number of factors chosen in the estimation.
A matrix $\Gamma \in \mathbb{R}^{N \times T}$ can be written as $\Gamma = \lambda f'$, 
for some $\lambda \in \mathbb{R}^{N \times R}$ and $f \in \mathbb{R}^{T \times R}$,
if and only if ${\rm rank}(\Gamma) \leq R $. The profiled least square objective function $ L_R(\beta) $
can therefore equivalently be expressed as
\begin{align}
L_R(\beta) &= 
 \min_{\left\{ \left. \Gamma \in \mathbb{R}^{N \times T} \, \right| \, {\rm rank}(\Gamma) \leq R  \right\}}  \;  \frac{1}{2NT} 
\left\| Y -\beta \cdot X- \Gamma \right\|_2^2.
    \label{DefQLS}
\end{align}
It is known that under appropriate regularity conditions (including exogeneity of $X_{k,it}$ with respect to $E_{it}$), 
for $R \geq R_0$,
and as $N,T \rightarrow \infty$ at the same rate, the LS estimator $\widehat{\beta}_{{\rm LS},R}$ is $\sqrt{NT}$-consistent and asymptotically normal, with a bias in the limiting distribution that can be corrected for
 (e.g., \citealt{Bai2009}, \citealt{MoonWeidner2015, MoonWeidner2017}).

The LS estimation approach is convenient, because it does not restrict the relationship between the unobserved heterogeneity ($\Gamma_0$) and the observed explanatory variables ($X_1,...,X_K$).  However, the calculation of $\widehat{\beta}_{{\rm LS},R} $ requires
solving a non-convex optimization problem. While  $\left\| Y -\beta \cdot X- \Gamma \right\|_2^2$ is a convex function
of $\beta$ and $\Gamma$, the profiled objective function $L_R(\beta)$ is in general not convex in $\beta$,
and can have multiple local minima, as will be discussed in Section~\ref{sec:ConvexProblem} in more detail. The reason for 
the non-convexity is that the constraint ${\rm rank}(\Gamma) \leq R$ is non-convex.
 In practice, researchers often use iterative methods that alternate between estimating $\beta$ given $\Gamma$ and updating $\Gamma$ given $\beta$. However, due to the non-convexity, such iterative procedures may converge to local minima or inconsistent estimators if initialized with arbitrary values. The nuclear norm regularization approach proposed in this paper provides a consistent initial estimator, which can then be used to initialize iterative procedures that are guaranteed to converge to the correct solution.
This non-convexity can also become a serious computational obstacle, for example, when the number of regressors is too large to allow
for a simple grid-search over $\beta$ (for a given $\beta$, the optimization over $\Gamma$ is a principal components problem that
can generally be solved quickly in the linear regression case). For generalizations to non-linear  panel regression models 
(e.g., \citealt{chen2014estimation}, \citealt{ChenFernandezValWeidner2014}), 
the optimization of the  non-convex objective function with respect to the 
high-dimensional factors and loadings
becomes even more challenging.

In this paper, we make several important contributions to the literature of 
panel regression with interactive fixed effects:

Our first contribution is to overcome the non-convexity issue of the least squares estimation problem above
by considering a panel regression with nuclear norm regularization, which provides a convex relaxation of the rank constraint in \eqref{DefQLS}. 
To be more specific, let
$s(\Gamma) := [ s_1(\Gamma) , \allowbreak s_2(\Gamma) , \allowbreak \ldots , 
s_{\min(N,T)}(\Gamma)]$
be the vector of   singular values of $\Gamma$.\footnote{
   The  non-zero singular values of $\Gamma$ are the square roots of non-zero eigenvalues of $\Gamma \Gamma'$.
  Singular values are non-negative by definition.
}
The rank of a matrix is equal to the number of non-zero singular values, that is, 
${\rm rank}(\Gamma) = \left\| s(\Gamma) \right\|_0$, where 
$\left\| v \right\|_0$ equals the number of non-zero elements of the vector $v$
(sometimes calles the ``$\ell^0$-norm'' of $v$).
The nuclear norm of $\Gamma$ is defined by 
$\| \Gamma \|_1 :=  \left\| s(\Gamma) \right\|_1 = \sum_{r=1}^{\min(N,T)} s_r( \Gamma )$,
that is, the nuclear norm of the matrix $\Gamma$ is simply the $\ell^1$-norm of the vector $s(\Gamma)$.\footnote{
The nuclear norm $\| \Gamma \|_1$ is the
convex envelope of ${\rm rank}(\Gamma)$ over the set of matrices with spectral norm at most one,
see e.g.\ \cite{RechtFazelParrilo2010}.
The nuclear norm is also sometimes called trace norm, Schatten 1-norm, or Ky Fan n-norm.
Our index notation for the  nuclear norm $  \| \Gamma \|_1$,
Frobenius norm $  \| \Gamma \|_2$, and spectral norm $  \| \Gamma \|_\infty = \lim_{q \rightarrow \infty} \| \Gamma \|_q$ 
is motivated by the unifying formula  $\| \Gamma \|_q  = \left\{ \sum_{r=1}^{\min(N,T)} \left[s_r(\Gamma)\right]^q \right\}^{1/q}$.
}
A convex relaxation of \eqref{DefQLS} can then be obtained
by replacing the non-convex constraint ${\rm rank}(\Gamma) \leq R$
by the convex constraint $\| \Gamma \|_1 \leq c$, for some constant $c$.
This gives
\begin{align}
& \min_{\left\{  \Gamma \in \mathbb{R}^{N \times T} \, \big| \, \| \Gamma \|_1 \leq c_\psi  \right\}}  \;  \frac{1}{2NT} 
\left\| Y -\beta \cdot X- \Gamma \right\|_2^2
  \nonumber \\ & \qquad \qquad \qquad \qquad
=   \min_{  \Gamma \in \mathbb{R}^{N \times T}}
  \left[ \frac{1}{2NT}  \left\| Y -\beta \cdot X- \Gamma \right\|_2^2 +  \frac{ \psi } {\sqrt{NT}} \left\| \Gamma \right\|_1  \right]
   =: Q_\psi(\beta) ,
    \label{DefQpsi}
\end{align}
where in the second line we replace the constraint on the nuclear norm by a 
nuclear-norm penalty term.\footnote{The normalizations with $1/(2NT)$ and 
$1/\sqrt{NT}$ in \eqref{DefQpsi} are somewhat arbitrary, but turn out to be 
convenient for our purposes.}
Choosing a particular penalization parameter $\psi > 0 $ is equivalent to choosing a particular value for $c= c_\psi$,
and we find it more convenient to parameterize the convex relaxation $Q_\psi(\beta) $ of $L_R(\beta)$
by $\psi$ instead of $c$. 

For a given $\psi>0$ the nuclear-norm regularized estimator reads
\begin{align*}
     \widehat{\beta}_{\psi} &:= 
   \argmin_{\beta \in \mathbb{R}^K}  
   Q_{\psi}(\beta) .
\end{align*}
We also define $\widehat{\beta}_{*} := \lim_{\psi \rightarrow 0}  \widehat{\beta}_{\psi}$ for fixed $N$ and $T$.\footnote{
Here, the limit $\psi \rightarrow 0$ is for fixed $N$ and $T$, and has nothing to do with our large $N$, $T$ asymptotic considerations.
}
We will show in Section~\ref{sec:NucNormMin}
that $ \widehat{\beta}_{*}  =    \argmin_{\beta}    \left\| Y - \beta \cdot X  \right\|_1$,
that is, $\widehat{\beta}_{*}$ can alternatively be obtained by minimizing the 
nuclear norm of $Y - \beta \cdot X$. The   nuclear norm minimizing estimator 
$\widehat{\beta}_{*} $ is novel to the literature as 
far as we know. 

  The second contribution of the paper is to derive asymptotic 
properties of $\widehat{\beta}_{\psi} $ and $\widehat{\beta}_{*} $. 
We establish asymptotic results for $\widehat \beta_\psi$
and $\widehat \beta_*$ when both panel dimensions become large.
Under appropriate regularity conditions we show $\sqrt{\min ( N,T 
)}$-consistency of these estimators.
We also show how to use $\widehat \beta_\psi$
and $\widehat \beta_*$ as initial values for a finite 
iteration procedure (avoiding a non-convex 
optimization) that gives improved estimates that are  asymptotically 
equivalent to the LS estimator.

The third contribution of the paper is to solve a potential identification 
problem for interactive fixed effect models by employing the nuclear norm 
penalization.  
Notice that without  restrictions on the parameter matrix $\Gamma_0$ in 
\eqref{model:linear.matrix},
we cannot separate $\beta_0 \cdot X$ and $\Gamma_0$ uniquely, because for any other parameter $\beta$
we can write
\begin{align*}
	Y &= \beta_0 \cdot  X + \Gamma_0 + E = \beta  \cdot  X + \Gamma(\beta,X) + E, 
\quad	\text{where}  
\quad 
	\Gamma(\beta,X)  := \Gamma_0 - (\beta - \beta_0) X ,
\end{align*}
implying that $(\beta_0, \Gamma_0)$ and $(\beta, 	\Gamma(\beta,X))$ are observationally equivalent. 
If any non-trivial linear combination of the regressors $X_k$ is a high-rank matrix, then
the assumption that 
$R_0 = {\rm rank}( \Gamma_0) \ll \min(N,T)$ is sufficient to identify $\beta_0$, because
 ${\rm rank}[\Gamma(\beta,X)]$ will be large for any other value of $\beta$.
However, if some of the regressors $X_k$ have low rank (as, for example, in 
\citealt{GobillonMagnac2016}, where $X_k$ is a panel of treatment variables) 
and the true number of factors $R_0$ is unknown,
then there is an identification problem, and some regularization device is needed to resolve this.
In Section~\ref{sec:Motivation} we show that,  under appropriate assumptions on the covariates, the nuclear norm penalization indeed provides  such a regularization device
to uniquely identify $\beta_0$.
 
Nuclear norm penalized estimation has been widely studied in the machine 
learning and statistical learning literature.
There, the parameter of interest is usually the matrix that we call $\Gamma$ in our model.
In particular, there are many papers that use this penalization method in matrix completion
(e.g., \citealt{RechtFazelParrilo2010} and \citealt{Hastieetal2015} for recent surveys),
and for reduced rank regression estimation
(e.g., \citealt{RohdeTsybakov2011}).
More recently, nuclear norm penalization has also been used in the econometrics literature:
\cite{bai2019rank} use it to improve estimation in a pure factor 
model.
\cite{Atheyetal2017} apply  nuclear norm penalization to treatment effect estimation with unbalanced panel data due to missing observations together with a regularization on the high dimensional regression coefficients -- their primary interest is to predict the left-hand side variable using the regularization.  \cite{chernozhukov2018inference} consider panel regression models with heterogeneous coefficients, while in this paper
we focus on panel regression with homogenous coefficients.	
To the best of our knowledge, our results here on the estimates of the common regression
coefficients $\beta_0$ are new in this literature,  and the  nuclear norm minimizing estimator $\widehat \beta_*$ has also not been proposed previously.

Since the 2018 working paper version of this paper, many related studies have used nuclear norm regularization and similar methods in panel data, factor models, and network models. In network settings, \cite{alidaee2020recovering} and \cite{ma2022detecting} use penalized methods to recover structure in networks. In panel data models, \cite{chernozhukov2018inference} use low-rank methods to study models with varying slopes, and \cite{belloni2019high}, \cite{wang2022low}, and \cite{feng_2023} extend this to quantile regressions, which relates to our work in Appendix~\ref{sec:Nonlinear:appendix}. \cite{miao2020panel} and \cite{miao2023high} look at threshold models and VARs with factor structures. \cite{chen2022unified} gives a general framework for conditional factor models using nuclear norm regularization, and also shows how to choose tuning parameters and prove consistency. \cite{beyhum2022factor} allow for weak (non-strong) factors, and \cite{fernandez2021low} use nuclear norm penalties to estimate nonseparable panel models. \cite{beyhum2019square} treat the low-rank structure as an approximation, allowing the rank to grow with sample size. \cite{chetverikov2022spectral} and \cite{mugnier2025simple} study grouped fixed effects, and use nuclear norm methods in the first step of their slope estimation. \cite{armstrong2022robust} also use the nuclear norm to make slope estimates more robust in interactive fixed effects models.
 \cite{hong2023profile} develop a profile GMM method that uses nuclear norm regularization and allows for endogeneity. \cite{vogt2022cce} build a high-dimensional version of the CCE estimator, combining factor projections with Lasso. \cite{gao2024robust} study robust inference when the errors are not Gaussian and may be dependent over time and across units, using thresholded long-run covariance estimation. Finally, \cite{chen2022unified} (already mentioned above) develop a flexible method that applies to many types of conditional factor models.
Overall, this recent work shows how nuclear norm regularization and related convex methods are useful for many problems involving panel data, factor models, and networks.

The paper is organized as follows. Section~\ref{sec:Motivation} discusses theoretical motivations for employing nuclear norm regularization instead conventional rank restrictions. In Section~\ref{sec:Consistency}  we  derive consistency results on $\widehat{\beta}_{\psi}$ and $\widehat{\beta}_{*}$.
Section~\ref{sec:post estimation}  shows how to use these two estimates as a preliminary step to construct an estimator through iterations that achieves asymptotic equivalence to the fixed effect estimator. 
Section~\ref{sec:MC} investigates finite sample properties via simulations,  
and Section~\ref{sec:conc} concludes the paper. All technical derivations and proofs are presented in the appendix
or supplementary appendix.

\section{Motivation of Nuclear Norm Regularization}
\label{sec:Motivation}

In this section we provide further motivation and explanation of the nuclear norm regularized estimation method.
This estimation approach comes with the 
computational advantage of having a convex objective function, and
it also provides a solution to the identification problem of interactive fixed effect models
with low-rank regressors. 
\subsection{Convex Relaxation}
\label{sec:ConvexProblem}

We have already introduced the profile LS objective function $L_R(\beta)$ 
and its convex relaxation $Q_\psi(\beta)$ in the introduction.
Here, we explain those objective functions further.
Firstly,  we want to briefly explain why $Q_\psi(\beta)$
is indeed convex.
We have introduced the nuclear norm as
$\| \Gamma \|_1 :=  \sum_{r=1}^{\min(N,T)} s_r( \Gamma )$,
but it is not obvious from this definition that $\| \Gamma \|_1$ is convex in $\Gamma$, because the singular values $s_r( \Gamma )$
themselves are generally not convex functions of $\Gamma$, except for $r=1$. A  useful alternative definition of the nuclear norm is 
\begin{align}
     \| \Gamma \|_1  =  \max_{\left\{ A \in \mathbb{R}^{N \times T} \big|  \|A\|_{\infty} \leq 1 \right\}} {\rm Tr}(A' \, \Gamma) ,
     \label{NuclearNormAlternativeDef}
\end{align}
that is, the nuclear norm is dual to the spectral norm $\| \cdot \|_\infty$.
 From this  it is easy to see that 
$\| \cdot \|_1$ is indeed a matrix norm, and thus convex in $\Gamma$.\footnote{
 Let $B$ and $C$ be matrices of the same size. Then, by \eqref{NuclearNormAlternativeDef} there exists a matrix $A$ of the same size
 with $\|A\|_{\infty} \leq 1$  
 such that $\|B+C\|_1 = {\rm Tr}[A' \, (B+C)] =  {\rm Tr}(A' B) + {\rm Tr}(A' C) \leq \|B\|_1 + \|C\|_1 $,
 which is the triangle inequality for the nuclear norm. Together with absolute homogeneity of 
 $\left\| \cdot \right\|_1$ this implies convexity.
}
Therefore, the nuclear norm regularized objective function
\[ 
	\frac{1}{2NT}  \left\| Y -\beta \cdot X- \Gamma \right\|_2^2 +  \frac{ \psi } {\sqrt{NT}} \left\| \Gamma \right\|_1
\] 
as a function of $(\beta,\Gamma)$ is convex. 
Profiling with respect to $\Gamma$ preserves convexity, that is, $Q_\psi(\beta)$ is also convex.

By contrast, the least squares objective function $\frac{1}{2NT} \left\| Y 
-\beta \cdot X- \lambda f^{\prime} \right\|_2^2$
is generally non-convex in the parameters $\beta$, $\lambda$ and $f$.
However, the non-convexity of the LS minimization over $\lambda$ and $f$ is actually not a serious problem in computing the profile objective function  $L_R(\beta)$, as long as the
regression model is linear and one of the dimensions $N$ or $T$ is not too large.\footnote{The optimal $\widehat \lambda$ and $\widehat f$ are simply given by the leading $R$ principal components of $Y - \beta \cdot X$. Calculating them requires finding the eigenvalues and eigenvectors of either the $N\times N$ matrix $(Y - \beta \cdot X) (Y - \beta \cdot X)'$
or the $T \times T$ matrix $(Y - \beta \cdot X)' (Y - \beta \cdot X)$, which takes at most a few seconds on modern
computers, as long as $\min(N,T) \lessapprox 5.000$, or so. The non-zero eigenvalues of 
$(Y - \beta \cdot X) (Y - \beta \cdot X)'$ and $(Y - \beta \cdot X)' (Y - \beta \cdot X)$ are identical,
and are equal to the square of the non-zero singular values of $Y - \beta \cdot X$.} Recall that $s_r(Y - \beta \cdot X)$ is the $r^{th}$ largest singular value of the matrix $(Y - \beta \cdot X)$, for $r=1,...,\min(N,T)$. One can show (see  \citealt{MoonWeidner2017}) that the profile least squares objective function is
\begin{align}
       L_R(\beta) &= \frac 1 {2 \, NT} \sum_{r=R+1}^{\min(N,T)} [s_r( Y - \beta \cdot X)]^2 ,
     \label{RewriteQLS0}
\end{align} 
where the largest $R$ singular values are omitted in the sum - because they are 
absorbed by the
 principal component estimates $\widehat \lambda$ and $\widehat f$. 
The remaining problem in calculating $\widehat \beta_{{\rm LS},R}$ is the 
 generally non-convex minimization of $L_R(\beta) $ over $\beta$.\footnote{
  In our discussion here we focus on the calculation of $\widehat \beta_{{\rm LS},R}$ via
  minimization of the profile objective function $L_R(\beta)$. More generally,
  $\widehat \beta_{{\rm LS},R}$ can be obtained by any method that minimizes 
  $ \left\| Y -\beta \cdot X- \lambda f^{\prime} \right\|_2^2$ over $\beta$, $\lambda$, $f$,
  see e.g.\ \cite{Bai2009} or the supplementary appendix in \cite{MoonWeidner2015}.
  For any such method the non-convexity of the objective function is a potential problem, 
  because the algorithm may converge to a local minimum, or potentially even to a critical point that is not a local minimum.
}
To illustrate the potential difficulty caused by this non-convexity, in Figure \ref{figure:LS_Nuc_together} we plot $L_R(\beta)$ for the simple example described in Appendix~\ref{subsec:ap.ex.nonconvex}. 
In this example $L_R(\beta) $ is non-convex and has two local minima, one of which (the global one) is close to the
true parameter $\beta_0=2$. The figure also shows that $Q_\psi(\beta)$ is convex and only has a single local minimum.

	\graphicspath{ {Figures/LS_Comp_Example/} }
	\begin{figure}
	\begin{center}
		\includegraphics[width=0.5\textwidth]{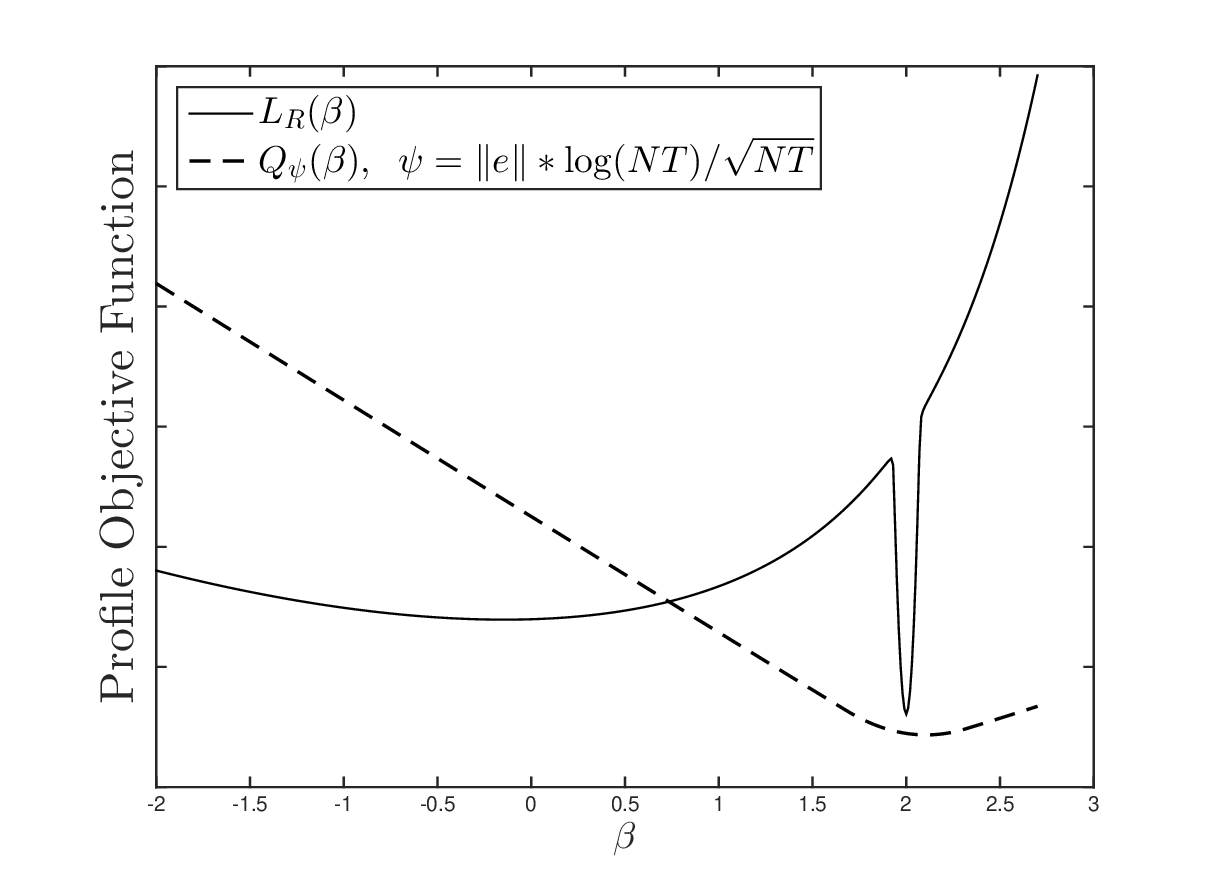}
		\caption{\label{figure:LS_Nuc_together}Plot of $L_R(\beta)$ and $Q_{\psi}(\beta)$ for the example detailed in Appendix~\ref{subsec:ap.ex.nonconvex}. The true parameter is $\beta_0=2$.}
	\end{center}
	\end{figure}

For any $\psi > 0$
define the functions $\ell_\psi: [0,\infty) \mapsto [0,\infty)$ and $q_\psi: [0,\infty) \mapsto [0,\infty)$
by
\begin{align}
\ell_\psi(s)  & := \left\{ 
\begin{array}{ll}
\frac 1 {2} \; s^2 , & \text{for $s <  \psi$,}
\\
0  ,  & \text{for $s \geq \psi$,}
\end{array}
\right. 
   &
q_\psi(s) := \left\{ 
\begin{array}{ll}
\frac 1 {2} \; s^2 , & \text{for $s <  \psi$,}
\\
\psi s   - \frac {\psi^2} 2  ,  & \text{for $s \geq \psi$.}
\end{array}
\right.    
    \label{DefSmallLQ}
\end{align} 
For an $N \times T$ matrix $A$ let 
$
    \ell_\psi(A)   := \sum_{r=1}^{\min(N,T)} \ell_\psi(s_r(A))$ 
 and   
$
     q_\psi(A) := \sum_{r=1}^{\min(N,T)} q_\psi(s_r(A))$.
We can then rewrite \eqref{RewriteQLS0} as
\begin{align}
     L_R(\beta)  &=   \ell_{\psi(\beta,R)}\left( \frac {Y - \beta \cdot X} {\sqrt{NT}}    \right) ,
     \label{RewriteQLS}
\end{align}
where $\psi(\beta,R)$ satisfies
 \begin{align}
    s_{R+1}\left( \frac {Y - \beta \cdot X} {\sqrt{NT}}    \right)  < \psi(\beta,R) \leq  s_{R}\left( \frac {Y - \beta \cdot X} {\sqrt{NT}}    \right) .
    \label{RelationRpsi}
\end{align}
Here, the normalization with $1/\sqrt{NT}$ is natural, because under standard assumptions the largest singular value of $Y - \beta \cdot X$
is of order $\sqrt{NT}$, as $N$ and $T$ grow.
The formulation \eqref{RewriteQLS}  is interesting for us,
because  the following lemma shows that we have a very similar representation for $Q_\psi(\beta) $.
\begin{lemma}
    \label{lemma:QPsiRewrite}
    For any $\beta \in \mathbb{R}^K$ and any $\psi > 0$
    we have
    \begin{align*}
         Q_\psi(\beta) &= q_\psi\left( \frac {Y - \beta \cdot X} {\sqrt{NT}}    \right) .
    \end{align*}   
\end{lemma}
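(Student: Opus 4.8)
The plan is to recognize $Q_\psi(\beta)$ as (a rescaled version of) the proximal map of the nuclear norm evaluated at $Y-\beta\cdot X$, and to evaluate it by reducing the matrix minimization over $\Gamma$ to a family of one-dimensional soft-thresholding problems in the singular values. First I would rescale: writing $M:=Y-\beta\cdot X$ and substituting $\Gamma=\sqrt{NT}\,\widetilde\Gamma$ turns the bracket in \eqref{DefQpsi} into $\tfrac12\|\widetilde M-\widetilde\Gamma\|_2^2+\psi\|\widetilde\Gamma\|_1$ with $\widetilde M:=M/\sqrt{NT}$, so it suffices to show $\min_{\widetilde\Gamma}\big[\tfrac12\|\widetilde M-\widetilde\Gamma\|_2^2+\psi\|\widetilde\Gamma\|_1\big]=\sum_r q_\psi(s_r(\widetilde M))$. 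Existence of a minimizer is immediate since the objective is continuous and coercive (the nuclear-norm term dominates).

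For the lower bound I would expand $\|\widetilde M-\widetilde\Gamma\|_2^2=\|\widetilde M\|_2^2+\|\widetilde\Gamma\|_2^2-2\,\Tr(\widetilde M'\widetilde\Gamma)$ and invoke von Neumann's trace inequality, $\Tr(\widetilde M'\widetilde\Gamma)\le\sum_r s_r(\widetilde M)\,s_r(\widetilde\Gamma)$. Since $\|\widetilde\Gamma\|_2^2$ and $\|\widetilde\Gamma\|_1$ depend on $\widetilde\Gamma$ only through its singular values, this yields
\begin{align*}
\tfrac12\|\widetilde M-\widetilde\Gamma\|_2^2+\psi\|\widetilde\Gamma\|_1
&\;\ge\;\sum_r\Big[\tfrac12\big(s_r(\widetilde M)-s_r(\widetilde\Gamma)\big)^2+\psi\,s_r(\widetilde\Gamma)\Big]
\;\ge\;\sum_r\min_{\sigma\ge 0}\Big[\tfrac12\big(s_r(\widetilde M)-\sigma\big)^2+\psi\,\sigma\Big].
\end{align*}
The scalar problem $\min_{\sigma\ge0}\big[\tfrac12(s-\sigma)^2+\psi\sigma\big]$ is solved by $\sigma^\star=\max(s-\psi,0)$, with optimal value exactly $q_\psi(s)$ from \eqref{DefSmallLQ} — a one-line computation splitting the cases $s<\psi$ and $s\ge\psi$. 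Hence $Q_\psi(\beta)\ge\sum_r q_\psi(s_r(\widetilde M))$.

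For the matching upper bound I would simply exhibit the minimizer: writing the SVD $\widetilde M=\sum_r s_r(\widetilde M)\,u_r v_r'$, set $\widetilde\Gamma^\star:=\sum_r \max(s_r(\widetilde M)-\psi,0)\,u_r v_r'$. This shares singular vectors with $\widetilde M$, so von Neumann's inequality binds; its $r$-th singular value is the per-coordinate optimizer $\sigma_r^\star$, and plugging in gives objective value $\sum_r q_\psi(s_r(\widetilde M))$. One small point to verify in passing: since $s_r(\widetilde M)$ is non-increasing in $r$, so is $\sigma_r^\star=\max(s_r(\widetilde M)-\psi,0)$, so $\widetilde\Gamma^\star$ is a bona fide matrix whose singular values are correctly ordered, and the decoupled minimum indeed coincides with the constrained one. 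Undoing the substitution then gives $Q_\psi(\beta)=q_\psi\big((Y-\beta\cdot X)/\sqrt{NT}\big)$.

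The only genuinely non-routine ingredient is von Neumann's trace inequality together with the characterization of its equality case (simultaneous SVD); the rest is the elementary scalar soft-thresholding calculation plus bookkeeping of the rescaling, so I expect that to be the main — and mild — obstacle.
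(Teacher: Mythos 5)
Your proof is correct, but it takes a genuinely different route from the paper's for the key lower bound. The paper proves the underlying matrix fact (its Lemma on $\min_{\Gamma}\bigl[\tfrac12\|A-\Gamma\|_2^2+\psi\|\Gamma\|_1\bigr]=q_\psi(A)$) by using the dual characterization $\|\Gamma\|_1=\max_{\|B\|_\infty\le 1}\Tr(\Gamma'B)$ already introduced in \eqref{NuclearNormAlternativeDef}: it picks the explicit dual certificate $B^*=U_A D^* V_A'$ with $d_r^*=\min(1,\psi^{-1}s_r)$, replaces $\|\Gamma\|_1$ by $\Tr(\Gamma'B^*)$ to get a purely quadratic problem in $\Gamma$, solves it in closed form ($\Gamma=A-\psi B^*$), and reads off the value $q_\psi(A)$. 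You instead expand the Frobenius term and invoke von Neumann's trace inequality $\Tr(\widetilde M'\widetilde\Gamma)\le\sum_r s_r(\widetilde M)s_r(\widetilde\Gamma)$ to decouple the problem into scalar soft-thresholding problems. Both arguments are sound and share the same upper-bound construction (the soft-thresholded SVD candidate) and the same rescaling step; the trade-off is that the paper's route is self-contained given the duality formula it has already stated, whereas yours imports von Neumann's inequality as an external (if classical) ingredient — though you do not actually need its equality case, since you verify the upper bound by direct evaluation at $\widetilde\Gamma^\star$. Your remark that the ordering of the thresholded values is preserved is a correct and worthwhile piece of bookkeeping that the paper handles implicitly.
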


\graphicspath{ {Figures/} }
\begin{figure}[tb]
\centering
\includegraphics[width=0.4\textwidth]{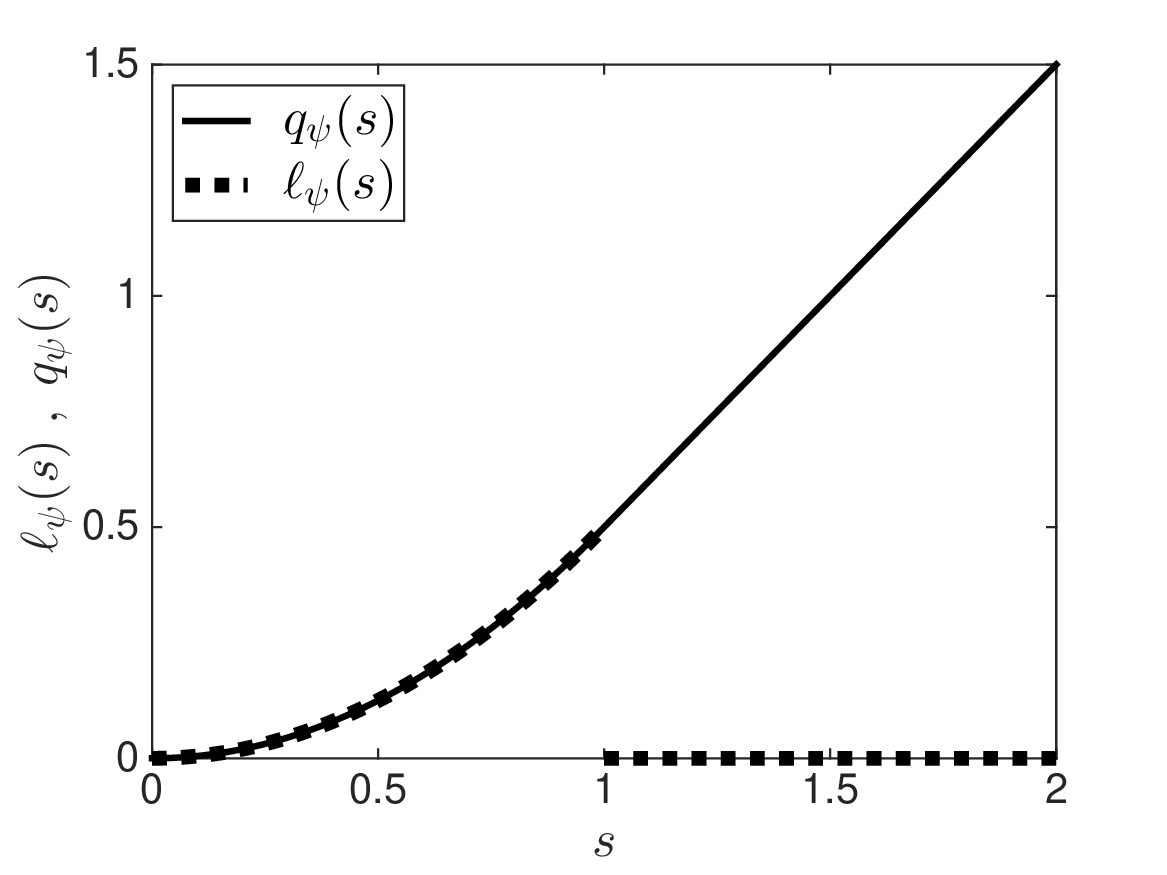}
\caption{\label{fig:FunctionsGH}Plot of the functions $q_\psi(s)$ and $\ell_\psi(s)$ for $\psi=1$.}
\end{figure}

The proof is given in the appendix.
 We note that Lemma~\ref{lemma:QPsiRewrite} implies convexity of $Q_\psi(\beta)$ in $\beta$, since $(Y - \beta \cdot X)/\sqrt{NT}$ is linear in $\beta$ and $A \mapsto \sum_{r} q_\psi(s_r(A))$ is convex in $A$. The latter follows from a standard result in convex analysis: if $g: \mathbb{R} \rightarrow \mathbb{R}$ is convex and nondecreasing on $[0,\infty)$, then $A \mapsto \sum_{r} g(s_r(A))$ is convex in the matrix $A$ (see, e.g., \citealt{lewis1995convex}).
Figure~\ref{fig:FunctionsGH} shows the functions $q_\psi(s)$ and $\ell_\psi(s)$ for real valued arguments $s$ and $\psi=1$.

Comparing $L_R(\beta)$ and $Q_\psi(\beta)$
we see that  the parameter $R$ that counts the number of factors
    is replaced by the parameter $\psi$ that characterizes the magnitude
       at which the singular values of $(Y -\beta \cdot X) / \sqrt{NT}$ are considered to be factors,
and for a given $\beta$ the relationship between $R$ and $\psi$ is given by \eqref{RelationRpsi}.        
Large $R$ corresponds to small $\psi$, and vice versa.
Furthermore, $\widehat{\Gamma}_{\psi}(\beta) := \argmin_{\Gamma} \frac{1}{2NT}  \left\| Y -\beta \cdot X- \Gamma \right\|_2^2 +  \frac{ \psi } {\sqrt{NT}} \left\| \Gamma \right\|_1$ has singular values\footnote{See Lemma \ref{lemma:MinGamma} in the supplementary appendix
for details.}
       \[ 
       		s_r\left( \widehat{\Gamma}_{\psi}(\beta) \right) = \max \left(s_r\left( (Y -\beta \cdot X) / \sqrt{NT} \right) - \psi,0 \right) \quad r  = 1,...,\min(N,T),
       \]
that is, the nuclear norm penalization  shrinks the singular values of       $Y 
-\beta \cdot X$ towards zero by a fixed amount.

    Fixing $\psi$ as opposed to fixing $R$ already changes the functional form of the profile objective function,
       because according to \eqref{RelationRpsi} their relationship depends on $\beta$.              
In addition, the objective function is convexified by replacing the function $ \ell_\psi(s)$ that is applied
    to the singular values of $(Y - \beta \cdot X) / \sqrt{NT}$ with
    the function $q_\psi(s)$, as defined in \eqref{DefSmallLQ}.
       The function $q_\psi(s)$ provides a convex continuation of $ \ell_\psi(s)$ for $s \geq \psi$.

Using the closed-form expression for $Q_{\psi}(\beta)$ in Lemma \ref{lemma:QPsiRewrite}, and noticing that it is convex in $\beta$, one can compute the minimizer $\widehat{\beta}_{\psi}$ of $Q_{\psi}(\beta)$ using various optimizing algorithms for a convex function (see chapter 5 of \citealt{Hastieetal2015}). If the dimension of $\beta$ is small, then one may even use a simple grid search method to find $\widehat{\beta}_{\psi}$.
 We will discuss a data dependent choice of the penalty parameter $\psi$ in Section~\ref{sec:MC}.

\subsection{Unique Matrix Separation}
\label{sec:MatrixSeparation}

When estimating the interactive fixed effect model \eqref{ModelBasic}, in 
practice
both $\beta_0$ and $R_0$ are unknown. Showing that $\beta_0$ and $R_0$ can be  consistently estimated jointly
is a difficult problem in general.\footnote{%
The problem of joint identification of $\beta_0$ and $R_0$ is often avoided in the literature.
Some papers (e.g. \citealt{Bai2009}, \citealt{li2016panel}, \citealt{MoonWeidner2017})
 assume that the number of factors $R_0$ is known when showing consistency 
for an estimator of $\beta_0$.
Alternatively, \cite{LuSu2016} allow for unknown $R_0$, but assume consistency of their estimator for $\beta_0$.
}
Within the interactive fixed effects estimation framework this joint inference problem has only been successfully addressed when both
of the following assumptions are satisfied:\footnote{
Some existing estimation methods avoid specifying $R$ when estimating $\beta_0$, but always at the cost of
 some additional assumptions on the data generating process.  For example, the common correlated effects estimator
 of \cite{Pesaran2006} avoids choosing $R$, but requires assumptions on how the factors $f_0$ enter into the 
 observed regressors $X_k$, and requires all regressors of interest to be high-rank.
}
\begin{itemize}
     \item[(C1)] There is a known upper bound $R_{\max}$ such that $R_0 \leq R_{\max}$.
     \item[(C2)] All the regressors $X_k$ are ``high-rank regressors'', that is, ${\rm rank}(X_k)$ is large for all $k$.
\end{itemize}
Under those assumptions (and other  regularity conditions) 
the consistency proofs of  \cite{Bai2009} and \cite{MoonWeidner2015} are applicable to the LS estimator  for $\beta$
that uses $R=R_{\max} \geq R_0$ factors in the estimation, and
one can also  show  the convergence rate result
$\big\| \widehat \beta_{{\rm LS},R_{\max}} - \beta_0 \big\| = O_P \left( \min(N,T)^{-1/2} \right)$,  as $N,T \rightarrow \infty$.
To obtain a consistent estimator for $R_0$ one can then apply inference methods from pure factor models without regressors
 (e.g.\ \citealt{BaiNg2002}, \citealt{Onatski2010}, \citealt{AhnHorenstein2013}) 
 to the matrix $Y - \widehat \beta_{{\rm LS},R_{\max}} \cdot X$.

The condition (C2) above is particularly strong, because ``low-rank regressors'' are quite common in practice.
If we can write $X_{k,it} = w_{k,i} v_{k,t}$, then we have ${\rm rank}(X_k) = 1$, and the condition (C2) is violated.
For example, \cite{GobillonMagnac2016} estimate an interactive fixed effects model in a panel treatment effect setting,
where the main regressor of interest indeed can be multiplicatively decomposed in this way,
with  $w_{k,i}$ being the treatment indicator of unit $i$, and $v_{k,t}$ being the time indicator of treatment. Those 
 interactive fixed effects models for panel treatment effect applications have  grown very popular recently.\footnote{%
Other recent applications in the same vein as \cite{GobillonMagnac2016} are \cite{chan2016policy}, \cite{powell2017synthetic}, \cite{gobillon2017local},
\cite{adams2017identification}, \cite{piracha2017immigration},
 \cite{li2018inference}, to list just a few. 
This literature is also related to the synthetic control method 
(\citealt{abadie2003economic}, \citealt{abadie2010synthetic}, \citealt{abadie2015comparative};
see also \citealt{hsiao2012panel}).
}
However, when $R_0$ is unknown, then the presence of such low-rank regressors creates an identification problem,
as illustrated by the following example.

\begin{example}
	\label{ex:ID}
	Consider  a single ($K=1$)
	low-rank regressor $X_1=v w'$, with vectors $v \in \mathbb{R}^N$ and $w \in \mathbb{R}^T$. 
	Let $R_{\bigstar} = R_0 + 1$, $\lambda_{\bigstar} = [\lambda_0, v ]$, and $f_{\bigstar} = [f_0,  (\beta_{0,1} - \beta_{\bigstar,1}) w]$.
	Then, model \eqref{ModelBasic} with parameters $\beta_0$, $R_0$, $\lambda_0$, $f_0$
	is observationally equivalent to the same model with parameters $\beta_{\bigstar}$, $R_{\bigstar}$, $\lambda_{\bigstar}$, $f_{\bigstar}$,
	because we have $\beta_{0,1} X_1 + \lambda_0 f_0' = \beta_{\bigstar,1} X_1 + \lambda_\bigstar f_\bigstar'$.
	Thus,  $\beta_0$ is observationally equivalent to any other value $\beta_\bigstar$
	if the true number of factors is unknown.
\end{example}
The example shows that regression coefficients of low-rank regressors are  not identified if $R_0$ is unknown, because 
$\beta \cdot X$ could simply be absorbed into the factor structure $\lambda f'$, which is also a low-rank matrix. Therefore, without some additional assumption or regularization device, the two low-rank matrices $ \beta_0 \cdot X$ and $\Gamma_0 = \lambda_0 f_0'$ cannot be uniquely disentangled, which is what we mean by ``unique matrix separation'' in the title of this section.

\subsubsection*{Nuclear Norm Minimizing Estimation}
\label{sec:NucNormMin}

Both the nuclear norm regularized estimator $\widehat{\beta}_{\psi}$ and the nuclear norm minimizing estimator $\widehat{\beta}_{*}$ can be computed without specifying the number of factors $R_0$, thus avoiding the restriction~(C1). In this subsection we characterize $\widehat{\beta}_{*}$ and provide intuition for why these estimators can achieve consistency even when $R_0$ is unknown.
We already introduced $\widehat{\beta}_{*} = \lim_{\psi \rightarrow 0}  \widehat{\beta}_{\psi}$ in Section~\ref{sec:Intro}.
Using Lemma~\ref{lemma:QPsiRewrite} we can now characterize $\widehat{\beta}_{*}$ differently.
It is easy to see that $ \lim_{\psi \rightarrow 0}  \psi^{-1} q_\psi(s) = s$, for $s \in [0,\infty)$,
and therefore $ \lim_{\psi \rightarrow 0}  \psi^{-1} q_\psi(A) =  \| A\|_1$, for $A \in \mathbb{R}^{N \times T}$.
Lemma~\ref{lemma:QPsiRewrite} thus implies that
$\lim_{\psi \rightarrow 0}  \psi^{-1} Q_\psi(\beta)  = \| (Y - \beta \cdot X) / \sqrt{NT} \|_1$.
Another way to see this is as follows. 
According to \eqref{RelationRpsi},
the limit $\psi \rightarrow 0$
corresponds to choosing $R$ very large, i.e., $R = \min(N,T)$. In this case, $\widehat{\Gamma}_{\psi}(\beta) = Y - \beta \cdot X$, and the profile objective function is $Q_{\psi}(\beta) = \frac{\psi}{\sqrt{NT}} \| \widehat{\Gamma}_{\psi} (\beta) \|_1 = \frac{\psi}{\sqrt{NT}} \| Y - \beta \cdot X \|_1.$ From this we deduce $\lim_{\psi \rightarrow 0} \psi^{-1} Q_{\psi}(\beta) = \frac{\| Y - \beta \cdot X \|_1}{\sqrt{NT}}.$

Notice that for $\psi=0$ we   trivially have $Q_0(\beta) = 0$,
but the rescaled objective function $  \psi^{-1} Q_\psi(\beta) $ nevertheless has a non-trivial limit as $\psi \rightarrow 0$.
Since rescaling the objective function by a constant does not change the minimizer we
thus find that
\begin{align}
      \widehat{\beta}_{*}
      =    \argmin_{\beta \in \mathbb{R}^K}    \left\| Y - \beta \cdot X  \right\|_1,
     \label{NuclearNormMin}
\end{align}
that is, the small $\psi$ limit of the nuclear norm regularized estimator $ \widehat{\beta}_{\psi}$   is 
the nuclear norm minimizing estimator $\widehat{\beta}_{*}$. The objective function  $\left\| Y - \beta \cdot X  \right\|_1$ is
convex in $\beta$.

We cannot expect the LS estimator $\widehat{\beta}_{{\rm LS},R} $
to have good properties (in particular consistency) if we choose the number of factors equal to, or close to, its maximum possible
value $R=\min(N,T)$.
It is therefore somewhat surprising that $ \widehat{\beta}_{\psi}$ has a well-defined limit as $\psi \rightarrow 0$,
and that we are able to show consistency of the limiting estimator $\widehat{\beta}_{*}$ under appropriate
regularity conditions in the following sections, because the resulting estimator for $\Gamma$ is certainly
not consistent for $\Gamma_0$ in that limit.\footnote{
The $\psi \rightarrow 0$ limit (for fixed $N$, $T$) of the optimal $\Gamma$ in \eqref{DefQpsi}
is   $Y -  \widehat{\beta}_{*} \cdot X$, which as $N$ and $T$ grow converges
to $\lambda_0 f_0'  + E$ for consistent $\widehat{\beta}_{*}$, that is,
the estimator for $\Gamma$ that corresponds to $\widehat{\beta}_{*}$ is not consistent for $\lambda_0 f_0' $. 
}

The main significance of $\widehat{\beta}_{*}$ is that it provides an estimator for $\beta$
that does not require any choice of ``bandwidth parameter'', because neither $R$ nor $\psi$ needs to be specified.
It thus provides a method to estimate $\beta_0$ consistently
without requiring knowledge of an upper bound on $R_0$ as in the condition (C1) above.
In a second step we can then estimate $R_0$ consistently by applying, for example,
the \cite{BaiNg2002} method for pure factor models without regressors to the matrix $Y - \widehat{\beta}_{*} \cdot X$.

Notice that the pooled OLS estimator $\beta_0$ minimizes $\| Y - \beta \cdot X \|_2^2 = \sum_{r=1}^{\min(N,T)} s_r( Y - \beta \cdot X)^2$,
 the $\ell^2$-norm of the singular values of the residual matrix, $Y - \beta \cdot X$, while the nuclear norm minimizing estimator $\widehat{\beta}_*$ minimizes the $\ell^1$-norm, $\| Y - \beta \cdot X \|_1 = \sum_{r=1}^{\min(N,T)} s_r( Y - \beta \cdot X),$ of the residual matrix. 
 The relationship between these two estimators is therefore analogous to that of the OLS estimator and the LAD (least absolute deviation) estimator
for cross-sectional samples.
$\widehat{\beta}_*$ is robust with respect to the unobserved factors, which are ``outliers" in the singular value spectrum, while the pooled OLS estimator is not robust
 towards the presence of those unobserved factors (because they may be correlated with the regressors).

To understand intuitively why $\widehat{\beta}_{*}$ can be consistent, consider that at any $\beta \neq \beta_0$ the residual matrix is $Y - \beta \cdot X = \Gamma_0 + E + X \cdot (\beta_0 - \beta)$. For $\widehat{\beta}_{*}$ to correctly identify $\beta_0$, we need that $\| Y - \beta \cdot X \|_1 > \| Y - \beta_0 \cdot X \|_1 = \| \Gamma_0 + E \|_1$ for all $\beta \neq \beta_0$, which requires that the term $X \cdot (\beta_0 - \beta)$ does not cancel out singular values of $\Gamma_0 + E$. Since $\Gamma_0$ is low-rank and $E$ is generically full-rank, this imposes different requirements depending on the rank of $X$: (i) if $X$ is low-rank, then $X$ must operate under sufficiently different row and column spaces than $\Gamma_0$, so that $X \cdot (\beta_0 - \beta)$ cannot be absorbed into the factor structure; (ii) if $X$ is high-rank, then $X$ must be sufficiently different from $E$, so that $X \cdot (\beta_0 - \beta)$ does not simply offset the idiosyncratic errors. The formal conditions in Section~\ref{sec:Consistency} make these requirements precise. Notice also that even when $\widehat{\beta}_{*}$ is consistent for $\beta_0$, the corresponding residual $Y - \widehat{\beta}_{*} \cdot X$ estimates $\Gamma_0 + E$, not $\Gamma_0$ alone.

\subsubsection*{Nuclear Norm Penalization Approach for Matrix Separation}
\label{subsec:MatrixSep}

Next, we provide an identification result showing how the nuclear norm regularization approach helps to overcome
the restrictions (C2) above, that is, how to estimate regression coefficients 
for low-rank regressors when $R_0$ is unknown.
The goal is to provide conditions on the regressors $X_k$ under which the nuclear norm penalization approach 
indeed solves the matrix separation problem for low-rank regressors
and interactive fixed effects.

To build intuition, we first present an identification result based on the expected objective function for fixed $N$ and $T$.
We consider
\begin{align}
\bar \beta_\psi  := \argmin_{\beta} \min_\Gamma 
  \left\{ \frac{1}{2NT} \, \mathbb{E}\left[  \left\| Y -\beta \cdot X- \Gamma \right\|_2^2 \Big| X \right]
   +  \frac{ \psi } {\sqrt{NT}} \left\| \Gamma \right\|_1  \right\} .
   \label{DefBarBeta}
\end{align}
Here, the expectation is conditional on all the regressors $(X_1,\ldots,X_K)$, and also implicitly on
all the parameters $\beta_0$ and $\Gamma_0$, because those are treated as non-random.\footnote{
$\bar \beta_\psi$ can be viewed as a population version of $\widehat \beta_\psi$
for an appropriately defined population distribution of $Y$ conditional on $X$.
Proposition~\ref{prop:unique.separation} below establishes identification of $\beta_0$ for this population objective function. The sample analog results for $\widehat{\beta}_{\psi}$ and $\widehat{\beta}_{*}$ are established in Section~\ref{sec:Consistency} using different proof techniques.
}

For a matrix $A$, let $\P_A := A (A'A)^{\dagger} A'$ and $\M_A := \mathbf{I}  - \P_A$ be
the projectors onto and orthogonal to the column span of $A$,
where $\mathbf{I} $ is the identity matrix of appropriate dimensions,
and $\dagger$ refers to the Moore-Penrose generalized inverse.
Remember also our notation $\alpha \cdot X := \sum_{k=1}^K \alpha_k X_k$
for $\alpha \in \mathbb{R}^K$. For vectors $v$ we write $\|v\|$ for the Euclidian norm.

\begin{proposition}\label{prop:unique.separation}
	Suppose that $N$, $T$, $R_0$ and $K$ are fixed.
	Let $\mathbb{E}( \left.  E_{it} \, \right|  \, X   )  = 0$, and $  \mathbb{E}\left(  \left. E_{it}^2  \, \right|  \, X    \right) < \infty$, for all $i,t$.
             For all $\alpha \in \mathbb{R}^{K} \setminus \{0\}$ assume that
	\begin{equation}
		\left\|   \M_{\lambda_0} (\alpha \cdot X) \M_{f_0} \right\|_1 
		>   \left\| \P_{\lambda_0} (\alpha \cdot X) \P_{f_0} \right\|_1  . 
		\label{prop:unique.separation.regularity.cond}
	\end{equation}
	Then, 	$\left\| \bar \beta_\psi - \beta_0 \right\| = O(\psi)$, as $\psi \rightarrow 0$.
\end{proposition}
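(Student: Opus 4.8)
The plan is to strip out the randomness, reduce the population problem \eqref{DefBarBeta} to a deterministic optimization in $\alpha := \beta - \beta_0$, profile out $\Gamma$ via the identity behind Lemma~\ref{lemma:QPsiRewrite}, and then trap any minimizer between an upper bound obtained by comparison with $\alpha = 0$ and a sharp \emph{linear} lower bound obtained from the subdifferential of the nuclear norm at $\Gamma_0$. Since $\E(E_{it}\mid X) = 0$ and $\E(E_{it}^2\mid X) < \infty$, for fixed $(\beta,\Gamma)$ we have $\E\big(\|Y - \beta\cdot X - \Gamma\|_2^2\,\big|\,X\big) = \big\|\Gamma_0 - (\beta - \beta_0)\cdot X - \Gamma\big\|_2^2 + \E\big(\|E\|_2^2\,\big|\,X\big)$, and the last term is a finite constant independent of $(\beta,\Gamma)$. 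Hence $\bar\beta_\psi = \beta_0 + \argmin_{\alpha\in\R^K} g_\psi(\alpha)$ with $g_\psi(\alpha) := \min_\Gamma\big\{\tfrac{1}{2NT}\|\Gamma_0 - \alpha\cdot X - \Gamma\|_2^2 + \tfrac{\psi}{\sqrt{NT}}\|\Gamma\|_1\big\}$, and the computation in the proof of Lemma~\ref{lemma:QPsiRewrite} — which uses only the structure of the minimization, not the particular matrix $Y - \beta\cdot X$ — gives $g_\psi(\alpha) = q_\psi\big((\Gamma_0 - \alpha\cdot X)/\sqrt{NT}\big)$. A minimizer $\alpha_\psi$ exists because $g_\psi$ is continuous and (by the lower bound below) coercive; the argument applies to any minimizer.

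Next I would record the elementary two-sided bound $\psi s - \tfrac{\psi^2}{2} \le q_\psi(s) \le \psi s$ for all $s \ge 0$ (the lower inequality because $(s - \psi)^2 \ge 0$), so, summing over the $\min(N,T)$ singular values, $\psi\|A\|_1 - \tfrac{\min(N,T)}{2}\psi^2 \le q_\psi(A) \le \psi\|A\|_1$ for every $N\times T$ matrix $A$. The comparison step is then immediate: once $\psi < s_{R_0}(\Gamma_0)/\sqrt{NT}$, all $R_0$ nonzero singular values of $\Gamma_0/\sqrt{NT}$ exceed $\psi$, so $g_\psi(0) = q_\psi(\Gamma_0/\sqrt{NT}) = \tfrac{\psi}{\sqrt{NT}}\|\Gamma_0\|_1 - \tfrac{R_0}{2}\psi^2 \le \tfrac{\psi}{\sqrt{NT}}\|\Gamma_0\|_1$, whereas $g_\psi(\alpha_\psi) \ge \tfrac{\psi}{\sqrt{NT}}\|\Gamma_0 - \alpha_\psi\cdot X\|_1 - \tfrac{\min(N,T)}{2}\psi^2$. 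From $g_\psi(\alpha_\psi) \le g_\psi(0)$, dividing by $\psi/\sqrt{NT}$ gives $\|\Gamma_0 - \alpha_\psi\cdot X\|_1 \le \|\Gamma_0\|_1 + \tfrac{\sqrt{NT}\,\min(N,T)}{2}\,\psi$.

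The crux, and the only place the regularity condition \eqref{prop:unique.separation.regularity.cond} enters, is the matching lower bound $\|\Gamma_0 - \alpha\cdot X\|_1 \ge \|\Gamma_0\|_1 + c\,\|\alpha\|$ for all $\alpha\in\R^K$, some $c>0$. Let $\Gamma_0 = U\Sigma V'$ be a thin SVD, so $U,V$ have orthonormal columns spanning the column spaces of $\lambda_0$ and $f_0$. Any matrix $UV' + W$ with $\P_{\lambda_0}W = 0$, $W\P_{f_0} = 0$, $\|W\|_\infty \le 1$ lies in $\partial\|\cdot\|_1(\Gamma_0)$ (see \cite{RechtFazelParrilo2010}), so convexity yields $\|\Gamma_0 - \alpha\cdot X\|_1 \ge \|\Gamma_0\|_1 - \Tr\!\big[(UV' + W)'(\alpha\cdot X)\big]$. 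Because the projectors are symmetric idempotents and $UV'$, $W$ have the appropriate column/row spaces, the cross term splits as $\Tr\!\big[(UV')'\,\P_{\lambda_0}(\alpha\cdot X)\P_{f_0}\big] + \Tr\!\big[W'\,\M_{\lambda_0}(\alpha\cdot X)\M_{f_0}\big]$; the first summand is $\le \|\P_{\lambda_0}(\alpha\cdot X)\P_{f_0}\|_1$ by the duality \eqref{NuclearNormAlternativeDef} (as $\|UV'\|_\infty = 1$), and taking $W = -\widetilde U\widetilde V'$, where $\widetilde U\widetilde\Sigma\widetilde V'$ is a thin SVD of $\M_{\lambda_0}(\alpha\cdot X)\M_{f_0}$ — a legitimate choice, since that matrix already has column/row spaces orthogonal to $\lambda_0$/$f_0$ — makes the second summand equal to $-\|\M_{\lambda_0}(\alpha\cdot X)\M_{f_0}\|_1$. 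Hence $\|\Gamma_0 - \alpha\cdot X\|_1 \ge \|\Gamma_0\|_1 + \phi(\alpha)$ with $\phi(\alpha) := \|\M_{\lambda_0}(\alpha\cdot X)\M_{f_0}\|_1 - \|\P_{\lambda_0}(\alpha\cdot X)\P_{f_0}\|_1$. The map $\phi$ is continuous and positively homogeneous of degree one, and \eqref{prop:unique.separation.regularity.cond} says it is strictly positive on the compact unit sphere; thus $c := \min_{\|\alpha\| = 1}\phi(\alpha) > 0$ and $\phi(\alpha) \ge c\|\alpha\|$ for all $\alpha$ (this also gives the coercivity used above). Combining with the comparison step, $c\,\|\alpha_\psi\| \le \tfrac{\sqrt{NT}\,\min(N,T)}{2}\,\psi$, i.e. $\|\bar\beta_\psi - \beta_0\| = O(\psi)$.

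I expect the subdifferential computation in the last paragraph to be the step that needs care: one must split the cross term correctly and verify that the direction-dependent $W$ genuinely lies in $\partial\|\cdot\|_1(\Gamma_0)$. The rest — the two-sided bound on $q_\psi$, the comparison at $\alpha = 0$, and the compactness/homogeneity argument upgrading pointwise positivity of $\phi$ to a uniform linear lower bound — is routine.
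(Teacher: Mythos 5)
Your proof is correct, and it reaches the conclusion by a genuinely different (and somewhat more elementary) route than the paper. The paper never reduces to a plain nuclear-norm inequality: it works throughout with the function $g_\psi$ (the $\psi^{-1}$-rescaled profiled objective) and relies on the superadditivity property $g_\psi(A)\geq g_\psi(\M_\lambda A \M_f)+g_\psi(\P_\lambda A \P_f)$ from Lemma~\ref{lemma:PropsG}(iii), applied through Lemmas~\ref{lemma:DiffQbound} and \ref{lemma:unique.separationInequality} with $E=0$, and only at the very end converts $g_\psi$ back to $\|\cdot\|_1$ at the cost of $\tfrac{\psi}{2}\,{\rm rank}(\cdot)$ terms. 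You instead sandwich $q_\psi$ crudely between $\psi\|A\|_1-\tfrac{\min(N,T)}{2}\psi^2$ and $\psi\|A\|_1$, which collapses the whole problem to the deterministic inequality $\|\Gamma_0-\alpha\cdot X\|_1\geq\|\Gamma_0\|_1+\phi(\alpha)$, and you prove that by exhibiting the subgradient $UV'-\widetilde U\widetilde V'$ of $\|\cdot\|_1$ at $\Gamma_0$. That subgradient step is sound: the column spaces of $U$ and $\widetilde U$ (resp.\ $V$ and $\widetilde V$) are orthogonal by construction, so the element lies in the subdifferential and the cross term splits exactly as you claim; it is in fact the same dual-element idea the paper uses inside the proof of Lemma~\ref{lemma:PropsG}(iii) for $\psi=0$, just deployed once and directly. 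What you lose is the constant: your bound $\|\bar\beta_\psi-\beta_0\|\leq \tfrac{\sqrt{NT}\,\min(N,T)}{2c}\,\psi$ carries a factor $\min(N,T)$ where the paper's display \eqref{BoundDeltaBeta} has only $R_0+\max_\alpha{\rm rank}[\M_{\lambda_0}(\alpha\cdot X)\M_{f_0}]$. For the proposition as stated ($N,T$ fixed, $\psi\to 0$) this is immaterial, but it is exactly the rank-dependent constant that lets the paper recycle the same argument for the large-$N,T$ result in Theorem~\ref{th:LowRankConsistency}; your version would not survive that extension without reinstating the finer control of $q_\psi$. A minor plus on your side: you explicitly verify existence of a minimizer via coercivity, which the paper leaves implicit.
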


The proof is given in the appendix. 
The proposition considers fixed $N$, $T$, with only $\psi \rightarrow 0$.\footnote{
Display \eqref{BoundDeltaBeta} in the appendix
 provides a bound on $\| \bar \beta_\psi - \beta_0 \|$
for finite $\psi$, but the limit $\psi \rightarrow 0$ is what matters most to us, because
that limit allows to identify $\beta_0$.
}
The statement of the proposition implies that $\lim_{\psi \rightarrow 0} \bar \beta_\psi = \beta_0 $.
Thus, the proposition provides conditions under which the nuclear norm regularization approach identifies
the true parameter $\beta_0$.
The proposition does not restrict the rank
of the regressors, so the result is applicable to both low-rank and high-rank regressors.
The assumption $\mathbb{E}( \left.  E_{it} \, \right|  \, X   )  = 0$
 requires strict exogeneity of all regressors, but we will allow for 
  pre-determined regressors in consistency results of Section~\ref{sec:GeneralX} below.

The beauty of Proposition~\ref{prop:unique.separation} is that it provides a very easy to interpret
non-collinearity condition on the regressors $X_k$. It requires that for any linear combination
of the regressors the part $  \M_{\lambda_0} (\alpha \cdot X) \M_{f_0}$, which cannot be explained
by neither $\lambda_0$ nor $f_0$, is larger in terms of nuclear norm than the
part  $ \P_{\lambda_0} (\alpha \cdot X) \P_{f_0}$, which can be explained by both $\lambda_0$
and $f_0$. For a single ($K=1$) regressor with $X_{1,it}=v_i w_t$, 
as in Example~\ref{ex:ID}, the condition simply becomes
 $\|   \M_{\lambda_0} v \| \| \M_{f_0} w \|  >   \|   \P_{\lambda_0} v \| \| \P_{f_0} w \| $.
Here, $\|   \M_{\lambda_0} v \|^2$ and $\|   \P_{\lambda_0} v \|^2$ are the 
residual sum of squares 
 and the explained sum of squares, respectively, of a regression of $v_i$ on 
 the 
 $\lambda_{0,i}$ 
 and analogously for $\| \M_{f_0} w \|^2 $ and $\| \P_{f_0} w \|^2 $.
 In Example~\ref{ex:ID} we obviously have $\|   \M_{\lambda_\bigstar} v \| =0$
 and $ \| \M_{f_\bigstar} w \| =0$, that is, the parameters
 $R_{\bigstar}$, $\beta_{\bigstar}$, $\lambda_{\bigstar}$, $f_{\bigstar}$
 are ruled out by the condition on the regressors in Proposition~\ref{prop:unique.separation}.

Related to the regularity condition (\ref{prop:unique.separation.regularity.cond}) of Proposition \ref{prop:unique.separation},  it is possible to show  (see  \citealt{Bai2009}, \citealt{MoonWeidner2017}) that the weaker condition 
$   \M_{\lambda_0} (\alpha \cdot X) \M_{f_0}  \neq 0$  for any linear combination $\alpha \neq 0$
is sufficient for local identification of $\beta$ in a sufficiently small neighbourhood around $\beta_0$.
However, that weaker condition is not sufficient for global identification of $\beta_0$,
as illustrated by the examples in the supplementary appendix S.3
of \cite{MoonWeidner2017}.  The stronger condition  (\ref{prop:unique.separation.regularity.cond}) in Proposition~\ref{prop:unique.separation}
guarantees global identification of $\beta_0$ when using the nuclear norm penalization approach
as a regularization device.

Providing such global identification conditions for models
with low-rank regressors and unknown $R_0$ is a new contribution to the interactive fixed effects literature.\footnote{
If the model would not have any idiosyncratic errors (i.e.\ $E=0$), then $Y - \beta \cdot X =  (\beta_0 - \beta) \cdot X + \Gamma_0$, and a natural solution to this identification problem would be to choose $\beta$ as the solution to the rank minimization problem 
$
\min_{\beta \in \mathbb{R}^K}  \; {\rm rank}\left(  Y - \beta \cdot X \right) ,
$
where at the true parameters we have ${\rm rank}\left(  Y - \beta_0 \cdot X \right)= {\rm rank}(\Gamma_0) = R_0$, that is,
we are minimizing the number of factors required to describe the data.
However, once idiosyncratic errors $E$ are present, then this rank minimization 
does not work, because $Y - \beta\cdot X$ is of large rank
for all $\beta$.
}
Our approach here is similar to the ``Identification via a Strict Convex Penalty''
proposed in \cite{chen2012estimation}.
As noted above, the proof of Proposition~\ref{prop:unique.separation} applies to the population objective function $\bar{\beta}_\psi$ for fixed $N$ and $T$. The consistency results for the sample estimators $\widehat{\beta}_{\psi}$ and $\widehat{\beta}_{*}$ in Section~\ref{sec:Consistency} are established using different asymptotic arguments, in particular relying on Lemma~S.3 in the supplementary appendix rather than on Proposition~\ref{prop:unique.separation} directly.

\section{Consistency of $\widehat \beta_\psi $ and $ \widehat \beta_*$} 
\label{sec:Consistency}

Proposition~\ref{prop:unique.separation} above provides an identification result for $\beta_0$ for fixed $N$ and $T$,
based on the expected objective function. 
We now turn to the actual estimates $\widehat \beta_\psi$ and $\widehat \beta_*$ 
and   investigate their  properties as $N,T \rightarrow \infty$.

All our consistency results for $\widehat \beta_\psi$
are for asymptotic sequences where
 $\psi = \psi_{NT} \rightarrow 0$, as $N,T \rightarrow \infty$,
 but we do not usually make the dependence
 of $\psi$ on the sample size explicit.
 In addition, we assume that the number of the regressors $K$
 and the true number of factors $R_0 = {\rm rank}(\Gamma_0)$ are both fixed.
However, we do not restrict whether the factors are strong or weak, nor do we restrict 
 the magnitude of $\Gamma_0$ in any matrix norm.

\subsection{Consistency Results for Low-Rank Regressors} 
\label{sec:LowRank}

Here, we consider a special case where the regressors $X_1,...,X_K$ are of low rank.
This section is short, because the results here are relatively straightforward extensions of
Section~\ref{sec:MatrixSeparation}.
The more general case that allows both high-rank and low-rank regressors will be discussed in the following subsection.

\begin{theorem}
	\label{th:LowRankConsistency}
Consider $N,T \rightarrow \infty$ with $\psi  \rightarrow 0$, and assume that
   \begin{itemize}
		\item[(i)] There exists a constant $c$ such that 
		\begin{align}
		\min_{\left\{ \alpha \in \mathbb{R}^{K} \, : \, \| \alpha\|=1 \right\}}  
		\left\| \frac { \M_{\lambda_0} (\alpha  \cdot X) \M_{f_0} } {\sqrt{NT}}\right\|_1
		-   \left\| \frac{ \P_{\lambda_0} (\alpha \cdot X) \P_{f_0}} {\sqrt{NT}} \right\|_1  
		\geq c >0 ,
		\label{LowRankNonColl}
		\end{align}
		for all sample sizes $N,T$.
		\item[(ii)]  $\| E \|_\infty = O_P(\sqrt{\max(N,T)})$,
		and $\sum_{k=1}^K {\rm rank}(X_k) = O_P(1)$.
 \end{itemize}
 Then we have
	\begin{align*}
	\left\| \widehat \beta_\psi  - \beta_0 \right\| &= O_P(\psi) + O_P\left( \frac 1 {\sqrt{\min(N,T)}} \right) ,
        &
	\left\| \widehat \beta_*  - \beta_0 \right\| &=  O_P\left( \frac 1 {\sqrt{\min(N,T)}} \right).
	\end{align*}
		
\end{theorem}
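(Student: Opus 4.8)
The plan is to derive both rates from the first‑order optimality inequality for each estimator together with a block decomposition of the residual matrix along the column and row spaces of $\Gamma_0$. I will do $\widehat\beta_*$ in detail and then indicate the parallel modifications for $\widehat\beta_\psi$.

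\textbf{The estimator $\widehat\beta_*$.} Set $\widehat\Delta := \widehat\beta_* - \beta_0$ and $D := \widehat\Delta\cdot X$. Using $\widehat\beta_* = \argmin_\beta \left\| Y - \beta\cdot X\right\|_1$ from \eqref{NuclearNormMin}, optimality gives $\left\|\Gamma_0 + E - D\right\|_1 \le \left\|\Gamma_0 + E\right\|_1$. I decompose every matrix into the four blocks cut out by $\P_{\lambda_0},\M_{\lambda_0}$ on the left and $\P_{f_0},\M_{f_0}$ on the right. By the pinching inequality for unitarily invariant norms the left‑hand side is $\ge \left\|\Gamma_0 + \P_{\lambda_0}E\P_{f_0} - \P_{\lambda_0}D\P_{f_0}\right\|_1 + \left\|\M_{\lambda_0}E\M_{f_0} - \M_{\lambda_0}D\M_{f_0}\right\|_1$; in the first term I drop $\P_{\lambda_0}D\P_{f_0}$ by the triangle inequality. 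For the second term I use the elementary fact that, for any $A$ and any $B$ with $\mathrm{rank}(B)\le\rho$, $\left\|A - B\right\|_1 \ge \left\|A\right\|_1 + \left\|B\right\|_1 - C\rho\left\|A\right\|_\infty$ (proved by pinching with respect to the column and row spans of $B$, then Weyl's inequality), applied with $A = \M_{\lambda_0}E\M_{f_0}$, $B = \M_{\lambda_0}D\M_{f_0}$; by (ii), $\mathrm{rank}(D)\le\sum_k\mathrm{rank}(X_k) = O_P(1)=:\rho$. Bounding the right‑hand side by the block triangle inequality, the two ``large'' quantities $\left\|\Gamma_0 + \P_{\lambda_0}E\P_{f_0}\right\|_1$ (which absorbs the unrestricted $\left\|\Gamma_0\right\|_1$) and $\left\|\M_{\lambda_0}E\M_{f_0}\right\|_1$ cancel between the two sides. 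Since $\P_{\lambda_0}E\M_{f_0}$ and $\M_{\lambda_0}E\P_{f_0}$ have rank $\le R_0$, what remains is $\left\|\M_{\lambda_0}D\M_{f_0}\right\|_1 - \left\|\P_{\lambda_0}D\P_{f_0}\right\|_1 \le C(R_0+\rho)\left\|E\right\|_\infty$. Writing $D = \|\widehat\Delta\|\,(\alpha\cdot X)$ with $\|\alpha\|=1$, assumption (i) bounds the left‑hand side below by $c\sqrt{NT}\,\|\widehat\Delta\|$, so with $\left\|E\right\|_\infty = O_P(\sqrt{\max(N,T)})$ and $\rho = O_P(1)$ this gives $\|\widehat\beta_* - \beta_0\| = O_P\!\big(\|E\|_\infty/\sqrt{NT}\big) = O_P\!\big(1/\sqrt{\min(N,T)}\big)$.

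\textbf{The estimator $\widehat\beta_\psi$.} By Lemma~\ref{lemma:QPsiRewrite} the inequality $Q_\psi(\widehat\beta_\psi)\le Q_\psi(\beta_0)$ becomes $q_\psi\!\big((Y-\widehat\beta_\psi\cdot X)/\sqrt{NT}\big)\le q_\psi\!\big((\Gamma_0+E)/\sqrt{NT}\big)$. The same argument goes through because $q_\psi$, viewed as a spectral function, is convex (hence obeys the pinching inequality), is $\psi$‑Lipschitz with respect to $\|\cdot\|_1$ (by Mirsky's inequality, since $|q_\psi'|\le\psi$), and satisfies $q_\psi(A)\ge\psi\|A\|_1 - \tfrac{\psi^2}{2}\,\mathrm{rank}(A)$. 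The Lipschitz bound replaces the triangle inequality when peeling off the off‑diagonal blocks and when removing the low‑rank perturbation $\M_{\lambda_0}D\M_{f_0}$, producing extra error terms of order $\psi$ times the quantities that appeared for $\widehat\beta_*$; after dividing the final inequality through by $\psi$ one obtains $\left\|\M_{\lambda_0}D\M_{f_0}\right\|_1 - \left\|\P_{\lambda_0}D\P_{f_0}\right\|_1 \le C(R_0+\rho)\left\|E\right\|_\infty + C\rho\sqrt{NT}\,\psi$, and assumption (i) then yields $\|\widehat\beta_\psi - \beta_0\| = O_P(\psi) + O_P\!\big(1/\sqrt{\min(N,T)}\big)$.

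\textbf{Main obstacle.} The delicate point is the cancellation, between the two sides of the basic inequality, of the two quantities that are \emph{not} controlled by the assumptions: $\|\Gamma_0\|_1$ (the magnitude of $\Gamma_0$ is unrestricted) and $\|\M_{\lambda_0}E\M_{f_0}\|_1$ (a matrix of rank up to $\min(N,T)$). Making this work requires the block/pinching bookkeeping above, and in particular the auxiliary inequality $\|A-B\|_1 \ge \|A\|_1 + \|B\|_1 - C\,\mathrm{rank}(B)\,\|A\|_\infty$ together with its $q_\psi$ analogue, which is precisely what lets one retain the informative term $\|\M_{\lambda_0}D\M_{f_0}\|_1$ while discarding the uninformative $\|\M_{\lambda_0}E\M_{f_0}\|_1$ at a cost of only $O_P(\|E\|_\infty)=o_P(\sqrt{NT})$, thanks to $\mathrm{rank}(D)=O_P(1)$.
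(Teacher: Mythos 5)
Your proof is correct, and it reaches the conclusion by a genuinely different decomposition than the paper's. The paper (Lemmas~\ref{lemma:PropsG}--\ref{lemma:LowRankConsistency}) introduces auxiliary matrices $\lambda_X,f_X$ spanning the column and row spaces of all regressors and performs a two-stage pinching: first with the enlarged projectors $\P_{[\lambda_0,\lambda_X]},\M_{[f_0,f_X]}$ to split off the full-rank block of $E$ (which annihilates both $\Gamma_0$ and $\Delta\beta\cdot X$ precisely because the regressors are low-rank), then with $\P_{\lambda_0},\M_{f_0}$ to separate $\M_{\lambda_0}(\Delta\beta\cdot X)\M_{f_0}$ from $\P_{\lambda_0}(\Delta\beta\cdot X)\P_{f_0}$; the term $\left\|\Gamma_0\right\|_1$ is then isolated on both sides of the basic inequality and cancelled via $g_\psi(A)\geq\|A\|_1-\tfrac{\psi}{2}\,\mathrm{rank}(A)$. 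You instead pinch only once, with $\P_{\lambda_0},\M_{f_0}$, keep $\Gamma_0+\P_{\lambda_0}E\P_{f_0}$ intact so that it cancels between the two sides without $\left\|\Gamma_0\right\|_1$ ever being touched, and absorb the low-rank perturbation $\M_{\lambda_0}D\M_{f_0}$ sitting inside the full-rank block $\M_{\lambda_0}E\M_{f_0}$ through the rank-penalized reverse triangle inequality $\|A-B\|_1\geq\|A\|_1+\|B\|_1-C\,\mathrm{rank}(B)\,\|A\|_\infty$ and its $q_\psi$ analogue. That auxiliary inequality (which does hold, with $C=3$, by pinching along the singular spaces of $B$) plays exactly the role that the enlarged projectors play in the paper, and both routes produce the same error budget, of order $(R_0+\rho)\|E\|_\infty/\sqrt{NT}$ plus $\rho\,\psi$ for the penalized estimator, hence the same rates. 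What your route buys is that it avoids defining $\lambda_X,f_X$ and never needs to control $\|\Gamma_0\|_1$ separately; what it costs is that you must verify directly that $q_\psi$, as a matrix function, is convex and unitarily invariant (so that the pinching inequality applies) and $\psi$-Lipschitz in nuclear norm --- facts the paper instead derives for $g_\psi=\psi^{-1}q_\psi$ from the variational representation $g_\psi(A)=\min_\Gamma\bigl(\tfrac{1}{2\psi}\|A-\Gamma\|_2^2+\|\Gamma\|_1\bigr)$ in Lemma~\ref{lemma:PropsG}; your appeals to the infimal-convolution/spectral-function structure and to Mirsky's inequality are adequate for this. One cosmetic remark: the additivity $q_\psi(\Gamma_0+\P_{\lambda_0}E\P_{f_0}+\M_{\lambda_0}E\M_{f_0})=q_\psi(\Gamma_0+\P_{\lambda_0}E\P_{f_0})+q_\psi(\M_{\lambda_0}E\M_{f_0})$, needed on the right-hand side in the penalized case, holds with equality because the two summands have orthogonal row and column spaces, so the singular values of the sum are the union of the singular values of the pieces; it is worth stating this explicitly since the pinching inequality alone only gives one direction.
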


Various examples of DGP's for $E$  that satisfy the assumption $\| E \|_\infty = O_P(\sqrt{\max(N,T)})$
can be found in the supplementary appendix S.2 of \cite{MoonWeidner2017}. 
Loosely speaking, that condition is satisfied as along as the entries $E_{it}$ have zero mean, some appropriately bounded moments,
and are not too strongly correlated across $i$ and over $t$. The condition $\sum_{k=1}^K {\rm rank}(X_k) = O_P(1)$ requires all regressors to be low-rank.
The interpretation of condition \eqref{LowRankNonColl} is the same as for condition \eqref{prop:unique.separation.regularity.cond}
in Proposition~\ref{prop:unique.separation}, and  Theorem~\ref{th:LowRankConsistency} is indeed a sample version of that proposition,
 except that low-rank regressors are required here.

The theorem shows that both $\widehat \beta_*$ and $\widehat \beta_\psi $, for $\psi = \psi_{NT} =  O\left(  1 / {\sqrt{\min(N,T)}} \right)$,
converge to $\beta_0$ at a rate of at least $\sqrt{\min(N,T)}$. 
The proof of the theorem is provided in the appendix, and is a relatively easy generalization of the proof of Proposition~\ref{prop:unique.separation}.

We note that condition \eqref{LowRankNonColl} requires a lower bound of order $c > 0$ (a positive constant), whereas condition \eqref{prop:unique.separation.regularity.cond} in Proposition~\ref{prop:unique.separation} only requires a strict inequality. This difference reflects the distinction between a population identification condition (Proposition~\ref{prop:unique.separation}, for fixed $N$ and $T$) and a sample condition for asymptotic analysis (Theorem~\ref{th:LowRankConsistency}, as $N, T \rightarrow \infty$). The nuclear norms appearing in \eqref{LowRankNonColl} are of order $\sqrt{NT}$, so requiring their difference to be bounded below by a positive constant $c$ uniformly over all sample sizes ensures the condition does not become slack asymptotically.

Returning to Example~\ref{ex:ID}, the DGP with low-rank regressor $X_1 = vw'$ satisfies condition \eqref{LowRankNonColl} when the vectors $v$ and $w$ are sufficiently different from the factor loadings $\lambda_0$ and factors $f_0$. Specifically, if $v$ is not well-approximated by linear combinations of the columns of $\lambda_0$, and similarly $w$ is not well-approximated by linear combinations of the columns of $f_0$, then the projection $\mathbf{M}_{\lambda_0} X_1 \mathbf{M}_{f_0} = (\mathbf{M}_{\lambda_0} v)(\mathbf{M}_{f_0} w)'$ retains substantial norm, while $\mathbf{P}_{\lambda_0} X_1 \mathbf{P}_{f_0} = (\mathbf{P}_{\lambda_0} v)(\mathbf{P}_{f_0} w)'$ remains small. In the extreme case where $v$ lies in the column space of $\lambda_0$ or $w$ lies in the column space of $f_0$, condition \eqref{LowRankNonColl} fails and $\beta_0$ is not identified, consistent with the non-identification illustrated in Example~\ref{ex:ID}.

\subsection{Consistency Results for General Regressors} 
\label{sec:GeneralX}

The previous subsection considered the case where all regressor  matrices $X_k$ are low-rank. We now study situation
where all or some of the regressor matrices $X_k$ are high-rank. 

\subsubsection{Consistency of $\widehat{\beta}_{\psi}$ and $\widehat{\Gamma}_{\psi}$} \label{subsec:nuc.regularized}

Applying Lemma~\ref{lemma:QPsiRewrite} and the model for $Y$ we have
    \begin{align*}
         Q_\psi(\beta) &= q_\psi\left( \frac {E + \Gamma - (\beta-\beta_0) \cdot X} {\sqrt{NT}}    \right) 
         = 
         \sum_{r=1}^{\min(N,T)} q_\psi\left( s_r\left( \frac {E + \Gamma - (\beta-\beta_0) \cdot X} {\sqrt{NT}}    \right) \right) .
    \end{align*}   
The proof strategy for Theorem~\ref{th:LowRankConsistency} requires 
 that both $\Gamma$ and $X_k$ are  low-rank, which allows to (approximately) separate off $E$ in this expression for  $Q_\psi(\beta) $. But  if one of the regressors $X_k$ is a high-rank matrix that proof
strategy turns out not to work anymore, because  the singular value spectrum of the sum of two high-rank matrices $E$ and $X_k$
does not decompose (or approximately decompose) into a contribution from $E$ and from $X_k$, but instead all singular values depend on both
of those high-rank matrices in a complicated non-linear way.

We therefore now follow a different strategy, where instead of studying the objective function after profiling out $\Gamma$, 
we now explicitly study the properties of the estimator for $\Gamma$. Let 
\begin{align*}
(\widehat{\beta}_{\psi},\widehat{\Gamma}_{\psi}) &=
\bigg[  \argmin_{\beta,\Gamma} 
\underbrace{ \frac{1}{2NT} \| Y - \beta \cdot X - \Gamma \|_2^2 }_{ =: \, L(\beta,\Gamma)}  + \frac{\psi} {\sqrt{NT}} \| \Gamma \|_1 \bigg] .
\end{align*}
For the results
in this subsection we are going to first show consistency of $\widehat{\Gamma}_{\psi}$,
and afterwards use that to obtain consistency of $\widehat{\beta}_{\psi}$.
This is a very different logic than in the preceding section, where consistency of $\widehat{\Gamma}_{\psi}$ 
is usually not achieved, because we do not impose any lower bound on $\psi$.
In order to achieve consistency of $\widehat{\Gamma}_{\psi}$ one requires $\psi$ not be too small.
The approach here is much more similar to the machine learning literature
(e.g., \citealt{Negahbanetal2012}), where 
the matrix that we call $\Gamma$ is usually the object
of interest, and correspondingly a lower bound on the penalization parameter is required.
We also follow that literature here by imposing a so-called ``restricted strong convexity'' condition below,
which is critical to show consistency of $\widehat{\Gamma}_{\psi}$ and consequently of $\widehat{\beta}_{\psi}$ in the following.

It is convenient to introduce some additional notation:
Let $\rm vec(A)$ be the vector that vectorizes the columns of $A$. Denote $\rm mat(\cdot)$ as the inverse operator of $\rm vec(\cdot)$, so for $a = \rm vec(A)$ we have $\rm mat(a) = A$. 
We use small letters to denote vectorized variables and parameters. Let $y = {\rm vec}(Y), x_k ={\rm vec}(X_k), \gamma_0 = {\rm vec}(\Gamma_0),$ and $e = {\rm vec}(E)$.
Define $x = (x_1,...,x_k)$. Using this, we express the model (\ref{model:linear.matrix}) as
$y = x \beta_0 + \gamma_0 + e$, where all the summands are  $NT$-vectors,
and the least-squares objective function reads 
$L(\beta,\Gamma) =  \frac{1}{2NT}   (y - x \beta - \gamma)' (y - x \beta - \gamma)$.

\begin{assumption}[\bf Restricted Strong Convexity] \label{ass:RSC}
	$\phantom{a}$ \\
	\noindent
	Let $
	\mathbb{C}
	= \left\{
	\Theta \in \mathbb{R}^{N \times T} 
	\;|\; \| \M_{\lambda_0} \Theta \M_{f_0}  \|_1 
	\leq 3 \| \Theta - \M_{\lambda_0} \Theta \M_{f_0} \|_1
	\right\}. 
	$
	We assume that
	there exists $\mu > 0$, independent from $N$ and $T$, such that  for any
	$\theta \in \mathbb{R}^{NT}$ with
	${\rm mat}(\theta) \in \C$
	we have
	$ \theta' \M_x \theta \geq \mu \, \theta' \theta $,
	for all $N$, $T$.
\end{assumption}

  The intuitive interpretation of Assumption~\ref{ass:RSC} is very similar to condition
  \eqref{prop:unique.separation.regularity.cond} in Proposition~\ref{prop:unique.separation}:
The cone $\mathbb{C}$ contains matrices $\Theta$ that are close to $\Gamma_0 = \lambda_0 f_0'$,
in the sense that the part $\M_{\lambda_0} \Theta \M_{f_0}$ of $\Theta$ that cannot be explained by $\lambda_0$
and $f_0$ is small compared to the remaining part of $\Theta$, in terms of 
nuclear norm.
The assumption then imposes that all those matrices $\Theta \in \C$ in the cone 
are sufficiently different from the regressors, in the sense that $\theta = {\rm vec}(\Theta)$ cannot 
be perfectly explained by $ x_k ={\rm vec}(X_k)$.

Specifically, 
the condition assumes that the quadratic term, $\frac{1}{2NT} (\gamma - \gamma_0)'\M_x(\gamma - \gamma_0)$, of the profile likelihood function, $\min_{\beta} L(\beta,\Gamma)$, is bounded below by a strictly convex function, $\frac{\mu}{2NT} (\gamma - \gamma_0)'(\gamma - \gamma_0)$, if $\Theta = \Gamma - \Gamma_0$ belongs in the cone $\C$. Notice that without any restriction on the parameter 
$\theta = \gamma - \gamma_0$, we cannot find a strictly positive constant $\mu > 0$ such that $\min_{\Gamma} (\gamma - \gamma_0)'\M_x(\gamma - \gamma_0) \geq \mu (\gamma - \gamma_0)'(\gamma - \gamma_0)$. 
Assumption \ref{ass:RSC} imposes that if we restrict the parameter set to be the cone $\C$, then we can find a strictly convex lower bound of the quadratic term of the profile likelihood. Assumption \ref{ass:RSC} corresponds to the restricted strong convexity condition in \cite{Negahbanetal2012}, and it plays the same role as the restricted eigenvalue condition in recent LASSO literature (e.g., see \cite{CandesTao2007} and \cite{Bickeletal2009}). 

Notice that for $R_0=0$ we have $\M_{\lambda_0} = \mathbb{I}_N$ and $\M_{f_0} =\mathbb{I}_N $,
and therefor $\mathbb{C} = \{ 0_{N \times T} \}$, implying that Assumption~\ref{ass:RSC} is trivially satisfied
for any $\mu>0$.

Assumption~\ref{ass:RSC} requires that for ${\rm mat}(\theta) \in \C$ a lower bound of $\theta'\M_x \theta$ is given by the strictly convex function $\mu \, \theta' \theta $. To have some intuition, suppose that the regressor is scalar and assume that $\| X \|_2 = (x'x)^{1/2} = 1$ without loss of generality because the projection operator $\M_x$ is invariant to the scale change. Also assume that $\theta \neq 0$.  Then,
\begin{align*}
\theta' \M_x \theta &= \theta'\theta - (\theta'x)^2 
 = (\theta'\theta) \left( 1 - \frac{(\theta'x)^2}{\theta'\theta} \right)  = (\theta'\theta) \left(x'x - x'\theta (\theta'\theta)^{-1} \theta'x \right) \\
& \geq  (\theta'\theta) \min_{\theta \in \C} \| x - \theta \|^2.
\end{align*}
In this case, if the limit of the distance between the regressor and the restricted parameter set is positive, Assumption \ref{ass:RSC} is satisfied if $\mu := \liminf_{N,T}\min_{\theta \in \C} \| x - \theta \|^2$, the distance of the normalized regressor $x$ and convex cone $\C$, is positive. An obvious necessary condition for this is that the normalized regressor does not belong in the cone $\C$, that is, 
\[ 
\| \M_{\lambda_0} X \M_{f_0}  \|_1 
> 3 \| X - \M_{\lambda_0} X \M_{f_0} \|_1 .
\]  
For example, if $X$ has an approximate factor structure
	\[
	X =  \lambda_x f_x' + E_{x},
	\]	 
with $E_{x,it} \sim i.i.d. {\cal N}(0,\sigma^2)$,	
then 	we can use random matrix theory results to show that Assumption~\ref{ass:RSC} is satisfied.

\begin{lemma}[\bf Convergence Rate of $\widehat{\Gamma}_{\psi}$]
	\label{lemma:ConTheta}
	Let Assumption~\ref{ass:RSC} holds
	and assume that  
	\begin{align}
	\psi \geq \frac{2}{\sqrt{NT}}   \| {\rm mat}(\M_x e)  \|_\infty. \label{def.psi}
	\end{align}
	Then we have
	\[
	\frac 1 {\sqrt{NT}} \left\| \widehat{\Gamma}_{\psi} - \Gamma_0 \right\|_2 \leq \frac{3\sqrt{2R_0}}{\mu} \, \psi. 
	\]
\end{lemma}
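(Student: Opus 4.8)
The plan is to follow the standard "restricted strong convexity + dual norm" argument from the high-dimensional matrix estimation literature (adapted from \citealt{Negahbanetal2012}), exploiting the convexity of the penalized objective and the choice of $\psi$ that dominates the noise term. First I would set $\widehat\Delta := \widehat\Gamma_\psi - \Gamma_0$ and, writing $F(\beta,\Gamma) := L(\beta,\Gamma) + \frac{\psi}{\sqrt{NT}}\|\Gamma\|_1$, use that $(\widehat\beta_\psi,\widehat\Gamma_\psi)$ minimizes $F$ to get the basic inequality $F(\widehat\beta_\psi,\widehat\Gamma_\psi) \leq F(\beta_0,\Gamma_0)$. After profiling out $\beta$ (i.e.\ plugging the model $y = x\beta_0 + \gamma_0 + e$ and minimizing over $\beta$, which replaces $(y - x\beta - \gamma)$ by $\M_x(e - (\gamma - \gamma_0))$), the quadratic part contributes $\frac{1}{2NT}(\gamma - \gamma_0)'\M_x(\gamma - \gamma_0)$ plus a cross term $-\frac{1}{NT}e'\M_x(\gamma-\gamma_0)$. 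The cross term is bounded via the trace duality of the nuclear and spectral norms (equation \eqref{NuclearNormAlternativeDef}): $\left|\frac{1}{NT}e'\M_x\widehat\delta\right| = \left|\frac{1}{NT}\Tr\big({\rm mat}(\M_x e)'\widehat\Delta\big)\right| \leq \frac{1}{NT}\|{\rm mat}(\M_x e)\|_\infty \|\widehat\Delta\|_1 \leq \frac{\psi}{2\sqrt{NT}}\|\widehat\Delta\|_1$, using the hypothesis \eqref{def.psi}.

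Next I would carry out the decomposition $\widehat\Delta = \M_{\lambda_0}\widehat\Delta\M_{f_0} + \widehat\Delta_{\mathrm{rest}}$ where $\widehat\Delta_{\mathrm{rest}} := \widehat\Delta - \M_{\lambda_0}\widehat\Delta\M_{f_0}$. The key algebraic facts are: (a) $\Gamma_0 = \lambda_0 f_0'$ has $\M_{\lambda_0}\Gamma_0\M_{f_0} = 0$, so the rank of $\widehat\Delta_{\mathrm{rest}}$ is at most $2R_0$ (since it lives in the span determined by the column space of $\lambda_0$ or the row space of $f_0$); (b) the nuclear norm is "decomposable" with respect to this splitting, giving $\|\Gamma_0 + \M_{\lambda_0}\widehat\Delta\M_{f_0}\|_1 = \|\Gamma_0\|_1 + \|\M_{\lambda_0}\widehat\Delta\M_{f_0}\|_1$ and hence, via the triangle inequality, $\|\widehat\Gamma_\psi\|_1 \geq \|\Gamma_0\|_1 + \|\M_{\lambda_0}\widehat\Delta\M_{f_0}\|_1 - \|\widehat\Delta_{\mathrm{rest}}\|_1$. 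Combining the basic inequality, the noise bound, and this nuclear-norm lower bound, the $\|\Gamma_0\|_1$ terms cancel and (after discarding the nonnegative quadratic term on the left) one obtains $\|\M_{\lambda_0}\widehat\Delta\M_{f_0}\|_1 \leq 3\|\widehat\Delta_{\mathrm{rest}}\|_1$, i.e.\ ${\rm mat}(\widehat\delta) = \widehat\Delta \in \C$. This is precisely the cone membership that licenses Assumption~\ref{ass:RSC}.

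With $\widehat\Delta \in \C$ established, I would then revisit the basic inequality keeping the quadratic term: $\frac{1}{2NT}\widehat\delta'\M_x\widehat\delta \leq \frac{1}{NT}|\Tr({\rm mat}(\M_x e)'\widehat\Delta)| + \frac{\psi}{\sqrt{NT}}(\|\Gamma_0\|_1 - \|\widehat\Gamma_\psi\|_1) \leq \frac{\psi}{2\sqrt{NT}}\|\widehat\Delta\|_1 + \frac{\psi}{\sqrt{NT}}(\|\M_{\lambda_0}\widehat\Delta\M_{f_0}\|_1 + \|\widehat\Delta_{\mathrm{rest}}\|_1 - \|\M_{\lambda_0}\widehat\Delta\M_{f_0}\|_1)$, wait — more carefully, reorganizing the nuclear norm difference as above gives $\|\Gamma_0\|_1 - \|\widehat\Gamma_\psi\|_1 \leq \|\widehat\Delta_{\mathrm{rest}}\|_1 - \|\M_{\lambda_0}\widehat\Delta\M_{f_0}\|_1 \leq \|\widehat\Delta_{\mathrm{rest}}\|_1$, so $\frac{1}{2NT}\widehat\delta'\M_x\widehat\delta \leq \frac{\psi}{2\sqrt{NT}}\|\widehat\Delta\|_1 + \frac{\psi}{\sqrt{NT}}\|\widehat\Delta_{\mathrm{rest}}\|_1 \leq \frac{3\psi}{\sqrt{NT}}\|\widehat\Delta_{\mathrm{rest}}\|_1$ after using $\|\widehat\Delta\|_1 \leq \|\M_{\lambda_0}\widehat\Delta\M_{f_0}\|_1 + \|\widehat\Delta_{\mathrm{rest}}\|_1 \leq 4\|\widehat\Delta_{\mathrm{rest}}\|_1$, hence $\leq \frac{6\psi}{\sqrt{NT}}\|\widehat\Delta_{\mathrm{rest}}\|_1$ (constants to be tracked carefully). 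Now apply Assumption~\ref{ass:RSC}: $\widehat\delta'\M_x\widehat\delta \geq \mu\,\widehat\delta'\widehat\delta = \mu\|\widehat\Delta\|_2^2$. For the right side, since ${\rm rank}(\widehat\Delta_{\mathrm{rest}}) \leq 2R_0$, the norm inequality $\|\widehat\Delta_{\mathrm{rest}}\|_1 \leq \sqrt{2R_0}\,\|\widehat\Delta_{\mathrm{rest}}\|_2 \leq \sqrt{2R_0}\,\|\widehat\Delta\|_2$ converts everything into the Frobenius norm. This yields $\frac{\mu}{2NT}\|\widehat\Delta\|_2^2 \leq \frac{c\,\psi}{\sqrt{NT}}\sqrt{2R_0}\,\|\widehat\Delta\|_2$ for an absolute constant $c$, and dividing by $\|\widehat\Delta\|_2$ and rearranging gives $\frac{1}{\sqrt{NT}}\|\widehat\Delta\|_2 \leq \frac{c'\sqrt{2R_0}}{\mu}\psi$; matching the stated constant requires keeping $c' = 3$ throughout the bookkeeping.

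I expect the main obstacle to be the careful tracking of constants so that the factor $3\sqrt{2R_0}/\mu$ comes out exactly, together with the precise justification of two structural facts: the rank bound ${\rm rank}(\widehat\Delta_{\mathrm{rest}}) \leq 2R_0$, and the decomposability identity $\|\Gamma_0 + \M_{\lambda_0}\widehat\Delta\M_{f_0}\|_1 = \|\Gamma_0\|_1 + \|\M_{\lambda_0}\widehat\Delta\M_{f_0}\|_1$ (which follows because $\Gamma_0$ and $\M_{\lambda_0}\widehat\Delta\M_{f_0}$ have orthogonal row and column spaces, so their singular values simply concatenate). The analytic heart of the argument — the basic inequality, the dual-norm noise bound, and the invocation of restricted strong convexity — is routine once these pieces are in place; everything else is the standard Negahban–Wainwright template specialized to the projection $\M_{\lambda_0}(\cdot)\M_{f_0}$.
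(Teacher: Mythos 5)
Your argument is essentially the paper's own proof: the same two-step structure (first establish $\widehat\Delta\in\C$ from the basic inequality, the trace-duality bound $|\frac{1}{NT}e'\M_x\widehat\delta|\le\frac{\psi}{2\sqrt{NT}}\|\widehat\Delta\|_1$, and decomposability of the nuclear norm over the $\M_{\lambda_0}(\cdot)\M_{f_0}$ split; then invoke Assumption~\ref{ass:RSC} together with ${\rm rank}(\widehat\Delta_{\rm rest})\le 2R_0$ and $\|\widehat\Delta_{\rm rest}\|_1\le\sqrt{2R_0}\,\|\widehat\Delta\|_2$). The one bookkeeping point you flag does matter for the stated constant: in the second step you must combine the noise and penalty contributions while retaining the favorable term $-\frac{\psi}{2\sqrt{NT}}\|\M_{\lambda_0}\widehat\Delta\M_{f_0}\|_1$, which gives the right-hand side $\frac{3\psi}{2\sqrt{NT}}\|\widehat\Delta_{\rm rest}\|_1$ and hence $3\sqrt{2R_0}/\mu$; routing instead through the cone inequality $\|\widehat\Delta\|_1\le 4\|\widehat\Delta_{\rm rest}\|_1$ as you sketch loses a factor of $2$ and only delivers $6\sqrt{2R_0}/\mu$.
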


\medskip

The lemma shows that once we impose restricted strong convexity and a lower bound on $\psi$, then we can indeed
bound the difference between $\widehat{\Gamma}_{\psi} $ and $\Gamma_0$. This lemma is obviously key to obtain
a consistency result for $\widehat{\Gamma}_{\psi} $.
Notice furthermore that
	\begin{align*}
		\widehat{\beta}_{\psi} -\beta_0 &= 
		(x'x)^{-1}x' \left( y - \widehat{\gamma}_{\psi} \right)
		= \left( x'x \right)^{-1} \left[ x'e -  x'(\widehat{\gamma}_{\psi} - \gamma_0) \right] ,
	\end{align*}
that is, once we have a consistency result for $\widehat{\Gamma}_{\psi} $ (or equivalently $\widehat{\gamma}_{\psi} $), then we can 
also show consistency of $\widehat{\beta}_{\psi}$. 
 Using that derivation strategy we obtain the following theorem,
which provides a consistency result for  both $\widehat{\Gamma}_{\psi}$ and $\widehat{\beta}_{\psi} $.

\begin{theorem}\label{theorem:first.stage.estimator}
	Let Assumption~\ref{ass:RSC} hold, and as $N, T \rightarrow \infty$ assume that
	\begin{itemize}
		\item[(i)] 
		$\| E \|_{\infty} = {\Op}\left( \max (N,T )^{1/2} \right)$,
		
		\item[(ii)]   
		$\frac 1 {\sqrt{NT}} \, e' x =  {\Op}(1)$,
		
		\item[(iii)] 
		$\frac 1 {NT} \, x' x  \, \rightarrow_p \, \Sigma_x > 0$,
		
		\item[(iv)]   
		$\psi  = \psi_{NT} \rightarrow 0$
		such that $ \sqrt{\min(N,T)} \, \psi_{NT}      \rightarrow \infty$.
	\end{itemize}		
	Then we have	
	\begin{align*}
	 \quad \frac 1 {\sqrt{NT}} \left\| \widehat{\Gamma}_{\psi} - \Gamma_0 \right\|_2 
	&\leq \Op(  \psi)   ,
	&
	 \quad	\left\| \widehat{\beta}_{\psi} - \beta_0 \right\| 
	&\leq \Op(  \psi )   .
	\end{align*}
\end{theorem}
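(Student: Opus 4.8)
The plan is to obtain the rate for $\widehat\Gamma_\psi$ by verifying, with probability tending to one, the deterministic hypothesis \eqref{def.psi} of Lemma~\ref{lemma:ConTheta}, and then to transfer that rate to $\widehat\beta_\psi$ through the closed-form relation for $\widehat\beta_\psi-\beta_0$ displayed just before the theorem. Throughout, Assumption~\ref{ass:RSC} is assumed, so the constant $\mu>0$ (and $R_0$) are fixed, and condition (iii) guarantees that $x'x$ is invertible with probability approaching one, so $\P_x$, $\M_x$ and $(x'x)^{-1}$ are well defined asymptotically.

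First I would show that \eqref{def.psi} holds on an event $\mathcal{A}_{NT}$ with $\mathbb{P}(\mathcal{A}_{NT})\to 1$. Set $a := (x'x)^{-1}x'e\in\mathbb{R}^K$, so that $\M_x e = e - xa$ and hence ${\rm mat}(\M_x e) = E - a\cdot X$. By the triangle inequality for the spectral norm, $\|{\rm mat}(\M_x e)\|_\infty \leq \|E\|_\infty + \sum_{k=1}^K |a_k|\,\|X_k\|_2$. Condition (iii) gives $\|X_k\|_2 = \Op(\sqrt{NT})$ and $NT(x'x)^{-1}\to_p \Sigma_x^{-1}$, while (ii) gives $\|x'e\| = \Op(\sqrt{NT})$; combining these yields $\|a\| = \Op\big((NT)^{-1/2}\big)$ and therefore $\sum_{k=1}^K |a_k|\,\|X_k\|_2 = \Op(1)$ (using that $K$ is fixed). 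Together with (i) this gives $\|{\rm mat}(\M_x e)\|_\infty = \Op\big(\max(N,T)^{1/2}\big)$, so that $\tfrac{2}{\sqrt{NT}}\|{\rm mat}(\M_x e)\|_\infty = \Op\big(\min(N,T)^{-1/2}\big)$. By (iv) we have $\sqrt{\min(N,T)}\,\psi\to\infty$, hence this quantity is $o_P(\psi)$, which is exactly the statement that \eqref{def.psi} holds with probability tending to one. On $\mathcal{A}_{NT}$, Lemma~\ref{lemma:ConTheta} then gives $\tfrac{1}{\sqrt{NT}}\|\widehat\Gamma_\psi-\Gamma_0\|_2 \leq \tfrac{3\sqrt{2R_0}}{\mu}\,\psi$; since $R_0,\mu$ are fixed and $\mathbb{P}(\mathcal{A}_{NT}^c)\to 0$, this is precisely $\tfrac{1}{\sqrt{NT}}\|\widehat\Gamma_\psi-\Gamma_0\|_2 = \Op(\psi)$.

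For $\widehat\beta_\psi$ I would use $\widehat\beta_\psi-\beta_0 = (x'x)^{-1}\big[x'e - x'(\widehat\gamma_\psi-\gamma_0)\big]$ and bound the two pieces. The first piece is $(x'x)^{-1}x'e = a = \Op\big((NT)^{-1/2}\big)$ from the previous step. For the second, note $\|x'(\widehat\gamma_\psi-\gamma_0)\| \leq \big(\sum_{k}\|X_k\|_2^2\big)^{1/2}\|\widehat\gamma_\psi-\gamma_0\| = \Op(\sqrt{NT})\cdot\Op(\sqrt{NT}\,\psi) = \Op(NT\,\psi)$, using $\|\widehat\gamma_\psi-\gamma_0\| = \|\widehat\Gamma_\psi-\Gamma_0\|_2 = \Op(\sqrt{NT}\,\psi)$ and (iii); since the smallest eigenvalue of $x'x$ is of exact order $NT$ by (iii), multiplying by $(x'x)^{-1}$ leaves $\Op(\psi)$. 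Hence $\|\widehat\beta_\psi-\beta_0\| = \Op\big((NT)^{-1/2}\big) + \Op(\psi)$, and because (iv) forces $\psi$ to be of larger order than $\min(N,T)^{-1/2} \geq (NT)^{-1/2}$, the first term is absorbed and $\|\widehat\beta_\psi-\beta_0\| = \Op(\psi)$.

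The main obstacle is the first step, i.e.\ turning the deterministic condition \eqref{def.psi} into a high-probability statement. The key observation that makes it work is the identity ${\rm mat}(\M_x e) = E - a\cdot X$ together with the fact that the coefficient vector $a$ is of order $(NT)^{-1/2}$: the low-rank ``contamination'' $a\cdot X$ then contributes only $\Op(1)$ to the spectral norm, which is negligible next to $\|E\|_\infty = \Op\big(\max(N,T)^{1/2}\big)$, so the rate assumption (iv) on $\psi$ is exactly what is needed to dominate $\tfrac{2}{\sqrt{NT}}\|{\rm mat}(\M_x e)\|_\infty$. Everything after that is bookkeeping of stochastic orders via conditions (i)--(iv), so the analytic content sits in Lemma~\ref{lemma:ConTheta}, and this theorem is its probabilistic corollary.
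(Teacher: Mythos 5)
Your proposal is correct and follows essentially the same route as the paper: verify the lower bound \eqref{def.psi} on $\psi$ with probability approaching one via conditions (i)--(iv), invoke Lemma~\ref{lemma:ConTheta} for the rate on $\widehat\Gamma_\psi$, and then transfer it to $\widehat\beta_\psi$ through the closed-form identity $\widehat\beta_\psi-\beta_0=(x'x)^{-1}[x'e-x'(\widehat\gamma_\psi-\gamma_0)]$ with a Cauchy--Schwarz bound on the second term. If anything, your treatment of ${\rm mat}(\M_x e)=E-a\cdot X$ with $a=(x'x)^{-1}x'e$ is slightly cleaner than the paper's regressor-by-regressor display, but the substance is identical.
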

\medskip
\noindent
The additional regularity conditions imposed in Theorem~\ref{theorem:first.stage.estimator} are weak and quite general. 
As mentioned before, various examples of $E$ that satisfy (i) can be found in
the supplementary appendix S.2 of \cite{MoonWeidner2017};
these include weakly dependent errors, and nonidentical but independent sequences of errors. Condition (ii) is satisfied if the regressors are exogenous with respect to the error, $\E(x_{it}e_{it}) = 0$, and $x_{it}e_{it}$ are weakly correlated over $t$ and across $i$ so that $\frac{1}{NT} \sum_{i,j=1}^N \sum_{t,s=1}^T \allowbreak  \E(x_{k,it}x_{l,js}e_{it}e_{js})$ is bounded asymptotically. 
Condition (iii) is the standard non-collinearity condition for the regressors. Condition (iv) restricts the choice of the regularization parameter $\psi$, which has to converge to zero (as discussed before for identification and consistency of $\beta_0$), but not too quickly 
      (if $\psi$ is too small, then $\widehat{\Gamma}_{\psi} $ picks up all the noise $E$ and cannot be consistent).
   The conditions (i) and (iv) are sufficient regularity conditions for (\ref{def.psi}). To see this in more detail, since ${\rm mat}(\M_x e) = E - \sum_{k=1}^K\widehat{E}_k$ with $\widehat{E}_k = X_k (x_k'x_k)^{-1}(x_k' e)$, we have
	\begin{align*}
	\left\| {\rm mat}(\M_x e) \right\|_{\infty} 
	&= \left\| E - \sum_{k=1}^K\widehat{E}_k \right\|_{\infty}  
	 \leq  \| E \|_{\infty} + \sum_{k=1}^K \| \widehat{E}_k \|_{\infty} \\
	 &= \| E \|_{\infty} + \sum_{k=1}^K \left\|\frac{X_k}{\sqrt{NT}} \right\|_{\infty} \left( \frac{x_k'x_k}{NT}   \right)^{-1} 
	\left| \frac{x_k'e}{\sqrt{NT}} \right| 
	 \leq \| E \|_{\infty} \left( 1 + \frac{\Op(1)}{\| E \|_{\infty}}\right).
	\end{align*}
	Then, choosing $\psi \geq  \frac{2}{\sqrt{NT}}   \| E  \|_\infty  \left( 1 + \frac{\Op(1)}{\| E \|_{\infty}}\right)$ makes $\psi$ satisfy $(\ref{def.psi})$ with probability approaching one,
	and the rate condition in   (iv) guarantees this.

Theorem~\ref{theorem:first.stage.estimator}  requires $\psi = \psi_{NT}$ to grow  faster than 
$1 / \sqrt{\min(N,T)} $. By choosing $\psi$ appropriately we can therefore obtain
a convergence rate of $\widehat{\beta}_{\psi}$ that is just below $\sqrt{\min(N,T)}$,
which is essentially the same convergence rate that we found in Section~\ref{sec:LowRank}
for the case of only low-rank regressors.

For the special case $R_0=0$
we have
$\Gamma_0 = \mathbf{0}_{N \times T}$,
and if $\psi_{NT}$ then satisfies (\ref{def.psi}), one can show that
\begin{equation}
\| \widehat{\Gamma}_{\psi} -\Gamma_0 \|_1 = 0,
\label{special.Gammahat}
\end{equation}
with probability approaching one (wpa1), see the appendix for a proof of this.
In this case, the regularized estimator of $\beta$ becomes the pooled OLS estimator,
$
\widehat{\beta}_{\psi} = (x'x)^{-1}x'y,
$
wpa1.

\subsubsection{Consistency of $\widehat{\beta}_*$}  \label{subsec:nuc.minimization}

Here, 
we establish consistency of the nuclear norm minimization estimator $\widehat{\beta}_*$
for high-rank regressors. For simplicity we only discuss the case of a single regressor ($K=1$) in the main text, and
we simply write $X$ for the $N \times T$ regressor matrix $X_1$ in this subsection.
The general case of multiple regressors ($K>1$) is discussed in Appendix~\ref{sec:proofs-of-section-refsecnucminimization}.

Remember that
$\widehat{\beta}_*$ is the minimizer of the objective function
$ \| Y - \beta \cdot X  \|_1 = \| E + (\beta_0-\beta)  X + \Gamma_0  \|_1 = 
  \sum_{r} s_r\left(  E + (\beta_0-\beta)  X + \Gamma_0  \right)$.
Asymptotically separating the contribution of the low-rank matrix  $\Gamma_0$ to the singular values
of the sum   $E + (\beta_0-\beta)  X + \Gamma_0 $ is possible under a strong factor assumption.\footnote{
In \cite{MoonWeidner2015,MoonWeidner2017} we use the perturbation theory of linear operator to do exactly that.}
However,  
characterizing the singular values of the sum of two high-rank matrices
$ E + (\beta_0-\beta)  X $ requires  results from random matrix theory that are usually only shown under relatively strong assumptions
on the distribution of the matrix entries. We therefore first provide a theorem under high-level assumptions,
and afterwards discuss how to verify those assumptions using results from random matrix theory.
We write SVD for ``singular value decomposition'' in the following.

\begin{theorem}
 	\label{th:ConsistencyNormMinEst} 
     Suppose that $K=1$, and assume that as $N,T \rightarrow \infty$, with $N>T$, we have
	\begin{itemize}
		\item[(i)] $\| E \|_{\infty} = \Op(\sqrt{N})$, and $\| X \|_{\infty} = \Op( \sqrt{NT} )$.
		\item[(ii)] There exists a finite positive constant $c_{\rm up}$ such that 
		$\frac 1 {T \, \sqrt{N} } \| E \|_1 \leq \frac 1 2  c_{\rm up}$, wpa1.
		\item[(iii)]  Let $U_E S_E V_E' $ be the SVD of $\M_{\lambda_0} E \M_{f_0}$.\footnote{That is, $U_E S_E V_E' = \M_{\lambda_0} E \M_{f_0}$ and $U_E$ is an $N \times {\rm rank}(\M_{\lambda_0} E \M_{f_0})$ matrix of singular vectors,
			$S_E$ is a ${\rm rank}(\M_{\lambda_0} E \M_{f_0}) \times {\rm rank}(\M_{\lambda_0} E \M_{f_0})$ diagonal matrix,
			and $V_E$ is an $T \times {\rm rank}(\M_{\lambda_0} E \M_{f_0})$ matrix of singular vectors.} We assume
		$
		{\rm Tr}\left(  X' U_E V_E' \right)   = \Op ( \sqrt{NT} ).
		$
		\item[(iv)] There exists a constant $c_{\rm low}>0$ such that 
		$T^{-1} N^{-1/2} \| \M_{\lambda_0} X  \M_{f_0} \|_1  \geq c_{\rm low}$, wpa1.
		\item[(v)] Let $U_x S_x V_x' = \M_{\lambda_0} X \M_{f_0} $
		be the SVD of the matrix $\M_{\lambda_0} X  \M_{f_0}$.
		We assume that there exists $ c_x  \in (0,1)$ such that 
		$
		{\rm Tr} \left(  U_E'U_x S_x U_x' U_E \right)  \leq (1-c_x) {\rm Tr}(S_x)
		$, wpa1.
	\end{itemize}
    We then have $\sqrt{T} \left(  \widehat{\beta}_{*} - \beta_0 \right) = \Op(1)$.
\end{theorem}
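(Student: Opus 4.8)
The plan is to combine convexity of $\beta \mapsto \| Y - \beta X \|_1$ with a quadratic curvature (restricted-strong-convexity-type) lower bound for the nuclear norm around $\beta_0$. Set $h(t) := \| Y - (\beta_0+t) X \|_1 = \| \Gamma_0 + E - tX \|_1$, a convex function of $t\in\R$ which, by coercivity (note $\|X\|_1>0$ by assumption~(iv)), attains its minimum at $\widehat t := \widehat\beta_* - \beta_0$, so that $h(\widehat t)\le h(0)$. By convexity, if $h(t)>h(0)$ for both $t=+C/\sqrt T$ and $t=-C/\sqrt T$, then necessarily $|\widehat t|<C/\sqrt T$. Hence it suffices to show: for every $\epsilon>0$ there is a constant $C$ such that, eventually with probability at least $1-\epsilon$, $h(t)-h(0)>0$ at both $|t|=C/\sqrt T$. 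The rest is devoted to proving this.

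First I would decouple the directions spanned by $\lambda_0$ and $f_0$. Writing $C:=E-tX$ and splitting $\Gamma_0+C$ into the four blocks $\P_{\lambda_0}(\cdot)\P_{f_0}$, $\P_{\lambda_0}(\cdot)\M_{f_0}$, $\M_{\lambda_0}(\cdot)\P_{f_0}$, $\M_{\lambda_0}(\cdot)\M_{f_0}$, and using that the ``$\P\P$'' block of $\Gamma_0+C$ and the ``$\M\M$'' block have orthogonal row and column spaces (so their nuclear norms add), the triangle inequality together with ${\rm rank}\le R_0$ for each off-block gives
\[
 h(t)-h(0) \;\ge\; \big[\,\|A-tB\|_1-\|A\|_1\,\big]\;-\;6R_0\|E\|_\infty\;-\;3R_0\,|t|\,\|X\|_\infty ,
\]
where $A:=\M_{\lambda_0}E\M_{f_0}$, $B:=\M_{\lambda_0}X\M_{f_0}$, and the $\|\Gamma_0\|_1$ contributions cancel exactly. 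By~(i) the last two terms are $\Op(\sqrt N)+|t|\,\Op(\sqrt{NT})$.

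The heart of the argument is a lower bound on $\|A-tB\|_1-\|A\|_1$ exhibiting genuine curvature in $t$. Using $\M_{U_E}A=0$, split $(A-tB)'(A-tB)=(A-tB)'\P_{U_E}(A-tB)+t^2 B'\M_{U_E}B$ and apply concavity of $M\mapsto\Tr(M^{1/2})$ on positive semidefinite matrices to obtain, for every $\alpha\in(0,1)$,
\[
 \|A-tB\|_1 \;\ge\; \sqrt{\alpha}\,\big\|U_E'(A-tB)\big\|_1 \;+\; \sqrt{1-\alpha}\,|t|\,\|\M_{U_E}B\|_1 .
\]
For the first term, testing $U_E'(A-tB)=S_EV_E'-tU_E'B$ against the sign matrix $V_E'$ of $S_EV_E'$ and invoking~(iii) yields $\|U_E'(A-tB)\|_1\ge\|A\|_1-|t|\,\Op(\sqrt{NT})$. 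For the second, $\|\M_{U_E}B\|_1=\|\M_{U_E}U_xS_x\|_1\ge\Tr(U_x'\M_{U_E}U_xS_x)=\Tr(S_x)-\Tr(U_E'U_xS_xU_x'U_E)\ge c_x\Tr(S_x)\ge c_x c_{\rm low}\,T\sqrt N$, using~(v) and then~(iv). Finally, using $1-\sqrt\alpha\le 1-\alpha$ and~(ii) (which gives $\|A\|_1\le\|E\|_1\le\tfrac12 c_{\rm up}T\sqrt N$), and then optimizing over $\alpha$, produces
\[
 \|A-tB\|_1-\|A\|_1 \;\ge\; \kappa\,t^2\,T\sqrt N \;-\; |t|\,\Op(\sqrt{NT}), \qquad \kappa:=\frac{c_x^2 c_{\rm low}^2}{2 c_{\rm up}}>0 .
\]
This step is the main obstacle: one must see that the concavity split lets the ``$E$-aligned'' direction recover almost all of $\|A\|_1$ while the orthogonal direction contributes a term of order $|t|\,T\sqrt N\gg|t|\sqrt{NT}$, and that the loss $(1-\sqrt\alpha)\|A\|_1$ is harmless precisely because assumption~(ii) caps $\|A\|_1$ at order $T\sqrt N$ — without~(ii)--(v) one only gets the useless first-order bound $\|A-tB\|_1-\|A\|_1\ge -|t|\Op(\sqrt{NT})$, which cannot pin down the rate. (The bookkeeping in the block decomposition and the case ${\rm rank}(\M_{\lambda_0}E\M_{f_0})<T-R_0$ are routine and are handled by keeping the projections $\P_{U_E},\M_{U_E}$ throughout; $N>T$ enters through~(v) being non-vacuous.)

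Putting the pieces together, $h(t)-h(0)\ge\kappa\,t^2\,T\sqrt N-|t|\,\Op(\sqrt{NT})-\Op(\sqrt N)$, which at $|t|=C/\sqrt T$ equals $\sqrt N\big(\kappa C^2-C\,\Op(1)-\Op(1)\big)$. Given $\epsilon>0$, choose $M_\epsilon$ dominating the two $\Op(1)$ terms with probability $\ge 1-\epsilon$ eventually, and then $C$ with $\kappa C^2>CM_\epsilon+M_\epsilon$; on that event $h(\pm C/\sqrt T)-h(0)>0$. By the convexity reduction this gives $P(|\widehat\beta_*-\beta_0|>C/\sqrt T)\le\epsilon+o(1)$, i.e.\ $\sqrt T\,(\widehat\beta_*-\beta_0)=\Op(1)$.
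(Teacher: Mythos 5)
Your proposal is correct and is essentially the paper's own argument in different packaging: your concavity split of $\Tr(M^{1/2})$ with weights $\sqrt{\alpha},\sqrt{1-\alpha}$ is exactly the paper's dual certificate $\sqrt{1-a_\beta^2}\,U_EV_E' - a_\beta\,(\mathrm{sgn}\,\Delta\beta)\,\M_{U_E}U_xV_x'$ (plus a $UV'$ term that you instead absorb into the preliminary block decomposition), and optimizing over $\alpha$ reproduces the paper's choice $a_\beta = c_x c_{\rm low}|\Delta\beta|/c_{\rm up}$ and the identical curvature constant $c_x^2c_{\rm low}^2/(2c_{\rm up})$. The remaining differences --- evaluating the convexity argument on the shrinking sphere $|t|=C/\sqrt{T}$ rather than the paper's fixed ball followed by a separate local-to-global step --- are cosmetic.
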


\begin{remark} \label{remark:identification.conditions}
Several identification-type conditions appear in this paper, and we briefly summarize their relationships here:
\begin{itemize}
\item Condition \eqref{prop:unique.separation.regularity.cond} in Proposition~\ref{prop:unique.separation} is a population identification condition for fixed $N$ and $T$. It applies to both low-rank and high-rank regressors and requires that the part of any linear combination of regressors not explained by the factor structure dominates (in nuclear norm) the part that can be explained.

\item Condition \eqref{LowRankNonColl} in Theorem~\ref{th:LowRankConsistency} is the sample analog of \eqref{prop:unique.separation.regularity.cond} for low-rank regressors. The $c\sqrt{NT}$ scaling ensures the condition holds uniformly as $N, T \rightarrow \infty$, reflecting the order of magnitude of nuclear norms in the asymptotic analysis.

\item Assumption~\ref{ass:RSC} (Restricted Strong Convexity) applies when regressors are high-rank. It requires that the quadratic form $\theta' \mathbf{M}_x \theta$ is bounded below on the cone $\mathbb{C}$, which contains matrices close to the factor structure.

\item The conditions in Theorem~\ref{th:ConsistencyNormMinEst} provide sufficient primitive conditions for consistency of $\widehat{\beta}_{*}$ when regressors are high-rank. In particular, conditions (iv) and (v) ensure that $X$ is sufficiently different from both the factor structure and the idiosyncratic errors.
\end{itemize}
All these conditions share a common intuition: the regressors must be sufficiently ``different'' from the factor structure $\Gamma_0$ (for low-rank $X$) or from the idiosyncratic errors $E$ (for high-rank $X$) to ensure that $\beta_0$ is identified.

We note that our conditions are stated in stochastic form (holding with probability approaching one or for all realizations). In related work, such as \cite{belloni2019high} and \cite{feng_2023}, analogous conditions are sometimes stated in expectation form. Relaxing our conditions to expectation form would require verifying that appropriate concentration inequalities hold for the nuclear norm quantities involved, which we have not pursued here. Our stochastic form is standard in this setting and sufficient for our proofs.
\end{remark}

The theorem considers the case $N>T$, because the two panel dimensions are not treated symmetrically
in the assumptions and proof of this theorem.
 Alternatively, we could consider  $T<N$, but then we also need to swap $N$ and $T$,
and replace $X$ by $X'$ and $E$ by $E'$ in all the assumptions (the case $T=N$ is ruled out here for technical reasons). 
For both $N>T$ and $T<N$ the statement of theorem can be written as
 $\sqrt{\min(N,T)} \left(  \widehat{\beta}_{*} - \beta_0 \right) = \Op(1)$,
 that is, we have the same convergence rate result here for $\widehat{\beta}_{*} $
 as in Theorem~\ref{th:LowRankConsistency} above.

Condition (i) in the theorem is quite weak, we already discussed the rate restriction on $\| E \|_{\infty}$ above,
and we have $\| X \|_\infty \leq \|X\|_2 = \sqrt{\sum_i \sum_t X_{it}^2} =  \Op(\sqrt{NT})$
as long as $\sup_{it} \E (X_{it}^2)$ is finite.
Condition (ii) almost follows from $\| E \|_{\infty} = \Op(\sqrt{N})$, because
we have $ \| E \|_1 \leq  {\rm rank}(E)  \, \| E \|_{\infty}  \leq T \| E \|_{\infty} =\Op(T \sqrt{N}) $,
and the assumption is only slightly stronger than this in assuming a fixed upper bound with probability approaching one,
which can also be verified for many error distributions. 
Condition (iii) is a high level condition and will be satisfied if 
	\begin{equation}
	\sup_r \E | V_{E,r}' X' U_{E,r} | \leq M \label{eq.remark.Tr(X'UEVE')} ,
	\end{equation}
for some finite constant $M$, where $U_{E,r}$ and $V_{E,r}$ are the $r^{th}$ columns of $U_{E,r}$ and $V_{E}$, respectively. An example of DGP's of $X$ and $E$ that satisfies condition (\ref{eq.remark.Tr(X'UEVE')}) is given by Assumption LL (i) and (ii) in \cite{MoonWeidner2015}. 
Condition (iv) rules out ``low-rank regressors'',
for which we typically have $ \| \M_{\lambda_0} X  \M_{f_0}    \|_1 = \Op(\sqrt{NT})$,
but is satisfied generically for ``high-rank regressors'', for which $\M_{\lambda_0} X \M_{f_0}  $
has $T$ singular values of order $\sqrt{N}$, so that $ \| \M_{\lambda_0} X  \M_{f_0}    \|_1$ is of order
$T \sqrt{N}$. 
Condition (v) requires that the singular vectors of $\M_{\lambda_0} X  \M_{f_0}$ are sufficiently
different from the singular vectors $\M_{\lambda_0} E  \M_{f_0}$. If $X$ and $E$ are independent, 
then we expect that assumption to hold quite generally. %

\section{Post Nuclear Norm Regularized Estimation} 
\label{sec:post estimation}

In Section~\ref{sec:Consistency}  we have shown
that $\widehat{\beta}_{\psi}$ and $\widehat{\beta}_{*}$ are
consistent for $\beta_0$ at a $\sqrt{\min(N,T)}$-rate, 
which is a slower convergence rate than the $\sqrt{NT}$-rate 
at which the LS estimator $\widehat{\beta}_{{\rm LS},R}$ converges to $\beta_0$
under appropriate regularity conditions.
Our Monte Carlo results in Section~\ref{sec:MC} confirm this relatively slow rate of convergence of 
$\widehat{\beta}_{\psi}$ and $\widehat{\beta}_{*}$, that is, those rates are not an artifact of our proof strategy, 
but are a genuine property of those estimators.
 In this section we investigate how to establish an estimator that is asymptotically equivalent to the LS estimator, and yet avoids minimizing any non-convex objective function.
Our suggestion is to use either $\widehat{\beta}_{\psi}$ or $\widehat{\beta}_{*}$ as a preliminary estimator and iterate estimating $\Gamma_0 = \lambda_0 f_0'$ and $\beta_0$ a finite number of times.

The conditions that are usually needed to show that the global minimizer $\widehat{\beta}_{{\rm LS},R}$ 
 of the objective function $L_R(\beta)$ is consistent for $\beta_0$ 
 (i.e.\ Assumption~A in \cite{Bai2009}, or Assumption~4  in \cite{MoonWeidner2017})
are not required here, because we have already shown consistency of $\widehat{\beta}_{\psi}$ or $\widehat{\beta}_{*}$ 
under different conditions (our discussion in Section~\ref{sec:MatrixSeparation} highlights those differences).
It is therefore convenient to introduce a local version of the LS estimator in \eqref{DefLSestimator} as
\begin{align}
  \widehat{\beta}^{\rm \, local}_{{\rm LS},R} &:= 
   \argmin_{\beta \in {\cal B}(\beta_0, r_{NT})}  
   L_R(\beta)  ,
   &
    {\cal B}(\beta_0, r_{NT})
    :=
    \left\{ \beta \in \mathbb{R}^K \, : \, \| \beta - \beta_0 \| \leq  r_{NT} \right\} ,
     \label{DefLSestimatorLOCAL} 
\end{align}
where $r_{NT}$ is a sequence of positive numbers such that $r_{NT} \rightarrow 0$
and $\sqrt{NT} \, r_{NT} \rightarrow \infty$.
Those rate conditions guarantee that $\widehat{\beta}^{\rm \, local}_{{\rm LS},R} $
is an interior point of $ {\cal B}(\beta_0, r_{NT})$, wpa1, under the assumptions of Theorem~\ref{thm:post.matrix.lasso} below.
If the global minimizer $\widehat{\beta}_{{\rm LS},R}$ is consistent,
then we expect $\widehat{\beta}_{{\rm LS},R} = \widehat{\beta}^{\rm \, local}_{{\rm LS},R} $ wpa1, but 
$\widehat{\beta}^{\rm \, local}_{{\rm LS},R} $ is consistent by definition even if $\widehat{\beta}_{{\rm LS},R}$ is not.
Our goal in the following is to obtain an estimator that is asymptotically equivalent to $\widehat{\beta}^{\rm \, local}_{{\rm LS},R} $.
 
For simplicity, we first discuss the case where the number of factors $R_0$ is known. 
For unknown $R_0$  we recommend using a consistent estimator of $R_0$ instead, 
and
we discuss  estimation of $R_0$ in Section~\ref{sec:MC} below.
Starting from our initial nuclear norm regularized or minimized estimators 
we consider the following iteration procedure to obtain improved estimates of $\beta$:
\begin{itemize}
     \setlength{\itemindent}{0.25in}
	\item[\bf Step 1:] For $s=0$ set $\widehat{\beta}^{(s)} = \widehat{\beta}_{\psi}$ (or $=\widehat{\beta}_{*}$),
	the preliminary consistent estimate for $\beta_0$.

	\item[\bf Step 2:] Estimate the factor loadings and the factors of the $s-$step residuals $Y - \widehat{\beta}^{(s)} \cdot X$ by the principle component method: 
\[
(\widehat{\lambda}^{(s+1)}, \widehat{f}^{(s+1)}) \in \argmin_{\lambda \in \R^{N \times R_0}, f \in \R^{T \times R_0}} \left\| Y - \widehat{\beta}^{(s)} \cdot X - \lambda f'\right\|_2^2.
\]
	\item[\bf Step 3:] Update the $s$-stage estimate $\widehat{\beta}^{(s)}$ by
\begin{align}
\widehat{\beta}^{(s+1)} &= \argmin_{\beta \in \mathbb{R}^K} \min_{g \in \R^{T \times R_0}, h \in \R^{N \times R_0}} 
\left\| Y - X \cdot \beta - \widehat{\lambda}^{(s+1)} \, g' + h \, \widehat{f}^{(s+1)\prime} \right\|_2^2 \nonumber  \\
& = \left( x' \left( \M_{\widehat{f}^{(s+1)}} \otimes 
\M_{\widehat{\lambda}^{(s+1)}} \right) x \right)^{-1}  x' \left( \M_{\widehat{f}^{(s+1)}} \otimes 
\M_{\widehat{\lambda}^{(s+1)}} \right) y. \label{def.betahat.second.stage}
\end{align}

\item[ \bf Step 4:] Iterate step 2 and 3 a finite number of times.
\end{itemize}

\medskip

\begin{remark} \label{remark:step3.intuition}
Step 3 minimizes over $(\beta, g, h)$ jointly rather than using the simpler regression of $Y - \widehat{\lambda}^{(s+1)} \widehat{f}^{(s+1)\prime}$ on $X$. The key advantage of this joint minimization is that it closely approximates a Newton-Raphson step for minimizing the profile objective function $L_{R_0}(\beta)$. As shown in Theorem~\ref{thm:post.matrix.lasso}, this yields convergence similar to Newton-Raphson: the distance to the local LS estimator shrinks rapidly with each iteration, allowing the procedure to be stopped after just a few steps while still achieving asymptotic equivalence to $\widehat{\beta}^{\rm \, local}_{{\rm LS},R_0}$.
\end{remark}

The following theorem shows that
if the initial estimator $\widehat{\beta}^{(0)}$ is consistent, then 
  $\widehat{\beta}^{(s)}$ gets close to $\widehat{\beta}^{\rm \, local}_{{\rm LS},R_0} $ as the number of iteration $s$ increases.
This result is very similar to the quadratic convergence result of a Newton-Raphson algorithm for minimizing a
smooth objective function, and the above iteration step is indeed very similar to performing a  Newton-Raphson step
to minimize $L_{R_0}(\beta)$. 

\begin{theorem} \label{thm:post.matrix.lasso}
     Assume that $N$ and $T$ grow to infinity at the same rate,
      and that 
   \begin{itemize}
      \item[(i)] $\plim_{N,T \rightarrow \infty}\left(\lambda_0^{\prime} \lambda_0/N\right) > 0$, and
       $\plim_{N,T \rightarrow \infty} \left( f_0^{\prime} f_0 / T \right) > 0$.
        
       \item[(ii)] $\| E \|_{\infty} = {\Op}\left( \max (N,T )^{1/2} \right)$, and $\| X_k \|_{\infty} = {\Op}\left( (NT)^{1/2} \right)$ for all $k \in \{1,\ldots,K\}$.
        
       \item[(iii)] $\plim_{N,T \rightarrow \infty} \frac 1 {NT} \, x' \left( \M_{f_0} \otimes \M_{\lambda_0} \right) x > 0$.
       
       \item[(iv)] $\frac 1 {\sqrt{NT}} \, x' \left( \M_{f_0} \otimes \M_{\lambda_0} \right) e =  {\Op}(1)$.
    \end{itemize}
      Then, if the sequence $r_{NT} >  0$ in \eqref{DefLSestimatorLOCAL}
     satisfies $r_{NT} \rightarrow 0$ and $\sqrt{NT} \, r_{NT} \rightarrow \infty$
     we have
     \begin{align*}
        \sqrt{NT} \left(  \widehat{\beta}^{\rm \, local}_{{\rm LS},R_0} -  \beta_0 \right) &=  \Op(1) .
     \end{align*}
     Assume furthermore that 
     \begin{itemize}
     \item[(iv)]  $\| \widehat{\beta}^{(0)} - \beta_0 \| = \Op(c_{NT})$, for a sequence $c_{NT} >  0$
     such that  $c_{NT}  \rightarrow 0$.
     \end{itemize}
       For $s \in \{1,2,3,\ldots\}$ we then have
     $$
         \left\| \widehat{\beta}^{(s)} -  \widehat{\beta}^{\rm \, local}_{{\rm LS},R_0} \right\| = \Op \left\{ c_{NT} \left( c_{NT} +  \frac 1 {\sqrt{\min (N,T )}} \right)^{s} \, \right\} .
     $$
\end{theorem}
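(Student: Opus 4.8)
The plan is to prove the two parts of Theorem~\ref{thm:post.matrix.lasso} in sequence: first the $\sqrt{NT}$-consistency of the local least squares estimator $\widehat{\beta}^{\rm \, local}_{{\rm LS},R_0}$, and then the iteration contraction bound on $\| \widehat{\beta}^{(s)} - \widehat{\beta}^{\rm \, local}_{{\rm LS},R_0} \|$. For the first part, I would follow the standard approach for interactive fixed effect LS estimators (as in \cite{Bai2009} and \cite{MoonWeidner2017}): expand the profile objective $L_{R_0}(\beta)$ around $\beta_0$ using the perturbation expansion of the projected quadratic form. The key identity is that, concentrating out $\lambda, f$ by principal components, the first-order condition at $\widehat{\beta}^{\rm \, local}_{{\rm LS},R_0}$ gives $\widehat{\beta}^{\rm \, local}_{{\rm LS},R_0} - \beta_0 = \left(\frac{1}{NT} x'(\M_{f_0}\otimes\M_{\lambda_0})x + o_P(1)\right)^{-1}\left(\frac{1}{NT} x'(\M_{f_0}\otimes\M_{\lambda_0})e + \text{higher-order terms}\right)$, where conditions (i)--(iv) control the invertibility of the Hessian and the $\Op(1)$ order of $\frac{1}{\sqrt{NT}}x'(\M_{f_0}\otimes\M_{\lambda_0})e$, and condition (ii) together with the strong-factor assumption (i) controls the remainder terms coming from the estimated (rather than true) factor space. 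The rate condition $\sqrt{NT}\, r_{NT}\to\infty$ with $r_{NT}\to0$ ensures the local minimizer is interior wpa1, so the first-order condition is valid.

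For the second part, the strategy is to analyze one iteration step and show it is a contraction toward $\widehat{\beta}^{\rm \, local}_{{\rm LS},R_0}$ with contraction factor $O_P\left(c_{NT} + 1/\sqrt{\min(N,T)}\right)$. The iteration in Step~2--3 is essentially a Gauss--Seidel / Newton step for $L_{R_0}(\beta)$: given $\widehat{\beta}^{(s)}$, we compute the PC estimates $(\widehat{\lambda}^{(s+1)},\widehat{f}^{(s+1)})$ of the residual $Y - \widehat{\beta}^{(s)}\cdot X$, then update $\beta$ by a projected least squares that also linearizes away first-order perturbations of the factor space (that is the role of the $g, h$ auxiliary parameters in \eqref{def.betahat.second.stage}). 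I would write $\widehat{\beta}^{(s+1)} - \widehat{\beta}^{\rm \, local}_{{\rm LS},R_0}$ explicitly using the closed form in \eqref{def.betahat.second.stage} and the analogous first-order characterization of $\widehat{\beta}^{\rm \, local}_{{\rm LS},R_0}$, and Taylor-expand the projectors $\M_{\widehat{\lambda}^{(s+1)}}$, $\M_{\widehat{f}^{(s+1)}}$ around $\M_{\lambda_0}$, $\M_{f_0}$. The estimated factor space differs from the true one by $O_P(\|\widehat{\beta}^{(s)}-\beta_0\| + 1/\sqrt{\min(N,T)})$ — the first term from the regressor contamination $(\widehat{\beta}^{(s)}-\beta_0)\cdot X$ in the residual, the second from the idiosyncratic noise $E$ — and plugging this into the update produces a bound of the form $\|\widehat{\beta}^{(s+1)} - \widehat{\beta}^{\rm \, local}_{{\rm LS},R_0}\| = \Op\left((c_{NT} + 1/\sqrt{\min(N,T)})\cdot \|\widehat{\beta}^{(s)}-\beta_0\|\right)$. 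Since $\|\widehat{\beta}^{(0)}-\beta_0\| = \Op(c_{NT})$ and the first part gives $\|\widehat{\beta}^{\rm \, local}_{{\rm LS},R_0}-\beta_0\| = \Op(1/\sqrt{NT})$, an induction on $s$ yields the stated bound $\Op\{c_{NT}(c_{NT} + 1/\sqrt{\min(N,T)})^s\}$; the dominant contribution to $\|\widehat{\beta}^{(s)}-\beta_0\|$ at each stage remains $c_{NT}$ until enough iterations have been performed.

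The main obstacle I expect is controlling the perturbation of the principal component estimates $(\widehat{\lambda}^{(s+1)},\widehat{f}^{(s+1)})$ and the resulting projectors uniformly and to the right order. One has to carefully track how a perturbation of size $\delta$ in $\beta$ propagates through the leading-$R_0$ singular subspace of $\Gamma_0 + (\beta_0 - \widehat{\beta}^{(s)})\cdot X + E$; the relevant tool is the Davis--Kahan-type / sin-$\Theta$ bound together with the strong-factor assumption (i), which guarantees a spectral gap of order $\sqrt{NT}$ between the $R_0$ "signal" singular values and the $O_P(\sqrt{\max(N,T)})$ "noise" singular values of $E$. The subtlety is that the linearized update \eqref{def.betahat.second.stage} is specifically designed so that the \emph{first-order} error in the factor space cancels — this is why the contraction factor carries an extra power of $(c_{NT} + 1/\sqrt{\min(N,T)})$ rather than being $O_P(1)$ — so the argument must isolate the quadratic-in-perturbation remainder and show it is genuinely higher order. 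This is the technical heart of the proof, and I would handle it by adapting the expansion techniques of \cite{MoonWeidner2017}, keeping the noise term $E$ and the regressor-contamination term separate throughout, and verifying at each step that conditions (ii)--(iv) suffice to bound the cross terms. The remaining steps — the explicit matrix algebra for \eqref{def.betahat.second.stage}, the induction over $s$, and assembling the final rate — are routine once the factor-space perturbation bound is in hand.
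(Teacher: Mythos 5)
Your route is the same as the paper's: the first claim is read off the approximate quadratic expansion of $L_{R_0}(\beta)$ from \cite{MoonWeidner2017}, and the second is obtained by differencing the normal equations for $\widehat{\beta}^{(s+1)}$ and $\widehat{\beta}^{\rm \, local}_{{\rm LS},R_0}$ and controlling $\M_{\widehat{f}^{(s+1)}} \otimes \M_{\widehat{\lambda}^{(s+1)}} - \M_{\widehat{f}} \otimes \M_{\widehat{\lambda}}$ through the analytic projector expansions of that same paper --- exactly the ``adapt the expansion techniques of Moon--Weidner'' step you describe. Two places where your plan, as written, would not quite close.

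First, the induction. Writing $d_s := \| \widehat{\beta}^{(s)} - \widehat{\beta}^{\rm \, local}_{{\rm LS},R_0}\|$, the recursion the paper derives is $d_{s+1} = \Op\{ d_s \, ( d_s + d_{s+1} + \min(N,T)^{-1/2} ) \}$, so both the contracted quantity and the contraction factor are distances to $\widehat{\beta}^{\rm \, local}_{{\rm LS},R_0}$, which shrink geometrically across iterations. Your recursion contracts $\|\widehat{\beta}^{(s)} - \beta_0\|$ with the \emph{fixed} factor $c_{NT} + \min(N,T)^{-1/2}$, and your closing remark that the dominant part of $\|\widehat{\beta}^{(s)} - \beta_0\|$ ``remains $c_{NT}$ at each stage'' is inconsistent with the stated conclusion: if that were true, the bound would freeze at $\Op\{c_{NT}(c_{NT}+\min(N,T)^{-1/2})\}$ for every $s \geq 1$ and never acquire the $s$-th power. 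The induction must be run on $d_s$. Second, a generic Davis--Kahan/$\sin\Theta$ bound on the leading-$R_0$ singular subspace is not sharp enough here: the extra small factor gained per iteration rests on structural cancellations that a subspace-angle bound cannot see. In the paper's argument the higher-order remainders of the projector expansions that depend on $E$ alone are common to $\M_{\widehat{\lambda}^{(s+1)}}$ and $\M_{\widehat{\lambda}}$ and drop out of the difference, and the surviving first-order terms are reduced when they hit $e$, $x(\widehat{\beta}^{(s+1)}-\beta_0)$, or $(f_0\otimes\lambda_0)\vect(\I_{R_0})$ because of $\M_{\lambda_0}\lambda_0 = 0$, $\M_{f_0}f_0 = 0$ and the factor $\|E\|_\infty/\sqrt{NT} = \Op(\min(N,T)^{-1/2})$. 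You do acknowledge that the explicit expansions are the fallback; they need to be the primary tool, not the backup.
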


\medskip
\noindent
Here, assumption (i) is a strong factor condition, and is often used in the literature on interactive fixed effects. 
The conditions in assumption (ii) of the theorem have been discussed in previous sections and are quite weak
(remember that $\| X_k \|_{\infty} \leq \| X_k \|_{2} = \sqrt{x_k' x_k}$). Assumption (iii) guarantees that
$ L_R(\beta)$ is locally convex around $\beta_0$ -- that condition can equivalently be written
as $\plim_{N,T \rightarrow \infty} \|   \M_{\lambda_0} (\alpha \cdot X) \M_{f_0} \|_2  > 0$  for any 
 $\alpha \in \mathbb{R}^K \setminus \{0\}$, which connects more closely to our discussion in Section~\ref{sec:MatrixSeparation}.
This is a non-collinearity condition on the regressors after profiling out both $\lambda_0$ and $f_0$.
Only the true values $\lambda_0$ and $f_0$ appear in that non-collinearity condition,
and it is therefore much weaker than the corresponding assumptions
required for consistency of $\widehat{\beta}_{{\rm LS},R_0}$ in \cite{Bai2009} and \cite{MoonWeidner2017}.
Our results from the previous sections show that $\| \widehat{\beta}^{(0)} - \beta_0 \| = \Op(c_{NT})$
for both $\widehat{\beta}^{(0)} = \widehat{\beta}_{\psi}$ and $\widehat{\beta}^{(0)} =\widehat{\beta}_{*}$,
under appropriate assumptions, where $c_{NT}$ is typically either $c_{NT} = 1 / \sqrt{\min(N,T)}$
or slightly slower than this, if $\psi = \psi_{NT}$ is chosen appropriately.

The following corollary is an immediate consequence of Theorem~\ref{thm:post.matrix.lasso}.

\begin{corollary}
     \label{cor:post.matrix.lasso}
     Let the assumptions of Theorem~\ref{thm:post.matrix.lasso} hold, 
      and assume that $c_{NT} = o( (NT)^{-1/6} ) $.
     For $s \in \{2,3,4,\ldots\}$ we then have
     \begin{align*}
        \sqrt{NT} \left(  \widehat{\beta}^{(s)} -  \widehat{\beta}^{\rm \, local}_{{\rm LS},R_0} \right) &= o_P(1) ,
         &
         \sqrt{NT} \left(  \widehat{\beta}^{(s)} -  \beta_0 \right) &=  \Op(1) .
     \end{align*}
 
\end{corollary}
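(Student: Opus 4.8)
The plan is to obtain the corollary as a direct consequence of Theorem~\ref{thm:post.matrix.lasso}; no new argument is needed, only a short rate computation. First I would record the two outputs of that theorem that are to be combined: under its assumptions, (a)~$\sqrt{NT}\,\big( \widehat{\beta}^{\rm \, local}_{{\rm LS},R_0} - \beta_0 \big) = \Op(1)$, and (b)~for every integer $s \geq 1$,
\[
     \big\| \widehat{\beta}^{(s)} -  \widehat{\beta}^{\rm \, local}_{{\rm LS},R_0} \big\| = \Op\!\left\{ c_{NT}\left( c_{NT} + \frac 1 {\sqrt{\min(N,T)}} \right)^{s} \right\} .
\]
The only additional structural input needed is that $N$ and $T$ grow at the same rate, which gives $1/\sqrt{\min(N,T)} = O\big( (NT)^{-1/4} \big)$.

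Next I would feed in the hypothesis $c_{NT} = o\big( (NT)^{-1/6} \big)$. Since $(NT)^{-1/4} = o\big( (NT)^{-1/6} \big)$, the bracketed factor obeys $c_{NT} + 1/\sqrt{\min(N,T)} = o\big( (NT)^{-1/6} \big)$, hence $c_{NT}\big( c_{NT} + 1/\sqrt{\min(N,T)} \big)^{s} = o\big( (NT)^{-(s+1)/6} \big)$; combining this deterministic bound with (b) gives $\sqrt{NT}\,\big\| \widehat{\beta}^{(s)} -  \widehat{\beta}^{\rm \, local}_{{\rm LS},R_0} \big\| = o_P\big( (NT)^{1/2 - (s+1)/6} \big)$. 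For $s = 2$ the exponent $1/2 - (s+1)/6$ equals exactly $0$, and for every $s \geq 3$ it is strictly negative, so in all cases $s \geq 2$ we conclude $\sqrt{NT}\,\big( \widehat{\beta}^{(s)} -  \widehat{\beta}^{\rm \, local}_{{\rm LS},R_0} \big) = o_P(1)$, which is the first display of the corollary. The second display is then immediate from the triangle inequality together with (a): $\sqrt{NT}\,\big( \widehat{\beta}^{(s)} - \beta_0 \big) = \sqrt{NT}\,\big( \widehat{\beta}^{(s)} - \widehat{\beta}^{\rm \, local}_{{\rm LS},R_0} \big) + \sqrt{NT}\,\big( \widehat{\beta}^{\rm \, local}_{{\rm LS},R_0} - \beta_0 \big) = o_P(1) + \Op(1) = \Op(1)$.

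There is no genuine obstacle in this proof; the one point deserving care is the boundary case $s = 2$. It is precisely the strict condition $c_{NT} = o\big( (NT)^{-1/6} \big)$, rather than merely $c_{NT} = O\big( (NT)^{-1/6} \big)$, that upgrades the exponent-zero bound $\Op\big( (NT)^{0} \big)$ into the sharp $o_P(1)$ statement. One should also note why the range $s \geq 2$ cannot be relaxed to $s \geq 1$: under only $c_{NT} = o\big( (NT)^{-1/6} \big)$ the quantity $\sqrt{NT}\, c_{NT}\big( c_{NT} + (NT)^{-1/4} \big)$ need not vanish --- for instance, when $c_{NT}$ has exact order $(NT)^{-1/4}$ it is of order one --- so the corollary genuinely requires $s \geq 2$.
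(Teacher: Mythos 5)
Your proposal is correct and is exactly the computation the paper intends: the paper offers no separate proof, declaring the corollary an immediate consequence of Theorem~\ref{thm:post.matrix.lasso}, and your rate check ($1/\sqrt{\min(N,T)}=O((NT)^{-1/4})$ under same-rate asymptotics, so the bracket is $o((NT)^{-1/6})$ and $\sqrt{NT}\,c_{NT}(\cdot)^{s}=o((NT)^{1/2-(s+1)/6})=o_P(1)$ for $s\geq 2$, followed by the triangle inequality) is the intended argument. Your remarks on the boundary case $s=2$ and on why $s=1$ does not suffice are accurate and a useful addition.
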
     
The first statement of the corollary 
 shows that if the initial estimators $ \widehat{\beta}_{\psi}$ and $ \widehat{\beta}_{*}$ satisfy typical convergence
rates results derived in the previous sections,   
then the iterated estimator $ \widehat{\beta}^{(s)}$ 
is asymptotically equivalent to $  \widehat{\beta}^{\rm \, local}_{{\rm LS},R_0} $ after $s=2$ iterations or more.
Remember that if $  \widehat{\beta}_{{\rm LS},R_0} $  is consistent, then we have 
$  \widehat{\beta}^{\rm \, local}_{{\rm LS},R_0}  = \widehat{\beta}_{{\rm LS},R_0}  $ wpa1, but 
by showing asymptotic equivalence with $  \widehat{\beta}^{\rm \, local}_{{\rm LS},R_0}  $ here we avoid
 imposing conditions that require consistency of  $ \widehat{\beta}_{{\rm LS},R_0}  $.

From the results in \cite{Bai2009} and \cite{MoonWeidner2017} we also know that 
$ \widehat{\beta}^{\rm \, local}_{{\rm LS},R_0} $ is asymptotically
normally distributed, but potentially with a bias in the limiting distribution.
According to the corollary the same is therefore true for  $\widehat{\beta}^{(s)}$ for $s \geq 2$.
Asymptotic bias corrections could then also be applied 
to $\widehat{\beta}^{(s)}$, $s \geq 2$, to eliminate the bias in the limiting distribution and allow for inference on $\beta_0$. See  
 \cite{Bai2009} and \cite{MoonWeidner2017} for details.

\section{Implementation and Monte Carlo Simulations}
\label{sec:MC}

To implement the nuclear norm regularized estimator we need to choose the regularization parameter $\psi$,
and for the  post estimator $\widehat{\beta}^{(s)}$ we need to determine the number of factors $R_0$.
In this section we suggest a data dependent choice of $\psi$ as well as an 
estimate of $R_0$.
We assume that an upper bound $R_{\rm max} \geq R_0$ is known. 
 
\subsubsection*{Data Dependent Choice of $\psi$.}

We suggest the following procedure to choose $\psi$.
\begin{itemize}
     \setlength{\itemindent}{0.25in}
	\item[\bf Step 1:] Calculate  the nuclear norm minimizing estimator $\widehat{\beta}_*$, and the corresponding residuals
	$$
	\widehat{E}_* = Y - \widehat{\beta}_* \cdot X.
	$$
	
	\item[\bf Step 2:] Choose $R_{\rm max}$,  calculate  $R_{\rm max}$ principal components of $\widehat{E}_*$,
	$$
	\left\{ \widehat{\lambda}_{\rm max},  \widehat{f}_{\rm max} \right\} \in \argmin_{ \lambda \in \mathbb{R}^{N \times R_{\rm max}}, \; f \in  \mathbb{R}^{T \times R_{\rm max}} } \left\| \widehat{E}_* - \lambda f' \right\|_2^2,
	$$
	and use those to eliminate all the factors in $\widehat{E}_*$. The new residuals are
	$$
	\widetilde{E}_*  = \widehat{E} - \widehat{\lambda}_{\rm max} \widehat{f}_{\rm max}^{\, \prime}.
	$$

	\item[\bf Step 3:] Choose 
	\[
	\widehat{\psi}  = \frac{2 \| \widetilde{E}_* \|_{\infty}}{\sqrt{NT}}. 
	\]
\end{itemize} 
This choice of $\widehat{\psi}$  is motivated by the condition (\ref{def.psi}) in Lemma~\ref{lemma:ConTheta},
which guarantees that $\psi$ is sufficiently large to obtain estimates $\widehat \Gamma_\psi$
that are close to $\Gamma_0$.
Notice also that the  nuclear norm minimizing estimator $\widehat{\beta}_{*}$
in step 1 does not require any regularization parameter to be specified.

\subsubsection*{Estimation of $R_0.$}
The post nuclear norm regularized estimator introduced above assumes that the number of factors $R_0$ is known. 
In practice $R_0$ needs to be estimated, for example,
by  applying a consistent estimation method for the number of the factors in a pure factor model to the residuals  $\widehat{E}_*$,
see  e.g. \cite{BaiNg2002}, \cite{Onatski2010} and \cite{AhnHorenstein2013}. 

For our Monte Carlo simulations below we use an alternative estimation method 
that thresholds the singular values of $\widehat{E}_*$ using the estimate 
$\widehat{\psi}$ introduced above. Namely, we estimate $R_0$ by 
\[
\widehat{R} = \sum_{r=1}^{\min(N,T)} \mathbbm{1} 
\left\{ s_r\left( \widehat{E}_* \right) \geq  2 \, \sqrt{NT}  \, \widehat{\psi}  \right\}.
\]
The motivation behind this estimator is that those singular values of $\widehat{E}_*$ that are significantly larger than 
$\sqrt{NT} \, \widehat{\psi} $ should correspond to factors, while singular values close to $\sqrt{NT} \, \widehat{\psi} $ and smaller should 
originate from idiosyncratic noise. The choice of the factor 2 in the formula for $\widehat{R}$ is somewhat arbitrary,
 any alternative factor larger than one would also be plausible here.

\subsubsection*{Monte Carlo Results}

We generate data from the following linear panel model regression model with two regressors (including the intercept) and two factors:
\begin{align}
Y_{it} &= \beta_{0,1} + \beta_{0,2} \, X_{it} + \sum_{r=1}^2 \lambda_{0,ir} f_{0,tr} + E_{it} ,
\nonumber \\
X_{it} &= 1+ E_{x,it} + \sum_{r=1}^2(\lambda_{0,ir} + \lambda_{x,ir})(f_{0,tr} + f_{0,t-1,r}), 
\label{MCdesign1}
\end{align} 
where $f_{0,tr} \sim i.i.d. \, {\cal N}(0,1)$; $\lambda_{0,ir},\lambda_{x,ir} \sim i.i.d. \, {\cal N}(1,1)$; $E_{x,it}, E_{it} \sim i.i.d. \, {\cal N}(0,1)$;  
all  mutually independent.
Table~\ref{table:MC1design} reports the bias and standard deviation for the various estimators
for different combinations of $N$ and $T$.

	\begin{table}[tbh]\label{table:MC1design}
	\begin{center}
		\includegraphics[width=\textwidth]{./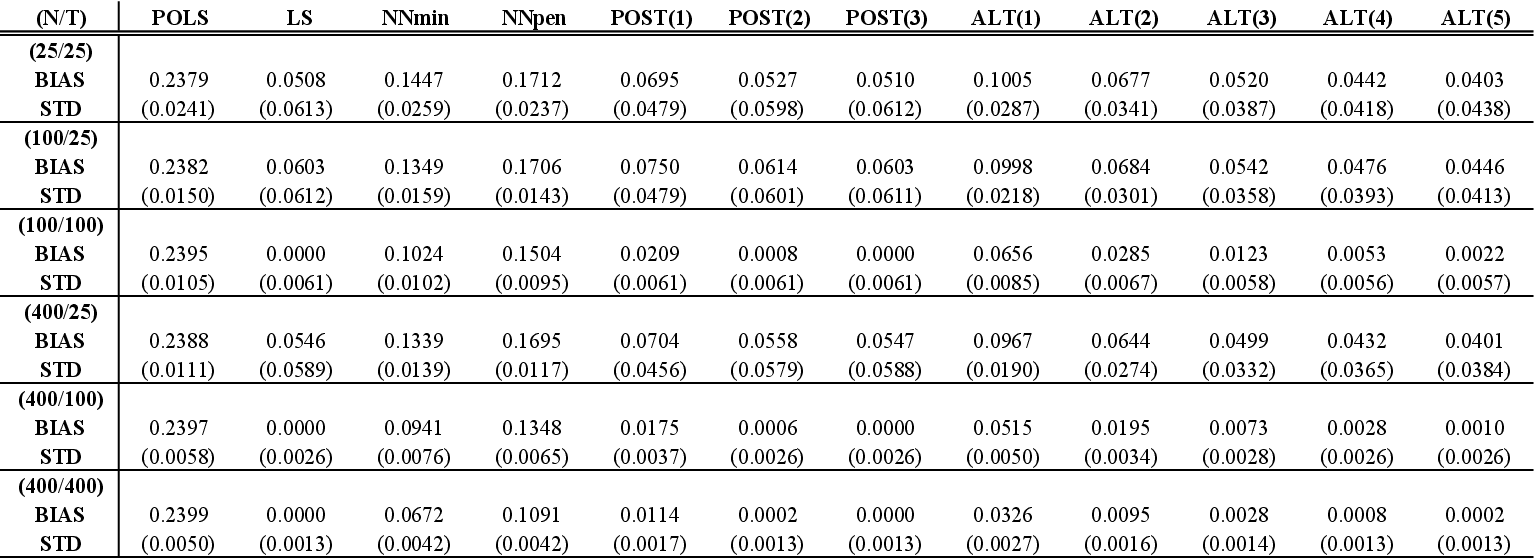}
		\caption{\label{table:MC1design} 
		Monte Carlo results based on 1000 repetitions for the design specified in display \eqref{MCdesign1}.	
		Reported are the bias and standard deviation for the pooled OLS estimator (POLS),
		the least squares estimator with $R_0=2$ factors (LS),
		the nuclear norm minimizing estimator $\widehat \beta_*$ (NNmin),
		 the nuclear norm penalized estimator with $\psi = \widehat \psi$ (NNpen),
		 the post estimator $\widehat \beta^{(s)}$ for $s=1,2,3$ iterations and using $R=\widehat R$ factors (POST$(s)$),
		 and the alternative bias correction method (see the appendix) using $R=\widehat R$ factors and $s=1,2,3,4,5$ iterations.
		}
	\end{center}
	\end{table}

As shown in Table \ref{table:MC1design}, the nuclear norm regularized estimator 
$\widehat{\beta}_{\psi}$ and the nuclear norm minimization estimator 
$\widehat{\beta}_*$ have biases due to the regularization which vanish slowly 
as the sample size increases. This confirms that those estimators
are indeed not $\sqrt{NT}$  consistent, but only have a $\sqrt{\min(N,T)}$ convergence rate to $\beta_0$. The table also shows that
the post nuclear norm regularized estimation ($\widehat{\beta}^{(s)}$) quickly reduces the bias, and essentially agrees with the 
LS estimator (which is a consistent estimator in this MC design) after two iterations, as the theory predicts.
The columns  ALT(1) - ALT(5) in that table contain  the  results for an alternative bias corrected estimator that is presented in the appendix. It turns out that the alternative bias correction method is less effective in reducing the bias, and we therefore do not discuss it in the main text.
Our recommendation in practice for inference on $\beta_0$ is the iteration procedure for $\widehat{\beta}^{(s)}$ explained in the previous section.

\section{Conclusions}
\label{sec:conc}

In this paper we analyze two new estimation methods for interactive fixed effect panel regressions
that are based on convex objective functions: (i) nuclear norm penalized estimation, and (ii) nuclear norm minimizing estimation. 
The resulting estimators can also be applied in situations 
 where the LS estimator may not be consistent, in particular when low-rank regressors are present and the true number of 
 factors is unknown.
We provide consistency and convergence  rate results for the new estimators of the regression coefficients, and we show how to use them as a preliminary estimator to achieve asymptotic equivalence to the local version of the LS estimator. 
We have focused on the linear model with homogenous coefficients, which is a natural starting point to understand the usefulness 
of nuclear norm penalization approach for panel regression models, but there are several
ongoing extensions, including developing a unified method to deal with non-linear models, heterogeneous coefficients, treatment effect estimation,
nonparametric sieve estimation, and high-dimensional regressors   --- see \cite{Atheyetal2017} and \cite{chernozhukov2018inference}.
In Appendix~\ref{sec:Nonlinear:appendix} we briefly discuss how the nuclear norm regularization approach extends to nonlinear panel models, including maximum likelihood, weighted least squares, and quantile regression settings. We note that sharper convergence rates than those presented in the appendix have been established for specific nonlinear models in the subsequent literature, including \cite{belloni2019high}, \cite{wang2022low}, and \cite{feng_2023} for quantile regression.

\ifx\undefined\BySame
\newcommand{\BySame}{\leavevmode\rule[.5ex]{3em}{.5pt}\ }
\fi
\ifx\undefined\textsc
\newcommand{\textsc}[1]{{\sc #1}}
\newcommand{\emph}[1]{{\em #1\/}}
\let\tmpsmall\small
\renewcommand{\small}{\tmpsmall\sc}
\fi

\newpage

\appendix 
\section{Appendix}

\renewcommand{\theequation}{A.\arabic{equation}}  \setcounter{equation}{0}
\renewcommand{\thetheorem}{A.\arabic{theorem}}  \setcounter{theorem}{0}
\renewcommand{\thelemma}{A.\arabic{lemma}}  \setcounter{lemma}{0}
\renewcommand{\thecorollary}{A.\arabic{corollary}}  \setcounter{corollary}{0}
\renewcommand{\theexample}{A.\arabic{example}}  \setcounter{example}{0}
\renewcommand{\thefigure}{A.\arabic{figure}}  \setcounter{figure}{0}

\subsection{An Example of a Non-convex LS Profile Objective Function}\label{subsec:ap.ex.nonconvex}

As an example for a non-convex LS profile objective function we consider the following linear model with one regressor and two factors:
\begin{align*}
Y_{it} &= \beta_0 \, X_{it} + \sum_{r=1}^2 \lambda_{0,ir} f_{0,tr} + E_{it} , \\
X_{it} &=  0.04 E_{x,it} + \lambda_{0,i1}f_{0,t2} + \lambda_{x,i} f_{x,t},
\end{align*}
where 
$$
\textstyle
 \lambda_{0,i} = {\lambda_{0,i1} \choose \lambda_{0,i2}}
\sim iid N \left( 
\left( 
\begin{array}{c}
0 \\ 
0
\end{array} 
\right), 	
\left( 
\begin{array}{cc}
1	& 0.5  \\ 
0.5	&  1 
\end{array}
\right) 
\right) , 
\;
 f_{0,t} = {f_{0,t1} \choose f_{0,t2}} 
\sim iid N 
\left( 
\left( 
\begin{array}{c}
0 \\ 
0
\end{array} 
\right), 	
\left( 
\begin{array}{cc}
1	& 0.5  \\ 
0.5	&  1 
\end{array}
\right) 
\right),$$ 
and
$\lambda_{x,i} \sim iid \,\, 2 \chi^2(1)$, $f_{x,t} \sim iid \,\, 2 \chi^2(1)$, 
$E_{x,it}, E_{it} \sim i.i.d. \, {\cal N}(0,1)$, 
and 
$ \{ \lambda_{0,i} \}$, $\{ f_{0,t} \}$, $\{ \lambda_{x,i} \}$, $\{ f_{x,t} \}$, $\{ E_{x,it} \}, \{  E_{it} \}$ are all independent of each other. For $(N,T) = (200,200)$, we generate the panel data for $(Y_{it},X_{it})$, and plot the LS objective function \eqref{DefLSestimator} in
Figure~\ref{figure:LS_Nuc_together}, which is discussed in the main text.

\subsection{Alternative Bias Correction}

In this section, we discuss an alternative bias reduction method used in the Monte Carlo simulations in Section \ref{sec:MC}. The alternative method reduces the bias of the score function of the regularized least squares objective function $Q_{\psi}(\beta)$. We introduce the procedure in a heuristic way {\it without} presenting a rigorous proof.   
We have implemented this alternative method in our Monte Carlo simulations, and while it indeed improves the  nuclear-norm penalized estimates 
(see Table~\ref{table:MC1design}), it does not perform better than the iteration method described in Section~\ref{sec:post estimation}.

Recall that $L_R(\beta,\Gamma) = \frac{1}{2NT} \left\| Y -\beta \cdot X- \Gamma \right\|_2^2$, where $\Gamma = \lambda f'$. 
Define \[
\widehat{\Gamma}_{R}(\beta) :=  \argmin_{\Gamma: {\rm rank}(\Gamma) \leq R } L_R(\beta,\Gamma).
\]
We can write
\begin{align*}
L_R(\beta) &= L_R(\beta,\widehat{\Gamma}_{R}(\beta)) = \frac{1}{2} \sum_{r=R+1}^{\min(N,T)} s_r\left( \frac{ Y - \beta \cdot X}{\sqrt{NT}} \right)^2. 
\end{align*}
Let $\widehat{\Gamma}_{\psi}(\beta) =  \argmin_{\Gamma} Q_{\psi}(\beta,\Gamma)$, and 
\begin{align*}
\widehat{R}(\beta,\psi) 
&:= \sum_{r=1}^{\min(N,T)} \mathbb{I} \{ s_r( Y - \beta \cdot X) \geq \sqrt{NT} \psi \} 
= {\rm rank} \left( \widehat{\Gamma}_{\psi}(\beta) \right).
\end{align*} 
Suppose that we choose $\psi$ such that 
\begin{equation}
	\widehat{R}(\beta,\psi)  = R_0 \label{Rhat=R0}
\end{equation}
Then, in view of (\ref{DefSmallLQ}) and (\ref{RelationRpsi}), we write the profile objective function of the regularized least squares as
\begin{align}
Q_{\psi}(\beta) 
&:= q_{\psi} \left( \frac{ Y - \beta X}{\sqrt{NT}} \right) \nonumber \\
&= \frac{1}{2} \sum_{r = R_0 + 1}^{\min(N,T)} s_r\left( \frac{ Y - \beta X}{\sqrt{NT}} \right)^2
+ \psi \sum_{r=1}^{R(\beta,\psi)} s_r\left( \frac{ Y - \beta X}{\sqrt{NT}} \right) 
- \frac{1}{2} \psi^2 R(\beta,\psi) \nonumber \\
&= L_{ R_0} (\beta) 
+ \psi \sum_{r=1}^{ R_0 } s_r\left( \frac{ Y - \beta X}{\sqrt{NT}} \right) 
- \frac{1}{2} \psi^2 R_0 \nonumber \\
&= L_{ R_0} (\beta) 
+ \psi  \left\| \widehat{\Gamma}_{R_0}(\beta) \right\|_1 
- \frac{1}{2} \psi^2 R_0. \label{expansion.qpsi}
\end{align}
This shows that the term $\left\| \widehat{\Gamma}_{R_0}(\beta) \right\|_1$ is the main source of the regularization bias. 
We suggest to  approximate $\left\| \widehat{\Gamma}_{R_0}(\beta) \right\|_1 $ as follows,
\begin{align}
    \left\| \Gamma_0 - (\beta - \beta_0) \cdot X + E  \right\|_1  
  \approx \| \Gamma_0 \|_1  - (\beta - \beta_0)' B_{NT},  \label{expansion.Gammahat.nuclearnorm}
\end{align}
where $B_{NT} = (B_{NT,1},...,B_{NT,K})'$, with 
\begin{equation*}
B_{NT,k} := \frac{1}{\sqrt{NT}} {\rm Tr}\left[  (\lambda_0'\lambda_0)^{-1/2}\lambda_0' X_k f_0(f_0' f_0)^{-1/2} \right] .
\label{Bias.Nuc.Reg.Obj}
\end{equation*}
From (\ref{expansion.qpsi}) and (\ref{expansion.Gammahat.nuclearnorm}) we expect that
$Q_{\psi}(\beta) +  \psi (\beta - \beta_0) B_{NT} $
should be a good approximation to $ L_{R_0}(\beta)$. 
This heuristic suggests that we may reduce the bias of the nuclear norm 
regularized estimation by modifying the objective function.

For this, suppose that $\widehat{\psi}$ is a data dependent choice of $\psi$ that satisfies the condition (\ref{Rhat=R0}). Let $\widehat{R}$ be a consistent estimator of $R_0$. Let $\widehat{\beta}^{(0)}_{\rm alt}$ be an preliminary estimator. For example, $\widehat{\beta}^{(0)} = \widehat{\beta}_{\widehat{\psi}}$ or $\widehat{\beta}^{(0)} = \widehat{\beta}_{*}$.

For $s=0,1,2,...$, define 
\[
(\widehat{\lambda}^{(s)}, \widehat{f}^{(s)} ) \in \argmin_{\lambda \in \mathbb{R}^{N \times \widehat{R}}, f \in \mathbb{R}^{T \times \widehat{R}}} \left\| Y - \widehat{\beta}^{(s)}_{\rm alt} \cdot X - \lambda f'\right\|_2^2 ,
\]
and 
\begin{equation*}
\widehat{B}_{NT,k}^{(s)} 
:= \frac{1}{\sqrt{NT}} {\rm Tr}\left[  (\widehat{\lambda}^{(s) \prime} \widehat{\lambda}^{(s)})^{-1/2}\widehat{\lambda}^{(s) \prime} X_k \widehat{f}^{(s)}( \widehat{f}^{(s) \prime} \widehat{f}^{(s)})^{-1/2} \right].
\end{equation*}
We modify the nuclear norm regularized objective function as
\begin{equation*}
Q_{\widehat{\psi}}^{\rm bc, s+1 }(\beta) := Q_{\widehat{\psi}}(\beta) + \widehat{\psi} (\beta - \widehat{\beta}^{(s)}_{\rm alt}) \widehat{B}_{NT}^{(s)}
\end{equation*} 
and update the estimator as
\[
\widehat{\beta}^{(s+1)}_{\rm alt} := \argmin_{\beta} Q_{\widehat{\psi}}^{\rm bc, s+1}(\beta).
\]

\section{Extension to Nonlinear Panel Data Models}
\label{sec:Nonlinear:appendix}

  This appendix extends the nuclear norm regularization approach to nonlinear panel data models. The results here are less sharp than those in the main text for linear models; see Remark~\ref{remark:nonlinear:caveat} below and the references cited therein for sharper results in specific nonlinear settings.

We now consider the following generalization of the penalized LS estimator,
\begin{align*}
\left( \widehat{\beta}_{\psi}, \widehat \Gamma_\psi \right) &\in 
\argmin_{\beta \in \mathbb{R}^K, \, \Gamma \in \mathbb{R}^{N \times T}}  
Q_\psi(\beta,\Gamma)  ,
&
Q_\psi(\beta,\Gamma) 
&:= 
\frac{1}{NT} \sum_{i=1}^N \sum_{t=1}^T \, m_{it} \left( X_{it}' \beta + \Gamma_{it}  \right)
+      \frac{ \psi } {\sqrt{NT}} \left\| \Gamma \right\|_1  ,
\end{align*}
where $m_{it}(z) :=m(W_{it},z)$ is a known convex function of the single index $z \in \mathbb{R}$, which also depends on the
observed variables $W_{it}$.
The single index $X_{it}' \beta + \Gamma_{it} $ has the same structure as the conditional mean of the linear model \eqref{model:linear.matrix},
and for $W_{it} = Y_{it}$ and  $m_{it}( z ) =  \frac 1 2 (Y_{it} - z )^2$ we obtain the penalized LS estimator that
was studied in previous sections. The nuclear norm penalty term is unchanged.

Let $\overline m_{it}(z) = \mathbb{E}( m_{it}(z)  | X)$ be the expected objective function,
conditional on $X=\{ X_{it} : i=1,\ldots,N; t=1,\ldots,T\}$,\footnote{%
	Remember that we consider $\Gamma_0$ as non-random, that is, all expectations are implicitly conditional on $\Gamma_0$ as well.
	Also, we condition on all the observed $X$ here, implying that we only consider strictly exogenous regressors in this section, but in principle
	the results could be extended to dynamic models.
}
and denote derivatives of $m_{it}(z) $ and $\overline  m_{it}(z) $ with respect to $z$ by $\partial_z m_{it}(z)$, 
$\partial_z \overline m_{it}(z)$,
$\partial_{z^2} \overline m_{it}(z)$, etc.
Let $z^0_{it} = X_{it}' \beta_0 + \Gamma_{0,it} $ be the index evaluated at the true parameters.
Let ${\cal W}$ denote the domain of $W_{it}$.
We make the following assumptions on the objective function.

\begin{assumption} \label{ass:NonlinearObj}
	Let ${\cal Z} \subset \mathbb{R}$  
	be such that $\cup_{i,t} [z^0_{it} - \epsilon, z^0_{it} + \epsilon] \subset {\cal Z} $,
	for some  $\epsilon>0$.
	Assume:
	\begin{itemize}
		\item[(i)] $W_{it}$ is independently distributed across $i$ and over $t$, conditional on $X$.
		
		\item[(ii)]  
		The objective function $m(w,z) $  is convex in $z$,
		and once continuously differentiable in $z$ almost everywhere in ${\cal W} \times {\cal Z}$.
		For any function $z_{it} = z_{it}(X) \in {\cal Z}$ the first derivative 
		$\partial_z  m_{it}(z_{it})$ exists almost surely,
		and satisfies $\max_{i,t,N,T} \mathbb{E}\left\{ \left. \left[ \partial_z  m_{it}(z_{it}) \right]^4 \, \right| \, X  \right\} < \infty$.
		
		\item[(iii)] 
		$\overline m_{it}(z)$ is four times continuously differentiable in ${\cal Z}$, with derivatives bounded uniformly 
		over $i,t,N,T$, ${\cal Z}$. There exists  $b>0$ such that 
		$\min_{i,t,N,T} \min_{z \in {\cal Z}} \partial_{z^2} \overline  m_{it}(z)  \geq b$.
		
		\item[(iv)] $\partial_z \overline m_{it}(z_{it}^0) = 0 $, for all $i,t$.

	\end{itemize}      
\end{assumption}
Here, the last assumption crucially connects the distribution of $W_{it}$ conditional on $X_{it}$ 
with the chosen objective function $m_{it}(z)$. For the LS case we have  $\partial_z  m_{it}(z_{it}^0) = E_{it}$,
and Assumption~\ref{ass:NonlinearObj} then becomes the familiar mean independence condition $\mathbb{E}(E_{it} | X) = 0$. This condition excludes a predetermined regressor.
Some further examples for data generating processes and corresponding objective functions are

\begin{itemize}
	\item[(a)] Maximum likelihood: Let $Y_{it}$ conditional on $X$ have probability mass or density function  $p(y | z^0_{it})$,
	set $W_{it} = Y_{it}$ and $m_{it}( z ) = -  \log p(Y_{it} | z)$, and assume that  $m_{it}( z )$ is strictly convex in $z$
	and three times continuously differentiable.    
	A concrete example is a binary choice probit model, where $p(y | z) = \mathbbm{1}(y=1) \Phi(z) + \mathbbm{1}(y=0) [1-\Phi(z)]$,
	and $\Phi(.)$ is the cdf of ${\cal N}(0,1)$.

	\item[(b)] Weighted Least Squares:
	Let outcomes $Y_{it}$ be generated from the linear model \eqref{model:linear.matrix} with 
	$\mathbb{E}(E_{it} | X_{it}, S_{it}) = 0$,
	and
	let $m_{it}( z ) = \frac 1 2  S_{it} (Y_{it} - z )^2$,
	and $W_{it} = (Y_{it},S_{it})$. Here, the $S_{it} \geq 0$ 
	are observed weights for each observation.
	A special case is $S_{it} \in \{0,1\}$, where $S_{it}$ is an indicator of a missing outcome $Y_{it}$.
	
	\item[(c)] Quantile Regression: 
	Let outcomes $Y_{it}$ be generated from the linear model \eqref{model:linear.matrix},
	but instead of the mean restriction for $E_{it}$ we impose the quantile restriction
	$\mathbb{E}[ \mathbbm{1}(E_{it} \leq 0) | X_{it} ] = \tau$,
	and we let $ m_{it}( z ) = \rho_{\tau}( Y_{it} - z )$,
	and   $W_{it} = Y_{it}$,
	where $\rho_{\tau}(u) = u \cdot [\tau-\mathbbm{1}(u<0)]$
	is the quantile regression objective function, and $\tau \in (0,1)$ is a chosen quantile of interest.
	See also  \cite{belloni2019high}, \cite{wang2022low}, and \cite{feng_2023}  for this case.

\end{itemize}
Some additional regularity conditions are needed to guarantee that those examples satisfy Assumption~\ref{ass:NonlinearObj}.
For many models (e.g.\ quantile regressions and binary choice likelihood)
we have $\lim_{z \rightarrow \pm \infty} \partial_{z^2} \overline  m_{it}(z) =0 $.
Then, the lower bound on $\partial_{z^2} \overline  m_{it}(z)  $ in Assumption~\ref{ass:NonlinearObj}(iii) will require us to
impose that ${\cal Z}$ is a bounded set, which can be guaranteed by assuming that $X_{it}$ and $\Gamma_{0,it}$ are uniformly bounded.
Apart from that it is straightforward to verify Assumption~\ref{ass:NonlinearObj} under standard regularity conditions for the respective model.
Notice also that Assumption~\ref{ass:NonlinearObj}(ii) is formulated with the quantile regression case in mind,
where  $\partial_z  m_{it}(z_{it}) = \tau-\mathbbm{1}(Y_{it} - z <0)$ is not well-defined at $z_{it}= Y_{it}$, but 
that is a probability zero event for continuously distributed $Y_{it}$.

In the following theorem we show that $\widehat{\beta} - \beta_0 = 
\Op(\psi^{1/2})$ for $\psi \rightarrow 0$ with $\psi \sqrt{NT} \rightarrow 
\infty$ and the regressors have a generalized factor structure. We present the 
special case where there exists a single regressor (i.e., $K=1$) and the 
regressor is strictly exogenous for technical simplicity.

\begin{theorem}
	\label{th:Nonlinear}
	Let Assumption~\ref{ass:NonlinearObj} be satisfied.
	Let  $N,T \rightarrow \infty$, $\psi \rightarrow 0$, and  $\sqrt{NT} \psi \rightarrow \infty$. Let $K=1$.
	Assume that  
	\begin{itemize}
		\item[(i)] $\| \Gamma_0 \|_1 = O( \sqrt{NT} ) $. 
		\item[(ii)]The regressor can be decomposed as $X = 	X^{(1)} + X^{(2)}$
		such that $\| X^{(1)} \|_1 = o_P(\sqrt{NT} \, \psi^{-1/2}) $, 
		and $\|  X^{(2)} \|_\infty = o_P(\sqrt{NT} \, \psi^{1/2}) $.
		\item[(iii)] $W := \frac 1 {NT} \sum_{i=1}^N \sum_{t=1}^T  (X_{it}^{(2)})^2$ satisfies $W \rightarrow_P W_\infty > 0$.
	\end{itemize}
	Then we have
	$\widehat \beta_\psi - \beta_0 = \Op(\psi^{1/2})$. 	
\end{theorem}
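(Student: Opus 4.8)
The plan is a penalized $M$-estimation argument: combine the convexity of $Q_\psi$ with a ``no–escape'' localization, extract curvature from the expected objective $\overline m$, and use the split $X=X^{(1)}+X^{(2)}$ to peel off the two ways the regressor can corrupt $\widehat\beta_\psi$ — the low–nuclear-norm part $X^{(1)}$ through the nuclear-norm penalty and matrix duality $|\langle A,B\rangle|\le\|A\|_\infty\|B\|_1$, and the small–spectral-norm part $X^{(2)}$ through the identifying second moment $W\to_P W_\infty>0$ of condition (iii).

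Since $(\beta,\Gamma)\mapsto Q_\psi(\beta,\Gamma)$ is jointly convex, $\phi(\delta):=\min_{\Gamma}Q_\psi(\beta_0+\delta,\Gamma)-\min_{\Gamma}Q_\psi(\beta_0,\Gamma)$ is convex in $\delta\in\R$ with minimizer $\widehat\beta_\psi-\beta_0$ and $\phi(0)=0$, so it suffices to show that for each $\epsilon>0$ there is $M$ with $\PR(\phi(\delta)>0\text{ for }|\delta|=M\psi^{1/2})\to1$. Writing $\Delta^\Gamma=\Gamma-\Gamma_0$, $\Delta=\delta X+\Delta^\Gamma$, $z^0_{it}=X_{it}\beta_0+\Gamma_{0,it}$ and $S_{it}=\partial_z m_{it}(z^0_{it})$, and using $\|\Gamma\|_1\ge\|\Delta^\Gamma\|_1-\|\Gamma_0\|_1$ together with condition (i), $\|\Gamma_0\|_1=O(\sqrt{NT})$, one gets
\[
\phi(\delta)\ \ge\ \inf_{\Delta^\Gamma}\Big\{\bar F(\Delta)+\tfrac1{NT}\langle S,\Delta\rangle+\mathcal{R}(\Delta)+\tfrac{\psi}{\sqrt{NT}}\|\Delta^\Gamma\|_1\Big\}-C\psi,
\]
where $\bar F(\Delta)=\tfrac1{NT}\sum_{it}[\overline m_{it}(z^0_{it}+\Delta_{it})-\overline m_{it}(z^0_{it})]$ and $\mathcal{R}(\Delta)=\tfrac1{NT}\sum_{it}\{[m_{it}-\overline m_{it}](z^0_{it}+\Delta_{it})-[m_{it}-\overline m_{it}](z^0_{it})-S_{it}\Delta_{it}\}$ is the centered stochastic remainder.

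For the deterministic piece, Assumption~\ref{ass:NonlinearObj}(iv) ($\partial_z\overline m_{it}(z^0_{it})=0$) and (iii) ($\partial_{z^2}\overline m_{it}\ge b$ on $\mathcal Z$) give $\bar F(\Delta)\ge\tfrac{b}{2NT}\|\Delta\|_2^2$ as long as the fitted index stays in $\mathcal Z$; entries that leave $\mathcal Z$ are handled by convexity of $\overline m_{it}$ (at least linear growth), disposed of with a preliminary crude consistency bound. By Assumption~\ref{ass:NonlinearObj}(i),(ii),(iv) the $\{S_{it}\}$ are conditionally independent, mean zero and $L^4$-bounded, so $\|{\rm mat}(S)\|_\infty=\Op(\sqrt{\max(N,T)})$ (the random-matrix bound already used for $E$) and $\tfrac1{NT}\langle S,X^{(2)}\rangle=\Op((NT)^{-1/2})$ by its conditional variance and condition (iii). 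Hence $|\tfrac1{NT}\langle S,\Delta^\Gamma\rangle|\le\tfrac{\|{\rm mat}(S)\|_\infty}{NT}\|\Delta^\Gamma\|_1$ is dominated by $\tfrac{\psi}{\sqrt{NT}}\|\Delta^\Gamma\|_1$ (rate of $\psi$), and, using condition (ii), $|\tfrac1{NT}\langle S,\delta X\rangle|\le|\delta|\big(\tfrac{\|{\rm mat}(S)\|_\infty}{NT}\|X^{(1)}\|_1+\tfrac1{NT}|\langle S,X^{(2)}\rangle|\big)=|\delta|\,o_P(\psi^{1/2})$. Keeping half the penalty and writing $\delta X+\Delta^\Gamma=\delta X^{(2)}+\Psi$ with $\Psi=\delta X^{(1)}+\Delta^\Gamma$,
\[
\tfrac{b}{2NT}\|\delta X^{(2)}+\Psi\|_2^2+\tfrac{\psi}{2\sqrt{NT}}\|\Delta^\Gamma\|_1\ \ge\ \tfrac{b}{2NT}\big(\delta^2\|X^{(2)}\|_2^2-2|\delta|\|X^{(2)}\|_\infty\|\Psi\|_1\big)+\tfrac{\psi}{2\sqrt{NT}}\|\Delta^\Gamma\|_1 ,
\]
and since $\|X^{(2)}\|_\infty\|X^{(1)}\|_1=o_P(NT)$ while the remaining coefficient of $\|\Delta^\Gamma\|_1$ is nonnegative on $\{|\delta|=M\psi^{1/2}\}$ (again by the rate of $\psi$), the right-hand side is $\ge\tfrac b2\delta^2W(1+o_P(1))-\delta^2o_P(1)$, with $W\to_P W_\infty>0$.

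The main obstacle is the centered remainder $\mathcal{R}(\Delta)$, which must be controlled uniformly over the $\Delta^\Gamma$ entering the infimum. In the least-squares case $\mathcal{R}\equiv0$; for smooth $m$ it is a genuine second-order term bounded by $\|\Delta\|_\infty\cdot\tfrac1{NT}\|\Delta\|_2^2$ times the curvature moments, absorbable into $\tfrac b{4NT}\|\Delta\|_2^2$ once $\|\Delta\|_\infty$ is small; for non-smooth $m$ (e.g.\ quantile regression, where $\partial_z m_{it}$ jumps) it requires a stochastic-equicontinuity/empirical-process bound $\sup|\mathcal{R}(\Delta)|=o_P(\tfrac1{NT}\|\Delta\|_2^2)+o_P(\psi)$ over the cone carved out by the penalty, typically after a peeling step that first crudely controls $\|\widehat\Delta^\Gamma\|_1$. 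Granting this, on $\{|\delta|=M\psi^{1/2}\}$ the bound reads $\phi(\delta)\ge\psi\big(\tfrac{bW_\infty}{2}M^2-o_P(M)-C\big)$ (the $|\delta|o_P(\psi^{1/2})$ score term absorbed by Young's inequality), which is positive for $M$ large; the convexity argument then yields $\widehat\beta_\psi-\beta_0=\Op(\psi^{1/2})$.
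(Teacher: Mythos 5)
There is a genuine gap, and you name it yourself: the uniform control of the centered remainder $\mathcal{R}(\Delta)$ over the infimum in $\Delta^\Gamma$. Because your lower bound keeps the minimization over the $N\times T$ matrix $\Gamma$ inside, you need a stochastic-equicontinuity bound of the form $\sup_{\Delta^\Gamma}|\mathcal{R}(\Delta)| = o_P\bigl(\tfrac{1}{NT}\|\Delta\|_2^2\bigr)+o_P(\psi)$ over an infinite-dimensional nuclear-norm cone. For non-smooth objectives (quantile regression is one of the motivating examples, and Assumption~\ref{ass:NonlinearObj}(ii) is explicitly written to accommodate it), this is the crux of the entire problem, and nothing in Assumption~\ref{ass:NonlinearObj} or conditions (i)--(iii) delivers it; the ``peeling'' and crude preliminary bounds you gesture at are not routine here and are not supplied. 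Writing ``Granting this'' leaves the hardest step unproved, so the argument as it stands does not establish the theorem.

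The paper's proof is engineered precisely to avoid that uniformity. It first replaces $m_{it}$ by a convex surrogate $m^*_{it}$ that agrees with $m_{it}$ in value and first derivative at $z^0_{it}$ and whose conditional expectation is \emph{exactly} quadratic with curvature $b$, so that $m_{it}(z)-m_{it}(z^0_{it}) \geq m^*_{it}(z)-m^*_{it}(z^0_{it})$ pointwise and there is no boundary issue with $\mathcal{Z}$. It then uses the dual representation $\|\Gamma\|_1=\max_{\|A\|_\infty\leq 1}\Tr(\Gamma' A)$ and picks the specific certificate $A_{it}(\beta) = -(\sqrt{NT}\,\psi)^{-1}\,\partial_z m^*_{it}(z_{it}(\beta))$ with $z_{it}(\beta)=X^{(2)}_{it}\beta+\Gamma^*_{it}$; this choice makes the first-order condition in $\Gamma$ hold at the single point $\Gamma=\Gamma_0 - X^{(1)}(\beta-\beta_0)$, so the infimum over $\Gamma$ collapses to an evaluation there. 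The only stochastic inputs then needed are $\|A(\beta)\|_\infty\leq 1$ wpa1 (a spectral-norm bound on the centered score matrix at the two boundary points of $\mathbf{B}_\psi(M)$, which is where $\sqrt{NT}\,\psi\to\infty$ enters) and a pointwise LLN for $\tfrac{1}{NT}\sum_{i,t}\widetilde m^*_{it}$ --- no empirical process over the $\Gamma$-space at all. If you want to salvage your route, you would either have to prove the uniform bound on $\mathcal{R}$ (a substantial undertaking for matrix-indexed classes) or adopt the paper's certificate construction, which removes the need for it.
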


\begin{remark} \label{remark:nonlinear:caveat}
The convergence rate $O_P(\psi^{1/2})$ in Theorem~\ref{th:Nonlinear} is slower than the rate $O_P(\psi)$ obtained for the linear model in Section~\ref{sec:Consistency}. This slower rate is likely an artifact of our proof strategy. For specific nonlinear models, sharper results have been established in the subsequent literature. For example, \cite{belloni2019high} and \cite{feng_2023} show that nuclear norm penalized estimators for quantile regression can achieve the same $O_P(\psi)$ rate as in the linear case.
\end{remark}

\medskip
\noindent
Condition (i) of the theorem is a restriction on the growth rate of the nuclear norm of $\Gamma_0$,
which was not required for the results in Section \ref{sec:Consistency}, where we assumed only that $R_0 = {\rm rank}(\Gamma_0)$ is fixed. 
However, this condition (i) imposes only an upper bound on the growth of 
$\Gamma_0$; 
it allows that $\Gamma_0$  contains both strong factors and weak factors.\footnote{%
For a discussion of weak factors we refer to \cite{Onatski2012}.}

Condition (ii) is satisfied if the regressor has a generalized factor structure, 
$$X = \underbrace{ \lambda_x f_x^{\prime}}_{ X^{(1)}} + \underbrace{E_x}_{  X^{(2)}},$$
 where $\| \lambda_x f_x^{\prime} \|_1 = \Op(\sqrt{NT})$ and $\| E_x \|_{\infty} = \Op(\sqrt{\max(N,T)}$, and we have $\psi \rightarrow 0$ with $\min(N,T) \psi \rightarrow \infty$. 

The proof of  Theorem \ref{th:Nonlinear} is presented in the appendix, where we also discuss how the result could in principle be extended to $K > 1$ regressors,
which requires some additional technical restrictions. 
Notice also that the convergence rate of $\psi^{1/2}$ in Theorem~\ref{th:Nonlinear} is different from the convergence rate $\psi$
obtained in Section~\ref{sec:Consistency}, but this is likely an artifact of our proof strategy for Theorem~\ref{th:Nonlinear}.
Finally, the analog of the nuclear-norm minimizing estimator $\widehat \beta_*$ to non-linear models is given by
$ \lim_{\psi \rightarrow 0}  \widehat{\beta}_{\psi}$ (limit for fixed $N,T$), but we do not provide results for that limiting estimator here.
The goal of this section was not to fully discuss the non-linear case, but to 
highlight the potential of the nuclear norm penalization approach beyond the linear model that is main focus of this paper.

\newpage

\section{Supplementary Appendix}

\renewcommand{\theequation}{S.\arabic{equation}}  \setcounter{equation}{0}
\renewcommand{\thetheorem}{S.\arabic{theorem}}  \setcounter{theorem}{0}
\renewcommand{\thelemma}{S.\arabic{lemma}}  \setcounter{lemma}{0}
\renewcommand{\thecorollary}{S.\arabic{corollary}}  \setcounter{corollary}{0}
\renewcommand{\theexample}{S.\arabic{example}}  \setcounter{example}{0}
\renewcommand{\thefigure}{S.\arabic{figure}}  \setcounter{figure}{0}

\subsection{Proofs for Section~\ref{sec:ConvexProblem}}

For matrix $A$, let the singular value decomposition of $A$ be given by $A=U_A S_A V_A'$,
where $S_A = {\rm diag}(s_1,\ldots,s_{q})$, with $q={\rm rank}(A)$.
\begin{lemma}
	\label{lemma:MinGamma}
	For any $\psi>0$ we have
	\begin{align*}
	  \min_{\Gamma} \left( \frac{1}{2} \left\| A - \Gamma \right\|_2^2 +  \psi \| \Gamma \|_1 \right)
	&= q_\psi(A), \\
	 \argmin_{\Gamma} \left( \frac{1}{2} \left\| A - \Gamma \right\|_2^2 + \psi \| \Gamma \|_1 \right) &=  U_A {\rm diag}( (s_1 - \psi)_{+},\ldots,(s_q - \psi)_{+}) V_A',
	\end{align*}
	where the minimization is over all matrices $\Gamma$ of the same size as $A$ and $(s)_{+} = \max(0,s)$.
\end{lemma}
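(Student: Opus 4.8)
The plan is to reduce the matrix optimization to a family of scalar problems by means of von Neumann's trace inequality. First I would expand the Frobenius term: for any $\Gamma$ of the same size as $A$,
\[
\frac12\left\| A - \Gamma \right\|_2^2 + \psi \| \Gamma \|_1 = \frac12\|A\|_2^2 - \Tr(A'\Gamma) + \frac12\|\Gamma\|_2^2 + \psi \|\Gamma\|_1 .
\]
Since $\|\Gamma\|_2^2 = \sum_r s_r(\Gamma)^2$ and $\|\Gamma\|_1 = \sum_r s_r(\Gamma)$ depend only on the singular values of $\Gamma$, the only term involving the singular \emph{vectors} of $\Gamma$ is $-\Tr(A'\Gamma)$. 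Von Neumann's trace inequality gives $\Tr(A'\Gamma) \le \sum_r s_r(A)\, s_r(\Gamma)$, with equality precisely when $\Gamma$ has an SVD sharing the singular-vector matrices $U_A$, $V_A$ of $A$. Writing $\sigma_r := s_r(\Gamma)$, this yields
\[
\frac12\left\| A - \Gamma \right\|_2^2 + \psi \| \Gamma \|_1 \;\ge\; \sum_r \left[ \tfrac12\big(s_r(A) - \sigma_r\big)^2 + \psi\,\sigma_r \right],
\]
where the sum runs to $\min(N,T)$ (the terms with $s_r(A)=0$ contribute nothing new), and the right-hand side is to be minimized over nonincreasing vectors $\sigma_1 \ge \sigma_2 \ge \cdots \ge 0$.

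Next I would solve the scalar problem $\min_{\sigma \ge 0}\big[ \tfrac12(s-\sigma)^2 + \psi\sigma \big]$: the unconstrained minimizer is $\sigma = s - \psi$, hence the constrained minimizer is $\sigma^\star = (s-\psi)_+$, and substituting back gives value $\tfrac12 s^2$ for $s < \psi$ and $\psi s - \tfrac12 \psi^2$ for $s \ge \psi$, i.e.\ exactly $q_\psi(s)$. Because $s_1(A) \ge s_2(A) \ge \cdots$, the coordinatewise minimizer $\sigma^\star_r = (s_r(A)-\psi)_+$ is automatically nonincreasing, so it is a feasible singular-value vector; therefore the decoupled minimization coincides with the constrained one, and summing gives the lower bound $\sum_r q_\psi(s_r(A)) = q_\psi(A)$. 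This bound is attained by $\Gamma^\star := U_A\,{\rm diag}\big((s_1(A)-\psi)_+,\ldots,(s_q(A)-\psi)_+\big)\,V_A'$, whose singular values are exactly the $\sigma^\star_r$ and which shares singular vectors with $A$, so equality holds in von Neumann's inequality; hence $\Gamma^\star$ achieves the minimal value $q_\psi(A)$. Finally, $\Gamma \mapsto \tfrac12\|A-\Gamma\|_2^2$ is strictly convex and $\Gamma \mapsto \psi\|\Gamma\|_1$ is convex, so the objective is strictly convex and its minimizer is unique, which forces it to equal $\Gamma^\star$; this establishes both the value formula and the $\argmin$ formula.

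The main obstacle — in fact essentially the only nontrivial ingredient — is von Neumann's trace inequality together with its equality characterization; the rest is elementary one-variable calculus. One point to treat carefully is the ordering constraint on $(\sigma_r)$: I should note explicitly that monotonicity of $s \mapsto (s-\psi)_+$ makes the coordinatewise minimizer feasible, so no Lagrangian bookkeeping for the sorting constraint is needed. A secondary subtlety is that $U_A$, $V_A$ are not unique when $A$ has repeated singular values (or when $\min(N,T) > {\rm rank}(A)$); in all such cases any valid SVD works, the soft-thresholded zero singular values remain zero, and the argument goes through with $\Gamma^\star$ well defined independently of the chosen decomposition.
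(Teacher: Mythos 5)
Your proof is correct, but it follows a genuinely different route from the paper's. You reduce the matrix problem to decoupled scalar soft-thresholding problems via von Neumann's trace inequality $\Tr(A'\Gamma) \le \sum_r s_r(A)s_r(\Gamma)$ (plus its attainment when $\Gamma$ shares an aligned SVD with $A$), and you correctly handle the two points that usually trip this argument up: the ordering constraint on $(\sigma_r)$ is automatically satisfied because $s \mapsto (s-\psi)_+$ is monotone, and uniqueness of the minimizer follows from strict convexity of the Frobenius term. The paper instead never invokes von Neumann. It obtains the upper bound by simply evaluating the objective at the candidate $\Gamma^* = U_A\,{\rm diag}((s_r-\psi)_+)\,V_A'$, and obtains the matching lower bound from the dual characterization $\|\Gamma\|_1 = \max_{\|B\|_\infty \le 1}\Tr(\Gamma'B)$: it replaces $\|\Gamma\|_1$ by $\Tr(\Gamma'B^*)$ for the explicit dual certificate $B^* = U_A\,{\rm diag}(\min(1,\psi^{-1}s_r))\,V_A'$, after which the relaxed problem is an unconstrained quadratic in $\Gamma$ that can be minimized in closed form. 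The trade-off is that the paper's argument is self-contained given only the dual-norm identity (which it has already stated and used elsewhere), whereas yours imports a classical but nontrivial external result; in exchange, your decoupling makes the soft-thresholding structure of the solution transparent rather than something verified after the fact.
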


\begin{proof}[\bf Proof of Lemma~\ref{lemma:MinGamma}]
	The dependence of the various quantities on $\psi$ is not made explicit in this proof. 
	Let ${\cal Q}(A) =   \min_{\Gamma} \left( \frac 1 2 \left\| A - \Gamma \right\|_2^2 + \psi \| \Gamma \|_1 \right)$.	
	A possible value for $\Gamma$ is $\Gamma^* = U_A S^* V_A'$,
	where $S^* = {\rm diag}(s^*_1,\ldots,s^*_{q})$ and $s^*_r = \max(0, s_r - \psi)$,
	and therefore we have
	\begin{align*}
	{\cal Q}(A) &\leq  \frac 1 2 \left\| A - \Gamma^* \right\|_2^2 + \psi \| \Gamma^* \|_1 
	= \frac 1 2 \left\| S_A - S^* \right\|_2^2 + \psi \| S^* \|_1
	\\
	&= \sum_{r=1}^q  \left[ \frac 1 2 \left(s_r - s_r^* \right)^2 + \psi   s_r^* \right]
	=  \sum_{r=1}^q q_\psi(s_r)  = q_\psi(A)  .
	\end{align*}
	The nuclear norm satisfies $\| \Gamma \|_1 = \max_{\| B \|_\infty \leq 1} {\rm Tr}( \Gamma' B)$.
	A possible value for $B$ is $B^* = U_A D^* V_A'$,
	where $D^* = {\rm diag}(d^*_1,\ldots,d^*_{q})$ and $d^*_r = \min(1, \psi^{-1} s_r)$,
	which indeed satisfies $\| B^* \|_\infty = \| D^*  \|_\infty = \max_r |d^*_r| \leq 1$,    
	and therefore we have
	\begin{align*}
	{\cal Q}(A) 
	&\geq \min_{\Gamma} \left[ \frac 1 2 \left\| A - \Gamma \right\|_2^2 + \psi \, {\rm Tr}(  \Gamma' B^*) \right]
	=  \frac 1 2 \left\| A - (A -  \psi  B^*) \right\|_2^2 + \psi \, {\rm Tr}[  (A -  \psi  B^*)' B^*] 
	\\
	&=  \psi \, {\rm Tr}( A' B^*) 
	-  \frac { \psi^2 } 2 \left\|   B^* \right\|_2^2
	=  \psi \, {\rm Tr}( S_A' D^*) 
	-  \frac { \psi^2 } 2 \left\|   D^* \right\|_2^2  
	\\
	&=  
	\sum_{r=1}^q  \left[ \psi \, s_r  d_r^*  -  \frac { \psi^2 } 2  \, (d_r^*)^2  \right]  
	=  \sum_{r=1}^q q_\psi(s_r)  = q_\psi(A)  ,
	\end{align*} 
	where in the second step we found and plugged in the minimizing 
	$\Gamma =  A -  \psi  B^*$.
	By combining the above upper and lower bound on ${\cal Q}(A)$ we obtain $  {\cal Q}(A)  = q_\psi(A)$,
	which is the first statement of the lemma.		
	Since $ \argmin_{\Gamma} \left( \frac 1 2 \left\| A - \Gamma \right\|_2^2 + \psi \| \Gamma \|_1 \right)$ is unique, we deduce that $\Gamma^* = U_A S^* V_A'$ is the minimizing value, which is the second statement in the lemma.
\end{proof}

\begin{proof}[\bf Proof of Lemma~\ref{lemma:QPsiRewrite}]
	The lemma follows from the first statement of Lemma~\ref{lemma:MinGamma} by replacing $A$ and $\Gamma$ in Lemma \ref{lemma:MinGamma} with $\frac{Y - \beta \cdot X}{\sqrt{NT}}$ and $\frac{1}{\sqrt{NT}} \Gamma $, respectively.  
\end{proof}

\subsection{Proofs for Section~\ref{sec:MatrixSeparation}}

The function $ q_\psi(s)$ that appears in Lemma~\ref{lemma:MinGamma} was defined in \eqref{DefSmallLQ}.
We now define a similar function $g_\psi : [0,\infty) \rightarrow [0,\infty)$
by $g_\psi(s) = \psi^{-1} q_\psi(s)$ for $\psi>0$, and $g_\psi(s)=s$ for $\psi=0$,
that is, we have
\begin{align}
\label{DefGscalar}
g_\psi(s) := \left\{ 
\begin{array}{ll}
\frac 1 {2 \psi} \; s^2 , & \text{for $s <  \psi$,}
\\
s   - \frac {\psi} 2  ,  & \text{for $s \geq \psi$,}
\end{array}
\right.    
\end{align} 
and for matrices $A$ we define $g_\psi(A)  := \sum_{r=1}^{{\rm rank}(A)} g_\psi(s_r(A))$.
Using Lemma~\ref{lemma:MinGamma} and the definition of the nuclear norm we can write
\begin{align}
\label{DefGmatrix}
g_\psi(A)
= \left\{ 
\begin{array}{ll}
\min_{\Gamma} \left(    \frac {1} {2 \, \psi} \left\| A - \Gamma \right\|_2^2 + \| \Gamma \|_1   \right) , & \text{for $ \psi>0$,}
\\
\| A \|_1  ,  & \text{for $\psi=0$.}
\end{array}
\right.    
\end{align}
As already discussed in the main text, it is natural to rescale the profiled nuclear norm penalized objective function by $\psi^{-1}$,
because it then has a non-trivial limit as $\psi \rightarrow 0$. Using $g_\psi$
instead of $q_\psi$ therefore helps to clarify the scaling with $\psi$ in 
various expressions.
The following lemma summarizes some properties of the function $g_\psi(A)$,
which are useful for the subsequent proofs.

\begin{lemma}
	\label{lemma:PropsG}
	Let $A$ and $B$ be $N \times T$ matrices, $\lambda$ be an $N \times R_1$ matrix,
	and $f$ be a $T \times R_2$ matrix. We then have
	\begin{itemize}
		\item[(i)]  $g_\psi(A) \geq  \| A \|_1  - \frac \psi 2 {\rm rank}(A) $.
		
		\item[(ii)]  $g_\psi(A+B) \leq g_\psi(A) + \| B \|_1$, 
		\; and \; $g_\psi(A+B) \geq g_\psi(A) - \| B \|_1$.
		
		\item[(iii)]  $g_\psi(A) \geq g_\psi(\M_\lambda \, A \,  \M_f) + g_\psi(\P_\lambda \, A \, \P_f)$.
	\end{itemize}
\end{lemma}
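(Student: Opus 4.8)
The plan is to prove the three parts in turn, using as the main tools the variational representation \eqref{DefGmatrix}, namely $g_\psi(A)=\min_\Gamma\bigl(\tfrac{1}{2\psi}\|A-\Gamma\|_2^2+\|\Gamma\|_1\bigr)$ for $\psi>0$ and $g_\psi(A)=\|A\|_1$ for $\psi=0$, together with the dual characterization of the nuclear norm in \eqref{NuclearNormAlternativeDef}. Part~(i) is a one-variable computation: for each $s\ge 0$ one has $g_\psi(s)-(s-\psi/2)=\tfrac{1}{2\psi}(s-\psi)^2\ge 0$ if $s<\psi$ and $=0$ if $s\ge\psi$, so $g_\psi(s)\ge s-\psi/2$; summing over $r=1,\dots,{\rm rank}(A)$ gives the claim, and the $\psi=0$ case is trivial. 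Part~(ii) follows from the variational form: if $\Gamma^*$ attains $g_\psi(A)$, then feeding $\Gamma^*+B$ as a feasible point into the minimization defining $g_\psi(A+B)$ and using the triangle inequality for $\|\cdot\|_1$ yields $g_\psi(A+B)\le\tfrac{1}{2\psi}\|A-\Gamma^*\|_2^2+\|\Gamma^*+B\|_1\le g_\psi(A)+\|B\|_1$; the reverse bound comes from applying this with $A$ replaced by $A+B$ and $B$ by $-B$, and the $\psi=0$ case is again just the triangle inequality.

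Part~(iii) is the substantive one, and the crux is the nuclear-norm decomposability inequality
\[
\|\Gamma\|_1\ \ge\ \|\M_\lambda\Gamma\M_f\|_1+\|\P_\lambda\Gamma\P_f\|_1
\]
for arbitrary $\Gamma$ and complementary orthogonal projections $\P_\lambda=\I-\M_\lambda$, $\P_f=\I-\M_f$. To prove this I would take, via their singular value decompositions, a dual certificate $B$ for $\P_\lambda\Gamma\P_f$ and a dual certificate $C$ for $\M_\lambda\Gamma\M_f$, with $\|B\|_\infty,\|C\|_\infty\le 1$ and supported on the appropriate subspaces, i.e.\ $B=\P_\lambda B\P_f$ and $C=\M_\lambda C\M_f$. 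Then $B'C=C'B=0$, so $(B+C)'(B+C)=B'B+C'C$; since $B'B=\P_f B'B\P_f$ and $C'C=\M_f C'C\M_f$ have orthogonal ranges this gives $\|B+C\|_\infty^2=\|B'B+C'C\|_\infty=\max(\|B\|_\infty^2,\|C\|_\infty^2)\le 1$, whence $\|\Gamma\|_1\ge{\rm Tr}\bigl[(B+C)'\Gamma\bigr]={\rm Tr}(B'\P_\lambda\Gamma\P_f)+{\rm Tr}(C'\M_\lambda\Gamma\M_f)=\|\P_\lambda\Gamma\P_f\|_1+\|\M_\lambda\Gamma\M_f\|_1$.

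Granting this inequality, I would finish part~(iii) as follows. For $\psi>0$ let $\Gamma^*$ attain $g_\psi(A)$ in \eqref{DefGmatrix} and set $\Delta=A-\Gamma^*$. The four blocks $\P_\lambda\Delta\P_f,\ \P_\lambda\Delta\M_f,\ \M_\lambda\Delta\P_f,\ \M_\lambda\Delta\M_f$ are mutually orthogonal in the Frobenius inner product, so $\|\Delta\|_2^2\ge\|\P_\lambda\Delta\P_f\|_2^2+\|\M_\lambda\Delta\M_f\|_2^2$; combining this with the decomposability inequality applied to $\|\Gamma^*\|_1$ and regrouping gives
\begin{multline*}
g_\psi(A)\ \ge\ \Bigl[\tfrac{1}{2\psi}\|\P_\lambda A\P_f-\P_\lambda\Gamma^*\P_f\|_2^2+\|\P_\lambda\Gamma^*\P_f\|_1\Bigr]\\
+\Bigl[\tfrac{1}{2\psi}\|\M_\lambda A\M_f-\M_\lambda\Gamma^*\M_f\|_2^2+\|\M_\lambda\Gamma^*\M_f\|_1\Bigr].
\end{multline*}
Each bracket bounds below the corresponding minimization in \eqref{DefGmatrix} (the bracketed matrices are feasible points), so the right-hand side is $\ge g_\psi(\P_\lambda A\P_f)+g_\psi(\M_\lambda A\M_f)$, which is~(iii); the $\psi=0$ case is precisely the decomposability inequality. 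I expect the main obstacle to be a clean proof of the decomposability inequality — in particular verifying $\|B+C\|_\infty\le 1$, which is what forces one to choose the two dual certificates with the correct row- and column-space support so that the cross terms vanish.
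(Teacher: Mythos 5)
Your proposal is correct and follows essentially the same route as the paper's proof: the scalar bound $g_\psi(s)\ge s-\psi/2$ for part (i), the reparameterization $\Gamma\mapsto\Gamma+B$ in the variational form for part (ii), and for part (iii) the nuclear-norm decomposability inequality $\|\Gamma\|_1\ge\|\M_\lambda\Gamma\M_f\|_1+\|\P_\lambda\Gamma\P_f\|_1$ (proved by the same dual-certificate construction) combined with the orthogonal Frobenius decomposition of $A-\Gamma$. The only differences are presentational: you verify $\|B+C\|_\infty\le 1$ explicitly (which the paper leaves implicit) and you drop the nonnegative cross-block terms directly at the minimizer $\Gamma^*$ rather than arguing, as the paper does, that those blocks vanish at the optimum of the relaxed problem.
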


\begin{proof}[\bf Proof of Lemma~\ref{lemma:PropsG}]
	\# \underline{Part (i):}
	From the definition of $g_\psi(s)$ in \eqref{DefGscalar} one finds
	$g_\psi(s) \geq s - \frac{\psi} 2$ for all $s \geq 0$. We thus obtain
	\begin{align*}
	g_\psi(A)  = \sum_{r=1}^{{\rm rank}(A)} g_\psi(s_r(A))
	\geq  \sum_{r=1}^{{\rm rank}(A)} \left[ s_r(A) - \frac \psi 2  \right]
	=  \| A \|_1  - \frac \psi 2 {\rm rank}(A)  .
	\end{align*}
	
	\# \underline{Part (ii):}     
	For $\psi=0$ this is just the triangle inequality for the nuclear norm.
	For $\psi>0$ we use \eqref{DefGmatrix} to write
	\begin{align*}
	g_\psi(A+B)
	&= \min_{\Gamma} \left(    \frac {1} {2 \, \psi} \left\| A +B - \Gamma \right\|_2^2 + \| \Gamma \|_1   \right)
	= \min_{\Gamma} \left(    \frac {1} {2 \, \psi} \left\| A - \Gamma \right\|_2^2 + \| \Gamma + B \|_1   \right)
	\\
	&\leq \min_{\Gamma} \left(    \frac {1} {2 \, \psi} \left\| A - \Gamma \right\|_2^2 + \| \Gamma  \|_1   \right)
	+ \| B \|_1
	=     g_\psi(A) + \| B \|_1 ,
	\end{align*}
	where in the second step we reparameterized
	$\Gamma \mapsto \Gamma + B$ in the minimization problem,
	in the third step we used the  triangle inequality for the nuclear norm,
	and in the final step we employed again \eqref{DefGmatrix}.
	We have thus shown the first statement of this part.
	The second statement is obtained from the first statement  by replacing
	$B \mapsto -B$ and $A \mapsto A+B$.
	
	\# \underline{Part (iii):}    
	We first show the result for $\psi=0$. 
	Let $\M_\lambda \, A \,  \M_f = U_1 S_1 V_1'$
	and $ \P_\lambda \, A \, \P_f  = U_2 S_2 V_2'$ be the singular value decompositions
	of those $N \times T$ matrices.    
	We then have $\| \M_\lambda \, A \,  \M_f  \|_1 = {\rm Tr}[ V_1 (\M_\lambda \, A \,  \M_f) U_1' ]$
	and $\| \P_\lambda \, A \,  \P_f  \|_1 = {\rm Tr}[ V_2 (\P_\lambda \, A \,  \P_f) U_2' ]$.
	Furthermore, we have $g_0(A) = \|A\|_1 = \max_{\|C\| \leq 1} {\rm Tr}(C'A) $.
	By choosing $C^* = U_1  V_1' + U_2 V_2'$ we obtain
	\begin{align}
	\|A\|_1 \geq {\rm Tr}(C^{* \prime} A)  =  {\rm Tr}[ V_1 (\M_\lambda \, A \,  \M_f) U_1' ] + {\rm Tr}[ V_2 (\P_\lambda \, A \,  \P_f) U_2' ]
	= \left\| \M_\lambda \, A \,  \M_f \|_1 +  \| \P_\lambda \, A \, \P_f \right\|_1 ,
	\label{ResultPart3psi0}
	\end{align}
	which is the statement of part (iii) of the lemma for $\psi=0$.    
	For $\psi>0$ we find
	\begin{align*}
	&  g_\psi(A)
	=  \min_{\Gamma} \left(    \frac {1} {2 \, \psi} \left\| A - \Gamma \right\|_2^2 + \| \Gamma  \|_1   \right)
	\geq 
	\min_{\Gamma} \left(    \frac {1} {2 \, \psi} \left\| A - \Gamma \right\|_2^2 
	+ \left\| \M_\lambda  \Gamma   \M_f \right\|_1 + \left\| \P_\lambda  \Gamma  \P_f \right\|_1   \right)
	\\
	&=      \min_{\Gamma}  \bigg[   \frac {1} {2 \, \psi} \left(
	\left\|  \M_\lambda  (A-\Gamma)   \M_f \right\|_2^2 + \left\| \P_\lambda  (A - \Gamma)  \P_f \right\|_2^2 
	+    \left\|  \P_\lambda  (A-\Gamma)   \M_f \right\|_2^2 + \left\| \M_\lambda  (A - \Gamma)  \P_f \right\|_2^2 
	\right)  
	\\ & \qquad \qquad\qquad\qquad\qquad  \qquad \qquad\qquad\qquad\qquad   \qquad \qquad\qquad\qquad  \quad 
	+ \left\| \M_\lambda  \Gamma   \M_f \right\|_1 + \left\| \P_\lambda  \Gamma  \P_f \right\|_1
	\bigg]    
	\\
	&=        \min_{\Gamma}  \bigg[   \frac {1} {2 \, \psi} \left(
	\left\|  \M_\lambda  (A-\Gamma)   \M_f \right\|_2^2 + \left\| \P_\lambda  (A - \Gamma)  \P_f \right\|_2^2 
	\right) 
	+ \left\| \M_\lambda  \Gamma   \M_f \right\|_1 + \left\| \P_\lambda  \Gamma  \P_f \right\|_1
	\bigg]       
	\\
	&\geq  \min_{\Gamma}  \left(   \frac {1} {2 \, \psi} 
	\left\|  \M_\lambda  (A-\Gamma)   \M_f \right\|_2^2 
	+ \left\| \M_\lambda  \Gamma   \M_f \right\|_1 \right)
	+   \min_{\Gamma}  \left(   \frac {1} {2  \psi} 
	\left\|  \P_\lambda  (A-\Gamma)   \P_f \right\|_2^2 
	+ \left\| \P_\lambda  \Gamma   \P_f \right\|_1 \right)     
	\\
	&\geq  \min_{\Gamma}  \left(   \frac {1} {2 \, \psi} 
	\left\|  \M_\lambda  A   \M_f  -\Gamma \right\|_2^2 
	+ \left\|    \Gamma   \right\|_1 \right)
	+   \min_{\Gamma}  \left(   \frac {1} {2 \, \psi} 
	\left\|  \P_\lambda   A    \P_f -\Gamma \right\|_2^2 
	+ \left\|    \Gamma   \right\|_1 \right)     
	\\
	&=       g_\psi(\M_\lambda  A   \M_f) +        g_\psi(\P_\lambda  A   \P_f) ,
	\end{align*}
	where in the first step we used \eqref{DefGmatrix};
	in the second step we used \eqref{ResultPart3psi0} with $A$ replaced by $\Gamma$;
	in the third step we decomposed $\left\| A - \Gamma \right\|_2^2 $ into four parts;
	in the fourth step we used that the minimization over $\Gamma$ implies that 
	$\left\|  \P_\lambda  (A-\Gamma)   \M_f \right\|_2^2 = 0$
	and  $\left\| \M_\lambda  (A - \Gamma)  \P_f \right\|_2^2 = 0$ at the optimum,
	because the components $\P_\lambda \Gamma   \M_f$ and $\M_\lambda \Gamma  \P_f $
	of $\Gamma$ appear nowhere else in the objective function, so that choosing
	$\P_\lambda \Gamma   \M_f = \P_\lambda A   \M_f$ and $\M_\lambda \Gamma  \P_f = \M_\lambda A \P_f $ is optimal;
	the fifth step is obvious (it is actually an equality, which is less obvious, but not required for our argument);
	in the sixth step we  replaced $\M_\lambda  \Gamma   \M_f$ and $ \P_\lambda  \Gamma   \P_f $ by
	an unrestricted $\Gamma$ in the minimization problems, which can only make the minimizing values smaller 
	(again, this is actually an equality, but $\leq$ is sufficient to show here); and the final step again employs \eqref{DefGmatrix}.
	We have thus shown the desired result.
\end{proof}

Before presenting the next lemma it is useful to introduce some further notation.
For $\beta \in \mathbb{R}^K$
let $\Delta \beta : = \beta - \beta_0$.
Let $\lambda_X$ be an $N \times R_{\rm c} $ matrix such that
the column span of $\lambda_X$ equals the column span of the $N \times TK$ 
matrix $[X_1,\ldots,X_K] $.
Analogously, let $f_X$ be an $T \times R_{\rm r}$ matrix such that
the column span of $f_X$ equals the column span of the $T \times NK$ matrix 
$[X_1',\ldots,X_K']$.

\begin{lemma}
	\label{lemma:DiffQbound}
	Let model \eqref{ModelBasic} hold.
	Then,  the penalized profiled objective function $Q_\psi(\beta)$ 
	defined in  \eqref{DefQpsi} satisfies, for all $\beta \in \mathbb{R}^K$, and all $\psi >0$,
	\begin{align*}
	\frac{	Q_\psi(\beta ) - Q_\psi(\beta_0) } {\psi}
	&\geq       
	g_\psi\left( \frac { \M_{\lambda_0} (\Delta \beta \cdot X) \M_{f_0} } {\sqrt{NT}} \right)
	-   \left\| \frac{ \P_{\lambda_0} (\Delta \beta \cdot X) \P_{f_0}} {\sqrt{NT}} \right\|_1 
	-  \frac {\psi} 2 \,  {\rm rank}(\Gamma_0)     
	\\
	& \qquad \qquad \qquad
	-  \left\| \frac{ \P_{[\lambda_0,\lambda_X]}  \, E \,  \P_{[f_0,f_X]} } {\sqrt{NT}} \right\|_1
	- \left\| \frac{ E -  \M_{[\lambda_0,\lambda_X]} E \M_{[f_0,f_X]} } {\sqrt{NT}} \right\|_1.
	\end{align*}
	For $\psi=0$ the same bound holds if one replaces 
	$\psi^{-1} \left[ Q_\psi(\beta ) - Q_\psi(\beta_0) \right]$
	by its $\psi \rightarrow 0$ limit 
	$   \big\| (Y - \beta \cdot X) / \sqrt{NT}   \big\|_1 -     \big\| (Y - \beta_0 \cdot X) / \sqrt{NT}   \big\|_1$. 
\end{lemma}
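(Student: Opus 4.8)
The plan is to work entirely with the rescaled objective. By Lemma~\ref{lemma:QPsiRewrite} together with the definition $g_\psi=\psi^{-1}q_\psi$, we have $\psi^{-1}Q_\psi(\beta)=g_\psi\big((Y-\beta\cdot X)/\sqrt{NT}\big)$ for $\psi>0$, and $\lim_{\psi\to0}\psi^{-1}Q_\psi(\beta)=\|(Y-\beta\cdot X)/\sqrt{NT}\|_1=g_0\big((Y-\beta\cdot X)/\sqrt{NT}\big)$; so in either case the left-hand side of the asserted inequality equals $g_\psi(A)-g_\psi(B)$, where $A:=(\Gamma_0-\Delta\beta\cdot X+E)/\sqrt{NT}$ and $B:=(\Gamma_0+E)/\sqrt{NT}$, using $Y-\beta\cdot X=\Gamma_0-\Delta\beta\cdot X+E$. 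I would then prove a lower bound for $g_\psi(A)$ and an upper bound for $g_\psi(B)$ and subtract. The whole argument uses only the three properties of $g_\psi$ collected in Lemma~\ref{lemma:PropsG}, each of which is valid for $\psi\ge0$, so the $\psi=0$ case needs no separate treatment (or, alternatively, follows by letting $\psi\to0$ in the $\psi>0$ inequality, noting $\tfrac{\psi}{2}{\rm rank}(\Gamma_0)\to0$ and $g_\psi\to g_0=\|\cdot\|_1$).

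For the lower bound on $g_\psi(A)$, write $\bar\lambda:=[\lambda_0,\lambda_X]$ and $\bar f:=[f_0,f_X]$. Since each $X_k$, hence $\Delta\beta\cdot X$, has column span in that of $\lambda_X$ and row span in that of $f_X$, and $\Gamma_0=\lambda_0f_0'$ lies in the column span of $\bar\lambda$ and row span of $\bar f$, we get $\M_{\bar\lambda}(\Gamma_0-\Delta\beta\cdot X)=0$ and $(\Gamma_0-\Delta\beta\cdot X)\M_{\bar f}=0$, so $\M_{\bar\lambda}A\M_{\bar f}=\M_{\bar\lambda}E\M_{\bar f}/\sqrt{NT}$ and $\P_{\bar\lambda}A\P_{\bar f}=(\Gamma_0-\Delta\beta\cdot X+\P_{\bar\lambda}E\P_{\bar f})/\sqrt{NT}$. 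Lemma~\ref{lemma:PropsG}(iii) with $(\bar\lambda,\bar f)$ gives $g_\psi(A)\ge g_\psi(\M_{\bar\lambda}E\M_{\bar f}/\sqrt{NT})+g_\psi(\P_{\bar\lambda}A\P_{\bar f})$; Lemma~\ref{lemma:PropsG}(ii) peels the $\P_{\bar\lambda}E\P_{\bar f}$ term off the second summand at a cost of $\|\P_{\bar\lambda}E\P_{\bar f}/\sqrt{NT}\|_1$. Applying Lemma~\ref{lemma:PropsG}(iii) again, now with $(\lambda_0,f_0)$, to $(\Gamma_0-\Delta\beta\cdot X)/\sqrt{NT}$, and using $\M_{\lambda_0}\Gamma_0\M_{f_0}=0$, $\P_{\lambda_0}\Gamma_0\P_{f_0}=\Gamma_0$, and invariance of $g_\psi$ under sign flips, splits it into $g_\psi(\M_{\lambda_0}(\Delta\beta\cdot X)\M_{f_0}/\sqrt{NT})$ plus $g_\psi\big((\Gamma_0-\P_{\lambda_0}(\Delta\beta\cdot X)\P_{f_0})/\sqrt{NT}\big)$; Lemma~\ref{lemma:PropsG}(ii) peels off $\P_{\lambda_0}(\Delta\beta\cdot X)\P_{f_0}$, and Lemma~\ref{lemma:PropsG}(i) bounds the remainder below by $\|\Gamma_0/\sqrt{NT}\|_1-\tfrac{\psi}{2}{\rm rank}(\Gamma_0)$. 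Collecting terms gives a lower bound for $g_\psi(A)$ containing the two ``harmless'' extra pieces $+\|\Gamma_0/\sqrt{NT}\|_1$ and $+g_\psi(\M_{\bar\lambda}E\M_{\bar f}/\sqrt{NT})$, together with all the terms appearing in the statement.

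For the upper bound on $g_\psi(B)$, I would split $E=\M_{\bar\lambda}E\M_{\bar f}+(E-\M_{\bar\lambda}E\M_{\bar f})$ and apply Lemma~\ref{lemma:PropsG}(ii) to get $g_\psi(B)\le g_\psi\big((\Gamma_0+\M_{\bar\lambda}E\M_{\bar f})/\sqrt{NT}\big)+\|(E-\M_{\bar\lambda}E\M_{\bar f})/\sqrt{NT}\|_1$. The one genuinely non-routine point is that $\Gamma_0$ and $\M_{\bar\lambda}E\M_{\bar f}$ have mutually orthogonal column spaces and mutually orthogonal row spaces (the former lies in the column/row spans of $\bar\lambda,\bar f$, the latter in their orthogonal complements), so a joint block-diagonal SVD shows the singular values of their sum are the concatenation of the two lists; hence $g_\psi$, being a sum over singular values, is additive there: $g_\psi\big((\Gamma_0+\M_{\bar\lambda}E\M_{\bar f})/\sqrt{NT}\big)=g_\psi(\Gamma_0/\sqrt{NT})+g_\psi(\M_{\bar\lambda}E\M_{\bar f}/\sqrt{NT})$, and finally $g_\psi(\Gamma_0/\sqrt{NT})\le\|\Gamma_0/\sqrt{NT}\|_1$ because $g_\psi(s)\le s$. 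Subtracting this upper bound for $g_\psi(B)$ from the lower bound for $g_\psi(A)$, the terms $\|\Gamma_0/\sqrt{NT}\|_1$ and $g_\psi(\M_{\bar\lambda}E\M_{\bar f}/\sqrt{NT})$ cancel exactly, leaving precisely the claimed inequality. The main obstacle is bookkeeping: verifying the span inclusions that annihilate the right projections, and ensuring that the orthogonal-spaces additivity of $g_\psi$ is used in place of the cruder triangle inequality, since that additivity is exactly what makes the nuisance terms cancel rather than merely bound.
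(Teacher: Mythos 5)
Your proof is correct and follows essentially the same route as the paper's: the same double application of Lemma~\ref{lemma:PropsG}(iii), first with $[\lambda_0,\lambda_X],[f_0,f_X]$ and then with $\lambda_0,f_0$, the same peeling of the $\P$-projected pieces via part (ii), the same use of part (i) to produce the $\tfrac{\psi}{2}\,{\rm rank}(\Gamma_0)$ term, and the same observation that all of Lemma~\ref{lemma:PropsG} holds for $\psi\geq 0$ so the $\psi=0$ case needs no separate argument. The only deviation is cosmetic: for the upper bound on $g_\psi\bigl((\Gamma_0+E)/\sqrt{NT}\bigr)$ the paper simply applies part (ii) together with the nuclear-norm triangle inequality to peel off $\Gamma_0$ and $E-\M_{[\lambda_0,\lambda_X]}E\M_{[f_0,f_X]}$ in one step, which already produces the exact cancellation of $\|\Gamma_0\|_1$ and $g_\psi\bigl(\M_{[\lambda_0,\lambda_X]}E\M_{[f_0,f_X]}/\sqrt{NT}\bigr)$ against the lower bound, so your orthogonal-block additivity of $g_\psi$, while valid, is not actually needed for the nuisance terms to cancel.
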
 

\begin{proof}[\bf Proof of Lemma~\ref{lemma:DiffQbound}]
	We have
	\begin{align*}
	g_\psi\left( \frac{Y - \beta \cdot X} {\sqrt{NT}}  \right)
	&=  
	g_\psi\left( \frac{\Gamma_0     -  \Delta \beta \cdot X + E} {\sqrt{NT}}  \right)
	\\  
	&\geq   
	g_\psi\left( \frac{\P_{[\lambda_0,\lambda_X]} ( \Gamma_0     -  \Delta \beta \cdot X + E) \P_{[f_0,f_X]}} {\sqrt{NT}}   \right)
	+   g_\psi\left( \frac{\M_{[\lambda_0,\lambda_X]} \, E \,  \M_{[f_0,f_X]}} {\sqrt{NT}}  \right)  
	\\
	&= 
	g_\psi\left( \frac{ \Gamma_0     -  \Delta \beta \cdot X } {\sqrt{NT}}
	+ \frac{ \P_{[\lambda_0,\lambda_X]}  E \P_{[f_0,f_X]}} {\sqrt{NT}}   \right)
	+   g_\psi\left( \frac{\M_{[\lambda_0,\lambda_X]} \, E \,  \M_{[f_0,f_X]}} {\sqrt{NT}}  \right)  
	\\
	&\geq     g_\psi\left( \frac{ \Gamma_0     -  \Delta \beta \cdot X } {\sqrt{NT}} \right)
	- \left\| \frac{ \P_{[\lambda_0,\lambda_X]}  E \P_{[f_0,f_X]}} {\sqrt{NT}} \right\|_1
	+   g_\psi\left( \frac{\M_{[\lambda_0,\lambda_X]} \, E \,  \M_{[f_0,f_X]}} {\sqrt{NT}}  \right).
	\end{align*}
	Here, we first plugged in the model for $Y$,
	then used part (iii) of Lemma~\ref{lemma:PropsG} with $\lambda = [\lambda_0,\lambda_X]$
	and $f = [f_0,f_X]$, and in the final step used part (ii) of Lemma~\ref{lemma:PropsG}.
	In the same way we obtain
	\begin{align*}
	g_\psi\left( \frac{ \Gamma_0     -  \Delta \beta \cdot X } {\sqrt{NT}} \right)
	&\geq 
	g_\psi\left( \frac{ \P_{\lambda_0} ( \Gamma_0     -  \Delta \beta \cdot X) \P_{f_0}} {\sqrt{NT}} \right)
	+   g_\psi\left( \frac{  \M_{\lambda_0} (\Delta \beta \cdot X) \M_{f_0} } {\sqrt{NT}} \right)
	\\
	&=     g_\psi\left( \frac{\Gamma_0 } {\sqrt{NT}} - \frac{ \P_{\lambda_0} (   \Delta \beta \cdot X) \P_{f_0}} {\sqrt{NT}} \right)
	+   g_\psi\left( \frac{  \M_{\lambda_0} ( \Delta \beta \cdot X) \M_{f_0} } {\sqrt{NT}} \right)
	\\
	&\geq     g_\psi\left( \frac{\Gamma_0 } {\sqrt{NT}}  \right)
	- \left\| \frac{ \P_{\lambda_0} (   \Delta \beta \cdot X) \P_{f_0}} {\sqrt{NT}} \right\|_1
	+   g_\psi\left( \frac{  \M_{\lambda_0} ( \Delta \beta \cdot X) \M_{f_0} } {\sqrt{NT}} \right)
	\\  
	&\geq    \left\|  \frac{\Gamma_0 } {\sqrt{NT}}  \right\|_1 - \frac \psi 2 \, {\rm rank}(\Gamma_0)
	- \left\| \frac{ \P_{\lambda_0} (   \Delta \beta \cdot X) \P_{f_0}} {\sqrt{NT}} \right\|_1
	+   g_\psi\left( \frac{  \M_{\lambda_0} ( \Delta \beta \cdot X) \M_{f_0} } {\sqrt{NT}} \right) ,
	\end{align*}
	where in the last step we also used part (i) of Lemma~\ref{lemma:PropsG}.
	Furthermore, we find
	\begin{align*}
	g_\psi\left( \frac{Y - \beta_0 \cdot X} {\sqrt{NT}}  \right)
	&=  
	g_\psi\left( \frac{E + \Gamma_0    } {\sqrt{NT}}  \right)
	=  
	g_\psi\left( \frac{\M_{[\lambda_0,\lambda_X]} E \M_{[f_0,f_X]}  + 
		\left( E -  \M_{[\lambda_0,\lambda_X]} E \M_{[f_0,f_X]} \right) +  \Gamma_0  } {\sqrt{NT}}  \right)  
	\\
	&\leq       g_\psi\left( \frac{\M_{[\lambda_0,\lambda_X]} E \M_{[f_0,f_X]}  } {\sqrt{NT}}    \right)  + 
	\left\| \frac{ E -  \M_{[\lambda_0,\lambda_X]} E \M_{[f_0,f_X]} }  {\sqrt{NT}} \right\|_1   +
	\left\|  \frac{  \Gamma_0  } {\sqrt{NT}}  \right\|_1 ,
	\end{align*}
	where we used part (ii) of Lemma~\ref{lemma:PropsG} and the triangle inequality for the nuclear norm.
	Combining the inequalities in the last three displays gives
	\begin{align*}
	g_\psi\left( \frac{Y - \beta \cdot X} {\sqrt{NT}}  \right) - g_\psi\left( \frac{Y - \beta_0 \cdot X} {\sqrt{NT}}  \right)
	&\geq       
	g_\psi\left( \frac { \M_{\lambda_0} (\Delta \beta \cdot X) \M_{f_0} } {\sqrt{NT}} \right)
	-   \left\| \frac{ \P_{\lambda_0} (\Delta \beta \cdot X) \P_{f_0}} {\sqrt{NT}} \right\|_1 
	-  \frac {\psi} 2   {\rm rank}(\Gamma_0)     
	\\
	& \qquad 
	-  \left\| \frac{ \P_{[\lambda_0,\lambda_X]}   E   \P_{[f_0,f_X]} } {\sqrt{NT}} \right\|_1
	- \left\| \frac{ E -  \M_{[\lambda_0,\lambda_X]} E \M_{[f_0,f_X]} } {\sqrt{NT}} \right\|_1.
	\end{align*}
	The derivation so far was valid for all $\psi \geq 0$.
	For $\psi=0$ the left hand side of the last display simply 
	is $  \big\| (Y - \beta \cdot X) / \sqrt{NT}   \big\|_1 -     \big\| (Y - \beta_0 \cdot X) / \sqrt{NT}   \big\|_1$.
	For $\psi>0$ we have, by \eqref{DefGmatrix},
	\begin{align*}
	\frac{	Q_\psi(\beta ) - Q_\psi(\beta_0) } {\psi}
	&= g_\psi\left( \frac{Y - \beta \cdot X} {\sqrt{NT}}  \right) -  g_\psi\left( \frac{Y - \beta_0 \cdot X} {\sqrt{NT}}  \right) ,
	\end{align*}
	so that we have shown the statement of the lemma.
\end{proof}

\begin{lemma}
	\label{lemma:unique.separationInequality}
	Let model \eqref{model:linear.matrix} hold, and
	let 	$\mathbb{E}[ \left.  E_{it} \, \right|  \, X   ]  = 0$, and  
	$  \mathbb{E}\left[  \left. E_{it}^2  \, \right|  \, X    \right] < \infty$, for all $i,t$.
	Then we have, for all $\psi >0$,
	\begin{align*}  
	g_\psi\left( \frac { \M_{\lambda_0} (\Delta \bar \beta_\psi \cdot X) \M_{f_0} } {\sqrt{NT}} \right)
	-   \left\| \frac{ \P_{\lambda_0} (\Delta \bar \beta_\psi \cdot X) \P_{f_0}} {\sqrt{NT}} \right\|_1 
	\leq  \frac {\psi} 2 \,  {\rm rank}(\Gamma_0)     .
	\end{align*}      	
\end{lemma}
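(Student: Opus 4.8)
The plan is to first put the population objective in \eqref{DefBarBeta} into closed form, and then recycle the algebraic inequalities of Lemma~\ref{lemma:PropsG}. Write $\Delta\beta := \beta-\beta_0$ and let $\bar Q_\psi(\beta)$ denote the inner minimization in \eqref{DefBarBeta}. Using $Y = \beta_0\cdot X + \Gamma_0 + E$, the assumption $\E(E_{it}\mid X)=0$ makes the cross term vanish, so $\E[\|Y-\beta\cdot X-\Gamma\|_2^2\mid X] = \|\Gamma_0-\Delta\beta\cdot X-\Gamma\|_2^2 + \E[\|E\|_2^2\mid X]$, where the last term is finite (since $\E[E_{it}^2\mid X]<\infty$ and $N,T$ are fixed) and depends on neither $\beta$ nor $\Gamma$. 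Profiling out $\Gamma$ via Lemma~\ref{lemma:MinGamma}, rescaled exactly as in the proof of Lemma~\ref{lemma:QPsiRewrite} and using $q_\psi = \psi\, g_\psi$, gives $\bar Q_\psi(\beta) = \tfrac{1}{2NT}\E[\|E\|_2^2\mid X] + \psi\, g_\psi\big(\tfrac{\Gamma_0-\Delta\beta\cdot X}{\sqrt{NT}}\big)$, where $g_\psi$ is the function from \eqref{DefGscalar}--\eqref{DefGmatrix}. Hence $\bar\beta_\psi$ minimizes $\beta\mapsto g_\psi\big(\tfrac{\Gamma_0-\Delta\beta\cdot X}{\sqrt{NT}}\big)$, and comparing with the value at $\beta=\beta_0$,
\[
 g_\psi\!\left(\frac{\Gamma_0-\Delta\bar\beta_\psi\cdot X}{\sqrt{NT}}\right) \;\le\; g_\psi\!\left(\frac{\Gamma_0}{\sqrt{NT}}\right).
\]

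Next I would lower-bound the left-hand side exactly as in the proof of Lemma~\ref{lemma:DiffQbound}, but now with the idiosyncratic-error projection terms simply absent. Applying part~(iii) of Lemma~\ref{lemma:PropsG} with $\lambda=\lambda_0$, $f=f_0$ to $A=(\Gamma_0-\Delta\beta\cdot X)/\sqrt{NT}$, and using $\M_{\lambda_0}\Gamma_0\M_{f_0}=0$, $\P_{\lambda_0}\Gamma_0\P_{f_0}=\Gamma_0$, and the fact that $g_\psi$ depends only on singular values (so $g_\psi(-A)=g_\psi(A)$), yields
\[
 g_\psi\!\left(\frac{\Gamma_0-\Delta\beta\cdot X}{\sqrt{NT}}\right) \;\ge\; g_\psi\!\left(\frac{\M_{\lambda_0}(\Delta\beta\cdot X)\M_{f_0}}{\sqrt{NT}}\right) + g_\psi\!\left(\frac{\Gamma_0-\P_{\lambda_0}(\Delta\beta\cdot X)\P_{f_0}}{\sqrt{NT}}\right).
\]
Part~(ii) of Lemma~\ref{lemma:PropsG} then bounds the last term below by $g_\psi(\Gamma_0/\sqrt{NT}) - \big\|\P_{\lambda_0}(\Delta\beta\cdot X)\P_{f_0}/\sqrt{NT}\big\|_1$.

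Finally I would evaluate the last display at $\beta=\bar\beta_\psi$ and combine with the optimality inequality from the first paragraph; the two copies of $g_\psi(\Gamma_0/\sqrt{NT})$ cancel, leaving
\[
 g_\psi\!\left(\frac{\M_{\lambda_0}(\Delta\bar\beta_\psi\cdot X)\M_{f_0}}{\sqrt{NT}}\right) - \left\|\frac{\P_{\lambda_0}(\Delta\bar\beta_\psi\cdot X)\P_{f_0}}{\sqrt{NT}}\right\|_1 \;\le\; 0 \;\le\; \frac{\psi}{2}\,{\rm rank}(\Gamma_0),
\]
which is the claim (alternatively one can keep the term $-\tfrac{\psi}{2}{\rm rank}(\Gamma_0)$ explicit by invoking part~(i) of Lemma~\ref{lemma:PropsG} on $g_\psi(\Gamma_0/\sqrt{NT})$, mirroring the proof of Lemma~\ref{lemma:DiffQbound} verbatim). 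There is no substantial obstacle: the one step requiring care is the closed-form reduction in the first paragraph -- recognizing that mean-independence of $E$ makes the population problem split into a constant noise term plus the deterministic problem $g_\psi\big(\tfrac{\Gamma_0-\Delta\beta\cdot X}{\sqrt{NT}}\big)$, so that the inequality chain of Lemma~\ref{lemma:DiffQbound} applies unchanged -- and one should note that $\bar\beta_\psi$ need not be unique, the argument using only that it attains the minimum value.
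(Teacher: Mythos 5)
Your proof is correct and follows essentially the same route as the paper's: reduce the population objective to the deterministic problem $\psi\,g_\psi\bigl((\Gamma_0-\Delta\beta\cdot X)/\sqrt{NT}\bigr)$ plus a constant via mean independence of $E$, then apply parts (ii)--(iii) of Lemma~\ref{lemma:PropsG} exactly as in the $E=0$ specialization of Lemma~\ref{lemma:DiffQbound}. The only (harmless) difference is at the final step: by cancelling $g_\psi(\Gamma_0/\sqrt{NT})$ against itself rather than bounding it from above by $\|\Gamma_0\|_1/\sqrt{NT}$ and from below by $\|\Gamma_0\|_1/\sqrt{NT}-\tfrac{\psi}{2}\,{\rm rank}(\Gamma_0)$ as the paper implicitly does through Lemma~\ref{lemma:DiffQbound}, you obtain the slightly sharper bound $\leq 0$, which of course implies the stated $\leq \tfrac{\psi}{2}\,{\rm rank}(\Gamma_0)$.
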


\begin{proof}[\bf Proof of Lemma~\ref{lemma:unique.separationInequality}]
	Using the model and the assumptions on $E_{it}$ in the proposition we find
	\begin{align*}
	\mathbb{E}\left[  \left\| Y -\beta \cdot X- \Gamma \right\|_2^2 \Big| X \right]
	&=    \sum_{i=1}^N \sum_{t=1}^T  \mathbb{E}\left[ \left. \left( \Gamma_{0,it} - \Gamma_{it}     -  X_{it}'  \, \Delta \beta  + E_{it}  \right)^2     \right| X \right]
	\\
	&=   \sum_{i=1}^N \sum_{t=1}^T   \left( \Gamma_{0,it}   - \Gamma_{it}     -  X_{it}' \, \Delta \beta     \right)^2   
	+    \sum_{i=1}^N \sum_{t=1}^T  \mathbb{E}\left( \left. E_{it}^2  \right| X \right)
	\\
	&=   \left\| \Gamma_0  - \Gamma   -  \Delta \beta \cdot X \right\|_2^2 +   \mathbb{E}\left( \left. \left\|  E \right\|_2^2 \right| X \right) ,
	\end{align*}
	where the expectation is also implicitly conditional on $\Gamma_0$, because $\Gamma_0$ is treated as non-random throughout the
	whole paper. Because $ \mathbb{E}\left( \left. \left\|  E \right\|_2^2 \right| X \right)$ is just a constant that does not depend on the parameters
	$\beta$ and $\Gamma$, we can thus rewrite the definition of  $\bar \beta_\psi $   in \eqref{DefBarBeta} as
	\begin{align*}
	\bar \beta_\psi  &= \argmin_{\beta} \overline Q_\psi(\beta )   ,
	&
	\overline Q_\psi(\beta )  
	&:=
	\min_\Gamma 
	\left\{ \frac{1}{2NT} \,   \left\| \Gamma_0  - \Gamma   -  \Delta \beta \cdot X \right\|_2^2
	+  \frac{ \psi } {\sqrt{NT}} \left\| \Gamma \right\|_1  \right\} .
	\end{align*}
	We can obtain  $ \overline Q_\psi(\beta )  $ from the profiled objective function $Q_\psi(\beta )  $ that was defined in \eqref{DefQpsi}
	by simply setting $E=0$ in the model \eqref{model:linear.matrix}. 
	The bound on $\psi^{-1} \left[ Q_\psi(\beta ) - Q_\psi(\beta_0) \right]$ in Lemma~\ref{lemma:DiffQbound} is therefore applicable to 
	$ \overline Q_\psi(\beta )  $  if we just set $E=0$ in that lemma. We thus have,
	for all $\beta \in \mathbb{R}^K$, 
	\begin{align*}
	\frac{\overline	Q_\psi(\beta ) - \overline Q_\psi(\beta_0) } {\psi}
	&\geq       
	g_\psi\left( \frac { \M_{\lambda_0} (\Delta \beta \cdot X) \M_{f_0} } {\sqrt{NT}} \right)
	-   \left\| \frac{ \P_{\lambda_0} (\Delta \beta \cdot X) \P_{f_0}} {\sqrt{NT}} \right\|_1 
	-  \frac {\psi} 2 \,  {\rm rank}(\Gamma_0)     .
	\end{align*}
	We have $ Q_\psi(\bar \beta_\psi ) - Q_\psi(\beta_0)  \leq 0 $, because $\bar \beta_\psi  $ minimizes $Q_\psi(\beta)$,
	and combining this with the result in the last display gives the statement of the lemma.
\end{proof}

\begin{proof}[\bf Proof of Proposition~\ref{prop:unique.separation}]
	Let
	\begin{align*}
	c  & =  \min_{\left\{ \alpha \in \mathbb{R}^{K} \, : \, \| \alpha\|=1 \right\}} C(\alpha) ,
	&
	C(\alpha) =	   
	\frac {   \left\|  \M_{\lambda_0} (\alpha \cdot X) \M_{f_0}\right\|_1
		-   \left\|  \P_{\lambda_0} (\alpha \cdot X) \P_{f_0} \right\|_1  } {\sqrt{NT}}  .
	\end{align*}
	Using the absolute homogeneity of the nuclear norm this definition implies that for any $ \alpha \in \mathbb{R}^{K} $ we have
	\begin{align}
	c \, \left\| \alpha  \right\|  
	\leq   \left\| \frac { \M_{\lambda_0} (\alpha \cdot X) \M_{f_0} } {\sqrt{NT}} \right\|_1
	-   \left\| \frac{ \P_{\lambda_0} (\alpha \cdot X) \P_{f_0}} {\sqrt{NT}} \right\|_1 .
	\label{BoundNormAlpha}
	\end{align}
	Since the ball $	\left\{ \alpha \in \mathbb{R}^{K} \, : \, \| \alpha\|=1 
	\right\}$ is a compact set and $C(\alpha)$ is a continuous function,
	there exists a value $\alpha^* \in  	\left\{ \alpha \in \mathbb{R}^{K} \, : \, \| \alpha\|=1 \right\}$ where the minimum is attained,
	that is, $c= C(\alpha^*)$. By the assumption on the regressors in Proposition~\ref{prop:unique.separation}
	we thus have $c= C(\alpha^*)>0$.

	Next, applying part (i) of Lemma \ref{lemma:PropsG} we obtain 
	\begin{align}
	g_\psi\left( \frac { \M_{\lambda_0} (\Delta \bar \beta_\psi \cdot X) \M_{f_0} } {\sqrt{NT}} \right) 
	\geq  \left\| \frac { \M_{\lambda_0} (\Delta \bar \beta_\psi \cdot X) \M_{f_0} } {\sqrt{NT}} \right\|_1
	- \frac \psi 2  {\rm rank}\left[ \M_{\lambda_0} (\Delta \bar \beta_\psi \cdot X) \M_{f_0} \right] ,
	\label{LowerBoundGfct} 
	\end{align}
	and also using Lemma~\ref{lemma:unique.separationInequality} we thus find that
	\begin{align*}  
	\left\| \frac { \M_{\lambda_0} (\Delta \bar \beta_\psi \cdot X) \M_{f_0} } {\sqrt{NT}} \right\|_1
	-   \left\| \frac{ \P_{\lambda_0} (\Delta \bar \beta_\psi \cdot X) \P_{f_0}} {\sqrt{NT}} \right\|_1 
	&\leq  \frac {\psi} 2 \left\{  {\rm rank}(\Gamma_0)  + {\rm rank}\left[ \M_{\lambda_0} (\Delta \bar \beta_\psi \cdot X) \M_{f_0} \right] \right\}   
	\\
	&\leq  \frac {\psi} 2 \left\{  {\rm rank}(\Gamma_0)  + \max_{ \alpha \in \mathbb{R}^{K} } {\rm rank}\left[ \M_{\lambda_0} (\alpha \cdot X) \M_{f_0} \right] \right\}   .
	\end{align*}
	From this and  \eqref{BoundNormAlpha} with $\alpha = \Delta \bar \beta_\psi $ we obtain for any $\psi>0$ that\footnote{
		The bound \eqref{BoundDeltaBeta} is sufficient for our purposes since we ultimately consider the limit
		$\psi \rightarrow 0$ here, but for a  fixed value of $\psi$ (and $N,T$) this bound
		is potentially very crude if high-rank regressors $X_k$ are present.
		From Lemma~\ref{lemma:unique.separationInequality} one could then obtain a sharper bound on $ \bar \beta_\psi - \beta_0$
		by not using part (i) of Lemma~\ref{lemma:PropsG} to simplify $ g_\psi\left[ \left( \M_{\lambda_0} (\Delta \bar \beta_\psi \cdot X) \M_{f_0} \right) / \sqrt{NT} \right] $.
	}
	\begin{align}
	\left\|  \bar \beta_\psi - \beta_0  \right\| \leq   
	\frac {\psi} {2c} \left\{  {\rm rank}(\Gamma_0)  + \max_{ \alpha \in \mathbb{R}^{K} } {\rm rank}\left[ \M_{\lambda_0} (\alpha \cdot X) \M_{f_0} \right]  \right\}  ,
	\label{BoundDeltaBeta}
	\end{align}
	and therefore $ \left\|  \bar \beta_\psi - \beta_0  \right\| = O(\psi)$, as $\psi \rightarrow 0$.
\end{proof}

\subsection{Proofs for Section~\ref{sec:LowRank}}

\begin{lemma}
	\label{lemma:LowRankConsistency}
	Let $R_{\rm c} :=  {\rm rank}( [X_1,\ldots,X_K] )$
	and $R_{\rm r} := {\rm rank}( [X_1',\ldots,X_K'] ) $. Assume that
	\begin{align*}
	C  & :=  \min_{\left\{ \alpha \in \mathbb{R}^{K} \, : \, \| \alpha\|=1 \right\}}  
	\left\| \frac { \M_{\lambda_0} (\alpha  \cdot X) \M_{f_0} } {\sqrt{NT}}\right\|_1
	-   \left\| \frac{ \P_{\lambda_0} (\alpha \cdot X) \P_{f_0}} {\sqrt{NT}} \right\|_1  
	\end{align*}
	satisfies $C>0$. Then we have, for all $\psi>0$,
	\begin{align*}
	\left\| \widehat \beta_\psi  - \beta_0 \right\|
	\leq \frac 1 C \left[   \left( \frac{\psi} {2} +  \frac{\left\| E \right\|_\infty} {\sqrt{NT}} \right) \left[ R_0 + \min(R_{\rm c}, R_{\rm r}) \right]
	+ \frac{\left\| E \right\|_\infty} {\sqrt{NT}}
	\left(  2\, R_0 + R_{\rm c}  + R_{\rm r} \right) \right] ,
	\end{align*}
	and
	\begin{align*}
	\left\| \widehat \beta_*  - \beta_0 \right\|
	\leq \frac 1 C \, \frac{\left\| E \right\|_\infty} {\sqrt{NT}}  \left[ 3 \, R_0 + R_{\rm c}  + R_{\rm r} + \min(R_{\rm c}, R_{\rm r}) \right] .
	\end{align*}
\end{lemma}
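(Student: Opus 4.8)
The plan is to derive both inequalities as purely deterministic consequences of the optimality of the estimators, combined with Lemmas~\ref{lemma:DiffQbound} and~\ref{lemma:PropsG}. Write $\Delta\widehat\beta_\psi := \widehat\beta_\psi - \beta_0$. The starting point is that $\widehat\beta_\psi$ minimizes $Q_\psi$, hence $Q_\psi(\widehat\beta_\psi) - Q_\psi(\beta_0) \le 0$; for $\widehat\beta_*$ the analogous fact, using the characterization \eqref{NuclearNormMin} that $\widehat\beta_* = \argmin_\beta \|Y - \beta\cdot X\|_1$, is $\|Y - \widehat\beta_*\cdot X\|_1 - \|Y - \beta_0\cdot X\|_1 \le 0$.

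Feeding $\beta = \widehat\beta_\psi$ into Lemma~\ref{lemma:DiffQbound} (and, for $\widehat\beta_*$, into its $\psi=0$ version), the left-hand side of that lemma is $\le 0$, which after rearranging gives
\[
g_\psi\!\left(\tfrac{\M_{\lambda_0}(\Delta\widehat\beta_\psi\cdot X)\M_{f_0}}{\sqrt{NT}}\right) - \left\|\tfrac{\P_{\lambda_0}(\Delta\widehat\beta_\psi\cdot X)\P_{f_0}}{\sqrt{NT}}\right\|_1 \le \tfrac{\psi}{2}\,{\rm rank}(\Gamma_0) + \left\|\tfrac{\P_{[\lambda_0,\lambda_X]} E \P_{[f_0,f_X]}}{\sqrt{NT}}\right\|_1 + \left\|\tfrac{E - \M_{[\lambda_0,\lambda_X]} E \M_{[f_0,f_X]}}{\sqrt{NT}}\right\|_1.
\]
On the left I would lower-bound $g_\psi$ by $\|\cdot\|_1 - \tfrac{\psi}{2}\,{\rm rank}(\cdot)$ using Lemma~\ref{lemma:PropsG}(i) (for $\widehat\beta_*$ this step is vacuous, since $g_0 = \|\cdot\|_1$); the resulting expression $\|\M_{\lambda_0}(\Delta\widehat\beta_\psi\cdot X)\M_{f_0}/\sqrt{NT}\|_1 - \|\P_{\lambda_0}(\Delta\widehat\beta_\psi\cdot X)\P_{f_0}/\sqrt{NT}\|_1$ is, by absolute homogeneity of the nuclear norm and exactly as in the proof of Proposition~\ref{prop:unique.separation}, bounded below by $C\,\|\Delta\widehat\beta_\psi\|$.

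It then remains to bound the nuisance terms. The rank terms use ${\rm rank}(\Gamma_0) = R_0$ and the fact that the column (row) span of $\Delta\widehat\beta_\psi\cdot X$ is contained in that of $[X_1,\dots,X_K]$ (resp.\ $[X_1',\dots,X_K']$), giving ${\rm rank}[\M_{\lambda_0}(\Delta\widehat\beta_\psi\cdot X)\M_{f_0}] \le {\rm rank}(\Delta\widehat\beta_\psi\cdot X) \le \min(R_{\rm c},R_{\rm r})$. For the error terms I would use $\|A\|_1 \le {\rm rank}(A)\,\|A\|_\infty$ together with the fact that orthogonal projections are contractions in spectral norm: since ${\rm rank}[\lambda_0,\lambda_X]\le R_0+R_{\rm c}$ and ${\rm rank}[f_0,f_X]\le R_0+R_{\rm r}$, one gets $\|\P_{[\lambda_0,\lambda_X]} E \P_{[f_0,f_X]}\|_1 \le [R_0 + \min(R_{\rm c},R_{\rm r})]\,\|E\|_\infty$; and via the identity $E - \M_{[\lambda_0,\lambda_X]} E \M_{[f_0,f_X]} = \P_{[\lambda_0,\lambda_X]} E + \M_{[\lambda_0,\lambda_X]} E \P_{[f_0,f_X]}$ one gets $\|E - \M_{[\lambda_0,\lambda_X]} E \M_{[f_0,f_X]}\|_1 \le (R_0+R_{\rm c})\|E\|_\infty + (R_0+R_{\rm r})\|E\|_\infty = (2R_0 + R_{\rm c} + R_{\rm r})\|E\|_\infty$. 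Collecting the pieces, dividing by $\sqrt{NT}$ and by $C$, gives $C\,\|\widehat\beta_\psi - \beta_0\| \le \tfrac{\psi}{2}[R_0+\min(R_{\rm c},R_{\rm r})] + \tfrac{\|E\|_\infty}{\sqrt{NT}}[R_0+\min(R_{\rm c},R_{\rm r})] + \tfrac{\|E\|_\infty}{\sqrt{NT}}(2R_0+R_{\rm c}+R_{\rm r})$, i.e.\ the first bound; the same steps through the $\psi=0$ version — where the $\tfrac{\psi}{2}\,{\rm rank}$ corrections are absent — yield $C\,\|\widehat\beta_* - \beta_0\| \le \tfrac{\|E\|_\infty}{\sqrt{NT}}\big[3R_0+R_{\rm c}+R_{\rm r}+\min(R_{\rm c},R_{\rm r})\big]$, the second bound.

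I do not expect a genuine obstacle: this lemma is essentially an assembly of Lemmas~\ref{lemma:DiffQbound} and~\ref{lemma:PropsG}. The only point requiring care is matching constants exactly — one must bound ${\rm rank}(\P_{[\lambda_0,\lambda_X]} E \P_{[f_0,f_X]})$ by the \emph{minimum} of the two projector ranks, while bounding ${\rm rank}(\P_{[\lambda_0,\lambda_X]} E)$ and ${\rm rank}(\M_{[\lambda_0,\lambda_X]} E \P_{[f_0,f_X]})$ by the rank of the single relevant projector, and one must use the two-term decomposition of $E - \M_{[\lambda_0,\lambda_X]} E \M_{[f_0,f_X]}$ rather than a three-term one so as not to pick up a spurious extra $R_0$.
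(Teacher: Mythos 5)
Your proposal is correct and follows essentially the same route as the paper's proof: start from $Q_\psi(\widehat\beta_\psi)\le Q_\psi(\beta_0)$, apply Lemma~\ref{lemma:DiffQbound} together with the lower bound $g_\psi(A)\ge\|A\|_1-\tfrac{\psi}{2}\,{\rm rank}(A)$ from Lemma~\ref{lemma:PropsG}(i), invoke the definition of $C$ via homogeneity of the nuclear norm, and control the error terms with exactly the rank bounds and the two-term decomposition $E-\M_{[\lambda_0,\lambda_X]}E\M_{[f_0,f_X]}=\P_{[\lambda_0,\lambda_X]}E+\M_{[\lambda_0,\lambda_X]}E\P_{[f_0,f_X]}$ that the paper uses. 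The constants match, and the $\psi=0$ specialization for $\widehat\beta_*$ is handled the same way as in the paper.
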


\begin{proof}[\bf Proof of Lemma~\ref{lemma:LowRankConsistency}]
	By definition we have $Q_\psi( \widehat \beta_\psi) - Q_\psi(\beta_0) \leq 0$.
	Combining this
	with Lemma~\ref{lemma:DiffQbound} and equation \eqref{LowerBoundGfct},
	and writing ${\rm rank}(\Gamma_0) = R_0$, we obtain
	\begin{align*}
	0
	\geq       
	\left\| \frac { \M_{\lambda_0} (\Delta \widehat \beta_\psi \cdot X) \M_{f_0} } {\sqrt{NT}}\right\|_1
	-   \left\| \frac{ \P_{\lambda_0} (\Delta \widehat \beta_\psi \cdot X) \P_{f_0}} {\sqrt{NT}} \right\|_1 
	-  \frac {\psi} 2 \left\{ R_0   + \max_{ \alpha \in \mathbb{R}^{K} } {\rm rank}\left[ \M_{\lambda_0} (\alpha \cdot X) \M_{f_0} \right] \right\} 
	\\
	-  \left\| \frac{ \P_{[\lambda_0,\lambda_X]}  \, E \,  \P_{[f_0,f_X]} } {\sqrt{NT}} \right\|_1
	- \left\| \frac{ E -  \M_{[\lambda_0,\lambda_X]} E \M_{[f_0,f_X]} } {\sqrt{NT}} \right\|_1.
	\end{align*}    
	The definition of $c$ in the theorem together with the absolute homogeneity of the nuclear norm implies
	\begin{align*}
	c \, \left\| \Delta \widehat \beta_\psi \right\|
	\leq
	\left\| \frac { \M_{\lambda_0} (\Delta \widehat \beta_\psi \cdot X) \M_{f_0} } {\sqrt{NT}}\right\|_1
	-   \left\| \frac{ \P_{\lambda_0} (\Delta \widehat \beta_\psi \cdot X) \P_{f_0}} {\sqrt{NT}} \right\|_1 .
	\end{align*}
	We have
	\begin{align*}
	\max_{ \alpha \in \mathbb{R}^{K} } {\rm rank}\left[ \M_{\lambda_0} (\alpha \cdot X) \M_{f_0} \right]
	\leq 
	\max_{ \alpha \in \mathbb{R}^{K} } {\rm rank} (\alpha \cdot X)  \leq \min(R_{\rm c}, R_{\rm r}) ,
	\end{align*}
	because we have $\alpha \cdot X = [X_1,\ldots,X_K] (\alpha \otimes \mathbb{I}_T)$,
	and therefore ${\rm rank} (\alpha \cdot X)  \leq R_{\rm c}$,
	and also $(\alpha \cdot X)' =[X_1',\ldots,X_K'] (\alpha \otimes \mathbb{I}_N)$,
	and therefore ${\rm rank} (\alpha \cdot X)  \leq R_{\rm r}$.
	
	We also have
	\begin{align*}
	\left\| \frac{ \P_{[\lambda_0,\lambda_X]}  \, E \,  \P_{[f_0,f_X]} } {\sqrt{NT}} \right\|_1
	& \leq 
	\left\| \frac{ \P_{[\lambda_0,\lambda_X]}  \, E \,  \P_{[f_0,f_X]} } {\sqrt{NT}} \right\|_\infty
	{\rm rank}\left(    \P_{[\lambda_0,\lambda_X]}  \, E \,  \P_{[f_0,f_X]}    \right)
	\\  
	&\leq  \frac{\left\| E \right\|_\infty} {\sqrt{NT}}
	\min\left\{  {\rm rank}\left(   \P_{[\lambda_0,\lambda_X]}   \right) ,  {\rm rank}\left( \P_{[f_0,f_X]}    \right)  \right\}
	\\
	&=     \frac{\left\| E \right\|_\infty} {\sqrt{NT}}
	\min\left\{ R_0 +  R_{\rm c} ,   R_0 +  R_{\rm r}  \right\}
	=    \frac{\left\| E \right\|_\infty} {\sqrt{NT}}
	\left[ R_0 + \min(R_{\rm c}, R_{\rm r}) \right] ,
	\end{align*}
	and similarly
	\begin{align*}
	\left\|  \frac{ E -  \M_{[\lambda_0,\lambda_X]} E \M_{[f_0,f_X]} } {\sqrt{NT}}  \right\|_1
	& =
	\left\|  \frac{ \P_{[\lambda_0,\lambda_X]}  \, E   } {\sqrt{NT}} 
	+  \frac{ \M_{[\lambda_0,\lambda_X]}  \, E \,  \P_{[f_0,f_X]} } {\sqrt{NT}}  \right\|_1
	\\
	&\leq 
	\left\|  \frac{ \P_{[\lambda_0,\lambda_X]}  \, E   } {\sqrt{NT}} \right\|_1
	+  \left\|  \frac{ \M_{[\lambda_0,\lambda_X]}  \, E \,  \P_{[f_0,f_X]} } {\sqrt{NT}}   \right\|_1
	\\
	&\leq      \frac{\left\| E \right\|_\infty} {\sqrt{NT}} \,  {\rm rank}\left(   \P_{[\lambda_0,\lambda_X]}   \right)
	+    \frac{\left\| E \right\|_\infty} {\sqrt{NT}} \,  {\rm rank}\left(   \P_{[f_0,f_X]}    \right)
	\\
	&=  \frac{\left\| E \right\|_\infty} {\sqrt{NT}}   \left(  2\, R_0 + R_{\rm c}  + R_{\rm r} \right)  .
	\end{align*}
	Combining the above inequalities gives the finite sample bound in the theorem,
	\begin{align*}
	c \left\| \widehat \beta_\psi  - \beta_0 \right\|
	\leq    \left( \frac{\psi} {2} +  \frac{\left\| E \right\|_\infty} {\sqrt{NT}} \right) \left[ R_0 + \min(R_{\rm c}, R_{\rm r}) \right]
	+ \frac{\left\| E \right\|_\infty} {\sqrt{NT}}
	\left(  2\, R_0 + R_{\rm c}  + R_{\rm r} \right) ,
	\end{align*}
	and the same bound holds for $ \widehat \beta_*$ if we set $\psi=0$, because
	all bounds above, including Lemma~\ref{lemma:DiffQbound} are applicable for $\psi=0$ as well.
	Finally, the asymptotic statements in the theorem are immediate corollaries of the finite sample bounds.
\end{proof}

\begin{proof}[\bf Proof of Theorem~\ref{th:LowRankConsistency}]
     The theorem follows immediately from Lemma~\ref{lemma:LowRankConsistency},
     because our assumptions guarantee that  $C \geq c > 0$ (and therefore $1/C = O(1)$),
     $R_0 = O_P(1)$,
    $ R_{\rm c}  = O_P(1)$,
    $R_{\rm r} = O_P(1) $,
    and
    $$
         \frac{\left\| E \right\|_\infty} {\sqrt{NT}}    =  O_P\left( \frac 1 {\sqrt{\min(N,T)}} \right) .
    $$
\end{proof}

\subsection{Proofs for Section \ref{subsec:nuc.regularized}}

\begin{lemma} \label{lemma:ap.useful.facts}
	Suppose that $A$ and $B$ are two matrices with ranks of $A$ and $B$ are ${\rm{rank}(A)}$ and ${\rm{rank}(B)}$, respectively.
	\begin{itemize}
		\item[(i)] $\| A \|_{\infty} \leq \| A \|_2 \leq \| A \|_1 \leq \sqrt{{\rm{rank}(A)}} \| A \|_{2} \leq {\rm{rank}(A)} \| A \|_{\infty} $.
		\item[(ii)] $ \| AB \|_{\infty} \leq \| A \|_{\infty} \| B \|_{\infty}$.
		\item[(iii)] $ \| AB \|_2 \leq \| A \|_{\infty} \| B \|_2 \leq \| A \|_2 \| B \|_2$.
		\item[(iv)] If $AB'=0$ and $A'B=0$, then $\| A + B \|_{\infty} = \max ( \| A \|_{\infty}, \| B \|_{\infty} )$.
		\item[(v)] If $A'B = 0$ (or equivalently $B'A = 0$), then $\|A+B\|_{\infty}^2 \leq \|A \|_{\infty}^2 + \| B \|_{\infty}^2$.
	\end{itemize}
\end{lemma}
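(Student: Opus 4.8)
The proof reduces every claim to elementary facts about singular values together with submultiplicativity and the triangle inequality for the spectral (operator) norm. Fix $A$ with singular values $s_1 \geq s_2 \geq \cdots \geq s_q \geq 0$, where $q = {\rm rank}(A)$, so that $\|A\|_\infty = s_1$, $\|A\|_2 = \big(\sum_{r=1}^q s_r^2\big)^{1/2}$, and $\|A\|_1 = \sum_{r=1}^q s_r$. For part (i) I would chain four scalar inequalities: $s_1^2 \leq \sum_r s_r^2$ gives $\|A\|_\infty \leq \|A\|_2$; expanding $\big(\sum_r s_r\big)^2 = \sum_r s_r^2 + 2\sum_{r<r'} s_r s_{r'} \geq \sum_r s_r^2$ gives $\|A\|_2 \leq \|A\|_1$; Cauchy--Schwarz applied to the vector $(s_1,\dots,s_q)$ against the all-ones vector of length $q$ gives $\|A\|_1 \leq \sqrt{q}\,\|A\|_2$; and $\sum_r s_r^2 \leq q\, s_1^2$ gives $\sqrt{q}\,\|A\|_2 \leq q\,\|A\|_\infty$.

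For part (ii) I would use that the spectral norm is the operator norm: for any vector $x$, $\|ABx\| \leq \|A\|_\infty \|Bx\| \leq \|A\|_\infty \|B\|_\infty \|x\|$, and taking the supremum over unit vectors $x$ gives $\|AB\|_\infty \leq \|A\|_\infty \|B\|_\infty$. For part (iii), write $B$ columnwise as $B = [b_1, b_2, \ldots]$; then $\|AB\|_2^2 = \sum_j \|A b_j\|^2 \leq \|A\|_\infty^2 \sum_j \|b_j\|^2 = \|A\|_\infty^2 \|B\|_2^2$, and $\|A\|_\infty \leq \|A\|_2$ from part (i) yields the second inequality.

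For parts (iv) and (v) I would argue through the Gram matrices. For (v): since $A'B = 0$ (hence also $B'A = (A'B)' = 0$) we get $(A+B)'(A+B) = A'A + B'B$, so, using $\|M\|_\infty^2 = \|M'M\|_\infty$ and the triangle inequality for $\|\cdot\|_\infty$, $\|A+B\|_\infty^2 = \|A'A + B'B\|_\infty \leq \|A'A\|_\infty + \|B'B\|_\infty = \|A\|_\infty^2 + \|B\|_\infty^2$. For (iv), the extra hypothesis $AB' = 0$ gives $(A+B)(A+B)' = AA' + AB' + BA' + BB' = AA' + BB'$, while $A'B = 0$ gives $AA' \cdot BB' = A(A'B)B' = 0$ and likewise $BB'\cdot AA' = 0$; two positive semidefinite matrices whose product vanishes have mutually orthogonal ranges, so the spectrum of $AA' + BB'$ is the union (with multiplicity) of the spectra of $AA'$ and $BB'$, whence $\|A+B\|_\infty^2 = \|AA' + BB'\|_\infty = \max\big(\|AA'\|_\infty, \|BB'\|_\infty\big) = \max\big(\|A\|_\infty^2, \|B\|_\infty^2\big)$, and taking square roots finishes it. (Equivalently, one may pass to SVDs $A = U_A S_A V_A'$, $B = U_B S_B V_B'$ and note that the two conditions force $U_A'U_B = 0$ and $V_A'V_B = 0$ after cancelling the invertible $S_A, S_B$, so $A+B = [U_A\ U_B]\,{\rm diag}(S_A,S_B)\,[V_A\ V_B]'$ with both outer factors having orthonormal columns.) The only point demanding a little care is part (iv): obtaining the exact maximum, rather than a mere upper bound, is precisely where both orthogonality hypotheses must be used, and in the SVD variant one must justify cancelling $S_A$ and $S_B$ on their ranks. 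Everything else is routine, and I would present (i)--(iii) as one short block and (iv)--(v) as another.
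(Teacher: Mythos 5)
Your proof is correct in all five parts: the singular-value chain for (i), the operator-norm and columnwise arguments for (ii)--(iii), and the Gram-matrix (or reduced-SVD) arguments for (iv)--(v) are all sound, and you correctly identify that (iv) needs both orthogonality hypotheses to turn the bound into an exact maximum. The paper states this lemma without proof, treating it as a collection of standard matrix-norm facts, so there is no authorial argument to compare against; your write-up supplies exactly the standard derivations one would expect.
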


Recall that the rank of $\Gamma_0 = \lambda_0 f_0^{\prime}$ is $R_0$, which is fixed. Throughout the rest of the appendix, we use the following singular value decomposition of $\Gamma_0$, \begin{equation} 
\Gamma_0 = U S V', \label{eq.svd.Gamma0}
\end{equation}
where $U \in \mathbb{R}^{N \times R_0}$ with $U'U = \mathbf{I}_{R_0}$, $V \in \mathbb{R}^{T \times R_0}$ with $V'V = \mathbf{I}_{R_0}$, $S$ is the $R_0 \times R_0$ diagonal matrix of singular values of $\Gamma_0$. 

Suppose that $f_0$ is normalized as $\frac{1}{T} f_0' f_0  = \mathbf{I}_{R_0}$. Then, we have
\[ 
f_0 = \sqrt{T} V,  \qquad \lambda_0 = \frac{U S}{\sqrt{T}}.
\]

Some further notation:
\begin{align*}
 L(\beta,\Gamma) &= \frac{1}{2 NT} \| Y - \beta \cdot X - \Gamma \|_2^2,
&
Q_{\psi}(\beta,\Gamma) 
&= \frac{1}{2NT} \| Y - \beta \cdot X - \Gamma \|_2^2  + \frac{\psi} {\sqrt{NT}} \| \Gamma \|_1.
\end{align*}
Let   
\begin{align*}
Q_{\psi}(\Gamma)  &:= \inf_\beta \, Q_{\psi}(\beta,\Gamma), &
L(\Gamma)  &:= \inf_\beta \, L(\beta,\Gamma) .
\end{align*}
These are the profile objective functions of $Q_{\psi}(\beta,\Gamma)$ and $L(\beta,\Gamma)$, respectively, which concentrate out parameter the $\beta$.
We also use the notation $\Theta := \Gamma - \Gamma_0$ and $\theta := {\rm vec}(\Theta)$.

\bigskip

\begin{proof}[\bf Proof of Lemma~\ref{lemma:ConTheta}]~
	
	\noindent
	\# \underline{\bf Step 1: Use (\ref{def.psi}) to show $ \widehat{\Theta}_{\psi} \in \mathbb{C}$}
	
	\noindent
	By definition, we have
	\begin{align*}
	0 & \geq Q_{\psi}(\Gamma_0 + \widehat{\Theta}_{\psi}) - Q_{\psi}(\Gamma_0) \\
	&= L(\Gamma_0 + \widehat{\Theta}_{\psi}) - L(\Gamma_0) + \frac{\psi}{\sqrt{NT}}  \left( \| \Gamma_0 + \widehat{\Theta}_{\psi} \|_1 - \| \Gamma_0 \|_1 \right),
	\end{align*}
	where $\widehat{\Theta}_{\psi} := \widehat{\Gamma}_{\psi} - \Gamma_0$. Let $\widehat{\theta}_{\psi}:= {\rm vec }(\widehat{\Theta}_{\psi}), \: \widehat{\Theta}_{\psi,1} := \M_{U_0} \widehat{\Theta}_{\psi} \M_{V_0}$ and $\widehat{\Theta}_{\psi,2} := \widehat{\Theta}_{\psi} - \M_{U_0} \widehat{\Theta}_{\psi} \M_{V_0}$. 
	Then
	\begin{align*}
	L(\Gamma_0 + \widehat{\Theta}_{\psi}) - L(\Gamma_0) 
	& = \frac{1}{2NT} \widehat{\theta}_{\psi}' \M_x \widehat{\theta}_{\psi} - \frac{1}{NT} e' \M_x \widehat{\theta}_{\psi} \\
	& \geq - \frac{1}{NT} e' \M_x \widehat{\theta}_{\psi} \\
	& =  - \frac{1}{NT} \Tr(  \widehat{\Theta}_{\psi}' \,  {\rm mat}(\M_x e)  )
	\\
	& \geq  - \frac{\| \widehat{\Theta}_{\psi} \|_1}{ \sqrt{NT}}      \,  \frac{\| {\rm mat}(\M_x e)  \|_\infty}{ \sqrt{NT}}  
	\\
	& \geq - \frac{\psi}{2}  \frac{ \| \widehat{\Theta}_{\psi} \|_1 }{\sqrt{NT}} \\
	& \geq  - \frac{\psi}{2}  \frac{ \| \widehat{\Theta}_{\psi,1} \|_1 }{\sqrt{NT}} - \frac{\psi}{2}  \frac{ \| \widehat{\Theta}_{\psi,2} \|_1}{\sqrt{NT}}.
	\end{align*}
	Here the first inequality holds since $\widehat{\theta}_{\psi}' \M_x \widehat{\theta}_{\psi} \geq 0$, the second inequality holds by the H\"{o}lder inequality, the third inequality holds by (\ref{def.psi}), and the last inequality holds by the triangle inequality. We furthermore have
	\begin{align*}
	& \frac{\psi}{\sqrt{NT}} \left( \| \Gamma_0 + \widehat{\Theta}_{\psi} \|_1 - \| \Gamma_0 \|_1 \right) \\
	&= \frac{\psi}{\sqrt{NT}} \left( \| \Gamma_0 + \widehat{\Theta}_{\psi,1} + \widehat{\Theta}_{\psi,2} \|_1 - \| \Gamma_0 \|_1 \right) \\
	& \geq \frac{\psi}{\sqrt{NT}} \left( \| \Gamma_0 + \widehat{\Theta}_{\psi,1}  \|_1 - \| \Gamma_0 \|_1 \right) 
	- \frac{\psi}{\sqrt{NT}} \| \widehat{\Theta}_{\psi,2} \|_1 \\
	& = \frac{\psi}{\sqrt{NT}} \|  \widehat{\Theta}_{\psi,1}  \|_1 - \frac{\psi}{\sqrt{NT}} \| \widehat{\Theta}_{\psi,2} \|_1 .
	\end{align*}
	Therefore,
	\begin{align*}
	0 &\geq L(\Gamma_0 + \widehat{\Theta}_{\psi}) - L(\Gamma_0) + \frac{\psi}{\sqrt{NT}} \left( \| \Gamma_0 + \widehat{\Theta}_{\psi} \|_1 - \| \Gamma_0 \|_1 \right) \\
	& \geq - \frac{\psi}{2} \frac{ \| \widehat{\Theta}_{\psi,1} \|_1 }{\sqrt{NT}} - \frac{\psi}{2} \frac{ \| \widehat{\Theta}_{\psi,2} \|_1 }{\sqrt{NT}} 
	+ \psi \frac{ \|  \widehat{\Theta}_{\psi,1}  \|_1 }{\sqrt{NT}} - \psi \frac{ \| \widehat{\Theta}_{\psi,2} \|_1 }{\sqrt{NT}} \\
	& = \frac{\psi}{2} \frac{1}{\sqrt{NT}} \left( \| \widehat{\Theta}_{\psi,1} \|_1 - 3  \| \widehat{\Theta}_{\psi,2} \|_1 \right) .
	\end{align*}
	Thus, we have
	\begin{align*}
	\widehat{\Theta}_{\psi}
	\in
	\mathbb{C}
	:= \left\{
	B \in \mathbb{R}^{N \times T} 
	\;|\; \| \M_{U} B \M_{V}  \|_1 
	\leq 3 \| B - \M_{U} B \M_{V} \|_1
	\right\}. 
	\end{align*}
	
	\noindent
	\# \underline{\bf Step 2: Also use Assumption \ref{ass:RSC} to show the final result:}
	Using Assumption \ref{ass:RSC} and the same derivation as above, we find
	\begin{align*}
	Q_{\psi}(\Gamma_0 + \widehat{\Theta}_{\psi}) - Q_{\psi}(\Gamma_0) 
	& = \frac{1}{2NT} \widehat{\theta}_{\psi}' \M_x \widehat{\theta}_{\psi} - \frac{1}{NT} e' \M_x \widehat{\theta}_{\psi} + \frac{\psi}{\sqrt{NT}} \left( \| \Gamma_0 + \widehat{\Theta}_{\psi} \|_1 - \| \Gamma_0 \|_1 \right) \\
	&\geq
	\frac{\mu}{2NT} \|  \widehat{\Theta}_{\psi}  \|_2^2 
	+
	\frac{\psi}{2} \frac{1}{\sqrt{NT}} \left( \| \widehat{\Theta}_{\psi,1} \|_1 - 3  \| \widehat{\Theta}_{\psi,2} \|_1 \right) 
	\\
	&\geq  \frac{\mu}{2NT} \|  \widehat{\Theta}_{\psi}  \|_2^2 
	- \frac{3 \, \psi}{2}  \frac{1}{\sqrt{NT}}  \| \widehat{\Theta}_{\psi,2} \|_1  .
	\end{align*}
	Because
	$0  \geq Q_{\psi}(\Gamma_0 + \widehat{\Theta}_{\psi}) - Q_{\psi}(\Gamma_0) $
	we thus have
	\begin{align*}
	\frac{\mu}{2NT} \|  \widehat{\Theta}_{\psi}  \|_2^2 
	-  \frac{3 \, \psi}{2}  \frac{1}{\sqrt{NT}}  \| \widehat{\Theta}_{\psi,2} \|_1
	\leq  0 .
	\end{align*}
	Since the rank of $ \widehat{\Theta}_{\psi,2}$ is at most $2R_0$ (e.g., see \cite{RechtFazelParrilo2010}), we have 
	\[
	\| \widehat{\Theta}_{\psi,2} \|_1 \leq \sqrt{2R_0} \|  \widehat{\Theta}_{\psi,2} \|_2
	\]
	and  we also have 
	\[
	\|	 \widehat{\Theta}_{\psi,2} \|_2 \leq \| \widehat{\Theta}_{\psi} \|_2. 
	\]
	Therefore,
	\begin{align*}
	\frac{1}{NT} \|  \widehat{\Theta}_{\psi}  \|_2^2 
	-  \frac{3 \, \psi \, \sqrt{2 R_0}} {\mu} \frac{1}{\sqrt{NT}}  \| \widehat{\Theta}_{\psi} \|_2
	\leq  0 ,
	\end{align*}	
	and
	\begin{align*}
	\frac{ \|  \widehat{\Theta}_{\psi}  \|_2 }{\sqrt{NT}}
	\leq  \frac{3 \, \sqrt{2 R_0} \, \psi} {\mu}  .
	\end{align*}
\end{proof}

\begin{proof} [\bf Proof of Theorem~\ref{theorem:first.stage.estimator}.]~
	
	\noindent 	
	{\bf Part (i).} Part (i) follows by Lemma \ref{lemma:ConTheta} and the condition on $\psi$ in Theorem~\ref{theorem:first.stage.estimator}. 	
	
	\noindent
	{\bf Part (ii).} Let $\widehat{\beta}(\Gamma) = (x'x)^{-1}x'(y - \gamma).$ Then, by definition we have
	\begin{eqnarray*}
		\widehat{\beta}_{\psi} -\beta_0 &:=& \widehat{\beta}({\widehat{\Gamma}}_{\psi}) - \beta_0 
		= \left( \frac{1}{NT} x'x \right)^{-1} \left(\frac{1}{NT} x'e - \frac{1}{NT}x'(\widehat{\gamma}_{\psi} - \gamma_0) \right) .
	\end{eqnarray*}
	Under the assumption of the theorem we have $\left( \frac{1}{NT} x'x \right)^{-1} = \Op(1)$ and $\frac{1}{NT} e'x = \Op(\frac{1}{\sqrt{NT}})$. 
	Also, by Part (a) we have  
	\begin{eqnarray*}
		\left\| \frac{1}{NT}x'(\widehat{\gamma}_{\psi} -\gamma_0)  \right\|_2 &\leq& \frac{1}{\sqrt{NT}} \| X \|_2 \frac{1}{\sqrt{NT}} \| \widehat{\Gamma}_{\psi} -\Gamma_0 \|_2 \\
		&=&  \Op (1) \psi. 
	\end{eqnarray*}
	Combining these, we can deduce the required result for Part (b).
\end{proof}

\begin{proof}[\bf Proof of (\ref{special.Gammahat})]~
	
	Since $\M_x$ is positive semi-definite, $ | e'M_x \widehat{\gamma}_{\psi} | 
	\leq \| \widehat{\Gamma}_{\psi} \|_1 \| {\rm mat}(\M_x e)  \|_\infty$  by 
	the H\"{o}lder inequality, and $\Gamma_0 = 0$, we have
	\begin{align*}
	0 & \geq Q(\widehat{\Gamma}_{\psi}) -  Q(\Gamma_0) \\
	& = \frac{1}{2NT} ( \widehat{\gamma}_{\psi} -\gamma_0)'\M_x ( \widehat{\gamma}_{\psi} -\gamma_0)  - \frac{1}{NT} e'\M_x ( \widehat{\gamma}_{\psi} -\gamma_0)  + \frac{\psi}{\sqrt{NT}} 
	\| \widehat{\Gamma}_{\psi} -\Gamma_0 \|_1 \\
	& \geq - \frac{1}{NT} e'\M_x ( \widehat{\gamma}_{\psi} -\gamma_0) + \frac{\psi}{\sqrt{NT}} \| \widehat{\Gamma}_{\psi} -\Gamma_0 \|_1 \\ 
	& \geq -\frac{1}{ \sqrt{NT}} \| \widehat{\Gamma}_{\psi} -\Gamma_0 \|_1 \frac{1}{\sqrt{NT}} \| {\rm mat}(\M_x e)  \|_\infty + \frac{\psi}{\sqrt{NT}} 
	\| \widehat{\Gamma}_{\psi} -\Gamma_0 \|_1 \\
	& = \left( \psi - \frac{ \| {\rm mat}(\M_x e)  \|_\infty }{\sqrt{NT}} \right) \frac{ \| 
	\widehat{\Gamma}_{\psi} -\Gamma_0 \|_1}{\sqrt{NT}}.
	\end{align*}
	The required result follows since $\psi - \| {\rm mat}(\M_x e)  \|_\infty > 0$.
\end{proof}

\bigskip

\subsection{Sufficient Conditions for Restricted Strong Convexity}

In this section we discuss Assumption~\ref{ass:RSC} in more detail. 
Define the distance $\mathcal{H}(A, \C)$  between a matrix $A \in \mathbb{R}^{N \times T}$ and the cone $\C$
by
\[
\mathcal{H}(A, \C)  :=  \left[ \min_{B \in  \C} \Tr(A - B)'(A - B) \right]^{1/2}. 
\]
The following lemma provides an alternative formulation for our restricted strong convexity assumption.

\begin{lemma}
       \label{lemma:RSC}
	Let there exists a positive constant $\mu>0$ such that
	for any $\alpha  \in \mathbb{R}^K$ with $\alpha'\left(\frac{x'x}{NT}\right)\alpha =1$, the regressors $X_1,...,X_K$ satisfy 
	\[
	\mathcal{H}\left(\alpha \cdot \frac{X}{\sqrt{NT}}, \C \right)^2 \geq \mu > 0, \quad \text{wpa1.}
	\]
        Then 
	Assumption~\ref{ass:RSC} holds.
\end{lemma}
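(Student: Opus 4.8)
The plan is to translate the geometric statement about the distance $\mathcal{H}(\alpha\cdot X/\sqrt{NT},\C)$ back into the quadratic-form inequality $\theta'\M_x\theta\geq\mu\,\theta'\theta$ that Assumption~\ref{ass:RSC} requires, using exactly the one-regressor computation already displayed in the main text right after Assumption~\ref{ass:RSC}, but now carried out for a general $K$. First I would fix an arbitrary $\theta\in\R^{NT}$ with $\Theta:={\rm mat}(\theta)\in\C$, and (WLOG, since $\M_x$ and $\C$ are both scale-invariant in $\theta$) normalize $\theta'\theta=1$; the goal is then $\theta'\M_x\theta\geq\mu$. Write $\M_x\theta=\theta-x(x'x)^{-1}x'\theta$, so that $\theta'\M_x\theta=\|\theta-x b\|^2$ where $b=(x'x)^{-1}x'\theta\in\R^K$ is the least-squares coefficient; hence $\theta'\M_x\theta=\min_{a\in\R^K}\|\theta-xa\|^2$.

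The key step is to compare this with the distance to the cone. For any $a\in\R^K$, the matrix ${\rm mat}(xa)=a\cdot X$ need not lie in $\C$, but $\mathcal{H}(a\cdot X,\C)\leq \|\theta - xa\|$ because $\Theta={\rm mat}(\theta)\in\C$ is a feasible competitor in the minimization defining $\mathcal H$: indeed $\mathcal H(a\cdot X,\C)^2=\min_{B\in\C}\|a\cdot X-B\|_2^2\leq\|a\cdot X-\Theta\|_2^2=\|xa-\theta\|^2$. Therefore $\theta'\M_x\theta=\min_a\|\theta-xa\|^2\geq\min_a\mathcal H(a\cdot X,\C)^2$. Now I rescale: for $a\neq 0$ write $a=c\alpha$ with $c>0$ and $\alpha'(x'x/NT)\alpha=1$ (this is possible whenever $x'x$ is positive definite, which holds wpa1 under the standing assumptions — and if $xa=0$ for some $a\neq0$ the claim is vacuous on that subspace, or one argues $\theta'\M_x\theta=\theta'\theta=1\geq\mu$ on $\ker x$). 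Then $a\cdot X/\sqrt{NT}=c\,(\alpha\cdot X/\sqrt{NT})$, and by absolute homogeneity of the distance to a cone, $\mathcal H(a\cdot X,\C)=\mathcal H(c\sqrt{NT}\,\alpha\cdot X/\sqrt{NT},\C)=c\sqrt{NT}\,\mathcal H(\alpha\cdot X/\sqrt{NT},\C)$. Hmm — this introduces a factor $c\sqrt{NT}$ that I must control; the cleaner route is to keep the normalization $\theta'\theta=1$ and instead lower-bound directly: $\|\theta-xa\|^2\geq\|\theta - x b\|^2=\theta'\M_x\theta$ for the optimal $b$, and separately observe the optimal $b$ satisfies a bounded-norm property because $\|xb\|\leq\|\theta\|+\|\theta-xb\|\leq 2$ when $\theta'\M_x\theta\leq1$; so it suffices to take the min over $a$ with $\|xa\|\leq 2$, i.e. over a compact set, and on that set $\mathcal H(a\cdot X,\C)^2\geq\mu\,\|a\cdot X/\sqrt{NT}\|_2^2\cdot(NT)/\|xa\|^2$... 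I would need to be careful here.

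The cleanest formulation, and the one I would ultimately write up, is the following: by the displayed identity in the main text, for the scalar-per-direction decomposition one has, for any unit $\theta$, $\theta'\M_x\theta = \min_{a}\|\theta-xa\|^2$, and restricting to $\Theta\in\C$ one gets $\theta'\M_x\theta\geq\inf\{\|\theta - xa\|^2 : a\in\R^K\}\geq\inf_{a}\mathcal H(a\cdot X,\C)^2$ only when $xa$ itself is "the right competitor"; the correct bound is the reverse inclusion used carefully: $\mathcal H(\Theta - xa\text{-projection})$. Given the length of this, the honest statement of the plan is: reduce to showing $\min_{a\in\R^K}\|\theta - xa\|^2\geq\mu$ for all unit $\theta$ with ${\rm mat}(\theta)\in\C$; decompose $\theta-xa$ and use that ${\rm mat}(\theta)\in\C$ implies the competitor $B={\rm mat}(\theta)$ is admissible in $\mathcal H({\rm mat}(xa),\C)$; then normalize $a$ via $\alpha'(x'x/NT)\alpha=1$ and invoke the hypothesis $\mathcal H(\alpha\cdot X/\sqrt{NT},\C)^2\geq\mu$ together with homogeneity of $\mathcal H$ and of membership in $\C$ to chase the scalars through. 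I expect the main obstacle to be exactly this scalar bookkeeping — keeping track of the $\sqrt{NT}$ and the normalization constant $c$ so that the $\mu$ from the hypothesis emerges with the right constant and without extra factors — rather than anything conceptual; the conceptual content is entirely the observation that ${\rm mat}(\theta)\in\C$ makes $\theta$ a feasible point in the cone-distance problem, which is what links the two formulations.
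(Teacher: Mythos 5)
Your proposal correctly isolates the two ingredients --- the identity $\theta'\M_x\theta=\min_{a\in\R^K}\|\theta-xa\|^2$ and the observation that $\Theta={\rm mat}(\theta)\in\C$ is a feasible competitor in the minimization defining $\mathcal{H}({\rm mat}(xa),\C)$ --- but as written the argument does not close, and the step you defer as ``scalar bookkeeping'' is the missing idea, not a routine check. Your first chain, $\theta'\M_x\theta\geq\min_a\mathcal{H}(a\cdot X,\C)^2$, is vacuous because $0\in\C$ makes the right-hand side zero at $a=0$; and your attempted repair via the upper bound $\|xb\|\leq 2$ points the wrong way, since the rescaling $a=c\alpha$ needs a \emph{lower} bound on $\|xb\|$ to keep the factor $c$ away from zero. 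What closes the loop is Pythagoras at the optimum: with $\|\theta\|=1$ and $b=(x'x)^{-1}x'\theta$ one has $\|xb\|^2=1-\theta'\M_x\theta$. Writing $d^2=\theta'\M_x\theta$ and combining the feasibility bound $\mathcal{H}({\rm mat}(xb),\C)^2\leq\|xb-\theta\|^2=d^2$ with the homogeneity bound $\mathcal{H}({\rm mat}(xb),\C)^2=\|xb\|^2\,\mathcal{H}\bigl({\rm mat}(xb)/\|xb\|,\C\bigr)^2\geq(1-d^2)\,\mu$ --- where ${\rm mat}(xb)/\|xb\|=\alpha\cdot X/\sqrt{NT}$ for an $\alpha$ with $\alpha'(x'x/NT)\alpha=1$ --- yields $(1-d^2)\mu\leq d^2$, hence $\theta'\M_x\theta\geq\frac{\mu}{1+\mu}\,\theta'\theta$ (the degenerate case $xb=0$ gives $d^2=1\geq\mu$ directly, since $\mu\leq1$ by taking $B=0$ in $\mathcal{H}$). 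This suffices for Assumption~\ref{ass:RSC}, though with constant $\mu/(1+\mu)$ rather than $\mu$.

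The paper avoids this self-consistency step entirely by exploiting the symmetry of the angle between $\theta$ and its projection onto the regressor span: setting $\tilde x_\theta=\P_x\theta/\|\P_x\theta\|$ it writes $\theta'\M_x\theta=\theta'\theta\,\|\tilde x_\theta-\P_\theta\tilde x_\theta\|^2$, so the object whose distance to $\C$ is measured is the already correctly normalized combination $\tilde x_\theta=\alpha_*\cdot X/\sqrt{NT}$, and the competitor placed in $\C$ is $\P_\theta\tilde x_\theta$, a nonnegative multiple of $\Theta$ and hence in the cone. That delivers the bound with the exact constant $\mu$ and no scaling to chase. You should either adopt that identity or supply the Pythagorean step above; without one of them the proposal has a genuine hole.
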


\begin{proof} [\bf Proof of Lemma \ref{lemma:RSC}]~
	Recall the definition $x = [x_1,...,x_K], (NT \times K),$ where $x_k = {\rm vec}(X_k)$.	
	Firstly, if $\theta = 0$, then the required result holds for any constant $\mu > 0$. 
	Secondly, if $\theta'x = 0$, then the required result holds for $\mu = 1$ because
		$ \left( \theta'\theta - \theta'x(x'x)^{-1}x'\theta 
		\right)  =  \theta'\theta$.
        Thus, in the following we only need to consider the case $\theta \neq 0$ and $\theta'x \neq 0$. Also let $x \neq 0$.

		Define $\tilde{x}_{\theta} = \frac{\P_{x} \theta}{\| \P_{x} \theta \|}$, and $\widetilde{X}_{\theta} := {\rm mat}(\tilde{x}_{\theta})$.
		Then, for any $\Theta \in \C$ and $\Theta \neq 0$, we have
		\begin{align}
		& \frac{1}{2NT} \left( \theta'\theta - \theta' x (x'x)^{-1} x'\theta
		\right) \nonumber \\
		& = \frac{1}{2NT} \left( 
		\theta'\theta - \theta' \widetilde{x}_{\theta} \widetilde{x}_{\theta}'\theta
		\right) \quad (\text{by the definition of } \tilde{x}_{\theta}  ) \nonumber \\
		& = \frac{1}{2NT} \|  \Theta \|_2^2  \left( 
		1 - \frac{\theta'\widetilde{x}_{\theta} \widetilde{x}_{\theta}'\theta}{\theta'\theta} 
		\right) \quad (\text{since } \theta \neq 0)  \nonumber \\
		& = \frac{1}{2NT} \|  \Theta \|_2^2  \left( 
		1 - \widetilde{x}_{\theta}' \frac{\theta \theta'}{\theta' \theta} \widetilde{x}_{\theta}  
		\right)  
		= \frac{1}{2NT} \|  \Theta \|_2^2  \left( \widetilde{x}_{\theta}'\widetilde{x}_{\theta} - \widetilde{x}_{\theta}' \frac{\theta \theta' }{\theta'\theta} \widetilde{x}_{\theta} 
		\right)  \nonumber \\
		& = \frac{1}{2NT} \|  \Theta \|_2^2 
		\left( 
		\|  \widetilde{x}_{\theta} - \P_{\theta} \widetilde{x}_{\theta} \|_2^2 
		\right) \nonumber \\
		& \geq \frac{1}{2NT} \|  \Theta \|_2^2 
		\left( 
		\min_{A \in \mathbb{C}}
		\|  \widetilde{x}_{\theta} - \text{vec}(A)  \|^2  
		\right) \nonumber \\
		& = \frac{1}{2NT} \|  \Theta \|_2^2 
		\left( 
		\mathcal{H}(\widetilde{X}_{\theta}, \C)^2 
		\right), \label{eq.lemma.RSC.1}
		\end{align}
		where the inequality holds because ${\rm mat}(\P_{\theta} \widetilde{x}_{\theta}) \in \C $ since $\Theta \in \C$ and $C$ is a cone. 
		Notice that
		\[ 
		\tilde{x}_{\theta} = \frac{\P_{x} \theta}{\| \P_{x} \theta\|_2} = \frac{x}{\sqrt{NT}} \alpha_{*},
		\]
		where  $\alpha_{*} = \frac{ \left(\frac{x'x}{NT}\right)^{-1}\frac{x'}{\sqrt{NT}}\theta}
		{ \left( \theta' \frac{x}{\sqrt{NT}} \left(\frac{x'x}{NT}\right)^{-1}\frac{x'}{\sqrt{NT}}\theta \right)^{1/2}}$ and $\alpha_{*}'\left(\frac{x'x}{NT}\right)\alpha_{*} = 1$. This implies
		\[
		\widetilde{X}_{\theta}  = \alpha_{*} \cdot \frac{X}{\sqrt{NT}}
		\]	
		with $\alpha_{*}'\alpha_{*} = 1$.
		Therefore, we have
		\[
		(\ref{eq.lemma.RSC.1}) \geq \frac{1}{2NT} \|  \Theta \|_2^2  
		\left( \min_{ \alpha'\left(\frac{x'x}{NT}\right)\alpha = 1} 
		\mathcal{H}\left(\alpha \cdot \frac{X}{\sqrt{NT}}, \C \right)^2  \right).
		\]
		Then, the required result of the lemma follows by the assumptions in the lemma.
\end{proof}

\begin{lemma}\label{lemma:RSC.high.rank.sufficient}
     Consider $K=1$.
     	Let $s_1 \geq s_2  \geq s_3 \geq  \ldots \geq s_{\min(N,T)} \geq 0$ be the singular values of the $N \times T$
	matrix $\M_{\lambda_0}  X_1 \M_{f_0}$.  Assume that	there exists a sequence $q_{NT} \geq 2$ such that
	\begin{itemize}
		\item[(i)] $ \frac 1 {\sqrt{NT}} \|  X_1 \|_2 = \Op(1).$
		\item[(ii)] $ \frac 1 {NT}  \sum_{r=q_{NT}}^{\min(N,T)} s_r^2  \geq c > 0 $ wpa1.
		\item[(iii)] $\frac 1 {\sqrt{NT}} \sum_{r=1}^{q_{NT}-2}(s_r - s_{q_{NT}} ) \rightarrow_P  \infty$.
	\end{itemize}
	Then Assumption~\ref{ass:RSC} is satisfied with $\mu=c$.
\end{lemma}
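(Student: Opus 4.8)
The plan is to invoke Lemma~\ref{lemma:RSC}, which reduces everything to a geometric statement: it suffices to show that, wpa1, $\mathcal{H}\!\left(\alpha\cdot\frac{X}{\sqrt{NT}},\mathbb{C}\right)^2\geq c$ for every $\alpha\in\mathbb{R}$ (here $K=1$) with $\alpha^2\,\frac{x'x}{NT}=1$. Since $\M_x$, and hence the cone $\mathbb{C}$ and the relevant distance, is invariant under rescaling $X_1$, I would first normalize $\|X_1\|_2^2=NT$ (exactly as in the paper's own discussion preceding Lemma~\ref{lemma:RSC}), so that the unique admissible value is $\alpha=1$. Write $A:=X_1/\sqrt{NT}$, so $\|A\|_2=1$, and let $A_1:=\M_{\lambda_0}A\M_{f_0}$, whose singular values are $\sigma_r=s_r/\sqrt{NT}$. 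Because $\mathbb{C}$ is a closed convex cone and $\|A\|_2=1$, one has $\mathcal{H}(A,\mathbb{C})^2=1-\sup_{\Theta\in\mathbb{C},\,\|\Theta\|_2=1}\langle A,\Theta\rangle^2$ in the Frobenius inner product, so the goal becomes to show $\langle A,\Theta\rangle^2\leq 1-c+o(1)$ uniformly over $\Theta\in\mathbb{C}$ with $\|\Theta\|_2=1$.

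Next I would use the cone structure. Write $\Theta=\Theta_1+\Theta_2$ with $\Theta_1:=\M_{\lambda_0}\Theta\M_{f_0}$ and $\Theta_2:=\Theta-\Theta_1=\P_{\lambda_0}\Theta+\M_{\lambda_0}\Theta\P_{f_0}$; the two summands are Frobenius-orthogonal, $\|\Theta_1\|_2^2+\|\Theta_2\|_2^2=1$, and $\mathrm{rank}(\Theta_2)\leq 2R_0$, hence $\|\Theta_2\|_1\leq\sqrt{2R_0}\,\|\Theta_2\|_2$. Membership in $\mathbb{C}$ then gives the crucial bound $\|\Theta_1\|_1\leq 3\|\Theta_2\|_1\leq 3\sqrt{2R_0}\,\|\Theta_2\|_2\leq 3\sqrt{2R_0}$. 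Decomposing $A=A_1+A_2$ the same way (with $A_1$ orthogonal to $\Theta_2$ and $A_2$ orthogonal to $\Theta_1$), we get $\langle A,\Theta\rangle=\langle A_1,\Theta_1\rangle+\langle A_2,\Theta_2\rangle$, and $\langle A_2,\Theta_2\rangle\leq\|A_2\|_2\|\Theta_2\|_2$ by Cauchy--Schwarz. The heart of the argument is to show that $\langle A_1,\Theta_1\rangle$ is much smaller than $\|A_1\|_2\|\Theta_1\|_2$, because $\Theta_1$ has bounded nuclear norm while $A_1$, being the normalized $\M_{\lambda_0}X_1\M_{f_0}$, has a long, spread-out singular spectrum.

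To quantify this, split $A_1=A_1^{<}+A_1^{\geq}$ into its top $q_{NT}-1$ singular components and the remainder. Then $\langle A_1^{<},\Theta_1\rangle\leq\|A_1^{<}\|_2\|\Theta_1\|_2$, while Hölder's inequality gives $\langle A_1^{\geq},\Theta_1\rangle\leq\|A_1^{\geq}\|_\infty\|\Theta_1\|_1=\sigma_{q_{NT}}\|\Theta_1\|_1\leq 3\sqrt{2R_0}\,\sigma_{q_{NT}}\|\Theta_2\|_2$. Assumptions (i) and (iii) force $q_{NT}\to\infty$: by Cauchy--Schwarz and (i), $\big(\sum_{r=1}^{q_{NT}-2}(s_r-s_{q_{NT}})\big)^2\leq (q_{NT}-2)\sum_{r}s_r^2\leq (q_{NT}-2)\|X_1\|_2^2$, and since the left-hand side divided by $NT$ diverges by (iii) while $\|X_1\|_2^2/(NT)$ is bounded, $q_{NT}\to\infty$. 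Consequently $\sigma_{q_{NT}}^2=s_{q_{NT}}^2/(NT)\leq\big(\sum_r s_r^2\big)/\big(q_{NT}\,NT\big)\leq 1/q_{NT}=o(1)$. Hence $\langle A_1,\Theta_1\rangle\leq\|A_1^{<}\|_2\|\Theta_1\|_2+o(1)\|\Theta_2\|_2$, so
\[
\langle A,\Theta\rangle\ \leq\ \|A_1^{<}\|_2\|\Theta_1\|_2+\big(\|A_2\|_2+o(1)\big)\|\Theta_2\|_2\ \leq\ \Big(\|A_1^{<}\|_2^2+\big(\|A_2\|_2+o(1)\big)^2\Big)^{1/2}\big(\|\Theta_1\|_2^2+\|\Theta_2\|_2^2\big)^{1/2},
\]
where the last step is Cauchy--Schwarz in $\mathbb{R}^2$. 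Using $\|\Theta_1\|_2^2+\|\Theta_2\|_2^2=1$, $\|A_1^{<}\|_2^2+\|A_1^{\geq}\|_2^2+\|A_2\|_2^2=\|A\|_2^2=1$, and $\big(\|A_2\|_2+o(1)\big)^2=\|A_2\|_2^2+o(1)$ (as $\|A_2\|_2\leq 1$), we obtain $\langle A,\Theta\rangle^2\leq 1-\|A_1^{\geq}\|_2^2+o(1)$, with the $o(1)$ uniform in $\Theta$. Finally $\|A_1^{\geq}\|_2^2=\tfrac{1}{NT}\sum_{r=q_{NT}}^{\min(N,T)}s_r^2\geq c$ wpa1 by (ii), so $\sup_{\Theta}\langle A,\Theta\rangle^2\leq 1-c+o(1)$, i.e.\ $\mathcal{H}(A,\mathbb{C})^2\geq c-o(1)$ wpa1, which with Lemma~\ref{lemma:RSC} yields Assumption~\ref{ass:RSC} with $\mu=c$.

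The main obstacle is the middle step: showing that a bounded-nuclear-norm perturbation $\Theta_1$ in the $\M_{\lambda_0}(\cdot)\M_{f_0}$ block cannot align well with $\M_{\lambda_0}X_1\M_{f_0}$. This is exactly where the two nontrivial hypotheses enter: (ii) guarantees that a constant fraction of the Frobenius energy of $A_1$ lies below the $q_{NT}$-th singular value, and (iii) guarantees the spectrum is long enough that $q_{NT}\to\infty$, so the cutoff value $s_{q_{NT}}$ is negligible relative to $\sqrt{NT}$. A technical point to watch is keeping the error proportional to $\|\Theta_2\|_2$ so the final Cauchy--Schwarz in $\mathbb{R}^2$ closes; alternatively one can replace the head/tail split by the exact duality identity $\max\{\langle A_1,\Theta_1\rangle:\|\Theta_1\|_2\leq b,\ \|\Theta_1\|_1\leq a\}=\inf_{\tau\geq 0}\big(a\tau+b\,\|(\sigma-\tau)_+\|_2\big)$ evaluated at $\tau=\sigma_{q_{NT}}$, which makes the spread-out-spectrum mechanism fully transparent.
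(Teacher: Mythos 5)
Your proof is correct, but it follows a genuinely different route from the paper's. The paper attacks $\nu_{NT}=\min_{\Theta\in\C}\|X_1-\Theta\|_2^2$ head-on: it weakens the cone constraint to $\|\M_{\lambda_0}\Theta\M_{f_0}\|_1\le 3\sqrt{2R_0}\,\omega$ with $\omega=\|\Theta-\M_{\lambda_0}\Theta\M_{f_0}\|_2$, observes that the optimal $\M_{\lambda_0}\Theta\M_{f_0}$ is a soft-thresholding of $\M_{\lambda_0}X_1\M_{f_0}$ at some level $\xi$, and reduces the problem to $\min_q\bigl(a(q)+[\max\{0,b(q)\}]^2\bigr)$, where $a$ collects tail energy and $b$ collects partial sums of spectral gaps; assumptions (ii) and (iii) then bound $a(q_{NT})$ and force $b(q_{NT}+1)\to\infty$. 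You instead use the ray-optimization identity $\mathcal{H}(A,\C)^2=\|A\|_2^2-\bigl(\sup_{\Theta\in\C,\|\Theta\|_2=1}\langle A,\Theta\rangle\bigr)_+^2$ and bound the correlation directly, via the same two structural facts (rank of the off-block part is at most $2R_0$, hence $\|\Theta_1\|_1\le 3\sqrt{2R_0}\|\Theta_2\|_2$; trace duality $\langle A_1^{\ge},\Theta_1\rangle\le s_{q_{NT}}(NT)^{-1/2}\|\Theta_1\|_1$). Your route is cleaner and in fact sharper: you use hypothesis (iii) only to deduce $q_{NT}\to\infty$ and hence $s_{q_{NT}}/\sqrt{NT}\to 0$, so your argument proves the lemma under that strictly weaker condition (e.g.\ a perfectly flat spectrum with $q_{NT}=\lfloor\min(N,T)/2\rfloor$ violates (iii) but is covered by your proof). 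The paper's parametrization genuinely needs the divergence of $\sum_{r}(s_r-s_{q_{NT}})$ to handle the regime where the thresholding level falls below $s_{q_{NT}}$, because there it discards the $a(q)$ term entirely.

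Two small corrections. First, $\C$ is a cone but \emph{not} convex: its defining inequality is a difference of two nuclear-norm expressions, and sums of elements of $\C$ can leave $\C$. Fortunately the identity you invoke does not need convexity --- minimizing $\|A-t u\|_2^2$ over $t\ge 0$ separately along each ray $u\in\C$, $\|u\|_2=1$, gives $\mathcal{H}(A,\C)^2=\|A\|_2^2-\bigl(\sup_u\langle A,u\rangle\bigr)_+^2$ for an arbitrary closed cone --- but you should state it that way (and keep the positive part) rather than appeal to convex projection theory. Second, your chain of estimates delivers $\mathcal{H}(A,\C)^2\ge c-o_P(1)$ rather than exactly $c$, so strictly you obtain Assumption~\ref{ass:RSC} with, say, $\mu=c/2$ wpa1; this is immaterial since the assumption only requires some fixed $\mu>0$, and the paper's own handling of the normalization $\|X_1\|_2=\sqrt{NT}$ (which rescales the constant in (ii) by $\|X_1\|_2^2/NT=O_P(1)$) has the same character.
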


This lemma could be generalized to $K>1$. We would then need to impose the conditions for $X_1$ in the lemma
for all linear combinations $\alpha \cdot X$, in an appropriate uniform sense 
over all $\alpha$
with  $\|\alpha\|=1$.

\begin{proof} [ \bf Proof of Lemma \ref{lemma:RSC.high.rank.sufficient}]~
	For given $N \times T$ matrix $X$, and $N \times R_0$ matrix $\lambda_0$, and $T \times R_0$ matrix $f_0$, we want to find a lower bound
	on
	\begin{align*}
	\nu_{NT} &:= NT \; \; \mathcal{H}\left( \frac{X_1}{\sqrt{NT}}, \C \right)^2
	=  NT \;  \min_{\Theta \in \C}   \left\|  X_{1} / \sqrt{NT} - \Theta \right\|_2^2
	\\
	&=  \min_{\Theta \in \mathbb{R}^{N \times T}}   \left\|  X_{1} - \Theta \right\|_2^2
	\qquad
	\text{s.t.}
	\qquad
	\left\| \M_{\lambda_0} \Theta \M_{f_0} \right\|_1 \leq 3 \left\| \Theta - \M_{\lambda_0} \Theta \M_{f_0} \right\|_1.
	\end{align*}
       By definition, we have
	\begin{align*}
	  \left\| X_{1} - \Theta \right\|_2^2 
	& = 
	\left\|  \M_{\lambda_0} X_{1} \M_{f_0} -\M_{\lambda_0} \Theta \M_{f_0} \right\|_2^2
	+   \left\|  \left(X_{1} - \M_{\lambda_0} X_{1} \M_{f_0} \right) - (\Theta - \M_{\lambda_0} \Theta \M_{f_0} )  \right\|_2^2.
	\end{align*}
	Also, ${\rm rank}( \Theta - \M_{\lambda_0} \Theta \M_{f_0}) \leq 2R_0$ (e.g., see Lemma 3.4 of \cite{RechtFazelParrilo2010}),
	and therefore
	$ \left\| \Theta - \M_{\lambda_0} \Theta \M_{f_0} \right\|_1 \leq \sqrt{2R_0}  \left\| \Theta - \M_{\lambda_0} \Theta \M_{f_0} \right\|_2$. 
	Using this we find
	\begin{align*}
	  \nu_{NT}  
	& \geq  \min_{\Theta \in \mathbb{R}^{N \times T}}   
	\left\{   \left\|  \M_{\lambda_0} X_{1} \M_{f_0} -\M_{\lambda_0} \Theta \M_{f_0} \right\|_2^2
	+   \left\|  \left(X_{1} - \M_{\lambda_0} X_{1} \M_{f_0} \right) - (\Theta - \M_{\lambda_0} \Theta \M_{f_0} )  \right\|_2^2 \right\}
	\\
	&  \qquad \qquad  \text{s.t.}
	\left\| \M_{\lambda_0} \Theta \M_{f_0} \right\|_1 \leq 3 \, \sqrt{2R_0} \,  \left\| \Theta - \M_{\lambda_0} \Theta \M_{f_0} \right\|_2 .
	\end{align*}
	Here, we have weakened the constraint (allowing more values for $\Theta$), and the minimizing value therefore weakly decreases.
	It is easy to see that for $\omega \geq 0$ we have 
	\begin{align*}
	\left(  \left\| X_{1} - \M_{\lambda_0} X_{1} \M_{f_0}    \right\|_2  -  \omega \right)^2
	&= \min_{\Theta \in \mathbb{R}^{N \times T}} 
	\left\|  (X_{1} - \M_{\lambda_0} X_{1} \M_{f_0}) - (\Theta - \M_{\lambda_0} \Theta \M_{f_0} )  \right\|_2^2
	\\&
	\qquad  \qquad  \qquad  \qquad \text{s.t.}
	\qquad    
	\left\| \Theta - \M_{\lambda_0} \Theta \M_{f_0} \right\|_2  = \omega ,
	\end{align*}
	because the optimal $\Theta - \M_{\lambda_0} \Theta \M_{f_0}  $ here equals $X_{1} - \M_{\lambda_0} X_{1} \M_{f_0}$ rescaled by a non-negative number.
	We therefore have
	\begin{align*}
	\nu_{NT} \geq 
	\min_{\omega \geq 0}
	\min_{\Theta \in \mathbb{R}^{N \times T}}  
	&
	\left( \left\|  \M_{\lambda_0} X_{1} \M_{f_0} -\M_{\lambda_0} \Theta \M_{f_0} \right\|_2  \right)^2
	+    \left(   \left\| X_{1} - \M_{\lambda_0} X_{1} \M_{f_0}   \right\|_2  -  \omega \right)^2
	\\
	&   \text{s.t.}
	\left\| \M_{\lambda_0} \Theta \M_{f_0} \right\|_1 \leq 3 \, \sqrt{2R_0} \,  \omega .
	\end{align*}	
	Let 
	$$
	\M_{\lambda_0} X_{1} \M_{f_0}
	 = \sum_{r=1}^{\min(N,T)-R_0}
	s_r \,  v_r w_r',
	$$
	be the singular value decomposition
	of $\M_{\lambda_0} X_{1} \M_{f_0}$  with singular values $s_r \geq 0$ and normalized singular vectors $v_r \in \mathbb{R}^N$ and 
	$w_r \in \mathbb{R}^T$.
	The optimal $\M_{\lambda_0} \Theta \M_{f_0}$ in the last optimization problem has the form
	$$
	\sum_{r=1}^{\min(N,T)-R_0}
	\max(0,  s_r - \xi) \,  v_r w_r' ,
	$$
	for some $\xi \geq 0$ (see Lemma \ref{lemma:MinGamma}). 
	Here, $\xi=0$ occurs if the constraint is not binding, that is, if
	$  \left\| \M_{\lambda_0} X_{1} \M_{f_0} \right\|_1 \leq 3 \, \sqrt{2R_0} \,  \omega$.
	We therefore have
	\begin{align*}
	\nu_{NT} \geq  
	\min_{\omega \geq 0, \, \xi \geq 0}
	&
	\sum_{r=1}^{\min(N,T)-R_0}
	\left(  s_r - \max(0,  s_r - \xi) \right)^2
	+    \left(  \left\| X_{1} - \M_{\lambda_0} X_{1} \M_{f_0}    \right\|_2   -  \omega \right)^2
	\\
	&   \text{s.t.}
	\sum_{r=1}^{\min(N,T)-R_0}
	\max(0,  s_r - \xi) 
	\leq 3 \, \sqrt{2R_0} \,  \omega .
	\end{align*}
	Here, the optimal $\omega$ equals
	$\max\left\{ \left\| X_{1} - \M_{\lambda_0} X_{1} \M_{f_0})    \right\|_2  , \, \frac 1 { 3 \, \sqrt{2R_0}}  \sum_{r=1}^{\min(N,T)-R_0}
	\max(0,  s_r - \xi)  \right\}$,
	and we thus have
	\begin{align*}
	\nu_{NT} \geq  
	&\min_{\xi \geq 0} 
	\sum_{r=1}^{\min(N,T)-R_0} 
	\Bigg[  \min( s_r^2  ,  \xi^2) \\
	& +       
	\left(  \max\left\{0 ,     \frac 1 { 3 \, \sqrt{2R_0}}  \left( \sum_{r=1}^{\min(N,T)-R_0}
	\max(0,  s_r - \xi) \right) - \left\| X_{1} - \M_{\lambda_0} X_{1} \M_{f_0}    \right\|_2     \right\}  \right)^2 \Bigg].
	\end{align*}       
        Let $\infty = s_0 > s_1 \geq  \ldots  \geq s_{\min(N,T)-R_0} \geq s_{\min(N,T)-R_0+1} = 0$.
	For any $\xi \geq 0$ there exists
	$q$ such that $\xi \in [ s_{q+1} , s_{q} ]$. We can therefore write
	\begin{align*}
	\nu_{NT} 
	  &\geq 
	 \min_{q \in \{0,1,2,\ldots,\min(N,T)-R_0\} }
	\; \min_{\xi \in [  s_{q+1} , s_{q} ]}
	\Bigg[ q \, \xi^2
	+ \sum_{r=q+1}^{\min(N,T)-R_0}  s_r^2
	  \\  & \quad
	+     
	\left(     \max\left\{0 ,   \frac 1 { 3 \, \sqrt{2R_0}}  \left( \sum_{r=1}^{q}  ( s_r - \xi) \right) \mathbbm{1}\{ q \geq 1 \}  - \left\| X_{1} - \M_{\lambda_0} X_{1} \M_{f_0})    \right\|_2    \right\}
	\right)^2 \Bigg]
   \\
        &\geq 
	 \min_{q \in \{0,1,2,\ldots,\min(N,T)-R_0\} }
	\Bigg[  \left( \min_{\xi \in [  s_{q+1} , s_{q} ]}   q \, \xi^2 \right)
	+ \sum_{r=q+1}^{\min(N,T)-R_0}  s_r^2
	  \\  & \quad
	+     
	\left(     \max\left\{0 ,   \frac 1 { 3 \, \sqrt{2R_0}}
	  \left( \min_{\xi \in [  s_{q+1} , s_{q} ]}   \sum_{r=1}^{q}  ( s_r - \xi) \right) \mathbbm{1}\{ q \geq 1 \}  - \left\| X_{1} - \M_{\lambda_0} X_{1} \M_{f_0})    \right\|_2    \right\}
	\right)^2 \Bigg]
   \\
        &= 
	 \min_{q \in \{0,1,2,\ldots,\min(N,T)-R_0\} }
	\Bigg[    q \, s_{q+1}^2 
	+ \sum_{r=q+1}^{\min(N,T)-R_0}  s_r^2
	  \\  & \quad
	+     
	\left(     \max\left\{0 ,   \frac 1 { 3 \, \sqrt{2R_0}}
	  \left(     \sum_{r=1}^{q-1}  ( s_r - s_{q} ) \right) \mathbbm{1}\{ q \geq 2 \}  - \left\| X_{1} - \M_{\lambda_0} X_{1} \M_{f_0})    \right\|_2    \right\}
	\right)^2 \Bigg] .
	\end{align*}   
       Shifting  $q \mapsto q-1$
	we can rewrite this as
	\begin{align*}
	\frac{ \nu_{NT} } {NT} \geq 
	\min_{q \in \{1,2,\ldots,\min(N,T) - R_0 \} }
	\bigg( a(q) + \left[ \max \left\{ 0,   b(q)  \right\} \right]^2 \bigg) ,
	\end{align*}   
	where 
	\begin{align*}
	a(q) &= \frac 1 {NT} \left[ (q-1) \, s_{q}^2 + \sum_{r=q}^{\min(N,T)} s_r^2 \right] ,
	\\
	b(q) &= \frac 1 {\sqrt{NT}} \left[ \frac 1 {3 \sqrt{2 R_0}} \left( \sum_{r=1}^{q-2}(s_r - s_{q}) \right) \mathbbm{1}\{ q \geq 3\}   
	-  \left\| X_{1} - \M_{\lambda_0} X_{1} \M_{f_0}  \right\|_2 
	\right] .
	\end{align*}
	Notice that $a(q)$ is nonnegative and weakly decreasing and $b(q)$ is weakly increasing. Then, for any integer valued sequence $q_{NT} $ between 1 and $\min(N,T) -R_0$ such that $b(q_{NT}) > 0$, 
	\begin{align*}
	& \min_{q \in \{1,2,\ldots,\min(N,T) -R_0 \} }
	\bigg( a(q) + \left[ \max \left\{ 0,   b(q)  \right\} \right]^2 \bigg) \\
	& = \min \left\{ 
	\min_{q \in \{1,2,\ldots, q_{NT} \} }
	\bigg( a(q) + \left[ \max \left\{ 0,   b(q)  \right\} \right]^2 \bigg) 
	, \min_{q \in \{q_{NT} + 1,\ldots,\min(N,T) - R_0 \} } 
	\bigg( a(q) + \left[ \max \left\{ 0,   b(q)  \right\} \right]^2 \bigg) \right\} \\
	& \geq \min \left\{ 
	\min_{q \in \{1,2,\ldots, q_{NT} \} } a(q),  
	\min_{q \in \{q_{NT} + 1,\ldots,\min(N,T) - R_0 \} } 
	\left[ \max \left\{ 0,   b(q)  \right\} \right]^2  \right\} \\
	& \geq \min \left\{ a(q_{NT}),  b(q_{NT}+1)^2 \right\}.
	\end{align*}
	The assumptions of the lemma thus guarantee that
	$\nu_{NT} /(NT)  \geq c$. The definition of $\nu_{NT}$ together with Lemma~\ref{lemma:RSC}
	thus guarantees that  Assumption~\ref{ass:RSC}
	is satisfied with $\mu = c$.
\end{proof}

\noindent {\bf Remarks}
\begin{itemize}
	\item[(a)]
	When $X$ is a``high-rank" regressor and $s_q$'s are of an order $\Op(\sqrt{\max(N,T)})$, we can choose, for example, $q_{NT} = \lfloor \min(N,T) / 2 \rfloor$, for $N,T$ converging to infinity at the same rate, where $\lfloor a \rfloor$ is the
	integer part of $a$.
	Then, it is easy to verify those sufficient conditions (i), (ii) and (iii)
	for e.g. $X_{it} \sim i.i.d. {\cal N}(0,\sigma^2)$ from well-known random matrix theory results.
	More generally, we can explicitly verify (i), (ii) and (iii) if $X$ has an approximate factor structure
	\[
	X =  \lambda_x f_x' + E_{x} ,
	\] 
        where $ \lambda_x f_x' $ is an arbitrary low-rank factor structure, 
        and $E_{x} \sim i.i.d. {\cal N}(0,\sigma^2)$.

	\item[(b)] 	
	For a low-rank regressor with ${\rm rank}(X) = 1$, we have singular values $s_1 = \|\M_{\lambda_0} X \M_{f_0}\|_2$ and
	$s_r = 0$ for all $r \geq 2$. In that case we find that $a(1) = \frac 1 {NT} s_1^2$
	and  $a(q)=0$ for $q > 1$,
	and we have
	$b(1) = b(2) = 0$ 
	and $b(q)=b(3)= \frac 1 {\sqrt{NT}} \left[ \frac 1 {3\sqrt{2R_0}} s_1 - \| X - \M_{\lambda_0} X \M_{f_0} \|_2 \right]$ for all $q \geq 3$. Also, $a(1) \geq b(2)$. Therefore
	\begin{align*}
	\min_{q \in \{1,2,\ldots,\min(N,T) \}}
	\bigg[ a(q) + \left( \max\left\{ 0,   b(q)  \right\} \right)^2  \bigg] 
	= \min \left\{ a(1), \left( \max\{ 0 , b(3) \} \right)^2 \right\}   .
	\end{align*}   
	Thus, the assumptions of Lemma~\ref{lemma:RSC.high.rank.sufficient} are satisfied if wpa1 we have
	\begin{align*}
	\frac 1 {\sqrt{NT}}  \left[   \|\M_{\lambda_0} X \M_{f_0}\|_2
	-  3 \sqrt{2 R_0}  \left\| X - \M_{\lambda_0} X \M_{f_0}  \right\|_2  \right] \geq c_1 > 0
	\end{align*}
	for some constant $c_1$. 
	This last condition simply demands that the part of $X$ that cannot be explained by $\lambda_0$
	and $f_0$ needs to be sufficiently larger than the part of $X$ that can be explained by either $\lambda_0$ or $f_0$.
	This is a sufficient condition for Assumption~\ref{ass:RSC}. An analysis that is specialized towards low-rank regressors 
	will likely give a weaker condition for Assumption~\ref{ass:RSC} in this case.
	
\end{itemize}

\subsection{Proofs for Section \ref{subsec:nuc.minimization}}\label{sec:proofs-of-section-refsecnucminimization}

\begin{proof}[\bf Proof of Theorem~\ref{th:ConsistencyNormMinEst} ]
	Remember the following singular value decompositions: $\Gamma_0 = U S V'$, 
	$\M_{\lambda_0} E \M_{f_0} = \M_U E \M_V  = U_E S_E V_E'$,
	 and $\M_{\lambda_0} X \M_{f_0} = \M_U X \M_V = U_x S_x V_x'$.
	The proof consists of two steps. In the first step, we show that the local minimizer that minimizes the objective function $Q_{*}(\beta)$ in a convex neighbourhood of $\beta_0$ defined by 
	\[
	\mathbf{B} := \left\{ \beta : \frac{ c_x \,  c_{\rm low} } {c_{\rm up} } \,   | \Delta \beta |  \leq 1 \right\}
	\] 
	is $\sqrt{T}$- consistent.	
	In the second step, we show that the local minimizer is the global minimizer, for which we use convexity of the objective function $Q_{*}(\beta)$.
	
	\noindent {\bf Step 1.}
	By definition of the nuclear norm, we have
	\begin{align*}
	Q_{*}(\beta) &=
	\left\|    \Gamma_0 + E  - \Delta \beta \cdot X \right\|_1
	= \sup_{ \left\{ A \, : \, \| A \|_\infty \leq 1 \right\}}   
	{\rm Tr}\left[  \left(  \Gamma_0 + E  - \Delta \beta \cdot X \right)' A \right] .
	\end{align*}
	To obtain a lower bound on  $Q_{*}(\beta)$ we choose the following matrix $A$
	in the above minimization,
	\begin{align*}
	A_\beta &= U V' +  \sqrt{1-a_\beta^2} \; U_E V_E' 
	-  a_\beta  \,  \left( {\rm sgn} \, \Delta \beta\right) \, \M_{U_E}  U_x V_x' ,
	\end{align*}
	where 
	$\M_{U_E} = I_N - U_E U_E'$ and
	$a_\beta \in [0,1]$ is given by 
	\begin{align*}
	a_\beta &= \frac{ c_x \,  c_{\rm low} } {c_{\rm up} } \,  | \Delta \beta |.
	\end{align*}
	
	We have $ \|A_\beta\|_\infty \leq 1$, because
	\begin{align*}
	\|A_\beta\|_\infty^2
	&=  \max \left\{  \left\| U V'  \right\|_{\infty}^2 , 
	\left\| \sqrt{1-a_\beta^2} \; U_E V_E' -  a_\beta \,  \left( {\rm sgn} \, \Delta \beta\right)   \, \M_{U_E}  U_x V_x' \right\|_{\infty}^2  \right\}
	\\
	&\leq    \max \left\{   \left\| U V'  \right\|_{\infty}^2 ,  
	(1-a_\beta^2) \; \left\| U_E V_E'  \right\|_{\infty}^2
	+ a_\beta^2  \left\| \M_{U_E}  U_x V_x' \right\|_{\infty}^2 
	\right\}
	\\
	&=1.
	\end{align*}
	Here, for the first line, we used that $U V' $ is orthogonal to 
	$ \sqrt{1-a_\beta^2} \; U_E V_E' -  a_\beta \left( {\rm sgn} \, \Delta \beta \right)  \, \M_{U_E}  U_x V_x'$
	in both matrix dimensions (that is, $U'U_E, U'U_k, V'V_E, V'V_x = 0)$ and applied Lemma \ref{lemma:ap.useful.facts}(iv).  
	For the second line, we used that the columns of $U_E V_E' $ are orthogonal
	to the columns of $ \M_{U_E}  U_x V_x'  $ since $U_E'\M_{U_E} = 0$, and applied Lemma \ref{lemma:ap.useful.facts}(v).
	In the final line we used that 
	$ \left\| U V'  \right\|_{\infty} = \left\| U_E V_E'  \right\|_{\infty} =1$
	and  that $ \left\| \M_{U_E}  U_x V_x' \right\|_{\infty} \leq 1$.
	
	With this choice of $A=A_\beta$ we obtain the following lower bound for the objective function; for all $\beta \in \mathbf{B}$,
	\begin{align*}
	Q_{*}(\beta) 
	&\geq
	{\rm Tr}\left[  \left(  \Gamma_0 + E  - \Delta \beta \cdot X \right)' A_\beta \right]
	\\
	&=  \| \Gamma_0 \|_1 +   {\rm Tr}\left( E'  U V' \right)   +      {\rm Tr}\left[  \left(     - \Delta \beta \cdot X \right)' U V' \right] 
	\\ & \quad
	+  \sqrt{1-a_\beta^2}  \; \| \M_U E \M_V  \|_1
	+   \sqrt{1-a_\beta^2}  \; {\rm Tr}\left[  \left(     - \Delta \beta \cdot X \right)' U_E V_E' \right]  
	\\
	& \quad          
	+ a_\beta \left| \Delta \beta \right|   {\rm Tr}\left[  X' \M_{U_E}  U_x V_x' \right] ,
	\end{align*}
	where we used the following:
	\begin{align*}
	{\rm Tr}\left( \Gamma_0'  U V' \right) &= {\rm Tr}\left( V S U'  U V' \right) = \Tr(S) = \| \Gamma_0 \|_1,
	\\
	{\rm Tr}\left(   E' U_E V_E' \right) &= {\rm Tr}\left(  (E - \M_UE \M_V + \M_U E \M_V)' U_E V_E' \right)\\
	&= \Tr((\M_U E \M_V)'U_E V_E')) = \Tr(S_E) = \| \M_U E \M_V  \|_1,
	\\
	{\rm Tr}\left( \Gamma_0'  U_E V_E' \right) &= {\rm Tr}\left( VSU'  U_E V_E' \right) = 0,
	\\
	{\rm Tr}\left[  \Gamma_0'  \M_{U_E}  U_x V_x' \right] 
	&= {\rm Tr} \left[  V S U  \M_{U_E}  U_x V_x' \right] = 0, 
	\\
	{\rm Tr}\left[  E'  \M_{U_E}  U_x V_x' \right] &= {\rm Tr}\left[  \M_V E' \M_U \M_{U_E}  U_x V_x' \right] +  {\rm Tr}\left[ (E' - \M_V E' \M_U) \M_{U_E}  U_x V_x' \right] = 0.
	\end{align*}
	We furthermore have $Q(\beta_0) =  \left\| \Gamma_0 + E \right\|_1$. Thus, applying the assumptions of the theorem 
	 and also using
	$\sqrt{1-a_\beta^2} \geq 1 - \frac 1 2 a_\beta^2 - \frac 1 2 a_\beta^4$, we obtain for $\beta \in \mathbf{B}$,
	\begin{align}
	Q_{*}(\beta) - Q_{*}(\beta_0) 
	&\geq
	{\rm Tr}\left[  \left(  \Gamma_0 + E  - \Delta \beta \cdot X \right)' A_\beta \right]
	- \left\| \Gamma_0 + E \right\|_1 \nonumber 
	\\
	&\geq 
	a_\beta  \left| \Delta \beta \right|   {\rm Tr}\left[  X' \M_{U_E}  U_x V_x' \right] \nonumber
	\\
	& \quad -  \frac 1 2 a_\beta^2 \,   \| \M_U E \M_V  \|_1  
	-  \left( \left\| \Gamma_0 + E \right\|_1 - \| \Gamma_0 \|_1 - \| \M_U E \M_V  \|_1 \right)  \nonumber 
	\\
	& \quad + {\rm Tr}\left( E'  U V' \right)  \nonumber
	- \frac 1 2 a_\beta^4   \| \M_U E \M_V  \|_1 \nonumber
	\\
	& \quad + \sqrt{1-a_\beta^2}  \; {\rm Tr}\left[  \left(     - \Delta \beta \cdot X \right)' U_E V_E' \right]  
	+       {\rm Tr}\left[  \left(     - \Delta \beta \cdot X \right)' U V' \right] \nonumber \\ 
	& =: B_1 - B_2 -   B_3 + B_4 - B_5 + B_6. \label{eq.ap.lowbound.Q*}
	\end{align}
	Here we bound $B_1$ from below by
	\begin{align*}
	B_1 &=  a_\beta \left| \Delta \beta \right|   {\rm Tr}\left(  X' \M_{U_E}  U_x V_x' \right) 
	\\
	&=   a_\beta \left| \Delta \beta \right|  
	\left[ {\rm Tr}\left( \M_V X' \M_U \M_{U_E}  U_x V_x' \right) - {\rm Tr}\left( (X' - \M_V X' \M_U ) \M_{U_E}  U_x V_x' \right) \right]
	\\
	&=  a_\beta \left| \Delta \beta \right|   
	 {\rm Tr}\left( V_x S_x U_x' \M_{U_E}  U_x V_x' \right)  \\
	&=   a_\beta \left| \Delta \beta \right|  \, \left| \Delta \beta \right|  \, \left[  {\rm Tr}\left(    S_x      \right) 
	- {\rm Tr}\left(  U_E' U_x   S_x  U_x'   U_E  \right)  \right]
	\\
	&\geq     a_\beta \, c_x  \, \left| \Delta \beta \right|  \,    {\rm Tr}\left(    S_x      \right)   
	=     a_\beta \, c_x   \, \left| \Delta \beta \right|  \,  
	\|   \M_{U} X  \M_{V}  \|_1  
	\\
	&\geq  a_\beta \, c_x \, c_{\rm low} \, T \, \sqrt{N}  \, \left| \Delta \beta \right|.
	\end{align*}
	Here the first inequality holds by assumption (vi),
	and the second inequality holds by assumption (v). 
	
	We bound $B_2$ from above by
	\begin{align*}
	B_2 &=  \frac 1 2 a_\beta^2 \,   \| \M_U E \M_V  \|_1
	\\
	&\leq \frac 1 2 a_\beta^2 \, \left( 
	\| E \|_1 + \| \P_{U} E \|_1 + \| E \P_{V} \|_1 + \| \P_U E \P_V \|_1 
	\right)
	\\
	&\leq \frac 1 2 a_\beta^2 \, \left( 
	\| E \|_1 + 3 R_0 \| E \|_{\infty} 
	\right) 
	\\
	&\leq  \frac 1 2 a_\beta^2 \,  T \, \sqrt{N} \, \left( \frac{c_{\rm up}}{2} + \frac{1}{T} \Op(1) \right) \quad {\rm wpa1}
	\\
	&\leq \frac 1 2 a_\beta^2 \,  T \, \sqrt{N} \, c_{\rm up} \quad {\rm wpa1},
	\end{align*}
	where the first inequality holds by the triangle inequality, the second inequality holds by Lemma~\ref{lemma:ap.useful.facts}(i) and the third and the fourth inequalities follow by assumption (i) and (ii).

	We bound term $B_3$ from above by 
	\begin{align*}
	B_3 &=  \left\| \Gamma_0 + E \right\|_1 - \| \Gamma_0 \|_1 - \| \M_U E \M_V  \|_1  
	\\ 
	&\leq     \left\|  E - \M_U E \M_V \right\|_1  = \| \P_U E + E \P_V - \P_U E \P_V \|_1  	
	\\
	&\leq \| \P_U E \|_1 + \| E \P_V \|_1 + \| \P_U E \P_V \|_1 
	\\
	&\leq 3R_0 \| E \|_{\infty} 
	\\
	&\leq \Op(\sqrt{N})
	\end{align*}
	where the second inequality holds by the triangle inequality and the third inequality holds by Lemma~\ref{lemma:ap.useful.facts}(i).

	For $B_4$, by H\"{o}lder's inequality we have
	\begin{align*}
	B_4 &=  \| {\rm Tr}\left( E'  U V' \right) \| \leq \| E \|_{\infty} \| UV'\|_1 =    \Op(\sqrt{N}).
	\end{align*}
	For $B_5$, denoting $O_{P+}(\cdot)$ as a stochastically strictly positive and bounded term and using similar arguments for the bound of term $B_2$, we obtain
	\begin{align*}
	B_5 &=  \frac 1 2 a_\beta^4   \| \M_U E \M_V  \|_1   =  O_{P+}(1) a_\beta^4 \,   T \, \sqrt{N} 
	=  O_{P+}(1)   (\Delta \beta )^4  \,   T \, \sqrt{N}.
	\end{align*}
	For $B_6$, we have
	\begin{align*}
	B_6 &=  \sqrt{1-a_\beta^2}  \; {\rm Tr}\left[  \left(     - \Delta \beta \cdot X \right)' U_E V_E' \right]  
	+       {\rm Tr}\left[  \left(     - \Delta \beta \cdot X \right)' U V' \right] 
	=    \Op \left( \sqrt{NT} | \Delta \beta | \right),   
	\end{align*}
	where the last equality holds since ${\rm Tr}(X_k U_E V_E') = \Op(\sqrt{NT})$ by assumption  (vi), and ${\rm Tr}(X_k U V') \leq \| X \|_{\infty} \| U V' \|_1 = \Op(\sqrt{NT})$ under assumption (iii).
	
	Notice that our choice for $a_\beta$ above is such that 
	$  a_\beta \, c_x \, c_{\rm low} |    \Delta \beta   |  -  \frac 1 2 a_\beta^2  \, c_{\rm up}$
	is maximized, which guarantees that $B_1 - B_2$ is positive, namely
	\begin{align*}
	\frac{B_1 - B_2} {T \sqrt{N}} \geq  \frac{c_x^2 c_{\rm low}^2} {2 \, c_{\rm up} }  | \Delta \beta |^2 .
	\end{align*}
	Combining the above, for any $\beta \in \mathbf{B}$, we have 
	\begin{align*}
	\frac 1 {T \, \sqrt{N}} 
	\left\{  Q_{*}(\beta) - Q_{*}(\beta_0)    \right\}              
	&\geq    \frac{c_x^2 c_{\rm low}^2} {2 \, c_{\rm up}}  | \Delta \beta |^2
	+    \Op \left(  \frac{ 1 } {\sqrt{T}}  |    \Delta \beta   | \right)
	+ \Op( T^{-1} )         
	+ O_{P+}(1) | \Delta\beta |^4,               
	\end{align*}
	which holds uniformly over $\beta \in \mathbf{B}$ (i.e. none of the 
	constants hidden in the $\Op(.)$ notation depend on $\beta$).
	
	Let 
	\[ 
	\widetilde{\beta}_{*} := \argmin_{\beta \in \mathbf{B}} Q_{*}(\beta)
	\]
	be the local minimizer in a convex neighbourhood $\mathbf{B}$ of $\beta_0$.
	Notice that since $\beta_0 \in \mathbf{B}$, $Q_{*}(\widehat{\beta}_*) \leq Q_{*}(\beta_0)$ by definition. Therefore, we have
	\begin{align*}
	0 
	& \geq  \frac{1}{T \sqrt{N}}(Q_{*}(\widetilde{\beta}_*) - Q_{*}(\beta_0)) \\
	& \geq \frac{c_x^2 c_{\rm low}^2} {2 \, c_{\rm up}} |\widetilde{\beta}_* - \beta_0 |^2 
	+ \Op \left( \frac{1}{\sqrt{T}} | \widetilde{\beta}_* - \beta_0  | \right)
	+ \Op\left( \frac{1}{T} \right) 
	+ O_{P+}\left( | \widetilde{\beta}_* - \beta_0  |^4 \right). 
	\end{align*}
	This implies  
	\begin{align*}
	O_{P+} \left( \frac{1}{T} \right)  
	& \geq \left( \frac{c_x^2 c_{\rm low}^2} {2 \, c_{\rm up}}
	+ O_{P+}(1) |\widetilde{\beta}_* - \beta_0 |^2  \right) | \widetilde{\beta}_* - \beta_0 |^2 
	+ \Op \left( \frac{1}{\sqrt{T}}  \right) | \widetilde{\beta}_* - \beta_0 | \\
	& \geq \frac{c_x^2 c_{\rm low}^2} {2 \, c_{\rm up}} | \widetilde{\beta}_* - \beta_0 |^2 
	+ \Op \left( \frac{1}{\sqrt{T}}  \right) | \widetilde{\beta}_* - \beta_0 |.
	\end{align*}
	From this we deduce 
	\begin{equation}
		| \widetilde{\beta}_* - \beta_0 | = \Op \left( \frac{1}{\sqrt{T}} \right). \label{eq.ap.sqrtT.consistency.local}
	\end{equation}  
	
	\noindent {\bf Step 2.}
	Let $\bar{\beta} \in \partial \mathbf{B}$, that is, $\alpha_{\bar{\beta}} = 1$. 
	Write $\Delta \bar{\beta} := \bar{\beta} - \beta_0$. 
	From (\ref{eq.ap.lowbound.Q*}) with $a_{\bar{\beta}} = 1$, we can bound $Q_{*}(\bar{\beta}) - Q_{*}(\beta_0)$ from below by 
	\begin{align*}
	& \frac{1}{T \, \sqrt{N}} ( Q_{*}(\bar{\beta}) - Q_{*}(\beta_0) ) 
	\\
	&\geq  c_x \, c_{\rm low} \, |    \Delta \bar{\beta}  | 
	- \frac{1}{2} \, c_{\rm up}  
	+  \Op \left(  \frac{ 1 } {\sqrt{T}}  |    \Delta \bar{\beta}   | \right)
	+ \Op \left( \frac{1}{T} \right) + O_{P+}(1) | \Delta \bar{\beta} |^4 
	\\
	&= \frac{1}{2} \, c_{\rm up} + \Op \left( \frac{1}{T} \right) + \Op \left(  \frac{ 1 } {\sqrt{T}} \right) \frac {c_{\rm up} } { c_x \,  c_{\rm low} } 
	+ O_{P+}(1) \left( \frac {c_{\rm up} } { c_x \,  c_{\rm low} } \right)^4 
	\\
	& > 0 \quad {\rm wpa1},
	\end{align*}
	where the equality holds since $ |    \Delta \bar{\beta}   | = \frac {c_{\rm up} } { c_x \,  c_{\rm low} }. $
	
	Since $Q_*(\beta)$ is convex and has unique minimum, the local minimum at $\widetilde{\beta}_*$ is also the global minimum asymptotically. Therefore,  asymptotically
	\[
	\widetilde{\beta}_* = \widehat{\beta}_* \quad {\rm wpa1}.
	\]
	Combining this with the $\sqrt{T}-$ consistency result of the local minimizer in (\ref{eq.ap.sqrtT.consistency.local}) gives the statement of the theorem.	
\end{proof}

\subsubsection{Extension of Theorem~\ref{th:ConsistencyNormMinEst}}

Theorem~\ref{th:ConsistencyNormMinEst} is the special case of one regressor 
($K=1$). We can extend this to a more general case with $K$ regressors. The 
proof of the following general theorem is similar to that of Theorem 
\ref{th:ConsistencyNormMinEst}, and we skip it.

\begin{theorem}[\bf Generalization of Theorem~\ref{th:ConsistencyNormMinEst} to multiple regressors]
	\label{th:ConsistencyNormMinEst2}
	Let there exist symmetric idempotent $T \times T$ matrices $Q_k = Q_{k,NT}$
	such that $Q_k V = 0$, for all $k \in \{1,\ldots,K\}$,
	and           $Q_k Q_\ell =0$, for all $k,\ell \in \{1,\ldots,K\}$.
	Suppose that $N>T$. As $N,T \rightarrow \infty$, we assume the following conditions hold.
	\begin{itemize} 
		\item[(i)]   $\| E \|_{\infty} = \Op(\sqrt{N})$.
		
		\item[(ii)] There exists a finite positive constant $c_{\rm up}$ such that 
		$\frac 1 {T \, \sqrt{N} } \| E \|_1 \leq \frac 1 2  c_{\rm up}$, wpa1.
		
		\item[(iii)] $ \| X_k \|_{\infty} = \Op(\sqrt{NT})$, for  $k\in \{1,\ldots,K\}$.
		
		\item[(vi)]  Let $U_E S_E V_E' $ be the singular value decomposition of $\M_{\lambda_0} E \M_{f_0}$. We assume ${\rm Tr}\left(  X_k' U_E V_E' \right)   = \Op ( \sqrt{NT} )$ for all  $k\in \{1,\ldots,K\}$.
		\item[(v)] We assume that there exists a constant $c_{\rm low}>0$ such that wpa1
		$$T^{-1} N^{-1/2} \| \M_{U} X_k  \M_{V}  Q_k  \|_1  \geq c_{\rm low}  ,
		$$
		for all $k \in \{1,\ldots,K\}$.
		\item[(vi)]  For $k=1,\ldots,K$ let $U_k S_k V_k' = \M_U X_k \M_V  Q_k (= \M_U X_k Q_k) $
		be the singular value decomposition of the matrix $\M_U X_k  \M_V Q_k$. We assume that 
		there exists $ c_x  \in (0,1)$ such that wpa1 
		$\| U_k'U_E\|_{\infty}^2 \leq (1-c_x)$ for all $k=1,\ldots,K$.         
	\end{itemize}
	We then have $\sqrt{T} \left(  \widehat{\beta}_{*} - \beta_0 \right) = \Op(1)$.
\end{theorem}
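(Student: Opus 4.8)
The plan is to follow the two‑step structure of the proof of Theorem~\ref{th:ConsistencyNormMinEst}. Write $\Delta\beta:=\beta-\beta_0$, so that $Q_*(\beta)=\|\Gamma_0+E-\Delta\beta\cdot X\|_1$. First I would establish $\sqrt T$‑consistency of the minimizer $\widetilde\beta_*$ of $Q_*$ over the convex neighborhood $\mathbf{B}:=\{\beta:(c_xc_{\rm low}/c_{\rm up})\|\Delta\beta\|\le1\}$ of $\beta_0$; then, since $Q_*$ is convex (and, under the non‑collinearity in (v), has a unique minimizer), I would show the global minimizer $\widehat\beta_*$ lies in the interior of $\mathbf{B}$ wpa1, so $\widehat\beta_*=\widetilde\beta_*$ and the $\sqrt T$‑rate transfers.

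For Step~1, fix $\beta\in\mathbf{B}$ and lower‑bound $Q_*(\beta)=\sup_{\|A\|_\infty\le1}{\rm Tr}\bigl(A'(\Gamma_0+E-\Delta\beta\cdot X)\bigr)$ at the dual certificate
\[
A_\beta:=UV'+\sqrt{1-\|a_\beta\|^2}\,U_EV_E'-\sum_{k=1}^K a_{\beta,k}\,({\rm sgn}\,\Delta\beta_k)\,\M_{U_E}U_kV_k',
\]
where $U_ES_EV_E'$ is the SVD of $\M_UE\M_V$, $U_kS_kV_k'$ is the SVD of $\M_UX_k\M_VQ_k\ (=\M_UX_kQ_k)$, and $a_{\beta,k}:=(c_xc_{\rm low}/c_{\rm up})|\Delta\beta_k|$. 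One first checks $\|A_\beta\|_\infty\le1$: the $UV'$ block is orthogonal in both dimensions to the rest ($U'U_E=U'U_k=0$, $V'V_E=0$, and $V'V_k=0$ since $Q_kV=0$), the $U_EV_E'$ block is left‑orthogonal to the $\M_{U_E}U_kV_k'$ blocks, and $Q_kQ_\ell=0$ makes $[V_1,\dots,V_K]$ column‑orthonormal, so Lemma~\ref{lemma:ap.useful.facts}(iv)--(v) together with $\|[a_{\beta,1}U_1,\dots,a_{\beta,K}U_K]\|_\infty\le\|a_\beta\|$ (Cauchy--Schwarz) give $\|A_\beta\|_\infty^2\le(1-\|a_\beta\|^2)+\|a_\beta\|^2=1$.

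Next I would expand ${\rm Tr}\bigl(A_\beta'(\Gamma_0+E-\Delta\beta\cdot X)\bigr)-Q_*(\beta_0)$ into the same six groups $B_1-B_2-B_3+B_4-B_5+B_6$ as in Theorem~\ref{th:ConsistencyNormMinEst}, plus a new cross‑regressor remainder. The diagonal pieces of $B_1$ reduce, via $\M_UX_kQ_k=U_kS_kV_k'$ and $V_k=Q_kV_k$, to $a_{\beta,k}|\Delta\beta_k|\bigl({\rm Tr}(S_k)-{\rm Tr}(U_E'U_kS_kU_k'U_E)\bigr)\ge a_{\beta,k}|\Delta\beta_k|\,c_xc_{\rm low}T\sqrt N$ by (v)--(vi) (using ${\rm Tr}(U_E'U_kS_kU_k'U_E)\le\|U_k'U_E\|_\infty^2{\rm Tr}(S_k)$); summing over $k$ and optimizing the common scale exactly as in the $K=1$ case produces the leading term $\tfrac{c_x^2c_{\rm low}^2}{2c_{\rm up}}\|\Delta\beta\|^2T\sqrt N$ after subtracting $B_2\le\tfrac12\|a_\beta\|^2c_{\rm up}T\sqrt N$. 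The remaining $B_3,B_4,B_5,B_6$ are $O_P(\sqrt N)$, $O_P(\sqrt N)$, $O_{P+}(1)\|\Delta\beta\|^4T\sqrt N$ and $O_P(\sqrt{NT}\,\|\Delta\beta\|)$ by exactly the bounds used there, now with (i)--(iv), (vi) and Lemma~\ref{lemma:ap.useful.facts}. This yields, uniformly over $\mathbf{B}$,
\[
\frac{Q_*(\beta)-Q_*(\beta_0)}{T\sqrt N}\ \ge\ \frac{c_x^2c_{\rm low}^2}{2c_{\rm up}}\|\Delta\beta\|^2+O_P(T^{-1/2})\|\Delta\beta\|+O_P(T^{-1})+O_{P+}(1)\|\Delta\beta\|^4-R_{NT}(\beta),
\]
where $R_{NT}(\beta):=\sum_{k\ne\ell}\Delta\beta_\ell\,a_{\beta,k}\,({\rm sgn}\,\Delta\beta_k)\,{\rm Tr}\bigl(X_\ell'\M_{U_E}U_kV_k'\bigr)$. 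Once $R_{NT}(\beta)=o_P(T\sqrt N)\|\Delta\beta\|^2$ uniformly, solving this quadratic inequality at $\widetilde\beta_*$ (where $Q_*(\widetilde\beta_*)\le Q_*(\beta_0)$) gives $\|\widetilde\beta_*-\beta_0\|=O_P(1/\sqrt T)$, and on $\partial\mathbf{B}$ the same estimate forces $Q_*(\bar\beta)-Q_*(\beta_0)\ge(c_{\rm up}/2+o_P(1))T\sqrt N>0$ wpa1, which is Step~2; since $N>T$, $\sqrt T(\widehat\beta_*-\beta_0)=O_P(1)$ follows.

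The one genuinely new ingredient — and the main obstacle — is controlling $R_{NT}(\beta)$, i.e.\ the terms ${\rm Tr}(X_\ell'\M_{U_E}U_kV_k')$ with $\ell\ne k$. Unlike every other error term, these are not handled by $\|X_\ell\|_\infty$ or $\|X_\ell\|_1$, both of which are only $O_P(T\sqrt N)$ for high‑rank regressors, the same order as the leading term. Writing ${\rm Tr}(X_\ell'\M_{U_E}U_kV_k')={\rm Tr}\bigl((\M_UX_\ell Q_k)'\M_{U_E}U_kV_k'\bigr)$ (using $V_k=Q_kV_k$ and $\M_{U_E}U_k\in\mathrm{range}(\M_U)$) shows that this is exactly where the mutually orthogonal projectors $Q_k$ (with $Q_kV=0$) must do their work: the cross terms vanish identically as soon as $\M_UX_\ell\M_VQ_k=0$ for $\ell\ne k$ — the natural companion to condition (v), expressing that each regressor is large inside its own distinctive time‑subspace $Q_k$ while the others are absent from it — and then $R_{NT}\equiv0$ and the argument is a verbatim translation of the $K=1$ proof. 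Making this precise, either as an explicit extra hypothesis or as a property of the construction of the $Q_k$ in the intended (block‑structured or mutually orthogonal) regressor designs, is the only step that goes beyond the bookkeeping of Theorem~\ref{th:ConsistencyNormMinEst}.
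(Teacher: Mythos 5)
Your proposal follows exactly the route the paper intends: the authors state that the proof of Theorem~\ref{th:ConsistencyNormMinEst2} ``is similar to that of Theorem~\ref{th:ConsistencyNormMinEst}, and we skip it,'' so there is no written proof to compare against beyond the $K=1$ argument. Your dual certificate $A_\beta$ is the natural multi-regressor version of the paper's $A_\beta$, your verification that $\|A_\beta\|_\infty\le 1$ (using $U'U_E=U'U_k=0$, $V'V_E=0$, $Q_kV=0$ so $V'V_k=0$, $U_E'\M_{U_E}=0$, and the mutual orthogonality $V_k'V_\ell=0$ coming from $Q_kQ_\ell=0$) is correct, and the reduction of the diagonal part of $B_1$ via $V_k=Q_kV_k$ and $V_k'\M_VX_k'\M_U Q_k$-type identities to $a_{\beta,k}|\Delta\beta_k|\bigl({\rm Tr}(S_k)-{\rm Tr}(U_E'U_kS_kU_k'U_E)\bigr)$ matches the $K=1$ computation, with assumption (vi) playing the role of condition (v) of Theorem~\ref{th:ConsistencyNormMinEst} through ${\rm Tr}(U_E'U_kS_kU_k'U_E)\le\|U_k'U_E\|_\infty^2\,{\rm Tr}(S_k)$.

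The obstacle you isolate is genuine and is not resolved by anything in the paper. The cross terms ${\rm Tr}\bigl(X_\ell'\M_{U_E}U_kV_k'\bigr)$ for $\ell\ne k$ are controlled by none of the stated assumptions: H\"older in either direction gives at best $\|X_\ell\|_1\,\|\M_{U_E}U_kV_k'\|_\infty=O_P(T\sqrt N)$ for a high-rank regressor, which after multiplication by $a_{\beta,k}|\Delta\beta_\ell|$ is of exactly the same order $T\sqrt N\,\|\Delta\beta\|^2$ as the leading positive term, with no guarantee that its constant is dominated by $c_x^2c_{\rm low}^2/(2c_{\rm up})$; the other H\"older direction is worse by a factor of order $\sqrt T$. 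Rewriting the trace as ${\rm Tr}\bigl((\M_UX_\ell Q_k)'\M_{U_E}U_kV_k'\bigr)$ shows, as you say, that what is needed is a cross-orthogonality condition of the type $\M_UX_\ell Q_k=0$ (or $=o_P$ of the right order) for $\ell\ne k$ --- a condition that is the natural companion to assumption (v) but is not implied by it, nor by the block construction of the $Q_k$ sketched in the paper's remark, under which $\M_UX_\ell Q_k$ is generically nonzero. So your proof is the intended one and is complete modulo this one point, and you are right to flag it as an additional hypothesis rather than to bury it in an $O_P$ symbol; the theorem as stated appears to require it (or a sharper argument for the cross terms) that the paper does not supply.
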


\noindent {\bf Remark} For $t \in \{1,2,\ldots,T\}$,  let $\mathfrak{e}_t$ be the $t$'th unit vector of dimension $T$. 
For $k \in \{1,\ldots,K\}$,
let $A_k =  ( \mathfrak{e}_{\lfloor (k-1) T/K \rfloor + 1},  \mathfrak{e}_{\lfloor (k-1) T/K \rfloor + 2}, \ldots \mathfrak{e}_{\lfloor k T/K \rfloor} )$
be a $T \times \lfloor  T/K \rfloor$ matrix,
and let $\mathbf{P}_{A_k}$ be the projector onto the column space of $A_k$.
Also define $f_{0,k} = \mathbf{P}_{A_k} f_0$
and $B_k = \M_{f_{0,k}} A_k$. 
Then, for $K>1$ one possible choice for $Q_k$ in assumption (vi) of Theorem~\ref{th:ConsistencyNormMinEst2} is given by
$$
Q_k = \mathbf{P}_{B_k}  = \M_{f_{0,k}} \mathbf{P}_{A_k} .
$$
The discussion of assumption (vi) of  Theorem~\ref{th:ConsistencyNormMinEst2} is then analogous to the $K=1$ case, except that
for the $k$'th regressor only the time periods $\lfloor (k-1) T/K \rfloor + 1$ to $\lfloor k T/K \rfloor$ are used
in the assumption, that is, we need enough variation in the $k$'th regressor within those time periods.
Other choices of $Q_k$ are also conceivable.

\subsection{Proofs for Section \ref{sec:post estimation}}

For $\beta \in \mathbb{R}^K$ we define 
\[
\left\{ \widehat{\lambda}(\beta),   \widehat{f}(\beta) \right\} := \argmin_{\lambda \in \R^{N \times R_0}, f \in \R^{T \times R_0}} \| Y - \beta \cdot X \|_2^2 \, ,
\] 
and the corresponding projection matrices
\begin{equation*}
\M_{\widehat \lambda}(\beta) := \mathbb{I}_N -  \widehat \lambda(\beta) \left( \widehat \lambda(\beta)'\widehat \lambda(\beta) \right)^{-1} \widehat \lambda(\beta)',
\qquad \M_{\widehat f}(\beta) :=  \mathbb{I}_T -  \widehat f(\beta) \left( \widehat f(\beta)'\widehat f(\beta) \right)^{-1} \widehat f(\beta)'.
\end{equation*}

\begin{lemma}
 \label{lemma:expansion.operators}
    Under the assumptions (i) and (ii) of Theorem~\ref{thm:post.matrix.lasso} we have
	\begin{align*}
	\M_{\widehat \lambda}(\beta) &= \M_{\lambda_0} + \M_{\widehat \lambda,E}^{(1)}
	+ \M_{\widehat \lambda,E}^{(2)}
	- \sum_{k=1}^K \left( \beta_k - \beta_{0,k} \right)  \M_{\widehat \lambda,k}^{(1)}
	+ \M_{\widehat \lambda,E}^{({\rm rem})}  + \M_{\widehat \lambda}^{({\rm rem})}(\beta) \; ,
	\nonumber \\
	\M_{\widehat f}(\beta) &= \M_{f_0} + \M_{\widehat f,E}^{(1)}
	+ \M_{\widehat f,E}^{(2)}
	- \sum_{k=1}^K \left( \beta_k - \beta_{0,k} \right)  \M_{\widehat f,k}^{(1)}
	+ \M_{\widehat f,E}^{({\rm rem})} +  \M_{\widehat f}^{({\rm rem})}(\beta) \; ,
	\end{align*}
	where the spectral norms of the remainders satisfy for any series $r_{NT} \rightarrow 0$,
	\begin{align*}
		\sup_{\beta \in {\cal B}(\beta_0, r_{NT})}
		\frac{\left\| \M_{\widehat \lambda}^{({\rm rem})}(\beta) \right\|_{\infty}}
		{ \|\beta - \beta_0\|^2 + (NT)^{-1/2}  \|E\|_{\infty}  \|\beta - \beta_0\|  }
		&= \Op\left(1\right)   ,
		&
		\sup_{\beta \in {\cal B}(\beta_0, r_{NT})}
		\frac{\left\| \M_{\widehat \lambda,E}^{({\rm rem})}  \right\|_{\infty}}
		{   (NT)^{-3/2}  \|E\|_{\infty}^3}
		&= \Op\left(1\right) ,
	 \\
		\sup_{\beta \in {\cal B}(\beta_0, r_{NT})}
		\frac{\left\| \M_{\widehat f}^{({\rm rem})}(\beta) \right\|_\infty}
		{ \|\beta - \beta_0\|^2 + (NT)^{-1/2}  \|E\|_\infty  \|\beta - \beta_0\|  }
		&= \Op\left(1\right)   ,
		&
		\sup_{\beta \in {\cal B}(\beta_0, r_{NT})}
		\frac{\left\| \M_{\widehat f,E}^{({\rm rem})}  \right\|_{\infty}}
		{   (NT)^{-3/2}  \|E\|_{\infty}^3}
		&= \Op\left(1\right) ,	
	\end{align*}
	and the expansion coefficients are given by
	\begin{align*}
	\M^{(1)}_{\widehat \lambda,E} &= - \, \M_{\lambda_0} \, E \,
	f_0 \, (f_{0}^{\prime}f_0)^{-1} \, (\lambda_{0}^{\prime}\lambda_0)^{-1} \lambda_{0}^{\prime}
	\, - \, \lambda_0 \, (\lambda_{0}^{\prime}\lambda_0)^{-1} \, (f_{0}^{\prime}f_0)^{-1} \, f_{0}^{\prime}
	\, E' \, \M_{\lambda_0}  \; ,
	\nonumber \\
	\M^{(1)}_{\widehat \lambda,k} &= - \, \M_{\lambda_0} \, X_k  \,
	f_0 \, (f_{0}^{\prime}f_0)^{-1} \, (\lambda_{0}^{\prime}\lambda_0)^{-1} \lambda_{0}^{\prime}
	\, - \, \lambda_0 \, (\lambda_{0}^{\prime}\lambda_0)^{-1} \, (f_{0}^{\prime}f_0)^{-1} \, f_{0}^{\prime}
	\,  X'_k  \, \M_{\lambda_0}  \; ,
	\nonumber \\
	\M^{(2)}_{\widehat \lambda,E} &=
	\M_{\lambda_0} \, E \, f_0 \, (f_{0}^{\prime}f_0)^{-1} \, (\lambda_{0}^{\prime}\lambda_0)^{-1} \lambda_{0}^{\prime}
	\, E \, f_0 \, (f_{0}^{\prime}f_0)^{-1} \, (\lambda_{0}^{\prime}\lambda_0)^{-1} \lambda_{0}^{\prime}
	\nonumber \\ & \qquad
	+\lambda_0 \, (\lambda_{0}^{\prime}\lambda_0)^{-1} \, (f_{0}^{\prime}f_0)^{-1} \, f_{0}^{\prime}
	\, E' \, \lambda_0 \, (\lambda_{0}^{\prime}\lambda_0)^{-1} \, (f_{0}^{\prime}f_0)^{-1} \, f_{0}^{\prime}
	\, E' \, \M_{\lambda_0}
	\nonumber \\ & \qquad
	- \M_{\lambda_0} \, E \, \M_{f_0} \, E' \,
	\lambda_0\,(\lambda_{0}^{\prime}\lambda_0)^{-1}\,(f_{0}^{\prime}f_0)^{-1}\,(\lambda_{0}^{\prime}\lambda_0)^{-1}\,\lambda_{0}^{\prime}
	\nonumber \\ & \qquad
	-\lambda_0\,(\lambda_{0}^{\prime}\lambda_0)^{-1}\,(f_{0}^{\prime}f_0)^{-1}\,(\lambda_{0}^{\prime}\lambda_0)^{-1}\,\lambda_{0}^{\prime}
	\, E \, \M_{f_0} \, E' \, \M_{\lambda_0}
	\nonumber \\ & \qquad
	- \M_{\lambda_0} \, E \,
	f_0 \, (f_{0}^{\prime}f_0)^{-1} \, (\lambda_{0}^{\prime}\lambda_0)^{-1} \, (f_{0}^{\prime}f_0)^{-1} \, f_{0}^{\prime}
	\, E' \, \M_{\lambda_0}
	\nonumber \\ & \qquad
	+ \lambda_0 \, (\lambda_{0}^{\prime}\lambda_0)^{-1} \, (f_{0}^{\prime}f_0)^{-1} \, f_{0}^{\prime}
	\, E' \, \M_{\lambda_0} \, E \,
	f_0 \, (f_{0}^{\prime}f_0)^{-1} \, (\lambda_{0}^{\prime}\lambda_0)^{-1} \lambda_{0}^{\prime} \, ,
	\end{align*}
	analogously
	\begin{align*}
	\M^{(1)}_{\widehat f,E} &= \, - \, \M_{f_0} \, E' \,
	\lambda_0 \, (\lambda_{0}^{\prime}\lambda_0)^{-1} \, (f_{0}^{\prime}f_0)^{-1} f_{0}^{\prime}
	\, - \, f_0 \, (f_{0}^{\prime}f_0)^{-1} \, (\lambda_{0}^{\prime}\lambda_0)^{-1} \, \lambda_{0}^{\prime}
	\, E \, \M_{f_0}   \; ,
	\nonumber \\
	\M^{(1)}_{\widehat f,k} &=
	\, - \, \M_{f_0} \,  X'_k  \,
	\lambda_0 \, (\lambda_{0}^{\prime}\lambda_0)^{-1} \, (f_{0}^{\prime}f_0)^{-1} f_{0}^{\prime}
	\, - \, f_0 \, (f_{0}^{\prime}f_0)^{-1} \, (\lambda_{0}^{\prime}\lambda_0)^{-1} \, \lambda_{0}^{\prime}
	\,  X_k \,\M_{f_0}  \; ,
	\nonumber \\
	\M^{(2)}_{\widehat f,E} &=
	\M_{f_0} \, E' \, \lambda_0 \, (\lambda_{0}^{\prime}\lambda_0)^{-1} \, (f_{0}^{\prime}f_0)^{-1} f_{0}^{\prime}
	\, E' \, \lambda_0 \, (\lambda_{0}^{\prime}\lambda_0)^{-1} \, (f_{0}^{\prime}f_0)^{-1} f_{0}^{\prime}
	\nonumber \\ & \qquad
	+f_0 \, (f_{0}^{\prime}f_0)^{-1} \, (\lambda_{0}^{\prime}\lambda_0)^{-1} \, \lambda_{0}^{\prime}
	\, E \, f_0 \, (f_{0}^{\prime}f_0)^{-1} \, (\lambda_{0}^{\prime}\lambda_0)^{-1} \, \lambda_{0}^{\prime}
	\, E \, \M_{f_0}
	\nonumber \\ & \qquad
	- \M_{f_0} \, E' \, \M_{\lambda_0} \, E \,
	f_0\,(f_{0}^{\prime}f_0)^{-1}\,(\lambda_{0}^{\prime}\lambda_0)^{-1}\,(f_{0}^{\prime}f_0)^{-1}\,f_{0}^{\prime}
	\nonumber \\ & \qquad
	-f_0\,(f_{0}^{\prime}f_0)^{-1}\,(\lambda_{0}^{\prime}\lambda_0)^{-1}\,(f_{0}^{\prime}f_0)^{-1}\,f_{0}^{\prime}
	\, E' \, \M_{\lambda_0} \, E \, \M_{f_0}
	\nonumber \\ & \qquad
	- \M_{f_0} \, E' \,
	\lambda_0 \, (\lambda_{0}^{\prime}\lambda_0)^{-1} \, (f_{0}^{\prime}f_0)^{-1} \, (\lambda_{0}^{\prime}\lambda_0)^{-1} \, \lambda_{0}^{\prime}
	\, E \, \M_{f_0}
	\nonumber \\ & \qquad
	+ f_0 \, (f_{0}^{\prime}f_0)^{-1} \, (\lambda_{0}^{\prime}\lambda_0)^{-1} \, \lambda_{0}^{\prime}
	\, E \, \M_{f_0} \, E' \,
	\lambda_0 \, (\lambda_{0}^{\prime}\lambda_0)^{-1} \, (f_{0}^{\prime}f_0)^{-1} f_{0}^{\prime} \, .
	\end{align*}
\end{lemma}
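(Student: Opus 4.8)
The statement to prove is Lemma~\ref{lemma:expansion.operators}, which gives a perturbative expansion of the principal-component projection matrices $\M_{\widehat\lambda}(\beta)$ and $\M_{\widehat f}(\beta)$ around their values $\M_{\lambda_0}$, $\M_{f_0}$ at the truth, together with uniform bounds on the remainders over $\beta \in \mathcal{B}(\beta_0,r_{NT})$.

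\medskip

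\noindent\textbf{Plan of proof.} The plan is to set $A(\beta) := (Y - \beta\cdot X)/\sqrt{NT} = [\Gamma_0 + E - (\beta-\beta_0)\cdot X]/\sqrt{NT}$ and recall that $\widehat\lambda(\beta)$, $\widehat f(\beta)$ are the leading $R_0$ principal components of $A(\beta)$, so $\M_{\widehat\lambda}(\beta)$ is the spectral projector onto the orthogonal complement of the span of the top $R_0$ left singular vectors of $A(\beta)$. The central object is the $N\times N$ matrix $A(\beta)A(\beta)'$; its top $R_0$ eigenvalues separate from the rest because under assumption~(i) the matrix $\Gamma_0\Gamma_0'/(NT)$ has $R_0$ eigenvalues bounded away from zero, while the perturbation terms involving $E$ and $(\beta-\beta_0)\cdot X$ are small in spectral norm: under assumption~(ii) we have $\|E\|_\infty/\sqrt{NT} = \Op((\min(N,T))^{-1/2})$, $\|X_k\|_\infty/\sqrt{NT} = \Op(1)$, and $\|\beta-\beta_0\| \le r_{NT} \to 0$. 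First I would write $A A' = \Gamma_0\Gamma_0'/(NT) + \Delta$, where $\Delta$ collects all cross terms and quadratic terms in $E$ and $(\beta-\beta_0)\cdot X$, and record that $\|\Delta\|_\infty \le \Op\!\big((NT)^{-1}\|E\|_\infty^2\big) + \Op\!\big((NT)^{-1/2}\|E\|_\infty\big)\|\beta-\beta_0\| + \Op(1)\|\beta-\beta_0\|^2$, uniformly over the ball. Then I would invoke the standard contour-integral (Riesz projector) representation $\M_{\widehat\lambda}(\beta) = \mathbb{I}_N - \frac{1}{2\pi i}\oint_{\mathcal{C}} (z - AA')^{-1}\,dz$, where $\mathcal{C}$ is a fixed contour enclosing the $R_0$ large eigenvalues and separating them from $0$; expanding $(z - AA')^{-1}$ as a Neumann series in $\Delta$ around $(z - \Gamma_0\Gamma_0'/(NT))^{-1}$ and integrating term by term produces the claimed expansion. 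The order-zero term gives $\M_{\lambda_0}$; the order-one terms in $E$ give $\M^{(1)}_{\widehat\lambda,E}$ and the order-one terms in $(\beta_k-\beta_{0,k})$ give $-\sum_k(\beta_k-\beta_{0,k})\M^{(1)}_{\widehat\lambda,k}$; the order-two terms in $E$ give $\M^{(2)}_{\widehat\lambda,E}$; everything else is absorbed into the remainders $\M^{({\rm rem})}_{\widehat\lambda,E}$ (the pure cubic-and-higher terms in $E$ alone, of order $(NT)^{-3/2}\|E\|_\infty^3$) and $\M^{({\rm rem})}_{\widehat\lambda}(\beta)$ (all terms containing at least one factor of $(\beta-\beta_0)\cdot X$ beyond the linear one, plus mixed $E$--$X$ terms, of order $\|\beta-\beta_0\|^2 + (NT)^{-1/2}\|E\|_\infty\|\beta-\beta_0\|$). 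The same argument applied to $A(\beta)'A(\beta)$ yields the $\M_{\widehat f}$ statements.

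\medskip

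\noindent\textbf{Key steps, in order.} (1) Establish the spectral separation: the nonzero eigenvalues of $\Gamma_0\Gamma_0'/(NT)$ lie in a fixed interval bounded away from zero (using only $R_0={\rm rank}(\Gamma_0)$ fixed; note that under assumptions (i)--(ii) of Theorem~\ref{thm:post.matrix.lasso} the relevant quantities $\lambda_0'\lambda_0/N$, $f_0'f_0/T$ are bounded below, which pins down the scale), and the perturbation $\Delta$ has spectral norm $o_P(1)$ uniformly on the ball, so a fixed contour $\mathcal{C}$ works with probability approaching one. (2) Write the Neumann expansion of the resolvent and integrate, collecting terms by their homogeneity degree in $E$ and in $(\beta-\beta_0)$, and verify that the explicitly listed coefficients match — this is where one substitutes $\Gamma_0 = \lambda_0 f_0'$, uses $(\Gamma_0\Gamma_0')^\dagger = \lambda_0(\lambda_0'\lambda_0)^{-1}(f_0'f_0)^{-1}(\lambda_0'\lambda_0)^{-1}\lambda_0'$ on the range and the Cauchy integral to pick out the coefficients, and simplifies using $\M_{\lambda_0}\lambda_0 = 0$. (3) Bound each remainder: for $\M^{({\rm rem})}_{\widehat\lambda,E}$, sum the geometric tail of the Neumann series in the $E$-only direction, each additional factor contributing $\Op((NT)^{-1/2}\|E\|_\infty)$ relative to the previous; for $\M^{({\rm rem})}_{\widehat\lambda}(\beta)$, note every remaining term carries at least $\|\beta-\beta_0\|$ times either another $\|\beta-\beta_0\|$ or a $(NT)^{-1/2}\|E\|_\infty$, and bound $\|(\beta-\beta_0)\cdot X\|_\infty/\sqrt{NT} \le \sum_k|\beta_k-\beta_{0,k}|\,\|X_k\|_\infty/\sqrt{NT} = \Op(1)\|\beta-\beta_0\|$ uniformly. (4) Take suprema over $\beta\in\mathcal{B}(\beta_0,r_{NT})$; since all the $\Op(1)$ constants above depend only on $\|E\|_\infty$, $\|X_k\|_\infty$, the eigenvalue bounds, and the contour (not on $\beta$), the uniform statements follow.

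\medskip

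\noindent\textbf{Main obstacle.} The genuinely delicate point is step~(3): controlling the remainders uniformly over the shrinking ball, because the resolvent expansion is an infinite series and one must show the tail converges and is of the stated order even after taking the supremum over $\beta$. The clean way is to bound the full resolvent $(z-AA')^{-1}$ in spectral norm by a constant on $\mathcal{C}$ (which holds wpa1 once the separation in step~(1) is in force), then write each remainder as a single contour integral of a product of resolvents and $\Delta$-factors rather than an infinite sum — e.g. $\M^{({\rm rem})}_{\widehat\lambda,E}$ is $-\frac{1}{2\pi i}\oint_{\mathcal{C}} R_0(z)\,[\Delta_E R_0(z)]^3\,[\mathbb{I} - \Delta_E R_0(z)]^{-1}\,dz$ with $\Delta_E$ the $E$-only part and $R_0(z) = (z-\Gamma_0\Gamma_0'/(NT))^{-1}$ — so that the bound is a single product of operator norms times the fixed length of $\mathcal{C}$. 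The bookkeeping that separates $\Delta$ into its $E$-only, $X$-only, and mixed pieces and checks that the cubic-in-$E$ remainder genuinely has no contribution of lower order than $(NT)^{-3/2}\|E\|_\infty^3$ (in particular that the quadratic-in-$E$ piece has been fully extracted into $\M^{(2)}_{\widehat\lambda,E}$) is the most error-prone part, but it is routine once the resolvent framework is set up; this type of expansion is carried out in detail in \cite{MoonWeidner2015, MoonWeidner2017}, and the argument here is the same.
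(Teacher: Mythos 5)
Your proposal is correct and follows essentially the same route as the paper's own proof, which simply cites Theorem S.9.1 in the supplementary appendix of \cite{MoonWeidner2017}; that result is established there by exactly the Riesz-projector/resolvent (Kato perturbation) expansion of the spectral projectors of $A(\beta)A(\beta)'$ and $A(\beta)'A(\beta)$ that you outline, including the contour-integral representation of the tail remainders. The only presentational difference is the one the paper itself notes: the higher-order remainder is split into an $E$-only piece $\M_{\widehat \lambda,E}^{({\rm rem})}$ and a $\beta$-dependent piece $\M_{\widehat \lambda}^{({\rm rem})}(\beta)$ so that the former cancels in differences $\M_{\widehat \lambda}(\beta_1)-\M_{\widehat \lambda}(\beta_2)$, and your bookkeeping already respects that split.
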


\begin{proof}[\bf Proof]
This lemma is a restatement of Theorem S.9.1 in the supplementary appendix of \cite{MoonWeidner2017},
and the proof is given there. However, in the presentation here we split the remainder terms of the expansions
into two components, e.g.\ $ \M_{\widehat \lambda,E}^{({\rm rem})}  + \M_{\widehat \lambda}^{({\rm rem})}(\beta)$,
where $ \M_{\widehat \lambda,E}^{({\rm rem})}$ summarizes all higher order expansion terms depending on $E$
only, and $ \M_{\widehat \lambda}^{({\rm rem})}(\beta)$ summarizes all higher order terms also involving $\beta-\beta_0$.
The reason for this change in presentation is that we will   consider differences of the form
$\M_{\widehat \lambda}(\beta_1) - \M_{\widehat \lambda}(\beta_2)$ below, and the remainder
terms $ \M_{\widehat \lambda,E}^{({\rm rem})}$  cancel in those differences.
\end{proof}

\begin{proof}[\bf Proof of Theorem~\ref{thm:post.matrix.lasso}]
    \# The first statement of the theorem is an almost immediate consequence of 
     Theorem 4.1 in \cite{MoonWeidner2017}. That theorem shows that, under the assumptions we impose here,
  we have the following approximate quadratic expansion of the profile LS objective function,
  \begin{align*}
  L_{R_0}(\beta)   &= L_{R_0}(\beta_0)
                       \, - \, \frac 1 {\sqrt{NT}}  \, (\beta-\beta_0)' \, C_{NT}
                        + \,\frac 1 2 \, (\beta-\beta_0)' \, W_{NT} \, (\beta-\beta_0) + \frac 1 {NT} \, R_{NT}(\beta) \; ,
\end{align*}
where the remainder $R_{NT}(\beta)$ is such that for any sequence
$r_{NT}\rightarrow 0$ we have
\begin{align*}
  \sup_{\beta \in {\cal B}(\beta_0, r_{NT})}   \frac{ \left|
  R_{NT}(\beta) \right| } { \left( 1 + \sqrt{NT} \, \left\| \beta -\beta_0 \right\| \right)^2 } = o_{p}\left( 1 \right) ,
\end{align*}
and  $W_{NT} =  \frac 1 {NT} \, x' \left( \M_{f_0} \otimes 
\M_{\lambda_0} \right) x$,
and $C_{NT} = C^{(1)}_{NT}  + C^{(2)}_{NT}$,
with $C^{(1)}_{NT}  = \frac 1 {NT} \, x' \left( \M_{f_0} \otimes 
\M_{\lambda_0} \right) x$,
and the $K$-vector $C^{(2)}_{NT}$ has entries, $k=1,\ldots,K$,
\begin{align*}
  C^{(2)}_{NT,k} &= - \, \frac 1 {\sqrt{NT}} \, \bigg[
       {\rm Tr}\left(E M_{f_0} \, E' \, M_{\lambda_0} \, X_k \,
              f_0 \, (f_0^{\prime}f_0)^{-1} \, (\lambda_0^{\prime}\lambda_0)^{-1} \, \lambda_0^{\prime} \right)
    \nonumber \\ & \qquad \qquad \quad
       +{\rm Tr}\left(E^{\prime}M_{\lambda_0} \, E \, M_{f_0} \, X^{\prime}_k \,
              \lambda_0 \, (\lambda_0^{\prime}\lambda_0)^{-1} \, (f_0^{\prime}f_0)^{-1} \, f_0^{\prime} \right)
    \nonumber \\ & \qquad \qquad \quad
       +{\rm Tr}\left(E^{\prime}M_{\lambda_0} \, X_k \, M_{f_0} \, E^{\prime}
                \, \lambda_0 \, (\lambda_0^{\prime}\lambda_0)^{-1} \, (f_0^{\prime}f_0)^{-1} \, f_0^{\prime} \right)
                        \bigg]  \; .
\end{align*}
We have assumed that $\plim_{N,T \rightarrow \infty} W_{NT}>0$ and $ C^{(1)}_{NT}   = \Op(1)$,
and using our assumptions (i) and (ii) we also find that
\begin{align*}
    \left| C^{(2)}_{NT,k}   \right|
    \leq  \frac {3 \, R_0} {\sqrt{NT}}  \left\| E \right\|_\infty^2    \left\| X_k \right\|_\infty
    \left\|  \lambda_0 \right\|_\infty  \left\| f_0 \right\|_\infty
      \left\|  (\lambda_0^{\prime}\lambda_0)^{-1} \right\|_\infty    \left\|  (f_0^{\prime}f_0)^{-1} \right\|_\infty  = \Op(1) ,
\end{align*}
and therefore $C_{NT} = 0$. From this approximate quadratic expansion we conclude that $  L_{R_0}(\beta) $
has indeed at least one local minimizer  within ${\cal B}(\beta_0, r_{NT})$, and that any such local minimizer 
within ${\cal B}(\beta_0, r_{NT})$ satisfies
\begin{align*}
   \sqrt{NT} \left( \widehat{\beta}^{\rm \, local}_{{\rm LS},R_0} - \beta_0 \right) =  W_{NT}^{-1} \, C_{NT}  = \Op(1) .
\end{align*}

\bigskip

\# Next, we want to show the second statement of the theorem.
Let  $\widehat{\lambda} := \widehat{\lambda} \left( \widehat{\beta}^{\rm \, local}_{{\rm LS},R_0} \right)$
and $\widehat{f} := \widehat{f} \left( \widehat{\beta}^{\rm \, local}_{{\rm LS},R_0} \right)$.
By definition we have
$\widehat{\lambda}^{(s+1)} =   \widehat{\lambda} \left( \widehat{\beta}^{(s)}  \right)$
and  $\widehat f^{(s+1)} =   \widehat f\left( \widehat{\beta}^{(s)}  \right)$, and
\begin{align*}
\left( x' \left( \M_{\widehat{f}^{(s+1)}} \otimes 
\M_{\widehat{\lambda}^{(s+1)}} \right) x \right) 
\widehat{\beta}^{(s+1)} &=  
   x' \left( \M_{\widehat{f}^{(s+1)}} \otimes 
\M_{\widehat{\lambda}^{(s+1)}} \right) y ,
  \\
  \left( x' \left( \M_{\widehat{f}} \otimes 
\M_{\widehat{\lambda}} \right) x \right) 
 \widehat{\beta}^{\rm \, local}_{{\rm LS},R_0}   &=  
   x' \left( \M_{\widehat{f}} \otimes 
\M_{\widehat{\lambda}} \right) y .
\end{align*}
By taking the difference of those last equations we obtain
\begin{align*}
  &     \left( x' \left( \M_{\widehat{f}} \otimes 
\M_{\widehat{\lambda}} \right) x \right) 
 \left( \widehat{\beta}^{(s+1)}  - \widehat{\beta}^{\rm \, local}_{{\rm LS},R_0}   \right) 
  = 
  x' \left( 
  \M_{\widehat{f}^{(s+1)}} \otimes 
\M_{\widehat{\lambda}^{(s+1)}}
-
  \M_{\widehat{f}} \otimes 
\M_{\widehat{\lambda}} \right) \left( y - x \, \widehat{\beta}^{(s+1)}   \right)
\\
 & \qquad  \qquad \qquad \qquad =   x' \left( 
  \M_{\widehat{f}^{(s+1)}} \otimes 
\M_{\widehat{\lambda}^{(s+1)}}
-
  \M_{\widehat{f}} \otimes 
\M_{\widehat{\lambda}} \right) \left[ e - x \left( \widehat{\beta}^{(s+1)} - \beta_0 \right) + ( f_0 \otimes \lambda_0) {\rm vec}(\mathbf{I}_{R})  \right] ,
\end{align*}
where in the last step we plugged in the model for $y$. Applying Lemma~\ref{lemma:expansion.operators},
the result from the first part of the theorem, and our assumptions we find that
\begin{align*}
    \frac 1 {NT} \,  x' \left( \M_{\widehat{f}} \otimes 
\M_{\widehat{\lambda}} \right) x
    &=
     \frac 1 {NT} \, x' \left( \M_{f_0} \otimes 
\M_{\lambda_0} \right) x
   + o_P(1) ,
\end{align*}
and since the probability limit of  $ \frac 1 {NT} \, x' \left( \M_{f_0} \otimes 
\M_{\lambda_0} \right) x$ is assumed to be invertible we obtain
\begin{align*}
  \widehat{\beta}^{(s+1)}  - \widehat{\beta}^{\rm \, local}_{{\rm LS},R_0}   
  &=
    \left[   \frac 1 {NT} \, x' \left( \M_{f_0} \otimes  \M_{\lambda_0} \right) x  \right]^{-1} 
  \frac 1 {NT}    \,  x' \left( 
  \M_{\widehat{f}^{(s+1)}} \otimes  \M_{\widehat{\lambda}^{(s+1)}} - \M_{\widehat{f}} \otimes  \M_{\widehat{\lambda}} \right) 
  \\
  & \qquad   \qquad  \qquad   \qquad    \qquad    \qquad 
  \times \left[ e - x \left( \widehat{\beta}^{(s+1)} - \beta_0 \right) + ( f_0 \otimes \lambda_0) {\rm vec}(\mathbf{I}_{R})  \right]
  [ 1 + o_P(1) ] .
\end{align*}
Again applying  Lemma~\ref{lemma:expansion.operators} and our assumptions one can show that
\begin{align*}
    \left\| 
         \M_{\widehat{f}^{(s+1)}} \otimes  \M_{\widehat{\lambda}^{(s+1)}} - \M_{\widehat{f}} \otimes  \M_{\widehat{\lambda}}
    \right\|_\infty
    &= \Op\left(  \left\| \widehat{\beta}^{(s)}  - \widehat{\beta}^{\rm \, local}_{{\rm LS},R_0}    \right\|   \right) ,
\end{align*}
and therefore
\begin{align*}
      \frac 1 {NT}    \,  x' \left( 
  \M_{\widehat{f}^{(s+1)}} \otimes  \M_{\widehat{\lambda}^{(s+1)}} - \M_{\widehat{f}} \otimes  \M_{\widehat{\lambda}} \right)  e
  &=  \Op\left(  \frac{ \|E\|_\infty \max_k \|X_k\|_\infty } {NT} 
   \left\| \widehat{\beta}^{(s)}  - \widehat{\beta}^{\rm \, local}_{{\rm LS},R_0}    \right\|   \right) 
  \\
  &=  \Op\left( \frac{    \left\| \widehat{\beta}^{(s)}  - \widehat{\beta}^{\rm \, local}_{{\rm LS},R_0}    \right\|  } {\sqrt{\min(N,T)}}  \right)  ,
\end{align*}
and
\begin{align*}  
  &    \frac 1 {NT}    \,  x' \left( 
  \M_{\widehat{f}^{(s+1)}} \otimes  \M_{\widehat{\lambda}^{(s+1)}} - \M_{\widehat{f}} \otimes  \M_{\widehat{\lambda}} \right) 
   x \left( \widehat{\beta}^{(s+1)} - \beta_0 \right) 
   = \Op\left(  \left\| \widehat{\beta}^{(s)}  - \widehat{\beta}^{\rm \, local}_{{\rm LS},R_0}    \right\| 
   \left\| \widehat{\beta}^{(s+1)}  - \beta_0  \right\|    \right)  
   \\ 
   & \qquad = \Op\left(  \left\| \widehat{\beta}^{(s)}  - \widehat{\beta}^{\rm \, local}_{{\rm LS},R_0}    \right\| 
   \left\| \widehat{\beta}^{(s+1)}  -  \widehat{\beta}^{\rm \, local}_{{\rm LS},R_0}   \right\|  
   +  \frac{  \left\| \widehat{\beta}^{(s)}  - \widehat{\beta}^{\rm \, local}_{{\rm LS},R_0}    \right\| } {\sqrt{NT}}   \right) , 
\end{align*}
where in the last step we used that part of the theorem implies that
$\widehat{\beta}^{(s+1)}  - \beta_0 =  \widehat{\beta}^{(s+1)}  -  \widehat{\beta}^{\rm \, local}_{{\rm LS},R_0} 
+ \Op(1/\sqrt{NT})$.
Finally, using one more time Lemma~\ref{lemma:expansion.operators} and our assumptions we can also show that
\begin{align*}
&  \frac 1 {NT}    \,  x' \left( 
  \M_{\widehat{f}^{(s+1)}} \otimes  \M_{\widehat{\lambda}^{(s+1)}} - \M_{\widehat{f}} \otimes  \M_{\widehat{\lambda}} \right) 
  ( f_0 \otimes \lambda_0) {\rm vec}(\mathbf{I}_{R})
\\
& \qquad  \qquad \qquad \qquad \qquad \qquad \qquad =  \Op\left(
  \left\| \widehat{\beta}^{(s)}  - \widehat{\beta}^{\rm \, local}_{{\rm LS},R_0}    \right\|^2 +
 \frac{    \left\| \widehat{\beta}^{(s)}  - \widehat{\beta}^{\rm \, local}_{{\rm LS},R_0}    \right\|  } {\sqrt{\min(N,T)}}  \right).
\end{align*}
Combining the above gives
\begin{align*}
     & \widehat{\beta}^{(s+1)}  - \widehat{\beta}^{\rm \, local}_{{\rm LS},R_0}   
      \\
      &= \Op\left\{
      \left\| \widehat{\beta}^{(s)}  - \widehat{\beta}^{\rm \, local}_{{\rm LS},R_0}    \right\| 
      \left[
   \left\| \widehat{\beta}^{(s+1)}  -  \widehat{\beta}^{\rm \, local}_{{\rm LS},R_0}   \right\|  
   +
  \left\| \widehat{\beta}^{(s)}  - \widehat{\beta}^{\rm \, local}_{{\rm LS},R_0}    \right\|  +
 \frac{  1  } {\sqrt{\min(N,T)}} \right]  \right\}  [ 1 + o_P(1) ] .
\end{align*}
Starting from the assumptions 
 $\| \widehat{\beta}^{(0)} - \beta_0 \| = \Op(c_{NT})$, for  $c_{NT}  \rightarrow 0$,
we thus conclude that 
$$
         \left\| \widehat{\beta}^{(1)} -  \widehat{\beta}^{\rm \, local}_{{\rm LS},R_0} \right\| = \Op \left\{ c_{NT} \left( c_{NT} +  \frac 1 {\sqrt{\min (N,T )}} \right) \, \right\} ,
$$
and then also
$$
         \left\| \widehat{\beta}^{(2)} -  \widehat{\beta}^{\rm \, local}_{{\rm LS},R_0} \right\| = \Op \left\{ c_{NT} \left( c_{NT} +  \frac 1 {\sqrt{\min (N,T )}} \right)^2 \, \right\} ,
$$
and by induction over $s$ we conclude in this way that
$$
         \left\| \widehat{\beta}^{(s)} -  \widehat{\beta}^{\rm \, local}_{{\rm LS},R_0} \right\| = \Op \left\{ c_{NT} \left( c_{NT} +  \frac 1 {\sqrt{\min (N,T )}} \right)^{s} \, \right\} .
$$
\end{proof}

\subsection{Proofs for Appendix~\ref{sec:Nonlinear:appendix}}

\begin{proof}[\bf Proof of Theorem~\ref{th:Nonlinear}]
	Like in the previous section, let $Q_{\psi}(\beta):= \min_{\Gamma} Q_{\psi}(\beta,\Gamma)$. Let $\mathbf{B}_{\psi}(M) := \{ \beta \in \mathbf{B} : \| \beta - \beta_0 \| = M \psi^{1/2} \}$ be the restricted parameter set consisting of $\beta$'s whose distance to $\beta_0$ is less than or equal to $M \psi^{1/2}$.
	In the special case where $\beta$ is a scalar (i.e., $K=1$) which is assumed in the theorem, $\mathbf{B}_{\psi}(M)$ is a finite discrete set consisting of two points,
	\begin{equation}
	\mathbf{B}_{\psi}(M) =  \{ \beta_0 - M \psi^{1/2}, \, \beta_0 + M \psi^{1/2}  \} .
	 \label{eq.ap.nolinear.parameter.set}
	\end{equation}		
	Since $Q_{\psi}(\beta)$ is convex, if we show that there exists a finite constant $M$ such that  
	\begin{equation} 
	\min_{\beta \in \mathbf{B}_{\psi}(M)} Q_{\psi}(\beta) - Q_{\psi}(\beta_0,\Gamma_0) > 0 \quad {\rm wp1}, \label{eq.ap.required.thm.nonlinear}
	\end{equation}
	then we can deduce 
	\[ 
	\| \widehat{\beta}_{\psi} - \beta_0 \| \leq M \psi^{1/2} \quad {\rm wp1},
	\]
	which is required for the theorem.
	
	For (\ref{eq.ap.required.thm.nonlinear}), we find a function $Q^*_\psi(\beta,\Gamma)$ 
	such that $ Q_\psi(\beta,\Gamma) \geq Q^*_\psi(\beta,\Gamma)$ for all $\beta,\Gamma$. With $Q^*_\psi(\beta) := \min_{\Gamma} Q^*_\psi(\beta,\Gamma)$, we show that there exists a finite constant $M$ such that  
	\begin{equation} 
	\min_{\beta \in \mathbf{B}_{\psi}(M)} Q^*_{\psi}(\beta) - Q^*_{\psi}(\beta_0,\Gamma_0) > 0 \quad {\rm wp1}. \label{eq.ap.required.2.thm.nonlinear}
	\end{equation}

	\# \underline{A lower bound objective function, $Q^*_\psi(\beta,\Gamma)$:} 
	For every pair $i,t$ we define the function $m_{it}^* : \mathbb{R} \rightarrow \mathbb{R}$ as 
	the function that satisfies
	\begin{align*}
	m_{it}^*(z_{it}^0) &=  m_{it}(z_{it}^0) ,
	&
	\partial_z  m_{it}^*(z_{it}^0) &=     \partial_z  m_{it}(z_{it}^0)  ,
	&
	\forall z \in \mathbb{R}: 
	\;
	\frac{ \partial_{z^2}  m_{it}^*(z) } b &=  \frac{\partial_{z^2} m_{it}(z) } {  \partial_{z^2} \overline m_{it}(z) } 
	.
	\end{align*}
	Here, the last condition on the second derivative should be interpreted in terms of ``generalized functions''
	in cases where $m_{it}(z)$ is not twice differentiable. For example,
	in the quantile regression example we have  $ m_{it}( z ) = \rho_{\tau}( Y_{it} - z )$,
	and therefore $\partial_{z^2} m_{it}(z) = \delta(Y_{it} - z )$, where $\delta(.)$ denotes the Dirac delta function.
	In general, solving for $m_{it}^*(z)$ we find that
	\begin{align*}
	m_{it}^*(z) &:= m_{it}(z_{it}^0) +   
	\left(z - z_{it}^0 \right) \, \partial_z m_{it}(z_{it}^0)
	+   b \, \int_{z_{it}^0}^z  \int_{z_{it}^0}^\zeta \frac{ \partial_{z^2} m_{it}(\xi) } {  \partial_{z^2} \overline m_{it}(\xi) } d \xi   d\zeta ,
	\end{align*} 
	where for $z< z_{it}^0$ the integral should be interpreted as $\int_{z_{it}^0}^z q(\zeta) d\zeta = - \int_z^{z_{it}^0} q(\zeta) d\zeta$,
	and analogously for the integral over $\xi$.\footnote{In the quantile regression case we have
		\begin{equation}
		\int_{z_{it}^0}^\zeta \frac{ \partial_{z^2} m_{it}(\xi) } {  \partial_{z^2} \overline m_{it}(\xi) } d \xi  
		= \frac{ \mathbbm{1}(z_{it}^0 < Y_{it} < \zeta) - \mathbbm{1}(\zeta  < Y_{it} < z_{it}^0) } {  \partial_{z^2} \overline m_{it}(Y_{it}) } .  \label{eq.ap.m*bar_it}
		\end{equation}  
	}
	Let $\overline m^*_{it}(z) = \mathbb{E}( m^*_{it}(z)  | X)$. 
	Our definition of $m_{it}^*(z)$
	together with $ \mathbb{E}[ \partial_z   m_{it}(z_{it}^0)  | X] = 0$ 
	imply that
	\begin{align*}
	\overline m^*_{it}(z) &= \overline m_{it}(z_{it}^0)    
	+ \frac b 2  \left(z - z_{it}^0 \right)^2 ,
	\end{align*}
	that is, $\overline m^*_{it}(z) $ is a quadratic function with second derivative equal to $b$. 
	Our assumption $ \partial_{z^2} \overline  m_{it}(z)  \geq b$ for all $z \in \mathcal{Z}$ (Assumption \ref{ass:NonlinearObj}(iii)) together with
	convexity of $m_{it}(z)$ (Assumption \ref{ass:NonlinearObj}(ii)) implies 
	furthermore that
	$0 \leq \partial_{z^2}  m_{it}^*(z) \leq \partial_{z^2} m_{it}(z)$.
	Therefore, $m_{it}^*(z)$ is a convex function and satisfies
	\begin{align}
	m_{it}(z) - m_{it}(z_{it}^0) \geq   m_{it}^*(z) -  m_{it}^*(z_{it}^0),
	\label{BoundMstar}
	\end{align}
	because $m_{it}^*(z_{it}^0) = m_{it}(z_{it}^0)$ and the convex function 
	$m_{it}(z)$ have a steeper curvature than the convex function $m_{it}^*(z)$ 
	everywhere.

	Next, we define
	\begin{align}
	Q^*_\psi(\beta,\Gamma) 
	&:=   \frac{1}{NT} \sum_{i=1}^N \sum_{t=1}^T \, m^*_{it} \left( X'_{it} \, \beta + \Gamma_{it}  \right)
	+   \frac{ \psi } {\sqrt{NT}} 
	\underbrace{
		\max_{\left\{ A \in \mathbb{R}^{N \times T} \big|  \|A\|_{\infty} \leq 1 \right\}}
		{\rm Tr}( \Gamma' A)
	}_{ = \| \Gamma \|_1}  ,
	\label{QpsiMaxA}
	\end{align}
	and $Q^*_\psi(\beta)  := \min_{\Gamma  \in \mathbb{R}^{N \times T}} Q^*_\psi(\beta,\Gamma) $.
	From \eqref{BoundMstar} we obtain that
	\begin{align*}
	Q_\psi(\beta,\Gamma) - Q_\psi(\beta_0,\Gamma_0)
	\geq  Q^*_\psi(\beta,\Gamma) - Q^*_\psi(\beta_0,\Gamma_0) .
	\end{align*}
	
	\noindent
	\# \underline{Additional definitions:}
	We already defined the expected objective function $\overline m_{it}(z) $ in the main text.
	We now also define the deviation from the expectation $\widetilde m_{it}(z) := m_{it}(z) - \overline m_{it}(z) $.
	We drop the argument $z$ whenever those functions and their derivatives are 
	evaluated at the true
	values $z^0_{it}$, for example,
	$m_{it} = m_{it}(z^0_{it}) $,
	$\partial_z \widetilde m_{it} = \partial_z  \widetilde m_{it}(z^0_{it}) $,
	$\partial_{z^2} \overline m_{it} = \partial_{z^2} \overline   m_{it}(z^0_{it}) $.
	We use the same notation for  $m^*_{it}(z) $, for example, 
	$\partial_{z^2} \overline m^*_{it} = \partial_{z^2} \overline   m^*_{it}(z^0_{it}) $.
	In addition, we define the $N \times T$ matrix $\Gamma^* := \Gamma_0 +  X^{(1)} \cdot \beta_{0}$,
	and we let $z_{it}(\beta) := X^{(2) \prime}_{it} \, \beta + \Gamma^*_{it}$.

	\medskip
	\noindent
	\# \underline{Deriving a lower bound on $Q^*_\psi(\beta) $ within the shrinking neighbourhood:}
	Our goal here  is to find a lower bound on $Q^*_\psi(\beta)$ that is valid within the shrinking neighbourhood of $\beta_0, \mathbf{B}_{\psi}(M)$.
	To obtain such a lower bound we choose the matrix $A$ in equation \eqref{QpsiMaxA} to be  the $N \times T$ matrix $A(\beta)$ 
	with elements    
	\begin{align*}
	A_{it}(\beta) &:= - \frac 1 {\sqrt{NT} \psi} \;  \partial_z m^*_{it} \left( z_{it}(\beta)  \right) 
	\\
	&=      - \frac 1 {\sqrt{NT} \psi} 
	\left[ \partial_z \widetilde m^*_{it} \left( z_{it}(\beta) \right) 
	+ \partial_z \overline m^*_{it} \left( z_{it}(\beta) \right) 
	\right]   
	\\
	&=      - \frac 1 {\sqrt{NT} \psi} 
	\left[ \partial_z \widetilde m^*_{it} \left( z_{it}(\beta) \right)  
	+ b \, X^{(2) \prime}_{it}  (\beta - \beta_0)  
	\right]   ,
	\end{align*}
	where in the final step we used that 
	$\partial_z \overline m^*_{it} \left( z \right) = \partial_z \overline m^*_{it}  + b (z - z^0_{it})$,
	and $\partial_z \overline m^*_{it}  = 0$,
	and $z_{it}(\beta)  -  z^0_{it}  = X^{(2) \prime}_{it}  (\beta - \beta_0) $.
	For the mean zero $N \times T$ matrix $\partial_z \widetilde m^*  \left( z (\beta) \right) := \left[ \partial_z \widetilde m^*_{it} \left( z_{it}(\beta) \right)   \right]  $ 
	we have 
	\begin{align*}
	\sup_{\beta \in \mathbf{B}_{\psi}(M)}  \left\| \partial_z \widetilde m^*  \left( z (\beta) \right) \right\|_{\infty}
	&\leq  \left\| \partial_z \widetilde m^*  \left( z (\beta_0 - M \psi^{1/2}) \right) \right\|_{\infty}
	+  \left\| \partial_z \widetilde m^*  \left( z (\beta_0 + M \psi^{1/2}) \right) \right\|_{\infty}
	\\
	&= \Op\left( \sqrt{\max(N,T)} \right).
	\end{align*}
	We thus find that
	\begin{align*}
	\sup_{\beta \in \mathbf{B}_{\psi}(M)}   \|A(\beta)\|_\infty
	&\leq
	o_P(1)  
	+ \sup_{\beta \in \mathbf{B}_{\psi}(M)}   o_P\left( \frac{\|\beta - \beta_0\|} {\psi^{1/2}} \right) \leq o_P(M).
	\end{align*}
	A sufficient condition for $  \|A(\beta)\|_\infty \leq 1$ wp1 uniformly in $\beta \in \mathbf{B}_{\psi}(M)$ is therefore satisfied. 
	From now on, we use $\leq_{\rm u.p.}$ to denote that the inequality holds wp1 uniformly in $\beta \in \mathbf{B}_{\psi}(M)$.
	
	Under that condition we thus have
	\begin{align}
	Q^*_\psi(\beta) 
	&\geq 
	\min_{\Gamma  \in \mathbb{R}^{N \times T}} 
	\left\{  \frac{1}{NT} \sum_{i=1}^N \sum_{t=1}^T \, m^*_{it} \left( X'_{it} \, \beta + \Gamma_{it}  \right)
	+   \frac{ \psi } {\sqrt{NT}}  
	{\rm Tr}[ \Gamma' A(\beta)]
	\right\}
	\nonumber \\
	&=
	\frac{1}{NT} \sum_{i=1}^N \sum_{t=1}^T \, m^*_{it} \left( X^{(2) \prime}_{it} \, \beta + \Gamma^*_{it}   \right)
	+   \frac{ \psi } {\sqrt{NT}}  
	{\rm Tr}[ \left(  \Gamma_0 - X^{(1)} \cdot (\beta - \beta_0) \right)' A(\beta)]
	\nonumber \\
	&\geq    \frac{1}{NT} \sum_{i=1}^N \sum_{t=1}^T \, m^*_{it} \left( X^{(2) \prime}_{it} \, \beta + \Gamma^*_{it}   \right)
	-  \frac{ \psi } {\sqrt{NT}} 
	\left\|  \Gamma_0 - X^{(1)} \cdot (\beta - \beta_0) \right\|_1
	\left\| A(\beta) \right\|_\infty
	\nonumber \\
	&\geq_{\rm u.p.}   \frac{1}{NT} \sum_{i=1}^N \sum_{t=1}^T \, m^*_{it} \left( X^{(2) \prime}_{it} \, \beta + \Gamma^*_{it}   \right)
	-  \psi \frac{ \left\|  \Gamma_0 \right\|_1 } {\sqrt{NT}}   
	-  \psi \frac{ \left\| X^{(1)} \cdot (\beta - \beta_0) \right\|_1 } {\sqrt{NT}}    
	\label{eq.ap.lowbound.Q*(beta)}
	\end{align}
	where the second line (the equality part) holds because we used that our choice of $A(\beta)$ implies that the FOC for the minimization over $\Gamma$
	are satisfied for $\Gamma = \Gamma^* -  X^{(1)} \cdot \beta = \Gamma_0 - 
	X^{(1)} \cdot (\beta - \beta_0)$. The third line holds by the H\"{o}lder 
	inequality $ |\Tr{A'B} | \leq \| A \|_{\infty} \| B \|_1$, and the last 
	line holds by the triangle inequality and  $\| A(\beta) \|_1 \leq_{\rm 
	u.p.} 1$.
	
	Next,  by expanding $X^{(2) \prime}_{it} \, \beta + \Gamma^*_{it} $ around
	$z^0_{it} = X^{(2) \prime}_{it} \, \beta_0 + \Gamma^*_{it}$ and by definition of $\overline m^*_{it}$, we obtain
	\begin{align}
	&    \frac{1}{NT} \sum_{i=1}^N \sum_{t=1}^T \, m^*_{it} \left( X^{(2) \prime}_{it} \, \beta + \Gamma^*_{it}   \right)
	\nonumber \\
	&=
	\frac{1}{NT} \sum_{i=1}^N \sum_{t=1}^T \, \overline m^*_{it} \left( X^{(2) \prime}_{it} \, \beta + \Gamma^*_{it}   \right)  
	+   
	\frac{1}{NT} \sum_{i=1}^N \sum_{t=1}^T \, \widetilde  m^*_{it} \left( X^{(2) \prime}_{it} \, \beta + \Gamma^*_{it}   \right)       
	\nonumber \\  
	&=  \frac{1}{NT} \sum_{i=1}^N \sum_{t=1}^T  \overline m^*_{it}
	+ b (\beta-\beta_0)' W (\beta-\beta_0) 
	+ O_P(1/\sqrt{NT}) , \label{eq.ap.lowbound.sum.m*it}
	\end{align}
	where the $O_P(1/\sqrt{NT})$ holds uniformly over $\beta$ in $\mathbf{B}_{\psi}(M)$.

	\medskip
	\noindent
	\# \underline{Consistency of $\widehat \beta_\psi$:}  
	
	Using the lower bounds of (\ref{eq.ap.lowbound.Q*(beta)}) and (\ref{eq.ap.lowbound.sum.m*it}), and the definition of $Q_{\psi}^*(\beta_0,\Gamma_0)$, we have
	\begin{align*}
	&\min_{\beta \in \mathbf{B}_{\psi}(M)} 
	Q_{\psi}^*(\beta) - Q_{\psi}^*(\beta_0,\Gamma_0) 
	\\
	&\geq \min_{\beta \in \mathbf{B}_{\psi}(M)}  \left[ \frac{1}{NT} \sum_{i=1}^N \sum_{t=1}^T \, \left(  m^*_{it} \left( X^{(2) \prime}_{it} \, \beta + \Gamma^*_{it}   \right) - m^*_{it} \left( X^{(2) \prime}_{it} \, \beta_0 + \Gamma^*_{it}   \right) \right) \right] 
	\\
	&\qquad \qquad 
	-  \psi  \frac{ 2 \left\|  \Gamma_0 \right\|_1 } {\sqrt{NT}}     
	-  \psi \max_{\beta \in \mathbf{B}_{\psi}(M)}
	\frac{ \left\| X^{(1)} \cdot (\beta - \beta_0) \right\|_1 } {\sqrt{NT}}   
	\\
	&= \min_{\beta \in \mathbf{B}_{\psi}(M)}  
	\left[ \frac{1}{NT} \sum_{i=1}^N \sum_{t=1}^T \, \left(  m^*_{it} \left( X^{(2) \prime}_{it} \, \beta + \Gamma^*_{it}   \right) - \overline{m}^*_{it} \right) \right]
	- O_{+}(1) \psi  - o_{P+}(1) M \psi
	\\
	&\geq b \min_{\beta \in \mathbf{B}_{\psi}(M)}   (\beta-\beta_0)' W (\beta-\beta_0) 
	- \frac{O_{P+}(1)}{\sqrt{NT}}  - O_{+}(1) \psi  - o_{P+}(1) M \psi
	\\
	&\geq  b \lambda_{min}(W) M^2 \psi 
	- \frac{O_{P+}(1)}{\sqrt{NT}}  - O_{+}(1) \psi  - o_{P+}(1) M \psi
	\\
	&\geq M \psi \left( b \lambda_{min}(W) M - \frac{O_{P+}(1)}{M \psi \sqrt{NT}} - \frac{O_{+}(1)}{M} - o_{P+}(1) \right). 
	\end{align*} 
	Since $\lambda_{min}(W) \rightarrow_p \lambda_{min}(W_{\infty}) > 0$ and $\psi \sqrt{NT} \rightarrow 0$, 
	we can choose a large constant $M$ such that 
	\[
	b \lambda_{min}(W) M - \frac{O_{P+}(1)}{M \psi \sqrt{NT}} - \frac{O_{+}(1)}{M} - o_{P+}(1) > 0 \quad {\rm wp1}.
	\]
	Then, we have the required result for the theorem. 	
\end{proof}

To establish the consistency result in the theorem in a more general case where $K > 1$, the proof requires some additional technical restrictions. The first technical requirement is the uniform bound,  $ \sup_{\beta \in \mathbf{B}_{\psi}(M)} \| A (\beta) \|_{\infty}$. For this, we may use a recent random matrix theory result in \cite{franguridi2022uniform} which requires further regularity conditions such as the tail condition of the distribution of $A_{it}(\beta)$ and a restriction of the entropy of the parameter set $\mathbf{B}_{\psi}(M)$. Secondly, we need additional technical restrictions for a uniform stochastic bound of $\sup_{\beta \in \mathbf{B}_{\psi}(M)} \frac{1}{NT} \sum_{i=1}^N \sum_{t=1}^T \, \widetilde  m^*_{it} \left( X^{(2) \prime}_{it} \, \beta + \Gamma^*_{it}   \right) = \Op(1/\sqrt{NT})$.

\end{document}